\newcommand{\pname}{\textsc}
\newcommand{\ProblemFormat}[1]{\pname{#1}}
\newcommand{\ProblemIndex}[1]{\index{problem!\ProblemFormat{#1}}}
\newcommand{\ProblemName}[1]{\ProblemFormat{#1}\ProblemIndex{#1}{}\xspace}
\newcommand{\probHplan}{\ProblemName{$\Hcal$-Planarity}}
\newcommand{\cqed}{\ensuremath{\lhd}}
\newenvironment{cproof}{\par
  \pushQED{\cqed}%
  \normalfont \topsep6\p@\@plus6\p@\relax
  \trivlist
  \item\relax
  {\itshape
    Proof of the claim\@addpunct{.}}\hspace\labelsep\ignorespaces
}{%
  \hfill\popQED\endtrivlist\@endpefalse
}
\crefname{subsection}{Subsection}{Subsections}
\crefname{section}{Section}{Sections}
\crefname{proposition}{Proposition}{Propositions}
\crefname{theorem}{Theorem}{Theorems}
\crefname{lemma}{Lemma}{Lemmas}
\crefname{corollary}{Corollary}{Corollaries}
\crefname{observation}{Observation}{Observations}
\crefname{figure}{Fig.}{Figures}
\newtheorem{observation}{Observation}
\newtheorem{proposition}{Proposition}
\newtheorem{lemma}{Lemma}
\newtheorem{corollary}{Corollary}
\newtheorem{claim}{Claim}
\newtheorem{theorem}{Theorem}
\newtheorem{definition}{Definition}
\newcommand{\remove}[1]{}
\newcommand{\Ccal}{\mathcal{C}}
\newcommand{\Dcal}{\mathcal{D}}
\newcommand{\Fcal}{\mathcal{F}}
\newcommand{\Gcal}{\mathcal{G}}
\newcommand{\Hcal}{\mathcal{H}}
\newcommand{\Ocal}{\mathcal{O}}
\newcommand{\Pcal}{\mathcal{P}}
\newcommand{\Qcal}{\mathcal{Q}}
\newcommand\Rcal{\mathcal{R}}
\newcommand{\Scal}{\mathcal{S}}
\newcommand{\Tcal}{\mathcal{T}}
\newcommand{\Wcal}{\mathcal{W}}
\newcommand{\Nbbb}{\mathbb{N}}
\newcommand{\Rbbb}{\mathbb{R}}
\newcommand{\eqdef}{\stackrel{{\scriptsize\rm def}}{=}}
\definecolor{MidnightBlack}{rgb}{0.1,0.1,.34}
\definecolor{MidnightBlue}{rgb}{0.1,0.1,0.43}
\definecolor{Black}{rgb}{0,0, 0}
\definecolor{Blue}{rgb}{0, 0 ,1}
\definecolor{Red}{rgb}{1, 0 ,0}
\definecolor{White}{rgb}{1, 1, 1}
\definecolor{grey}{rgb}{.6, .6, .6}
\definecolor{Mygreen}{rgb}{.0, .7, .0}
\definecolor{Yellow}{rgb}{.55,.55,0}
\definecolor{Mustard}{rgb}{1.0, 0.86, 0.35}
\definecolor{applegreen}{rgb}{0.55, 0.71, 0.0}
\definecolor{darkturquoise}{rgb}{0.0, 0.81, 0.82}
\definecolor{celestialblue}{rgb}{0.29, 0.59, 0.82}
\definecolor{green_yellow}{rgb}{0.68, 1.0, 0.18}
\definecolor{crimsonglory}{rgb}{0.75, 0.0, 0.2}
\definecolor{darkmagenta}{rgb}{0.30, 0.0, 0.30}
\definecolor{magenta}{rgb}{0.50, 0.0, 0.50}
\definecolor{internationalorange}{rgb}{1.0, 0.31, 0.0}
\definecolor{darkorange}{rgb}{1.0, 0.55, 0.0}
\definecolor{ao}{rgb}{0.0, 0.5, 0.0}
\definecolor{awesome}{rgb}{1.0, 0.13, 0.32}
\definecolor{darkcyan}{rgb}{0.0, 0.50, 0.50}
\definecolor{violet}{rgb}{0.93, 0.51, 0.93}
\definecolor{brown}{rgb}{0.65, 0.16, 0.16}
\definecolor{orange}{rgb}{1.0, 0.65, 0.0}
\definecolor{darkgreen}{rgb}{0, 0.7, 0}
\definecolor{Red}{rgb}{1, 0 ,0}
\definecolor{Blue}{rgb}{0, 0 ,1}
\newcommand{\hh}{
\newcommand*\samethanks[1][\value{footnote}]{\footnotemark[#1]}

\title{$\Hcal$-Planarity and Parametric Extensions: \\ when Modulators Act Globally
\thanks{The research leading to these results has been supported the Franco-Norwegian project PHC AURORA 2024-25, the Research Council of Norway via the  BWCA (grant no. 314528) and Extreme-Algorithms (grant no. 355137) projects, and the French-German Collaboration ANR/DFG Project UTMA (ANR-20-CE92-0027).}
}
\date{}

\author{
Fedor V. Fomin%
\thanks{Department of Informatics, University of Bergen, Norway.} 
\and
Petr A. Golovach%
\samethanks 
\and
Laure Morelle\thanks{LIRMM, Univ Montpellier, CNRS, Montpellier, France.}
\and
Dimitrios  M. Thilikos\samethanks}

\begin{document}
\graphicspath{{./Figures/}}

\maketitle

\begin{abstract}
\noindent  
We  introduce a series of graph decompositions based on the 
\textsl{modulator/target} scheme of modification problems that enable several algorithmic applications that parametrically extend the algorithmic potential of planarity. In the core of our approach is a polynomial time algorithm for computing planar $\Hcal$-modulators.
Given a graph class $\Hcal$, a  \emph{planar $\Hcal$-modulator} of a graph $G$ is a set $X \subseteq V(G)$ such that the ``torso'' of $X$ is \textsl{planar} and all connected components of $G - X$ belong to $\Hcal$. Here, the torso of  $X$ 
is obtained from $G[X]$ if, for every connected component of $G-X$, 
we form a clique out of  its neighborhood on $G[X]$. We introduce \probHplan as the problem of deciding whether a graph $G$ has a planar $\Hcal$-modulator. 
We prove that, if  $\Hcal$ is hereditary, CMSO-definable, and decidable in polynomial time, then \probHplan is solvable in polynomial time.

The polynomial-time algorithm has several notable algorithmic consequences. 
First, we introduce two parametric extensions of \probHplan{} by defining the notions of 
\emph{$\Hcal$-planar treedepth} and \emph{$\Hcal$-planar treewidth}, which generalize the concepts of \textsl{elimination distance} and \textsl{tree decompositions} to the class~$\Hcal$. 
For example, a graph is said to have $\Hcal$-planar treewidth at most $t$ if it admits a tree decomposition such that
  every leaf-bag in the decomposition is contained in~$\Hcal$, and
  all other bags are either of size at most $t+1$ or have a planar torso.
 By leveraging our polynomial-time algorithm for \probHplan{}, we prove the following theorem.
Suppose that $\Hcal$ is a hereditary, CMSO-definable, and union-closed graph class, 
and consider the problem of determining whether a graph~$G$ has an \emph{$\Hcal$-modulator} of size at most $k$ for some $k\in\Nbbb$, i.e., a set $X \subseteq V(G)$ of size at most $k$ such that every connected component of $G - X$ lies in~$\Hcal$.  
If this problem is fixed-parameter tractable (FPT) when parameterized by 
$k$, then deciding whether a graph $G$ has $\Hcal$-planar treedepth or $\Hcal$-planar treewidth at most $k$ is also FPT when parameterized by $k$.
By combining this result with existing FPT algorithms for various $\Hcal$-modulator problems, we thereby obtain FPT algorithms parameterized by $\Hcal$-planar treedepth and $\Hcal$-planar treewidth 
for numerous graph classes~$\Hcal$. 

Second, by combining the well-known algorithmic properties of planar graphs and graphs of bounded treewidth, our methods for computing $\Hcal$-planar treedepth and $\Hcal$-planar treewidth lead to a variety of algorithmic applications. For instance, once we know that a given graph has bounded $\Hcal$-planar treedepth or bounded $\Hcal$-planar treewidth, we can derive additive approximation algorithms for graph coloring and polynomial-time algorithms for counting (weighted) perfect matchings. Furthermore,  we design Efficient Polynomial-Time Approximation Schemes (EPTAS-es) for several problems, including \textsc{Maximum Independent Set}.

All previous (meta) algorithmic studies on $\Hcal$-modulators and $\Hcal$-treewidth
 deal with modulators of ``low bidimensionality'' in the sense that they 
cannot be spread along a grid-minor of the input graph 
   [\textsl{Jansen,  de Kroon,   Włodarczyk, STOC 2021; Agrawal, Kanesh, Lokshtanov, Panolan, Ramanujan, Saurabh, and Zehavi, SODA 2022; Fomin, Golovach, Sau, Stamoulis, Thilikos, ACM Trans. Comp. Log. 2025}].  
Our results are the first to handle the situation where the $\Hcal$-modulators 
act globally. For this, we introduce a new version of the ``irrelevant vertex technique''  that  can deal with modulators that may act ``everywhere''  in the input graph. Our technique goes further than the current meta-algorithmic framework on the applicability of the  irrelevant vertex  technique [\textsl{Sau,  Stamoulis,  Thilikos, SODA 2025}] that is conditioned by  bidimensionality.  
\end{abstract}

\newpage
\tableofcontents

\newpage

\section{Introduction}\label{sec_intro}

Graph decompositions play a crucial role in graph theory and have a plethora of algorithmic applications. In our paper, we consider decompositions that historically stem out of graph modification problems. A typical problem of this type asks whether it is possible to apply a series of modifications to a graph in order to transform it into one that exhibits a desired target property. Many classical \NP-complete problems can be described as the following graph modification problem to some class of 
 graphs $\Hcal$. In the \textsc{$\Hcal$-Deletion} problem for an input graph $G$, we aim to find a vertex set $S$ of the minimum size such that graph $G-S$ is in the target class $\Hcal$. 
 For example, if $\Hcal$ is the class of graphs without edges, then \textsc{$\Hcal$-Deletion} is \textsc{Vertex Cover}; and when $\Hcal$ is the class of bipartite graphs, then \textsc{$\Hcal$-Deletion} is \textsc{Odd Cycle Transversal}.
Most graph modification problems are known to be \NP-complete \cite{LewisY80theno,Yannakakis81edged}. However, there is a tremendous amount of study in parameterized algorithms establishing the fixed-parameter tractability of \textsc{$\Hcal$-Deletion} for 
various graph classes $\Hcal$ when parameterized by the size of the set $S$, often referred to in the literature as the \emph{$\Hcal$-modulator} 
(we refer to the book \cite{cygan2015parameterized} for many examples of such algorithms). 
The  \textsc{$\Hcal$-Deletion} problem could be seen as a graph decomposition problem 
where the task is to partition the set of vertices of a graph into a (small) modulator and 
the set of vertices inducing a graph from $\Hcal$.  The main idea is to use the 
constructed decomposition for algorithmic purposes.  In particular, in the 
parameterized complexity area, it is common to investigate problems under structural 
parameterizations by the vertex cover number or the minimum size of a feedback vertex 
set;  we refer to~\cite{Jansen2013ThePO} for basic examples.  

Our results are motivated by the recent developments in the area. 
The first step in these developments is due to Bulian and Dawar \cite{BulianD16graph,BulianD17fixe}, who noticed that many tractability results could 
be extended when, instead of the size of the modulator,  we parametrize by the 
treedepth of the torso of the modulator and, that way, introduced the concept of  \textsl{elimination distance} to $\Hcal$ (also called \textsl{$\Hcal$-treedepth}). These ideas were 
generalized to the scenario where the parameter is the treewidth of the torso of the 
modulator $S$ by Eiben, Ganian,  Hamm,  and  Kwon \cite{EibenGHK21}. Equivalently, this 
leads to the notion of \textsl{$\Hcal$-treewidth}, which is informally a tree decomposition with 
leaf bags from $\Hcal$ and all other bags of bounded 
size~\cite{EibenGHK21,JansenK021verte}. The attractiveness of the $\Hcal$-treewidth is 
that it combines the best algorithmic properties of both worlds: the nice properties of 
$\Hcal$ (e.g., being bipartite or chordal) with the nice properties of graphs of 
bounded treewidth \cite{AgrawalKLPRSZ22,EibenGHK21,JansenK021verte}. Research on 
$\mathcal{H}$-treewidth took an unexpected turn with the work of Agrawal, Kanesh,  
Lokshtanov,  Panolan,  Ramanujan,  Saurabh, and   Zehavi \cite{AgrawalKLPRSZ22}, who 
demonstrated that, from the perspective of (non-uniform) fixed-parameter tractability 
(\FPT), the size of the modulator, the treedepth, and the treewidth of the 
modulator's torso offers equally powerful parameterizations for every hereditary graph 
class $\mathcal{H}$, provided some mild additional conditions are met. We stress that in all aforementioned results, modulators' torsos have always bounded treewidth.

Our motivation for introducing \emph{$\Hcal$-planar treedepth} and \emph{$\Hcal$-planar treewidth} stems from the well-known algorithmic properties of planar graphs, which have been exploited extensively in approximation, parameterized, and counting algorithms. In much the same way that $\Hcal$-treewidth combines the best properties of treewidth and the class~$\Hcal$, we pose the following question:
\begin{quote}
    \textsl{How and when is it possible to combine the best of planarity, graph class $\Hcal$, and treewidth or treedepth?}
\end{quote}

To arrive at a meaningful answer, it is essential to have efficient procedures for computing an
 {$\Hcal$-planar treewidth} and an  {$\Hcal$-planar treedepth}.

\subsection{Looking for planar modulators} 
In the core of our approach to {$\Hcal$-planar treewidth} and   {$\Hcal$-planar treedepth} is the concept of 
modulators whose torsos are planar. To the best of our knowledge, this is the first example of efficiently computable modulators of unbounded treewidth (or rankwidth). As the treewidth and rankwidth of a planar graph can be arbitrarily large, answering this question 
goes beyond the modulator framework of {\cite{EibenGHK21}}.

 For facilitating our presentation, we first define the decision problem \textsc{$\Hcal$-planarity}, and later, we present its parametric extensions so as to display the full generality of our results.

 \begin{figure}[ht!]
\centering
\includegraphics[scale=1]{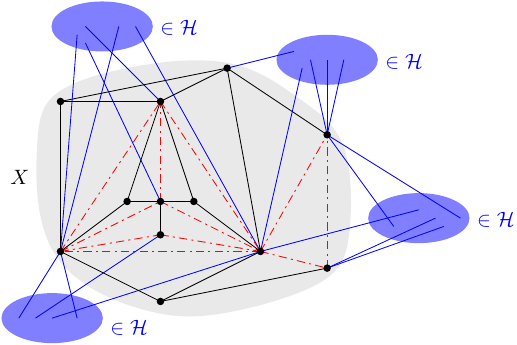}
\caption{Example of an $\Hcal$-planar graph $G$. The vertices of the $\Hcal$-modulator $X$ and the edges of $G[X]$ are black, the edges of the torso of $X$ that are not edges of $G$ are shown in dashed red, and the connected components of $G-X$ together with the edges joining them with $X$ are colored blue.}
\label{fig_example}
\end{figure}

 \paragraph{\mbox{\textsc{$\Hcal$-planarity}}.}
  It is necessary to define the torso concept to   
introduce   \textsc{$\Hcal$-planarity}.  
  (Οther graph-theoretical definitions will be provided in the subsequent section.)
For a vertex set $X \subseteq V(G)$, the \emph{torso} of $X$ in $G$, denoted by $\torso(G,X)$ is the graph derived from the induced subgraph $G[X]$ by turning $N_G(V(C))$, the set of vertices in $X$ adjacent to each connected component $C$ of $G - X$, into a clique.
Let $\Hcal$ be a graph class. A \emph{planar $\Hcal$-modulator} of a graph $G$ is a set $X \subseteq V(G)$ such that the torso of $X$ is planar and every connected component of $G - X$ belongs to $\Hcal$. We introduce \probHplan as the problem of deciding whether a graph $G$ admits a planar $\Hcal$-modulator. 
If graph $G$ admits a planar $\Hcal$-modulator, we will refer to it as a \emph{$\Hcal$-planar} graph (see~\autoref{fig_example}).
Being $\Hcal$-planar can also be viewed as an extension of planarity as it combines it with the property $\Hcal$: an $\Hcal$-planar graph is a graph that becomes planar if we remove disjoint subgraphs belonging to $\Hcal$ 
and connect their neighbors in the rest of a graph as cliques. 
The torso operation represents the ``reminiscence''  of the removed graphs 
in the body of the resulting planar graph.

We say that a class of graphs $\Hcal$ is \emph{CMSO-definable} if for some 
Counting Monadic Second Order (CMSO) sentence $\psi$ for any graph $G$, $G\in \Hcal$ if and only if $G\models \psi$.
Additionally, we say that $\Hcal$ is \emph{polynomial-time decidable} if we can check whether a graph belongs to $\Hcal$ in polynomial-time.
Our first result is the following theorem.

\begin{theorem}\label{th_th}
Let $\Hcal$ be a hereditary, CMSO-definable, and polynomial-time decidable graph class. 
Then there exists a polynomial-time algorithm solving \Hpl.
\end{theorem}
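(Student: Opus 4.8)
The plan is to reduce \Hpl\ on an arbitrary graph to \Hpl\ on graphs of bounded treewidth, where it can be solved by a logic-based meta-theorem. Two reductions feed into this: a structural one that decomposes $G$ along small separators and ``peels off'' its highly connected parts, and an irrelevant-vertex reduction that shrinks the parts that are large but planar-like.

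The starting point is that a planar graph has no $K_5$, so in any planar $\Hcal$-modulator $X$ every connected component $C$ of $G-X$ satisfies $|N_G(V(C))|\le 4$. This has two consequences. First, ``$G$ has a planar $\Hcal$-modulator'' is CMSO-expressible: existentially quantify $X$; express in MSO that every component of $G-X$ has at most four neighbours; express in CMSO that every component of $G-X$ lies in $\Hcal$ (using CMSO-definability of $\Hcal$, with its defining sentence relativised to components); and express planarity of $\torso(G,X)$ in MSO, since the edge set of $\torso(G,X)$ is MSO-interpretable in $(G,X)$ and planarity of any graph is the MSO-expressible absence of $K_5$- and $K_{3,3}$-minors. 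By Courcelle's theorem this yields, for every fixed $w$, a linear-time algorithm for \Hpl\ on graphs of treewidth at most $w$. Second, the $\le 4$-neighbour bound localises the modulator: if $B$ is a torso of the canonical tree-decomposition of $G$ along separators of order at most $4$, then $V(B)\setminus X$ is, up to a bounded set, contained in a single component of $G-X$, since otherwise $N_B(V(B)\setminus X)$ would witness a separator of $B$ of order at most $4$ with two large sides.

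Accordingly, I would compute this canonical decomposition. If all its torsos are small, then $G$ has bounded treewidth and we are done. Otherwise, pick a large torso $B$. If $G[B]$ (equivalently, the relevant incarnation of $B$ in $G$) has a $K_5$-minor, then $B$ cannot be placed in the modulator, for the torso would contain a $K_5$-minor and fail to be planar; hence $B$ is essentially swallowed by one $\Hcal$-component, and, using that $\Hcal$ is polynomial-time decidable, I would decide whether this is feasible and, if so, replace $B$ by a bounded-size gadget recording its interaction with its $\le 4$ interface vertices. If instead $G[B]$ has no $K_5$-minor but is still large, then by the flat-wall theorem it contains a large flat wall of $G$, and we invoke the irrelevant-vertex reduction below; iterating these steps eventually leaves only small torsos, i.e.\ bounded treewidth.

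The heart of the argument is the irrelevant-vertex step. I claim that a vertex $v$ deep in the interior of a large flat wall of $G$ is irrelevant: $G$ is $\Hcal$-planar iff $G-v$ is. The forward direction is immediate, since the class of $\Hcal$-planar graphs is closed under vertex deletion (deleting a vertex only shrinks components and their neighbourhoods and only removes clique edges from the torso). The backward direction is where the ``modulators act globally'' phenomenon must be faced: a planar $\Hcal$-modulator $X''$ of $G-v$ may use arbitrarily many vertices of the flat wall, so one cannot assume the wall is untouched. Instead I would use the concentric cycles of the wall, the $\le 4$-neighbour bound on the components of $(G-v)-X''$, and a linkage argument whose relevant size is controlled by the Kuratowski graphs, to find a ``clean'' annulus around $v$ that no component of $(G-v)-X''$ crosses; one then overrides $X''$ inside this annulus by placing its (planar) interior together with $v$ into the modulator. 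Since this is a purely local change in a planar region, it preserves planarity of the torso and creates no new components. As each reduction deletes a vertex and every step runs in polynomial time, the whole procedure is polynomial. The main obstacle I anticipate is exactly this clean-annulus argument, together with the bookkeeping required to run the $\le 4$-separator decomposition and the flat-wall reduction in tandem so that $\Hcal$-components of unbounded treewidth and planar regions of unbounded treewidth are handled simultaneously.
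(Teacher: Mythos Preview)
Your high-level plan (CMSO expressibility plus Courcelle for bounded treewidth, then reduce treewidth via an irrelevant-vertex argument inside a flat wall) matches the paper's, but the irrelevant-vertex step has a real gap, and it is exactly the obstacle the paper isolates as its main technical contribution. You propose, given a modulator $X''$ of $G-v$, to find a clean annulus around $v$ crossed by no component of $(G-v)-X''$ and then to \emph{place the whole interior of that annulus into the modulator}. This does not work: the interior of a flat wall is not a planar subgraph of $G$. ``Flat'' only means the compass admits a rendition, and each cell of a rendition may carry an arbitrary (in particular non-planar) subgraph attached along at most three vertices. If $\Hcal$ contains non-planar graphs, components of $(G-v)-X''$ sitting in those cells are non-planar; once you absorb them into the modulator, $G[X]$ and hence $\torso(G,X)$ contains them as induced subgraphs and is not planar. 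The paper therefore does something genuinely different: before touching $v$ it \emph{checks}, via Courcelle on the bounded-treewidth compass $G'$, that $G'$ is itself $\Hcal^{(k)}$-planar, and only then declares $v$ irrelevant. The backward direction is proved not by enlarging the modulator but by taking $\Hcal^{(k)}$-compatible sphere decompositions of $G'$ and of $G-v$ and \emph{gluing} them along a track of the wall, using the ground-maximal versus well-linked comparison of renditions (\autoref{cell_in_cell2}, \autoref{lem_newirr}, \autoref{lem_combine}).

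A second, related gap is that you never bound the size of the $\Hcal$-components, which your clean-annulus argument silently needs: if one component of $(G-v)-X''$ already occupies the whole wall, no layer is clean. The paper gets this bound via \autoref{prop_CMSOMetaTheorem}: it reduces to $(\alpha(4),4)$-unbreakable graphs, where either one component is huge and everything else has size $<\alpha(4)$ (the big-leaf case, dispatched by brute force over all order-$\le 4$ separations, \autoref{lem_big_leaf}), or every component has fewer than $\alpha(4)$ vertices, which is precisely \Hkpl\ for $k=\alpha(4)-1$. Your attempt to achieve this by a $\le 4$-separator tree decomposition with gadget replacement is not clearly correct: the assertion that a quasi-4-connected torso $B$ with a $K_5$-minor must be ``swallowed by one $\Hcal$-component'' does not follow, since the modulator may meet $\beta(t)$ in a large planar set while the $K_5$-minor of $B=\torso(G,\beta(t))$ uses torso-edges coming from components of $G-\beta(t)$ that are not cliques in $\torso(G,X)$.
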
 
 
For example, by taking $\Hcal$ to be the class of bipartite graphs, which is a hereditary, 
CMSO-definable,  and polynomial-time decidable graph class, the $\Hcal$-planarity problem becomes the problem of deciding whether the vertex set of a graph $G$ could be partitioned into two sets $X$ and $Y$, such that the torso of $X$ is a planar graph and the vertices of $Y$ induce a bipartite graph.

\paragraph{Irrelevant vertices for global modulators.}
For the proof of \cref{th_th}, we first use the meta-theorem of Lokshtanov, Ramanujan,  Saurabh, and  Zehavi from \cite{LokshtanovR0Z18redu}, in order to prove that 
the problem of solving \Hpl{} can be reduced to the problem of solving a restricted version of the problem, called \Hkpl, which may be of independent interest. The task of this problem is to decide, given  a graph $G$ and a positive integer $k$, whether $G$ has an 
 $\Hcal^{(k)}$-planar modulator where  $\Hcal^{(k)}$ is the subclass of $\Hcal$ composed by the graphs of $\Hcal$ with at most $k$ vertices. 
 Next we prove that yes-instances of \Hkpl exclude some bounded size clique 
 as a minor. This permits us to apply the flat wall theorem from 
  \cite{RobertsonS95XIII,KawarabayashiTW18anew,SauST24amor},
 in order to find either that the graph has bounded treewidth, in which case we may conclude using Courcelle's Theorem~\cite{Courcelle90them,ArnborgLS91easy}, or to find a ``big enough'' flat wall in it. In this last case, our algorithm produces 
 an equivalence instance of the problem by removing an ``irrelevant vertex''
 in the middle of this wall. This is the general framework of 
 the ``irrelevant vertex technique'' whose general applicability can be guaranteed 
 by a series of  algorithmic meta-theorems that have been recently developed in \cite{FominGSST21compound,GolovachST23model} and, in its more powerful version, in \cite{SauST2024parame}.
 Unfortunately, none of these meta-algorithmic results can be applied in our case.
 Typically, the applicability of the algorithmic meta-theorems in \cite{SauST2024parame,FominGSST21compound,GolovachST23model}
 require that the modifications are affecting only a \textsl{restricted area} of a big flat wall. 
 In technical terms, according to \cite{SauST2024parame},  they should affect a set of vertices with ``low bidimensionality'', i.e., 
 a set  through which it is not possible to root a big grid minor. In our case the modifications are \textsl{global} and they may occur everywhere along the two-dimensional  area of the flat wall. Therefore, 
precisely  because our modulator is a planar graph (of potentially unbounded  treewidth),
it is not possible to apply any of the results in \cite{SauST2024parame,FominGSST21compound,GolovachST23model} in order to find 
an irrelevant vertex. This initiates 
the challenge of applying the irrelevant vertex technique under the setting that the modification potentially affects  any part of the flat wall. The most technical 
part of the paper is devoted to this problem. Our approach applies 
the irrelevant vertex technique beyond the meta-algorithmic  framework of \cite{SauST2024parame}  and, we believe, 
may have independent use for other problems involving global modification.

\paragraph{On the conditions on $\Hcal$.}
We proceed with a few remarks on the conditions in \cref{th_th}. 
First, it is important that we demand the torso of the modulator to be planar rather than to allow the whole modulator to be planar.  Without this condition, the problem becomes   \NP-hard. Indeed,  Farrugia \cite{Farrugia04vert} proved that for two graph classes $\Pcal$ and $\Qcal$ that are hereditary and closed under disjoint-union, the problem of deciding whether a graph $G$ admits a partition $(A,B)$ of $V(G)$ such that $G[A]\in\Pcal$ and $G[B]\in\Qcal$ is \NP-hard, unless $\Pcal$ and $\Qcal$ are both the class of edgeless graphs.
Hence, for any hereditary graph class $\Hcal$ closed under the disjoint union operation, the problem of deciding, given a graph $G$, whether there exists $S\subseteq V(G)$ such that $G[S]$ is planar and that each connected component $C$ of $G-S$ belongs to $\Hcal$, is \NP-hard.

Second, the necessity of $\Hcal$ to be a polynomial-time decidable graph class is apparent---it is easy to show that \NP-hardness of deciding whether a graph is in  $\Hcal$ implies \NP-hardness of \Hpl\ as well. However, the necessity of $\Hcal$ to be a hereditary property is less obvious. We prove in \cref{thm_lower} that for non-hereditary classes $\Hcal$, \Hpl\ becomes \NP-hard.

 Finally, we do not know whether  CMSO-definability of $\Hcal$ is necessary.
  We require this condition because the initial step of our proof uses the meta-theorem of Lokshtanov, Ramanujan,  Saurabh, and  Zehavi from \cite{LokshtanovR0Z18redu} on reducing CMSO-definable graph problems to problems on unbreakable graphs. Whether CMSO-definability could be replaced by a more general property remains an intriguing open question.
 
Using the meta-theorem of ~\cite{LokshtanovR0Z18redu} comes with the following drawback:     \cref{th_th} is non-constructive. Similar to the results in \cite{LokshtanovR0Z18redu}, it allows us to infer the existence of a polynomial-time algorithm for every CMSO formula $\varphi$, but it does not provide the actual algorithm. The reason for that is that   \cite{LokshtanovR0Z18redu} relies on the existence of representative subgraphs for equivalence classes of bounded-boundary graphs but does not provide a procedure to compute such representatives. 
 In other words, \cref{th_th} provides a \emph{non-constructive} polynomial-time algorithm. 

 \subsection{Parametric extensions of $\Hcal$-planarity}

As we already mentioned, asking whether a graph is $\Hcal$-planar is a decision problem combining $\Hcal$ with planarity. In order to position our work to the 
more general area of graph modification parameters, 
we extend our results by combining $\Hcal$ with parametric extensions of planarity. To explain this, we present below 
a general framework for defining graph modification parameters based on the ``modulator/target'' scheme, inspired from \cite{EibenGHK21}. 

In this paper, every graph parameter $\p$, that is a function mapping a graph to an integer, is assumed to be \emph{minor-monotone}, that is, if $G_{1}$ 
is a minor of $G_{2}$, then $\p(G_{1})≤\p(G_{2})$. The parameter $\p$ will serve as  the measure of modulation.

\paragraph{Extending planar modulation.}
Let now $\Hcal$ be some graph class expressing some ``target property''. 
We further demand that $\Hcal$ is \emph{union-closed}, i.e., the disjoint union of two graphs in $\Hcal$ 
also belongs in $\Hcal$. 
Given a graph $G$, we compose the modulation parameter $\p$ and the target property $\Hcal$ to the graph parameter ${\Hcal}\mbox{-}\p$
such that 
\begin{eqnarray}
\begin{minipage}{14cm}
\textsl{${\Hcal}\mbox{-}\p(G)=\min\{k\mid \exists X\subseteq V(G):  \text{$\p(\torso(G,X))≤k$ and  $G-X\in \Hcal$\}}.$}
\end{minipage}
\label{constrxaint}
\end{eqnarray}

In case $\p$ is the size of a graph, denoted by $\size$, then 
${\Hcal}\mbox{-}\size$ is the classic  \textsc{$\Hcal$-Deletion} problem.
If $\p$
is the treedepth $\td$ of a graph, then 
${\Hcal}\mbox{-}\td$ is the elimination distance to $\Hcal$ defined 
by Bulian and Dawar in \cite{BulianD16graph,BulianD17fixe}.
If $\p$
is the treewidth, $\tw$ of a graph, then 
${\Hcal}\mbox{-}\tw$ is the $\Hcal$-treewidth
parameter introduced by Eiben, Ganian,  Hamm,  and  Kwon in \cite{EibenGHK21}. 
Consider the following general question that expresses a vast variety 
of modification problems. 
\begin{eqnarray}
\begin{minipage}{14cm}
\textsl{Given some union-closed class $\Hcal$ and   some minor-monotone 
parameter $\p$, is there an \textsf{FPT}-algorithm   deciding,   given 
a graph $G$ and a  $k\in\Nbbb$,   whether ${\Hcal}\mbox{-}\p(G)≤k$?}
\end{minipage}
\label{constraint}
\end{eqnarray}
An important contribution  towards finding a general answer to \eqref{constraint} was given by the  results of  \cite{AgrawalKLPRSZ22}.
\begin{proposition}[\!\!{\cite{AgrawalKLPRSZ22}}]\label{prop1}
Let $\p\in \{\td, \tw\}$ and let $\Hcal$ be a graph class that is hereditary, CMSO-definable, and union-closed. If  there is an \FPT-algorithm that solves \pbdel parameterized by the solution size $k$, then  there is a (non-uniform) \FPT-algorithm deciding   whether ${\Hcal}\mbox{-}\p(G)≤k$. 
\end{proposition}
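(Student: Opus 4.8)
The plan is to handle $\p=\td$ and $\p=\tw$ by a single argument phrased in terms of the natural decompositions behind the two parameters: for $\td$, an \emph{$\Hcal$-elimination forest of depth $k$}, and for $\tw$, an \emph{$\Hcal$-tree decomposition of width $k$}; in both, the ``base'' pieces induce graphs in $\Hcal$ and the non-base part has treedepth, resp.\ treewidth, at most $k$. Two reductions come for free. Running the assumed \FPT\ algorithm for \pbdel with $k=0$ decides ``$G\in\Hcal$'' in polynomial time, so $\Hcal$-membership is available as a subroutine. And since $\Hcal$ is union-closed and $\torso$ respects connected components, ${\Hcal}\mbox{-}\p(G)\le k$ iff ${\Hcal}\mbox{-}\p(C)\le k$ for every connected component $C$ of $G$; hence we may assume $G$ connected. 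For $\td$ one also has the peeling step: if $G$ is connected and $G\notin\Hcal$, then ${\Hcal}\mbox{-}\td(G)\le k$ iff $k\ge1$ and some $v\in V(G)$ has ${\Hcal}\mbox{-}\td(G-v)\le k-1$ (because $\torso(G-v,X\setminus\{v\})=\torso(G,X)-v$ whenever $v$ lies in a modulator $X$), with an analogous ``remove a top bag'' step for $\tw$. Branching over $v$ alone only yields an $n^{O(k)}$ algorithm, so the real task is to confine the search.

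\textbf{A win/win.} The engine of the algorithm is a dichotomy obtained by combining the recursive-understanding technique of Chitnis, Cygan, Hajiaghayi, M.\ Pilipczuk and Mi.\ Pilipczuk with a standard high-connectivity reduction: for suitable functions $s(k),c(k)$ we can, in polynomial time, either (a) certify that $\tw(G)\le f(k)$ for some function $f$, or (b) reduce the instance to one in which $G$ is $(s(k),c(k))$-unbreakable, meaning no separation of order at most $c(k)$ splits $G$ into two sides of size more than $s(k)$. Case (a) is finished immediately: ``${\Hcal}\mbox{-}\p(G)\le k$'' is expressible by a single CMSO sentence --- existentially quantify the set $X$, relativize the sentence defining $\Hcal$ to $G-X$, and note that for fixed $k$ both ``the torso of $X$ has treedepth $\le k$'' and ``\dots treewidth $\le k$'' are MSO-expressible on $(G,X)$ (the edge relation of $\torso(G,X)$ is MSO-definable from $(G,X)$ via ``joined by a path through $V(G)\setminus X$'', and bounded treedepth/treewidth are MSO-expressible for fixed $k$) --- so Courcelle's theorem decides it in \FPT\ time on graphs of bounded treewidth. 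The reduction in case (b) replaces a well-separated ``understood'' chunk of $G$ by a bounded-size representative drawn from the finitely many equivalence classes of $c(k)$-boundaried graphs under the Myhill--Nerode equivalence of the CMSO predicate; finiteness of this list is precisely where CMSO-definability of $\Hcal$ is used, and the fact that its representatives need not be computable is the (sole) reason the final algorithm is only \emph{non-uniform} \FPT.

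\textbf{The unbreakable case.} Assume $G$ is $(s(k),c(k))$-unbreakable with $c(k)\ge k+1$, and suppose ${\Hcal}\mbox{-}\p(G)\le k$ is witnessed by a modulator $X$. Each connected component $C$ of $G-X$ is cut off from the rest of $G$ by $N_G(V(C))\cap X$, which is a clique of $\torso(G,X)$ and so has at most $k+1\le c(k)$ vertices; applying unbreakability to the corresponding separation gives, for each $C$, that either $|V(C)|\le s(k)$ or $|V(G)\setminus V(C)|\le s(k)$, and at most one component can satisfy the latter. If some component $C^{\ast}$ is ``huge'', then $Z:=V(G)\setminus V(C^{\ast})$ (which contains $X$) is an $\Hcal$-deletion set with $|Z|\le s(k)$; we detect the existence of such a $Z$ with the \FPT\ algorithm for \pbdel on parameter $s(k)$, and then --- using that the sought modulator meets $Z$ in one of only $2^{s(k)}$ ways and that $G-Z\in\Hcal$ is CMSO-definable --- decide ${\Hcal}\mbox{-}\p(G)\le k$ by a bounded branching that recurses on the $(k-1)$-level algorithm. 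If instead \emph{all} components of $G-X$ have at most $s(k)$ vertices, then from a tree decomposition of $\torso(G,X)$ of width at most $k$ we build one of $G$ by attaching, next to some bag containing the clique $N_G(V(C))\cap X$, a new bag $(N_G(V(C))\cap X)\cup V(C)$ for each such $C$; this has width at most $k+s(k)$, so $\tw(G)\le k+s(k)$ and we return to case (a). Interleaving all of this with the recursion on $k$ and with the recursive-understanding hierarchy yields the claimed non-uniform \FPT\ algorithm.

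\textbf{Main obstacle.} The crux is the unbreakable case, and within it the subcase where \pbdel only certifies a small $\Hcal$-deletion set $Z$: this merely shows ${\Hcal}\mbox{-}\p(G)\le|Z|$, not ${\Hcal}\mbox{-}\p(G)\le k$, and when $\Hcal$ itself has unbounded treewidth (e.g.\ $\Hcal$ planar) one cannot simply fall back to Courcelle on $G-Z$. Turning this into a decision of ``${\Hcal}\mbox{-}\p(G)\le k$'' is what forces the combination of iterative compression (to strengthen \pbdel to its disjoint/annotated form), guessing the bounded interaction of the modulator with $Z$, and a further round of recursive understanding relative to $Z$, all of which lean on the CMSO-definability of $\Hcal$. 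For $\p=\tw$ there is the additional, more technical point that the base bags of an $\Hcal$-tree decomposition may overlap the non-base bags, so the decomposition must first be put into a normal form with pairwise-disjoint base components before the separator analysis above is valid.
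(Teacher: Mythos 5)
The paper does not prove \cref{prop1}: it is imported verbatim from \cite{AgrawalKLPRSZ22}, so there is no in-paper proof to compare against. The closest analogue is the proof of \cref{th_param} (via \cref{prop_CMSOMetaTheorem}, \cref{cor_bigH}, \cref{lem_edk}, and \cref{lem_twk}), and your high-level architecture matches it and the cited source: reduce to unbreakable graphs, note that $N_G(V(C))$ is a clique of the torso and hence has at most $k+1$ vertices, and split into the case where every component of $G-X$ is small (then $\tw(G)\le k+s(k)$, and since the whole predicate ``$\Hcal\mbox{-}\p(G)\le k$'' is CMSO-expressible for fixed $k$, Courcelle finishes) versus the case of one huge component. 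That part of your sketch is sound, and it is also correct that the only reason the result is non-uniform is the non-computability of the CMSO representatives.

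The genuine gap is the huge-component case, which you yourself flag as the crux but do not resolve. Running the \pbdel{} oracle with parameter $s(k)$ returns \emph{some} deletion set $Z'$ of size at most $s(k)$; nothing forces $Z'$ to coincide with, contain, or be contained in $V(G)\setminus V(C^{\ast})$, so your ``the sought modulator meets $Z$ in one of only $2^{s(k)}$ ways'' branching presupposes knowledge of a set you have not computed. The standard repair --- used in \cref{cor_bigH} here and in \cite{AgrawalKLPRSZ22} --- is a splitter/random-sampling family in the spirit of \cref{rd_sampling}: one guesses a set $U$ separating $N_G(V(C^{\ast}))$ from $V(G)\setminus N_G[V(C^{\ast})]$, which (together with unbreakability) pins down the big leaf, and only then invokes the deletion algorithm on it; the remaining $\le s(k)$ vertices outside the big leaf are handled by brute force. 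You name the relevant tools (iterative compression, guessing interactions, another round of recursive understanding) but do not execute any of them, and your ``Main obstacle'' paragraph concedes as much. A second, independent problem is that your bounded branching ``recurses on the $(k-1)$-level algorithm'': for $\p=\td$ the peeling recursion you set up is legitimate, but for $\p=\tw$ there is no ``remove a top bag'' operation that decreases the parameter, so this recursion is undefined in half of the statement. Finally, a minor point: recursive understanding does not yield the dichotomy ``either $\tw(G)\le f(k)$ or $G$ is unbreakable''; the clean black box is the reduction of CMSO model checking to unbreakable graphs (\cref{prop_CMSOMetaTheorem}), with the bounded-treewidth case arising afterwards, inside the unbreakable instance, exactly as in your small-components analysis.
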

  
\autoref{prop1} immediately transfers all algorithmic knowledge on the  \textsc{$\Hcal$-Deletion} problem (that is checking whether ${\Hcal}\mbox{-}\size(G)≤k$) to parametric extensions ``above treewidth''.
This last condition is fundamental to all techniques 
 employed so far for dealing with the problem in \eqref{constraint}. 
The emerging question is to derive FPT-algorithms 
for choices of  $\p$ ``below treewidth'' (more precisely, ``not above treewidth'') which means that the  torso of the modulator may contain as a minor an \textsl{arbitrary big planar graph}.
These parameters are bounded on planar 
graphs and likewise they can be seen as parametric extensions of planarity. 
Designing algorithms for \eqref{constraint} on such parameters 
  is  a completely uninvestigated  territory  in modification problems.
In this paper we make a decisive step to this direction by introducing two  parametric extensions of planarity.

\paragraph{Planar treedepth.}
The first parameter  is the \textsl{planar treedepth}, denoted by  $\ptd$,
where $\ptd(G)=0$ iff $G$ is empty, and 
$\ptd(G)≤k$ if there is some $X\subseteq V(G)$ where $\torso(G,X)$
is planar and $\ptd(G-X)≤k-1$. 
Notice that if $G$ is planar, then $\ptd(G)≤1$. 
Intuitively, $\ptd(G)$ expresses the minimum number of ``planar layers''
that should be removed in order  to obtain the empty graph.
We use the term \emph{\hptd{$\Hcal$}} for $\Hcal\mbox{-}\ptd$.

\paragraph{Planar treewidth.}
The second parameter is  \emph{planar treewidth}, denoted by $\ptw$, 
and is based on tree decompositions (see \cref{sec_prelim} for the formal definition). We say that 
$\ptw(G)≤k$ if there is a tree decomposition $(T,\beta)$
where, for each node $t\in V(T)$,  if $\torso(G,\beta(t))$ is non-planar, then it  has size at most $k+1$. Notice that if $G$ is planar, $\ptw(G)=0$.
Planar treewidth appeared for the first time by Robertson and Seymour in \cite{RobertsonS91}
where they proved that, if a graph excludes some singly-crossing\footnote{A graph $H$ is \emph{singly-crossing} if it can be embedded in the sphere so that at most one pair of edges intersect.}  graph $H$, then
its planar treewidth is bounded by some constant depending on $H$.
We use the term \emph{$\Hcal$-planar treewidth} for $\Hcal\mbox{-}\ptw$.

Clearly both $\Hcal$-planar treedepth and $\Hcal$-planar treewidth
are more general parameters than  $\Hcal$-treedepth or $\Hcal$-treewidth. To see this, consider some $\Hcal$-planar graph where the planar torso of the modulator has 
treewidth $≥k$. While both $\Hcal$-planar treedepth and $\Hcal$-planar treewidth are bounded by constants on this graph, the values $\Hcal$-treedepth or $\Hcal$-treewidth 
are conditioned by $k$ and can be arbitrarily big. Moreover if $\Hcal$-treedepth or $\Hcal$-treewidth are small then also their planar counterparts are also small by using the same decompositions where the planar torso is just a vertex per component.

By extending the techniques that we introduce for proving \cref{th_th}, we can  prove the following  parametric extensions.
Throughout the paper, $n$ is the number of vertices of the input graph $G$, and $m$ its number of edges.
 
\begin{theorem}\label{th_param}
Let $\p\in \{\ptd, \ptw\}$ and let $\Hcal$ be a graph class that is hereditary, CMSO-definable, and union-closed. If  there is an \FPT-algorithm that solves \pbdel parameterized by the solution size $k$ in time $\Ocal_k(n^c)$, then  there is a (non-uniform) algorithm deciding  whether $\Hcal\mbox{-}\p(G)\le k$ in time $\Ocal_k(n^4+n^c\log n)$.\footnote{By $g(n)=\Ocal_{k}(f(n))$ we mean that there is a function $h:\Nbbb\to\Nbbb$ such that $g(n)=O(h(k)\cdot f(n))$.}
\end{theorem}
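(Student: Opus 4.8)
The plan is to mirror the three-phase architecture of the proof of \cref{th_th} --- reduction to highly connected (unbreakable) instances via the meta-theorem of \cite{LokshtanovR0Z18redu}, exclusion of a bounded clique minor, and a flat-wall dichotomy with an irrelevant-vertex step --- but now inside a recursion on $k$ and with the running time tracked explicitly. The conceptual starting point is that planar modulation telescopes. Using transitivity of the torso operation, concretely $\torso(\torso(G,X),Y)=\torso(G,Y)$ and $\torso(G,X)-Y=\torso(G-Y,X\setminus Y)$ whenever $Y\subseteq X\subseteq V(G)$, one gets
\[
\Hcal\mbox{-}\ptd(G)\le k \iff \exists\, Y\subseteq V(G):\ \torso(G,Y)\ \text{is planar and}\ \Hcal\mbox{-}\ptd(G-Y)\le k-1,
\]
with base case $\Hcal\mbox{-}\ptd(G)\le 0\iff G\in\Hcal$. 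So, writing $\Hcal_{0}=\Hcal$ and $\Hcal_{j}=\{H:\Hcal\mbox{-}\ptd(H)\le j\}$, deciding $\Hcal\mbox{-}\ptd(\cdot)\le k$ is exactly solving \Hpl{} for the target class $\Hcal_{k-1}$; an induction on $j$ (using that $\torso(G-v,Y\setminus\{v\})$ is always a subgraph of $\torso(G,Y)$) shows each $\Hcal_{j}$ is hereditary, CMSO-definable, and --- combining the inductive hypothesis with \cref{th_th} --- polynomial-time decidable. For $\ptw$ the same picture holds at a single level: $\Hcal\mbox{-}\ptw(G)\le k$ asks for a tree decomposition whose leaf bags have their private parts in $\Hcal$ and whose remaining bags are small or have planar torso, which is again ``a planar-type modulation over a CMSO-definable, hereditary, union-closed target''. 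In particular, for every fixed $k$ the predicate $\Hcal\mbox{-}\p(G)\le k$ is CMSO-definable.

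Iterating \cref{th_th} $k$ times would prove a qualitative statement but not the claimed bound, since \cref{th_th} is non-constructive and its polynomial degree is uncontrolled; so instead I would re-run its proof parametrically. First I would apply the reduction of \cite{LokshtanovR0Z18redu} to the CMSO predicate $\Hcal\mbox{-}\p(\cdot)\le k$, passing to $(s,c)$-unbreakable instances with $s,c=\Ocal_{k}(1)$; the recursive layering of that reduction is what contributes the $\log n$ overhead on the later \pbdel{} calls. On an unbreakable instance $G$ one analyses a hypothetical witness and splits into two cases. Either the union $Z$ of the planar layers (for $\ptw$: of the non-leaf bags) has bounded size, $|Z|=\Ocal_{k}(1)$; then $G-Z\in\Hcal$, so one can call the assumed \FPT{} algorithm for \pbdel{}, in time $\Ocal_{k}(n^{c})$, to locate such a $Z$ and then brute-force its partition into $\le k$ layers (for $\ptw$: its tree decomposition). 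This is where the hypothesis on \pbdel{} is genuinely used: in this case the $\Hcal$-part of $G$ may have unbounded treewidth, so Courcelle's theorem does not apply and the constant-size modulator cannot be produced by the $n^{\Ocal_{k}(1)}$ brute force that suffices in \cref{th_th}. Or, in the remaining case, every ``innermost'' $\Hcal$-part is small --- the bounded situation playing the role that \Hkpl{} plays in \cref{th_th}.

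In that remaining case $G$ excludes $K_{h(k)}$ as a minor for some $h(k)=\Ocal_{k}(1)$: the small $\Hcal$-parts have bounded Hadwiger number, and each of the $\le k$ planar torsos kills $K_{5}$-minors (branch sets meeting a planar layer induce a clique minor of that layer's torso; for $\ptw$ one localises the clique minor to a single bag). I would then invoke the flat-wall theorem \cite{RobertsonS95XIII,KawarabayashiTW18anew,SauST24amor}: either $\tw(G)=\Ocal_{k}(1)$, and the CMSO predicate is decided directly by Courcelle's theorem \cite{Courcelle90them,ArnborgLS91easy} in time $\Ocal_{k}(n)$; or $G$ contains a large flat wall, and one deletes an irrelevant vertex near its centre, producing an equivalent instance on fewer vertices and recursing. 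This last step reuses the ``global-modulator'' irrelevant-vertex machinery developed for \cref{th_th}, but adapted so that the rerouting of the modulator around the deleted vertex simultaneously respects the nesting of all $k$ planar layers (for $\ptw$: the leaf/non-leaf structure of the tree decomposition) and keeps each torso planar. The flat-wall loop runs $O(n)$ times at $\Ocal_{k}(n^{3})$ per iteration, i.e.\ $\Ocal_{k}(n^{4})$; adding the \pbdel{} calls across the $\Ocal_{k}(\log n)$ recursion layers gives the stated $\Ocal_{k}(n^{4}+n^{c}\log n)$, and non-uniformity is inherited from \cite{LokshtanovR0Z18redu}.

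The main obstacle is the irrelevant-vertex step in this layered, global setting: a single deletion must be certified irrelevant not for one planar modulator but for the whole nested family of them (or for an entire $\Hcal$-planar tree decomposition), which requires upgrading the flat-wall ``homogeneity and rerouting'' lemmas so that, level by level, both torso-planarity and the nesting of the layers survive the reroute --- already for one layer this goes beyond the bidimensionality-based meta-framework of \cite{SauST2024parame,FominGSST21compound,GolovachST23model} because the modulator is a planar graph of unbounded treewidth. A secondary difficulty is the case analysis on unbreakable instances: cleanly separating the ``small modulator'' regime from the ``small $\Hcal$-parts'' regime, routing the former to \pbdel{} within the $\Ocal_{k}(n^{c}\log n)$ budget, and keeping the polynomial degree of the flat-wall phase independent of $k$ throughout.
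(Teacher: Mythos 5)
Your high-level architecture matches the paper's: reduce to unbreakable graphs via \cite{LokshtanovR0Z18redu}, split into a ``small modulator'' regime and a ``small $\Hcal$-parts'' regime, and in the latter run a flat-wall/irrelevant-vertex loop; your telescoping observation $\Pcal^{k}\triangleright\Hcal=\Pcal\triangleright(\Pcal^{k-1}\triangleright\Hcal)$ and the CMSO-definability of the level-$k$ predicate are also exactly what the paper uses. But two steps in your sketch do not go through as stated. First, in the small-modulator case you propose to ``call the assumed \FPT{} algorithm for \pbdel{} to locate such a $Z$'', but that algorithm returns an \emph{arbitrary} minimum deletion set, which need not admit any layered planar-torso structure (for $\ptd$) or fit into a planar-width-$k$ tree decomposition (for $\ptw$). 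The paper (\cref{cor_bigH}) first uses the random-sampling families of \cite{ChitnisCHPP16desi} to isolate the neighborhood of the unique big component and the bounded-size remainder --- this, not the recursion of \cite{LokshtanovR0Z18redu}, is where the $\log n$ factor comes from --- then runs \pbdel{} only on the big component, and closes the case with a nontrivial exchange argument: the computed minimum deletion set $S_U$ is injected into the positions of the hypothetical modulator's vertices $A\setminus A_U$ in the certifying elimination sequence (resp.\ tree decomposition), and planarity of every torso survives because these vertices attach to the rest only through a clique of size at most four. Your sketch is missing both the isolation step and the exchange argument, and ``brute-force its partition into $\le k$ layers'' presupposes you already hold the correct $Z$, which nothing guarantees.

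Second, for $\p=\ptd$ your flat-wall dichotomy ``either $\tw(G)=\Ocal_{k}(1)$ or $G$ contains a large flat wall'' is not available: an apex grid is \emph{not} an obstruction to bounded $\Hcal$-planar treedepth (the apex can be absorbed into an earlier layer), so the apex-free variant of the Flat Wall theorem used for \Hpl{} cannot be invoked, and the classical version only yields a wall that is flat in $G-A$ for an apex set $A$ of size $\Ocal_{k}(1)$. The paper's \cref{lem_edk} therefore has to classify the apices by how many disjoint subwall-compasses they attach to, argue that the heavily attached ones lie in $N_G(V(C_j))$ for the big component $C_j$ of a deeper level, compute via Courcelle the planar elimination distance of many disjoint subwall compasses and select one attaining the minimum, and only then splice elimination sequences to certify irrelevance. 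You correctly flag the layered irrelevant-vertex argument as the main obstacle, but the apex-set issue is a separate, prior structural problem that your plan does not address.
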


\subsection{Algorithmic applications of the decompositions}
\label{appli_rxio}  
\cref{th_th} and \cref{th_param} have numerous applications for algorithm design.  As with constructing decompositions, the crucial step is to understand the computational complexity of specific (classes) of problems on $\Hcal$-planar graphs. Thus, in this paper, we focus on algorithmic applications of \cref{th_th} and then briefly discuss the recursive applicability of these algorithmic paradigms on graphs of bounded $\Hcal$-planar  treewidth and $\Hcal$-planar treedepth. 

\paragraph{Computing the chromatic number.}
Our first algorithmic application of \Cref{th_th} concerns graph coloring. 
A surjective function $c :  V(G)\to \{1,2, \dots, k\}$ is a \emph{proper $k$-coloring} if, for any pair of  adjacent vertices $u$ and $v$, $c(u)\neq c(v)$. The \emph{chromatic} number of a graph $G$, $\chi(G)$, is the minimum number $k$ such that $G$ admits a proper $k$-coloring. 

By the Four Color Theorem \cite{AppelH89,RobertsonSST96effic}, the chromatic number of a planar graph is at most four. 
 By pipelining the Four Color Theorem with  \cref{th_th}, we obtain the following. 
 
 \begin{restatable}{theorem}{coloring}\label{lemma_coloring}
 Let $\Hcal$ be a hereditary, CMSO-definable, and polynomial-time decidable graph class. Moreover, assume that there is a polynomial-time algorithm computing the chromatic number $\chi(H)$ for $H\in \Hcal$. 
Then, there exists a polynomial-time algorithm that, given an $\Hcal$-planar graph $G$,  produces a proper coloring of $G$ using at most $\chi(G)+4$ colors.
\end{restatable}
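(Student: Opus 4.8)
The plan is to pipeline \cref{th_th} with an effective form of the Four Color Theorem. Given the $\Hcal$-planar graph $G$, first run the algorithm of \cref{th_th}: although \Hpl is phrased as a decision problem, the planar $\Hcal$-modulator $X$ of $G$ that it certifies can be produced in polynomial time (the decision procedure is turned into a constructive one in the standard way; concretely, the irrelevant-vertex reduction underlying \cref{th_th} terminates on a bounded-treewidth instance on which Courcelle's theorem yields a witnessing partition, and the removed vertices are reinstated). From $G$ and $X$ we then build $\torso(G,X)$ in polynomial time. Since $\torso(G,X)$ is planar, we properly color it with four colors from the palette $\{1,2,3,4\}$ using the polynomial-time algorithm of Robertson, Sanders, Seymour, and Thomas~\cite{RobertsonSST96effic}; because $G[X]$ is a subgraph of $\torso(G,X)$, this is in particular a proper coloring of $G[X]$ using at most four colors.

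It remains to color the components of $G-X$. By definition of a planar $\Hcal$-modulator, every connected component $C$ of $G-X$ lies in $\Hcal$, and since $C$ is an induced subgraph of $G$ we have $\chi(C)\le\chi(G)$. From the hypothesized polynomial-time algorithm computing chromatic numbers on $\Hcal$, a standard self-reduction — repeatedly peel off one color class of an optimal coloring and recurse on the remaining graph — yields a polynomial-time algorithm computing an \emph{optimal} proper coloring of $C$; the point is that, $\Hcal$ being hereditary, every graph arising in this recursion is an induced subgraph of $C$ and hence still belongs to $\Hcal$, so the chromatic-number procedure stays applicable throughout. Since distinct components of $G-X$ are pairwise non-adjacent, they may all draw colors from one common palette $\{5,\dots,4+t\}$, where $t=\max_{C}\chi(C)$, each component $C$ using its first $\chi(C)$ colors. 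We output the union of the coloring of $G[X]$ (over $\{1,2,3,4\}$) with the colorings of the components (over $\{5,\dots,4+t\}$). This is a proper coloring of $G$: edges inside $X$ and edges inside a single component are handled by the respective sub-colorings; an edge between $X$ and a component joins vertices colored from disjoint palettes; and there is no edge between two distinct components. It uses at most $4+t\le 4+\chi(G)$ colors, because $t=\max_{C}\chi(C)\le\chi(G)$; and every step runs in polynomial time.

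The theorem is thus a fairly direct consequence of \cref{th_th}, and the only delicate points are bookkeeping ones. First, \cref{th_th} is stated (and is in fact) merely a decision procedure, and a non-constructive one at that, so one must justify extracting the planar $\Hcal$-modulator $X$ on which everything else rests — this is where either the constructivity of the underlying flat-wall and bounded-treewidth machinery, or a self-reduction using the decision algorithm as an oracle, is invoked. Second, the hypothesis supplies only an algorithm for the chromatic \emph{number} on $\Hcal$, not for an optimal coloring, so one must bootstrap the former into the latter; the observation making this work is precisely that $\Hcal$ is hereditary, which keeps the color-class self-reduction inside $\Hcal$. Once these are in hand, the additive slack of exactly $4$ falls out immediately from the disjoint palettes $\{1,2,3,4\}$ for the modulator and $\{5,\dots,4+t\}$ for $G-X$.
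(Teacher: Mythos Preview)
Your argument mirrors the paper's: obtain a planar $\Hcal$-modulator $X$, four-color $G[X]$ (via its planar torso) using the algorithmic Four Color Theorem, optimally color each component of $G-X$, and merge on disjoint palettes. Two caveats are worth recording.

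First, your primary justification for extracting $X$---that the irrelevant-vertex pipeline behind \cref{th_th} is constructive and Courcelle's theorem on the terminal bounded-treewidth instance hands back a witnessing partition---does not go through: \cref{th_th} first passes through the meta-theorem of~\cite{LokshtanovR0Z18redu}, which the paper explicitly flags as non-constructive. The self-reduction route you mention as an alternative is the correct one, and the paper carries it out in a separate statement (\cref{cor_modulator}): for each vertex $v$ one attaches a gadget built from five rooted copies of a minimum forbidden induced subgraph of $\Hcal$, forcing $v$ into every planar $\Hcal$-modulator of the augmented graph, and then queries the decision oracle.

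Second, your bootstrap from a $\chi$-oracle to an actual optimal coloring is not fully justified. Once a color class $I$ is found, recursing on $C-I$ indeed stays inside $\Hcal$; but you do not say how to \emph{find} a single color class $I$ with $\chi(C-I)=\chi(C)-1$ using only induced-subgraph $\chi$-queries, and the standard self-reductions for coloring (add an edge between, or contract, two non-adjacent vertices) need not remain inside a hereditary class. The paper's own proof glosses over exactly this point, so your argument is at the same level of rigor here; in the intended applications (perfect graphs, etc.) an optimal-coloring algorithm is available directly.
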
 

For example, the class of perfect graphs is hereditary, CMSO-definable, and polynomial-time decidable \cite{chudnovsky2005recognizing}. Moreover, the chromatic number of a perfect graph is computable in polynomial time \cite{grotschel1984polynomial}. Thus by \Cref{lemma_coloring}, when 
$\Hcal$ is the class of perfect graphs, for any $\Hcal$-planar graph $G$, there is a polynomial-time algorithm coloring $G$ in at most $\chi(G)+4$ colors.

\paragraph{Counting perfect matchings.}
Our second example of applications of  \Cref{th_th} concerns counting perfect matchings.  While counting perfect matchings on general graphs is  \#P-complete, on planar graphs, it is polynomial-time solvable by the celebrated Fisher–Kasteleyn–Temperley (FKT) algorithm \cite{fisher1961statistical,kasteleyn1961statistics,temperley1961dimer}. 
 By making use of Valiant's  ``matchgates'' \cite{Valiant08holo},  we use \cref{th_th} to extend the FKT algorithm to $\Hcal$-planar graphs.

Let us remind that a \emph{perfect matching} in a graph $G$ is a set $M\subseteq E(G)$ such that every vertex of $G$ occurs in exactly one edge of $M$.
Let $w: E(G)\to\Nbbb$ be a weight function.
The \emph{weighted number of perfect matchings} in $G$, denoted by $\pmm(G)$, is defined as $$\pmm(G)=\sum_{M}\prod_{e\in M} w(e),$$
where the sum is taken over all perfect matchings $M$. 
If $w=1$, then $\pmm(G)$ is the number of perfect matchings in $G$.

\begin{restatable}{theorem}{perfect}\label{thm_perfectmatching}
Let $\Hcal$ be a hereditary, CMSO-definable, and polynomial-time decidable graph class. Moreover, assume that the weighted (resp.~unweighted) number of perfect matchings  $\pmm(H)$ is computable in polynomial time for graphs in $\Hcal$. Then, there exists a polynomial-time algorithm that, given a weighted (resp.~unweighted) $\Hcal$-planar graph $G$, computes its weighted  number of perfect matchings $\pmm(G)$.
\end{restatable}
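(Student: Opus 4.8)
The plan is to reduce, in polynomial time, the computation of $\pmm(G)$ to a single run of the Fisher--Kasteleyn--Temperley (FKT) algorithm on an auxiliary planar graph, in which every connected component of $G-X$ is simulated by a planar matchgate in the sense of Valiant.

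\emph{Finding and analyzing the modulator.} By \cref{th_th}, $\Hpl$ is solvable in polynomial time; together with standard self-reducibility this lets us compute in polynomial time a planar $\Hcal$-modulator $X\subseteq V(G)$, and we fix a planar embedding of $\torso(G,X)$. Let $C_1,\dots,C_r$ be the connected components of $G-X$ and, for $i\in\{1,\dots,r\}$, let $N_i:=N_G(V(C_i))\subseteq X$. Since $N_i$ induces a clique in the planar graph $\torso(G,X)$, we have $|N_i|\le 4$ for every $i$; this bounded-attachment property is what makes the component simulation feasible.

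\emph{Local signatures.} For each $i$ and each $S\subseteq N_i$ let $\sigma_i(S)$ be the weighted number of those matchings of the graph obtained from $G[V(C_i)\cup S]$ by deleting the edges inside $S$ that saturate all of $V(C_i)\cup S$. Any such matching sends $S$ injectively into $V(C_i)$ and matches the rest of $C_i$ internally, so
\[
\sigma_i(S)=\sum_{\substack{T\subseteq V(C_i)\\ |T|=|S|}}\Bigl(\sum_{\phi\colon S\to T\text{ bijective}}\ \prod_{s\in S}w(s\phi(s))\Bigr)\cdot\pmm(C_i-T),
\]
with the convention $w(s\phi(s))=0$ when $s\phi(s)\notin E(G)$ (and all weights equal to $1$ in the unweighted case). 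Since $\Hcal$ is hereditary, $C_i-T\in\Hcal$, so each $\pmm(C_i-T)$ is computable in polynomial time by hypothesis; as $|S|\le 4$ there are $O(n^4)$ sets $T$ and $O(1)$ bijections $\phi$, hence every value $\sigma_i(S)$ is computed in polynomial time.

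\emph{Matchgate simulation and FKT.} For each $i$ we build a planar matchgate $D_i$ with external nodes $N_i$ whose matching signature equals $\sigma_i$ up to a nonzero rational scalar $\lambda_i$ that we record: $\sigma_i$ is supported on the subsets of $N_i$ of a single parity and, being the matching signature of a genuine weighted graph, satisfies the matchgate identities, so by Valiant's matchgate theory it admits a planar realization of size $O(1)$ (the only nontrivial constraint, arising when $|N_i|=4$, being a single Grassmann--Plücker identity, which holds automatically here). We then form $\hat G$ by attaching each $D_i$ to $N_i$ through the standard matchgrid connection, using the fixed embedding of $\torso(G,X)$ to place $D_i$ in the planar region occupied by the virtual clique on $N_i$, so that $\hat G$ is planar and of size polynomial in $n$. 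Tracking, for a perfect matching of $G$, the subset $S_i\subseteq N_i$ matched into $C_i$, one obtains $\pmm(G)=\sum_{(S_1,\dots,S_r)\text{ pairwise disjoint}}\bigl(\prod_i\sigma_i(S_i)\bigr)\cdot\pmm\bigl(G[X\setminus\bigcup_i S_i]\bigr)$, and the matchgrid bookkeeping turns this into $\pmm(\hat G)=\bigl(\prod_i\lambda_i\bigr)\cdot\pmm(G)$. Finally, run FKT on $\hat G$ and output $\pmm(\hat G)/\prod_i\lambda_i$ (detecting and outputting $0$ directly in the degenerate case where some $\sigma_i$ vanishes identically).

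\emph{Main obstacle.} The delicate step is the matchgate simulation: one must (i) realize each $\sigma_i$ by a \emph{planar} gadget --- transparent for $|N_i|\le 3$ but, for $|N_i|=4$, relying on the fact that $\sigma_i$ originates from an actual graph (so the relevant matchgate identity holds) even though $C_i$ itself need not be planar --- and (ii) glue the gadgets onto $G[X]$ while preserving planarity, which forces one to argue, through the embedding of $\torso(G,X)$, that $D_i$ fits inside the region of the virtual clique on $N_i$ without colliding with the rest of the graph; the case of a virtual $K_4$, which is not cofacial, is where this requires genuine care. All remaining steps are routine given \cref{th_th} and the hypotheses on $\Hcal$.
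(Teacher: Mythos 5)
Your overall strategy (compute the modulator by self-reduction, replace each component by a planar matchgate whose signature encodes the weighted counts $\pmm$ of the component minus subsets of its boundary, then run FKT) is the same as the paper's, which uses the gadgets of Straub, Thierauf and Wagner for exactly this purpose. However, there is a genuine gap at the step you yourself flag as delicate: the parenthetical claim that for $|N_i|=4$ the Grassmann--Pl\"ucker matchgate identity ``holds automatically here'' because $\sigma_i$ originates from an actual graph is false. The matchgate identities characterize signatures of \emph{planar} matchgates; the matching signature of an arbitrary graph attached at four boundary vertices need not satisfy them. Concretely, take $C_i$ to be a single edge $uv$ with both $u$ and $v$ adjacent to all four vertices of $N_i=\{1,2,3,4\}$ (e.g.\ $G=K_6$ with $X$ a $K_4$ and $\Hcal$ the class of forests). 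Then $\sigma_i(\emptyset)=1$, $\sigma_i(\{a,b\})=2$ for every pair, and $\sigma_i(N_i)=0$, so the identity $\sigma(\emptyset)\sigma(1234)=\sigma(12)\sigma(34)-\sigma(13)\sigma(24)+\sigma(14)\sigma(23)$ reads $0=2\cdot 2-2\cdot 2+2\cdot 2=4$, a contradiction for any cyclic ordering of the external nodes. No planar gadget $D_i$ with four external nodes realizes this signature, so your construction of $\hat G$ breaks down precisely in the case you identified as the hard one. (This is also why \cite{StraubTW16coun} only provide gadgets for separators of size two and three.)

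The paper closes this gap not by matchgate theory for four terminals but by topology: after reducing to the $2$-connected case, it shows (Claim~\ref{cl_compat}, Case~3) that when $|N_G(V(D))|=4$, in the sphere embedding of $\torso(G,S)$ the component's region lies inside a disk bounded by a \emph{triangle} on three of the four neighbours, with the fourth neighbour drawn in the interior of that disk. The piece that gets replaced by a gadget therefore always has boundary of size at most three; the fourth neighbour is absorbed into the replaced piece, and its matching partner is handled by brute-force guessing when computing the values $p_\gamma$ (whereupon the residual graph decomposes into blocks that are each planar or in $\Hcal$). The gadgets are then inserted iteratively, innermost disk first, rather than all at once. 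If you want to salvage your write-up, you need to import this reduction from $4$-boundaries to $3$-boundaries (or some equivalent device); without it the matchgate simulation is not merely delicate but impossible in general.
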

 
 Examples of classes of graphs $\Hcal$  where counting perfect matchings can be done in polynomial time are graphs 
 excluding a shallow-vortex as a minor  \cite{ThilikosW22} (see also \cite{GalluccioL99onthe,CurticapeanX15param,Curticapean14count,EppsteinV19ncalg,ArnborgLS91easy}), bounded clique-width graphs \cite{CurticapeanM16}, and
chain, co-chain, and threshold graphs \cite{OkamotoUU09}.

\paragraph{EPTAS's.}
 Our third example is the extension of Baker's technique on planar graphs used for approximation and \FPT\-algorithms. While we  provide an example for  {\sc Independent Set}, 
 similar results about efficient polynomial-time approximation schemes (EPTAS) could be obtained for many other graph optimization problems. 
 
Let us remind that an independent set in a graph $G$ is a set of pairwise nonadjacent vertices. We use $\alpha(G)$ to denote the maximum size of an independent set of $G$.

 \begin{restatable}{theorem}{independent}\label{lem_IS}
 Let $\Hcal$ be a hereditary, CMSO-definable, and polynomial-time decidable graph class. We also assume that there is a polynomial-time algorithm computing a maximum independent set of graphs in $\Hcal$. 
Then, there is an algorithm that, given $\varepsilon>0$ and an $\Hcal$-planar graph $G$, computes in time $2^{\Ocal(1/\varepsilon)}\cdot |V(G)|^{\Ocal(1)}$ an independent set of $G$ of size at least $(1-\varepsilon)\cdot \alpha(G)$. 
\end{restatable}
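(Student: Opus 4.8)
The plan is to adapt Baker's layering technique so that it works on the planar torso of a modulator rather than on a planar graph directly, and then to handle each component of $G-X$ separately using the assumed polynomial-time exact algorithm on $\Hcal$. First I would invoke \Cref{th_th}: since $\Hcal$ is hereditary, CMSO-definable, and polynomial-time decidable, in polynomial time I can decide that $G$ is $\Hcal$-planar and, more importantly, obtain an actual planar $\Hcal$-modulator $X \subseteq V(G)$ together with a planar embedding (or at least a planar combinatorial structure) of $\torso(G,X)$. Note that \Cref{th_th} as stated only \emph{decides}; however, a planar $\Hcal$-modulator witnessing a yes-instance can be extracted by self-reduction, or one observes that the algorithm behind \Cref{th_th} actually produces the modulator. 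I would then set $t = \lceil 1/\varepsilon \rceil$ and run over $t$ choices of a residue class $r \in \{0,1,\dots,t-1\}$. For a fixed $r$, define BFS layers $L_0, L_1, \dots$ of $\torso(G,X)$ from an arbitrary root, and delete from $X$ all vertices lying in layers $L_i$ with $i \equiv r \pmod t$; call the removed set $D_r$. The graph $\torso(G,X) - D_r$ is planar of bounded treewidth $O(t) = O(1/\varepsilon)$, by the standard Baker/Bodlaender argument, so $G[X] - D_r = G[X \setminus D_r]$ also has treewidth $O(1/\varepsilon)$ (removing torso edges cannot increase treewidth of the subgraph of $G$ on the same vertex set). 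By an averaging argument, there is a choice of $r$ for which $D_r$ meets an optimal independent set of $G$ in at most $\alpha(G)/t \le \varepsilon\,\alpha(G)$ vertices — here I must be slightly careful, since deleting a vertex of $X$ from the modulator changes the torso of the remaining components, but it only removes vertices and possibly clique-edges from the induced structure, which does not hurt: the components of $G - X$ are unchanged, and $X \setminus D_r$ still induces a graph of bounded treewidth.

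Next I would combine the bounded-treewidth part with the $\Hcal$-part by a single dynamic programming pass over a tree decomposition of width $O(1/\varepsilon)$ of $G[X \setminus D_r]$. The idea is that a connected component $C$ of $G - X$ attaches to $X$ only at $N_G(V(C)) \subseteq X$, which, crucially, forms a clique in $\torso(G,X)$; after deleting $D_r$, the part of $N_G(V(C))$ surviving in $X \setminus D_r$ still forms a clique in $\torso(G,X)-D_r$, hence sits inside a single bag of the tree decomposition. This is the key structural point: each leftover component of $G-X$ ``sees'' the bounded-treewidth part through one bag. For the DP, for each component $C$ and each subset $S$ of the (at most $O(1/\varepsilon)$-sized) bag containing $N_G(V(C)) \setminus D_r$, I precompute $f_C(S) = \alpha\big(C - N_G(S \cap N_G(V(C)))\big)$, i.e., the maximum independent set inside $C$ that avoids the neighborhoods of the chosen vertices of $S$ in $C$; this is where the assumed polynomial-time maximum-independent-set algorithm for graphs in $\Hcal$ is used, applied to the induced subgraph of $C$ (which lies in $\Hcal$ since $\Hcal$ is hereditary). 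There are at most $2^{O(1/\varepsilon)}$ choices of $S$ per component and polynomially many components, so all the $f_C$ values are computed in time $2^{O(1/\varepsilon)} \cdot n^{O(1)}$. Then the standard independent-set DP on the tree decomposition of $G[X \setminus D_r]$ is augmented so that when a bag is processed with a chosen independent subset $S$ of it, we add $\sum_{C} f_C(S')$ over components $C$ whose attachment bag is this one, with $S'$ the restriction of $S$; care with not double-counting components is routine. This yields, for each $r$, a maximum independent set of $G - D_r$ in time $2^{O(1/\varepsilon)} \cdot n^{O(1)}$, and taking the best over all $t$ values of $r$ gives an independent set of size at least $\alpha(G) - \varepsilon\,\alpha(G) = (1-\varepsilon)\alpha(G)$, within the claimed running time.

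I expect the main obstacle to be the interaction between the torso operation and the layer deletion: one has to argue carefully that deleting modulator vertices in $D_r$ (i) does not create new adjacencies among surviving modulator vertices or between the modulator and the components — which holds because torso edges are only \emph{added} among former neighbors of a component, never removed or rerouted, and $G$ itself is unchanged — and (ii) still leaves each component's surviving attachment set inside one bag. The clique property of $N_G(V(C))$ in the torso is exactly what guarantees (ii), and it is preserved under vertex deletion; this is the linchpin that lets the $\Hcal$-side and the treewidth-side be glued together cleanly. A secondary technical point is ensuring the bounded-treewidth DP handles the possibly large weights $f_C(S)$ correctly and that the approximation guarantee is measured against $\alpha(G)$ rather than $\alpha(G[X])$; since every vertex of an optimal solution is either in some component $C$ (counted in $f_C$) or in $X$ (counted in the tree-DP), and only the $\le \varepsilon\,\alpha(G)$ vertices in $D_r$ are lost, the bound follows. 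One should also remark that the same scheme works verbatim for many Baker-amenable optimization problems (dominating set, vertex cover, etc.), which is why the statement says ``similar results could be obtained for many other graph optimization problems''.
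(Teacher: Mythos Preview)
Your proposal is correct and follows essentially the same approach as the paper: compute a planar $\Hcal$-modulator (the paper does this via \Cref{cor_modulator}, which formalizes exactly the self-reduction you mention), apply Baker's layering to the planar torso, and solve \textsc{Independent Set} exactly on the resulting bounded-$\Hcal$-treewidth pieces. The only presentational differences are that the paper packages your DP as the known black-box \Cref{prop_IS} on tree $\Hcal$-decompositions, and that the paper additionally discards the components whose attachment sets hit the deleted layer (incurring a harmless factor~$2$ in the averaging), whereas you keep all components and average only over $D_r\subseteq X$; both variants work, and your averaging is in fact slightly tighter.
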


Examples of graph classes where $\alpha(G)$ is computable in polynomial time are perfect graphs \cite{grotschel1984polynomial} or graphs excluding $P_6$ as an induced subgraph \cite{GrzesikKPP19}.  \Cref{lem_IS} could also be modified for graph classes where computing a maximum independent set is quasi-polynomial or subexponential. In these cases, the approximation ratio will remain $1-\varepsilon$, but the approximation algorithm's running time will also become quasi-polynomial or subexponential.

\paragraph{Applications of $\Hcal$-planar treedepth and $\Hcal$-planar treewidth.}
The above applications of  $\Hcal$-planarity mentioned 
can be extended to the parametric setting of graphs where $\Hcal\mbox{-}\ptd$ and $\Hcal\mbox{-}\ptw$ are bounded by a parameter $k$.

Suppose  first that we have an   elimination sequence certifying $\Hcal\mbox{-}\ptd(G)\leq k$. By a repetitive application of \cref{lemma_coloring}, we can derive a polynomial time algorithm producing a proper coloring of $G$ using at most $\chi(G)+O(k)$ colours.  Similarly, using \cref{thm_perfectmatching}, we can derive a polynomial  time algorithm that, given a weighted graph, computes the weighted  number of its perfect matchings.

Suppose  now that we have a  tree decomposition certifying that  $\Hcal\mbox{-}\ptw(G)≤k$. By combining \cref{lemma_coloring} with dynamic programming 
on the tree decomposition, we may derive a polynomial time algorithm producing a proper coloring of $G$ using at most $\chi(G)+O(k)$ colours. Combining \cref{thm_perfectmatching} with the dynamic programming approach of \cite{ThilikosW22}, it is easy to derive  an  algorithm that, given a weighted graph, computes the weighted  number of its perfect matchings in time $n^{O(k)}$ (see \cref{coudif_jkiolmd}).

Extending \cref{lem_IS} in the parametric setting is not as straightforward, but we still sketch how to extend Baker's technique for both bounded $\Hcal\mbox{-}\ptd$ and bounded $\Hcal\mbox{-}\ptw$  (see \cref{baker_more_r}).

\medskip
We stress that \Cref{lemma_coloring,thm_perfectmatching,lem_IS} and their corollaries for graphs with bounded $\Hcal\mbox{-}\ptd$ and $\Hcal\mbox{-}\ptw$ are only indicative snapshots of the algorithmic applicability of $\Hcal$-planarity, $\Hcal$-planar treewidth, and $\Hcal$-planar treedepth. Numerous planar applications exist in various algorithmic subfields, ranging from distributed algorithms to kernelization and subexponential algorithms. Of course, not all such methods and results for planar graphs could be transferred even to 
$\Hcal$-planar graphs.
Exploring the full set of algorithmic applications of $\Hcal$-planarity, $\Hcal$-planar treewidth, and $\Hcal$-planar treedepth escapes the purposes of this paper. 
However, we expect that our results will appear to be useful in extending  
algorithmic paradigms where dynamic programming on graphs of bounded treewidth can be extended to dynamic programming on graphs of bounded $\Hcal$-treewidth.

\subsection{Related work} 
Different types of bounded-treewidth modulators were introduced in 
\cite{BulianD16graph,EibenGHK21,BulianD17fixe} and have been studied 
extensively in \cite{JansenK021verte,AgrawalKLPRSZ22} for various 
instantiations of the target property~$\Hcal$. 
Eiben, Ganian, and Szeider~\cite{eiben2018solving,eiben2018meta} 
considered ``well-structured modulators'' of bounded rank-width, 
together with certain (rather restrictive) conditions on how the 
modulator interacts with the connected components of the graph. 
Furthermore, meta-algorithmic conditions enabling the automatic derivation 
of \FPT\ algorithms for problems defined via the modulator/target scheme 
were introduced by Fomin \emph{et al.}~\cite{FominGSST21compound}, 
and the meta-algorithmic potential of the ``irrelevant vertex'' technique 
was recently explored in \cite{SauST2024parame,GolovachST23model}.

Our decomposition-based parameters, $\Hcal\mbox{-}\ptd$ and 
$\Hcal\mbox{-}\ptw$, extend beyond the state of the art found in 
\cite{BulianD16graph,EibenGHK21,BulianD17fixe,JansenK021verte,AgrawalKLPRSZ22}. 
Moreover, the applicability of our techniques addresses problems that 
cannot be expressed by any of the meta-algorithmic frameworks 
in \cite{FominGSST21compound} or \cite{SauST2024parame,GolovachST23model}.

Concerning the techniques employed, 
an irrelevant vertex technique somewhat similar to the one presented in this paper is introduced in~\cite{GroheKR13asimp} in order to obtain an algorithmic version of the Graph Minor structure theorem.

 \subsection{Organization of the paper}
 
 The remaining part of our paper is organized as follows. In~\autoref{sec_prelim}, we introduce basic notions used throughout the paper. In~\autoref{sec_alg}, we outline 
 our algorithmic techniques  for \Hpl, and their extensions for the parameters $\Hcal$-\ptd, and $\Hcal$-\ptw. Also we explain how  to create equivalent instances when replacing  $\Hcal$ by  $\Hcal^{(k)}$  using the meta-theorem of \cite{LokshtanovR0Z18redu}.
 To prove~\autoref{th_th}, we show that \Hkpl{} is \FPT when parameterized by $k$. The proof of this result is given in \Cref{lem_small_leaves}, the most technical part of our paper. For clarity's sake, we first sketch the proof of~\autoref{lem_small_leaves} in~\autoref{sec_alg}, and we give the formal proof in~\autoref{sec_renditions}.  
In \autoref{cosi46y} and \autoref{sec_tw}, we prove \autoref{th_param} for $\p=\ptd$ and $\p=\ptw$ respectively.
In~\autoref{sec_applications},  we demonstrate algorithmic applications of  \autoref{th_th} and \cref{th_param} for several classical problems. Further, in~\autoref{sec_lower},   
we discuss the necessity of conditions on $\Hcal$ in \cref{th_th} and \cref{th_param}.
We conclude our paper in~\autoref{sec_concl} by discussing further research directions.   
 
\section{Preliminaries}\label{sec_prelim}
\paragraph{Sets and integers.}
We denote by $\Nbbb$ the set of non-negative integers.
Given two integers $p$ and $q$, $[p,q]$ is the set of all integers $r$ such that $p\leq r\leq q$.
For an integer $p\geq 1$, we set $[p]=[1,p]$ and $\Nbbb_{\geq p}=\Nbbb\setminus [0,p-1]$.
For a set $S$, we denote by $2^{S}$ the set of all subsets of $S$ and, given an integer $r\in[0,|S|]$,
we denote by $\binom{S}{r}$ the set of all subsets of $S$ of size $r$.
The function $\odd:\mathbb{R}\to\Nbbb$ maps $x$ to the smallest odd non-negative integer larger than $x$.

\paragraph{Basic concepts on graphs.}
All graphs considered in this paper are undirected, finite, and without loops or multiple edges.
We use standard graph-theoretic notation and we refer the reader to \cite{Diestel10grap} for any undefined terminology.
For convenience, we use $uv$ instead of $\{u,v\}$ to denote an edge of a graph.
Let $G$ be a graph. In the rest of this paper we always use $n$ for the size of $G$, i.e., the cardinality of $V(G)$.
We say that a pair $(L,R)\in 2^{V(G)}\times 2^{V(G)}$ is a {\em separation} of $G$
if $L\cup R=V(G)$ and there is no edge in $G$ between $L\setminus R$ and $R\setminus L$.
The \emph{order} of $(L,R)$ is $|L\cap R|$.
Given a vertex $v\in V(G)$, we denote by $N_{G}(v)$ the \emph{open neighborhood} of $v$, that is, the set of vertices of $G$ that are adjacent to $v$ in $G$; $N_G(X)=\big(\bigcup_{v\in X}N_G(v)\big)\setminus X$ for a set of vertices $X$. Slightly abusing notation, we may write $N_G(H)$ instead of $N_G(V(H))$ for a subgraph $H$ of $G$.
We use $N_G[X]$ to denote $N_G(X)\cup X$.
For $S \subseteq V(G)$, we set $G[S]=(S,E(G)\cap \binom{S}{2} )$
and use the shortcut $G - S$ to denote $G[V(G) \setminus S]$.
We may also use $G- v$ instead of $G-\{v\}$ for $v\in V(G)$.
We use $\cc(G)$ to denote the set of connected components of $G$.
Given $X\subseteq V(G)$, we define the \emph{torso} of $X$ in $G$, denoted by $\torso(G,X)$, to be the graph obtained from $G[X]$ by making a clique out of $N_G(V(C))$ for each $C\in\cc(G-X)$.
The class of planar graphs is denoted by $\Pcal$.

\paragraph{Treewidth.}
A \emph{tree decomposition} of a graph $G$ is a tuple $\mathcal{T}=(T,\beta)$ where $T$ is a tree and $\beta\colon V(T)\rightarrow 2^{V(G)}$ is a function, whose images are called the \emph{bags} of $\mathcal{T},$ such that 
  \begin{itemize}
  \item $\bigcup_{t\in V(T)} \beta (t)= V (G),$ 
  \item for every $e\in E(G)$, there exists $t\in V(T)$ with $e\subseteq \beta(t),$ and 
  \item for every $v\in V(G)$, the set $\{t\in V(T) \mid v\in \beta(t)\}$ induces a subtree of $T.$
  \end{itemize} 
The \emph{width} of $\mathcal{T}$ is the maximum size of a bag minus one
 and the \emph{treewidth} of $G$, denoted by $\tw(G),$ is the minimum width of a tree decomposition of $G$.
The \emph{planar treewidth} of a graph $G$, denoted by $\ptw(G),$ is the minimum $k$ such that there exists a tree decomposition of $G$ such that each bag either has size at most $k+1$ or has a planar torso.

\paragraph{Treedepth.}
The \emph{treedepth} of a graph $G$, denoted by $\td(G),$  is zero if $G$ is the empty graph, and one plus the minimum treedepth of the graph obtained by removing one vertex from each connected component of $G$ otherwise.
The \emph{planar treedepth} of $G$, denoted by $\ptw(G),$  is defined similarly to the treedepth, but where we remove a modulator whose torso is planar from each connected component of $G$ instead of a vertex.
In other words, the planar treedepth of $G$ is the elimination distance of $G$ to the graph class that contains only the empty graph.
The fact that $\ptd(G)≤k$ is certified 
by the sequence $X_{1},\ldots,X_{k}$ of successive planar modulators that need to be removed. We refer to such a sequence as a $\emph{certifying elimination sequence}$ (see \autoref{fig_planarED}).
\begin{figure}[h]
\center
\includegraphics[scale=0.7]{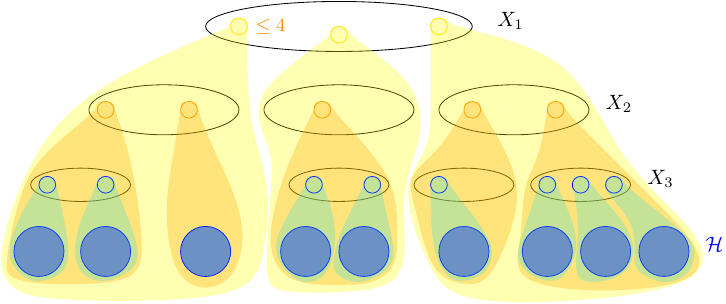}
\caption{Illustration of a graph with \hptd{$\Hcal$} at most three. The sequence $X_1,X_2,X_3$ is a certifying elimination sequence.}
\label{fig_planarED}
\end{figure}

\paragraph{$\Gcal\triangleright\Hcal$-modulators.}
Let $\Hcal$ and $\Gcal$ be graph classes.
We define $\Gcal\triangleright\Hcal$ to be the class of graphs $G$ that contain a vertex subset $X\subseteq V(G)$, called \emph{$\Gcal\triangleright\Hcal$-modulator}, such that $\torso(G,X)\in\Gcal$ and, for each $C\in\cc(G-X)$, $C\in\Hcal$.
Thus, $\Pcal\triangleright\Hcal$ is the class of $\Hcal$-planar graphs.
Note that the operator $\triangleright$ is associative, i.e. $\Fcal\triangleright(\Gcal\triangleright\Hcal)=(\Fcal\triangleright\Gcal)\triangleright\Hcal$.
For $k\in\Nbbb_{\ge1}$, we set $\Gcal^{k+1}=\Gcal\triangleright\Gcal^k$, where $\Gcal^1=\Gcal$.
In particular, $\Pcal^k$ is the class of graphs of planar treedepth at most $k$ and $\Pcal^k\triangleright\Hcal$ is the class of graphs with \hptd{$\Hcal$} at most $k$.

Note that, given $k\in\Nbbb$, if $\Gcal_k$ is class of graphs with treewidth (resp. treedepth / size / planar treewidth / planar treedepth) at most $k$, then $\Gcal_k\triangleright\Hcal$ is the class of graphs with $\Hcal$-treewidth (resp. elimination distance to $\Hcal$ / $\Hcal$-deletion\footnote{If $\Hcal$ is union-closed.} / $\Hcal$-planar treewidth / \hptd{$\Hcal$}) at most $k$.
The class of graphs with planar treewidth at most $k$ will be denoted by $\Pcal\Tcal_k$.

\paragraph{Counting Monadic Second-Order Logic.}
\emph{Monadic Second-Order Logic} (MSO) is a language to express properties in graphs.
The syntax of MSO includes logical connectives $\land$, $\lor$, $\neg$, $\Leftrightarrow$, $\Rightarrow$, variables for vertices, edges, vertex sets, and edge sets, quantifiers $\exists$, $\forall$ over these variables, the atomic expressions 
$u\in U$ when $u$ is a vertex variable and $U$ is a vertex set variable;
$e\in E$ when $e$ is a edge variable and $E$ is a edge set variable;
${\rm adj}(u,v)$ when $u$ and $v$ are vertex variables, with the interpretation that $u$ and $v$ are adjacent;
${\rm inc}(u,e)$ when $u$ is a vertex variable and $e$ is an edge variable, with the interpretation that $e$ is incident to $u$;
and equality of variables representing vertices, edges, vertex sets, and edge sets.
\emph{Counting Monadic Second-Order Logic} (CMSO) extends MSO by including atomic sentences testing whether the cardinality of a set is equal to $q$ modulo $r$, where $r\in\Nbbb_{\ge 2}$ and $q\in[r-1]$.

\smallskip
Note that planarity and connectivity are expressible in CMSO logic, see e.g. \cite[Subsection 1.3.1]{CourcelleJ12grap}. 
Additionally, $\torso(G,X)$ is the graph with vertex set $X$, and edge set the pairs $(u,v)\in X^2$ such that either $uv\in E(G)$ or $u$ and $v$ are connected in $G-(X\setminus\{u,v\})$, which is also easily expressible in CMSO logic.
Therefore, we observe the following.

\begin{observation}\label{obs_CMSO}
If $\Hcal$ is a CMSO-definable graph class, then \Hpl\ is expressible in CMSO logic.
\end{observation}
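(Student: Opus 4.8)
Let $\psi$ be a CMSO sentence with $H\in\Hcal$ iff $H\models\psi$. The plan is to exhibit, explicitly, a CMSO sentence $\Phi$ such that for every graph $G$ we have $G\models\Phi$ if and only if $G$ admits a planar $\Hcal$-modulator. I would take $\Phi$ of the form
\[
\Phi \;=\; \exists X\; \Bigl(\, \mathsf{plan}(X) \;\wedge\; \forall C\;\bigl(\mathsf{comp}(X,C)\,\to\,\psi^{C}\bigr)\Bigr),
\]
where $X,C$ are vertex-set variables, $\mathsf{plan}(X)$ expresses that $\torso(G,X)$ is planar, $\mathsf{comp}(X,C)$ expresses that $C$ is the vertex set of a connected component of $G-X$, and $\psi^{C}$ is the relativization of $\psi$ to $G[C]$. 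As already noted in the excerpt, there is a CMSO formula $\theta_X(u,v)$, with free variables $u,v,X$, that holds exactly when $u,v\in X$ and $uv\in E(\torso(G,X))$, i.e., when $u,v\in X$ and either $uv\in E(G)$ or $u,v$ lie in the same connected component of $G-(X\setminus\{u,v\})$; the last clause relies on connectivity being CMSO-expressible.

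For $\mathsf{plan}(X)$ I would start from the standard fact that planarity is MSO-expressible via the excluded-minor characterization (no $K_5$-minor and no $K_{3,3}$-minor), together with the fact that, for each fixed $H$, ``$H$ is a minor'' is MSO-expressible using only vertex and vertex-set quantification (guess the branch sets; require each to be nonempty and connected and the prescribed adjacencies to hold between them). I would then relativize this formula to $X$---every vertex quantifier ranging over $X$, every vertex-set quantifier over subsets of $X$---and replace every adjacency atom by $\theta_X$, unfolding ``connected'' via $\theta_X$ as well; the outcome is a CMSO formula $\mathsf{plan}(X)$ that holds in $G$ precisely when $\torso(G,X)$ is planar. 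For $\mathsf{comp}(X,C)$ I would take the conjunction of ``$C\neq\emptyset$'', ``$C\cap X=\emptyset$'', ``$G[C]$ is connected'', and ``$N_G(C)\subseteq X$'', which together say exactly that $C$ is a connected component of $G-X$; each conjunct is an elementary CMSO condition. Finally, $\psi^{C}$ is obtained from $\psi$ by restricting vertex and vertex-set quantifiers to $C$ and its subsets and edge and edge-set quantifiers to edges with both endpoints in $C$; since $E(G[C])=E(G)\cap\binom{C}{2}$, for any $C$ we have $G\models\psi^{C}$ iff $G[C]\in\Hcal$. Assembling the parts, $G\models\Phi$ holds iff there is $X\subseteq V(G)$ with $\torso(G,X)$ planar and every component of $G-X$ in $\Hcal$, i.e., iff $G$ is a \yes-instance of \Hpl.

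The only point requiring care---routine but worth isolating---is the relativization yielding $\mathsf{plan}(X)$: one must verify that evaluating the minor-containment formula with $\theta_X$ substituted for the adjacency predicate genuinely tests minor containment \emph{in the torso} $\torso(G,X)$, not in some other structure. This is exactly why I would work with the minor-based, purely vertex-quantifying formulation of planarity rather than one mentioning edges of the torso, which are not objects of $G$; with that choice the substitution is a standard MSO-interpretation argument. No such subtlety affects $\psi^{C}$, which is applied only to the genuine induced subgraph $G[C]$.
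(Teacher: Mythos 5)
Your proposal is correct and follows essentially the same route as the paper, which justifies the observation only by the preceding remark that planarity, connectivity, and the torso edge relation are CMSO-expressible. Your write-up is in fact more careful than the paper's on the one genuine subtlety---that planarity must be tested on $\torso(G,X)$, whose edges are not objects of $G$, via an interpretation substituting the torso adjacency formula $\theta_X$ into a vertex-quantifying (minor-based) planarity formula---but this is a fleshing-out of the same argument, not a different one.
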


Courcelle's theorem is a powerful result essentially saying that any CMSO-definable problem 
is solvable in polynomial time on graphs of bounded treewidth. For  a CMSO sentence $\psi$, the task of  $CMSO[\psi]$ is to decide whether  $G\models\psi$ for a graph $G$.

\begin{proposition}[Courcelle's Theorem \cite{Courcelle90them,ArnborgLS91easy}]\label{Courcelle}
Let $\psi$ be a CMSO sentence.
Then there is a function $f:\Nbbb^2\to\Nbbb$ and an algorithm that, given a graph $G$ of treewidth at most $\tw$, solves $CMSO[\psi]$ on $G$ in time $f(|\psi|,\tw)\cdot n$.
\end{proposition}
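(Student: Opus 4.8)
The plan is to prove this the classical way: turn a bounded-treewidth graph into a term over a finite algebra of graph operations, and show that any CMSO sentence is recognized by a finite bottom-up tree automaton running on such terms, so that evaluation takes linear time with the automaton's size absorbed into $f(|\psi|,\tw)$.

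First I would reduce the input to a convenient combinatorial object. Using a linear-time treewidth algorithm (e.g.\ Bodlaender's), compute a tree decomposition of $G$ of width $\Ocal(\tw)$ and normalize it to a \emph{nice} tree decomposition $(T,\beta)$ with a linear (in $n$) number of nodes, each a leaf, introduce, forget, or join node. Such a decomposition canonically presents $G$ as a term $t_G$ over a finite signature $\Sigma_{\tw}$ of operations on boundaried graphs whose boundary has size at most $\tw+1$: create a single boundary vertex, add an edge between two boundary vertices, forget a boundary vertex, and take the glued union of two boundaried graphs along a common boundary. The size of $\Sigma_{\tw}$ is bounded by a function of $\tw$, and $t_G$ has size linear in $n$.

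Second, the heart of the argument is a Feferman--Vaught/Shelah-style composition lemma for CMSO. Fix the quantifier rank $q$ and the finite set $R$ of moduli occurring in $\psi$ (both bounded by $|\psi|$). For a node $x$ of $T$, let $(G_x,\partial_x)$ be the boundaried graph induced by the subtree below $x$, with $\partial_x=\beta(x)$, and define its \emph{$(q,R)$-type} to be the set of CMSO formulas of quantifier rank at most $q$, moduli in $R$, and free monadic variables interpreted as subsets of $\partial_x$, that hold in $(G_x,\partial_x)$. Two facts must be established: \textbf{(i) finiteness} --- up to logical equivalence there are only finitely many such formulas, hence finitely many types, with a bound depending only on $|\psi|$ and $\tw$; this is an induction on $q$, exploiting that over a structure with at most $\tw+1$ distinguished boundary elements the Boolean combinations and the (bounded-threshold) counting quantifiers collapse to finitely many equivalence classes. \textbf{(ii) Composition} --- the type of $(G_x,\partial_x)$ is effectively computable from the operation at $x$ and the types of its children. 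For introduce/forget/add-edge this is a routine quantifier-elimination computation; for the join it is the CMSO Feferman--Vaught theorem: a set quantifier $\exists X$ over the glued structure is simulated by guessing the restrictions of $X$ to each part and to the common boundary, and a modular constraint $|X|\equiv q' \pmod{r}$ is handled because $|X| = |X\cap V(G_1)| + |X\cap V(G_2)| - |X\cap \partial|$, so tracking cardinalities modulo each $r\in R$ inside the type suffices.

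Third, assemble the algorithm: the $(q,R)$-types are the states of a deterministic bottom-up tree automaton $\Acal_\psi$ over $\Sigma_{\tw}$, with transitions given by the composition lemma and accepting states those types containing $\psi$ (read under the empty boundary assignment). Evaluate $\Acal_\psi$ on $t_G$ in one bottom-up pass, computing one state per node in time bounded by a function of $|\psi|$ and $\tw$, and accept iff the root state is accepting; correctness follows since $(G,\emptyset)$ is the boundaried graph at the root. The total time is $f(|\psi|,\tw)\cdot n$ after absorbing the cost of computing and normalizing the decomposition. I expect the main obstacle to be the composition lemma: setting up the induction on quantifier rank so that the annotations carried by a type (subsets of the boundary, plus cardinalities modulo the relevant moduli) stay within a bound depending only on $|\psi|$ and $\tw$, and then checking that gluing two typed pieces along a common boundary is a well-defined finite operation on types --- in particular getting the interaction of the modular counting quantifiers with the join right. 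The remaining ingredients (normalizing the decomposition, the term encoding, running the automaton) are bookkeeping.
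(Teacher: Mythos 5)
This proposition is quoted in the paper as a known result with a citation to \cite{Courcelle90them,ArnborgLS91easy}; the paper itself contains no proof of it. Your sketch is the standard argument from those references (term encoding of a nice tree decomposition over a finite signature of boundaried-graph operations, finiteness of $(q,R)$-types, a Feferman--Vaught-style composition lemma handling the modular counting quantifiers, and a bottom-up tree-automaton pass), and it is correct in its essentials, so it matches the intended justification.
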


\paragraph{Minors.} The \emph{contraction} of an edge $e = uv\in E(G)$ results in a graph $G'$ obtained from $G-\{u,v\}$ by adding a new vertex $w$ adjacent to all vertices in the set $N_G(\{u,v\})$. 
A graph $H$ is a \emph{minor} of a graph $G$ if $H$ can be obtained from a subgraph of $G$ after a series of edge contractions. 
It is easy to verify that $H$ is a minor of $G$ if and only if there is a collection $\Scal=\{S_v\mid v\in V(H)\}$ of pairwise-disjoint connected subsets of $V(G)$, called \emph{branch sets}, such that, for each edge $xy\in E(H)$, the set $S_x\cup S_y$ is connected in $V(G)$. 
Such a collection $\Scal$ is called a \emph{model} of $H$ in $G$.
A graph class $\Hcal$ is \emph{minor-closed} if, for each graph $G$ and each minor
$H$ of $G$, the fact that $G\in\Hcal$ implies that $H\in\Hcal$. 
A \emph{(minor-)obstruction} of a graph class $\Hcal$ is a graph $F$ that is not in $\Hcal$, but whose minors are all in $\Hcal$.

\section{The algorithms}\label{sec_alg}

In \autoref{subsec_subsec}, we prove \autoref{th_th} and \autoref{th_param}, assuming some results that will be proved later in the paper but that we sketch in \autoref{sec_outline} (for $\Hcal$-planarity and $\Hcal$-planar treewidth) and \autoref{subsec_changes_ed} (for \hptd{$\Hcal$}).

\subsection{The algorithms}\label{subsec_subsec}

The starting point of our algorithmms  is the result of Lokshtanov, Ramanujan, Saurabh, and Zehavi \cite{LokshtanovR0Z18redu} reducing a CMSO-definable graph problem to the same problem on unbreakable graphs.

\begin{definition}
[Unbreakable graph]
Let $G$ be a graph and let $c,s\in\Nbbb$. If there exists a separation $(X,Y)$ of order at most $c$ such that $\vert X\setminus Y\vert \geq s$ and $\vert Y\setminus X\vert \geq  s$, called an {\em $(s,c)$-witnessing separation}, then $G$ is {\em $(s,c)$-breakable}. Otherwise, $G$ is {\em $(s,c)$-unbreakable}.
\end{definition}

\begin{proposition}[Theorem 1, \cite{LokshtanovR0Z18redu}]\label{prop_CMSOMetaTheorem}
Let $\psi$ be a CMSO sentence and let  $d>4$ be a positive integer. Then there exists a function $\alpha : \mathbb{N} \rightarrow \mathbb{N}$, such that, for every $c\in \mathbb{N}$, if there exists an algorithm that solves {\sc CMSO}$[\psi]$ on $(\alpha(c),c)$-unbreakable graphs in time $\Ocal(n^d)$, then there exists an algorithm that solves {\sc CMSO}$[\psi]$ on general graphs in time $\Ocal(n^{d})$.
\end{proposition}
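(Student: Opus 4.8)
The plan is to prove this by the \emph{recursive understanding} technique. The idea is to iteratively look for a separation of small order $c$ that splits the current graph into two ``large'' parts; whenever such a separation is found, recurse into one of the two parts, ``understand'' it with respect to its (bounded) boundary, and replace it by a constant-size gadget carrying exactly the same logical information; when no such separation exists the current graph is (essentially) $(\alpha(c),c)$-unbreakable and we may invoke the hypothesized algorithm verbatim.

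More precisely, I would run the reduction on \emph{boundaried graphs} $(G,B)$ with $|B|\le q$ for a bound $q=q(c,|\psi|)$ fixed later; the input is $(G,\emptyset)$. Call a separation $(X,Y)$ of $G$ of order at most $c$ \emph{useful} if $\min\{|X\setminus Y|,|Y\setminus X|\}\ge \alpha(c)$ and, moreover, one side, say $X$, contains at most half of $B$. If a useful separation exists, set $B'=(X\cap Y)\cup(B\cap X)$, which has size at most $c+\lceil q/2\rceil\le q$ for an appropriate choice of $q$; recursively process $(G[X],B')$ so as to \emph{understand} it, i.e., to compute its CMSO-type of quantifier rank bounded by $|\psi|$ over the vocabulary enriched with unary predicates naming the vertices of $B'$. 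By a Feferman--Vaught / Courcelle-style composition theorem for CMSO, there are only finitely many such types---boundedly many as a function of $|\psi|$ and $q$ only---this type is computable from any graph realizing it, and whether $G\models\psi$ depends on the piece $(G[X],B')$ only through its type once we fix how it is glued to the rest of $G$ along $B'$. Hence we may replace $G[X]$ by a fixed minimum-size representative of its type, producing an equivalent instance with fewer vertices, provided $\alpha(c)$ exceeds the maximum representative size. Iterating, after at most $n$ replacement rounds no useful separation remains.

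For correctness, equivalence under each replacement is exactly the composition theorem, and the boundary stays within the bound $q$ by the size bookkeeping above. For the running time, the number of rounds is $\Ocal(n)$; each round must (i) decide whether a useful separation exists and, if so, output one, and (ii) recursively understand the chosen side. Step (ii) is handled by the same recursion, and when the recursion bottoms out on an unbreakable boundaried graph we understand it either via Courcelle's theorem after a structural step or by an ``understanding'' variant of the hypothesized algorithm; step (i) is the genuinely expensive part. A careful implementation of the recursive-understanding / randomized-contraction machinery keeps the cost of separation-finding polynomial of degree strictly below $d$ and the total preprocessing within $\Ocal(n^d)$; the single final run of the hypothesized algorithm on the reduced, unbreakable instance is $\Ocal(n^d)$. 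The hypothesis $d>4$ is exactly what provides enough slack for the separation-finding subroutine, which is why the bound reads $d>4$ rather than, say, $d\ge 1$.

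I expect the main obstacle to be step (i): finding, in polynomial time, a separation of order at most $c$ that is simultaneously large on both sides and unbalanced with respect to the boundary---one cannot simply enumerate these. The standard resolution (following Chitnis, Cygan, Hajiaghayi, Marx, Pilipczuk, and Pilipczuk) combines random sampling of terminal sets, a high-connectivity phase, and derandomization via splitters, together with a careful accounting of recursion depth and branching to stay within $\Ocal(n^d)$. A second, more bookkeeping-like difficulty is reconciling the recursion's stopping condition (``no useful boundaried separation'') with the clean hypothesis about genuinely $(\alpha(c),c)$-unbreakable graphs: one must argue that the reduced graph, once its boundary is absorbed into small rigid gadgets that no witnessing separation can cross, really is $(\alpha(c),c)$-unbreakable, and that attaching those gadgets does not change whether $G\models\psi$---again via the composition theorem.
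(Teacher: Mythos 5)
This statement is not proved in the paper at all: it is quoted verbatim as Theorem~1 of Lokshtanov, Ramanujan, Saurabh, and Zehavi \cite{LokshtanovR0Z18redu} and used as a black box, so there is no in-paper argument to compare yours against. Judged against the cited source, your outline does identify the right strategy --- recursive understanding on boundaried graphs, computing CMSO types of bounded quantifier rank relative to a bounded boundary, and replacing an ``understood'' side of a witnessing separation by a bounded-size representative of its type until the instance becomes unbreakable. That is indeed the architecture of the original proof.

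However, as written this is a plan rather than a proof, and the two ``obstacles'' you name at the end are essentially the entire technical content of the theorem: efficiently finding a small-order separation with both sides large (the high-connectivity phase, random sampling, and derandomization), and the bookkeeping that makes the reduced instance genuinely $(\alpha(c),c)$-unbreakable while preserving $G\models\psi$. Deferring both to ``a careful implementation'' leaves the proof with no content beyond the statement of the method. There is also one point you elide that matters for how the proposition is used in this paper: the finite set of representatives for the CMSO-equivalence classes of boundaried graphs is only guaranteed to \emph{exist}; it is not computable from $\psi$. Your phrase ``this type is computable from any graph realizing it'' presupposes access to those representatives, which can only be hardwired non-uniformly. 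This is precisely why the resulting algorithm --- and consequently Theorem~\ref{th_th} of this paper --- is non-constructive, a caveat the authors state explicitly in the introduction and which your sketch, taken at face value, would contradict.
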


In our case, we have the following observation.

\begin{observation}\label{obs_clique}
Let $\Hcal$ and $\Gcal$ be graph classes.
If every graph in $\Gcal$ is $K_{k+1}$-minor-free, then, for any $\Gcal\triangleright\Hcal$-modulator $X$ of a graph $G$ and for any $C\in\cc(G-X)$, $|N_G(V(C))|\le k$.
In particular, if $\Gcal=\Pcal$, then $|N_G(V(C))|\le 4$, if $\Gcal=\Pcal^k$, then $|N_G(V(C))|\le 4k$, and if $\Gcal=\Pcal\Tcal_k$, then $|N_G(V(C))|\le \max\{k+1,4\}$.
\end{observation}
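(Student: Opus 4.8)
The plan is to prove \autoref{obs_clique} directly from the definitions, via the standard fact that a minor of a graph is obtained by contracting branch sets, together with the defining property of the torso operation.

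First I would recall the definition of $\torso(G,X)$: it is obtained from $G[X]$ by turning $N_G(V(C))$ into a clique for every $C\in\cc(G-X)$. Fix a $\Gcal\triangleright\Hcal$-modulator $X$ of $G$ and a connected component $C\in\cc(G-X)$, and set $S=N_G(V(C))\subseteq X$. By the definition of the torso, all pairs of vertices of $S$ are adjacent in $\torso(G,X)$, so $\torso(G,X)[S]$ is a complete graph on $|S|$ vertices, i.e.\ $\torso(G,X)\supseteq K_{|S|}$ as a subgraph, hence also as a minor. Since $X$ is a $\Gcal\triangleright\Hcal$-modulator we have $\torso(G,X)\in\Gcal$, and $\Gcal$ is assumed to contain only $K_{k+1}$-minor-free graphs; therefore $K_{|S|}$ is a minor of a $K_{k+1}$-minor-free graph, which forces $|S|\le k$ (a $K_{k+1}$ as a minor would otherwise appear). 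This gives $|N_G(V(C))|\le k$, proving the main statement.

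For the three particular cases I would simply instantiate $k$ using well-known facts. For $\Gcal=\Pcal$: every planar graph is $K_5$-minor-free (Wagner/Kuratowski), so $k=4$ and $|N_G(V(C))|\le 4$. For $\Gcal=\Pcal^k$, i.e.\ graphs of planar treedepth at most $k$: here I would argue by induction on $k$ that such a graph is $K_{4k+1}$-minor-free. If $G'\in\Pcal^k$, there is $X'\subseteq V(G')$ with $\torso(G',X')$ planar and $G'-X'\in\Pcal^{k-1}$; any clique minor of $G'$ restricted to the branch sets entirely inside $V(G')\setminus X'$ lives in $G'-X'$ and hence has size at most $4(k-1)$, while at most\ldots{} more carefully, one uses that contracting into $X'$ can be simulated in $\torso(G',X')$, so the branch sets meeting $X'$ yield a clique minor in the planar graph $\torso(G',X')$ of size at most $4$; combining the two bounds gives a clique minor of size at most $4(k-1)+4=4k$, so $G'$ is $K_{4k+1}$-minor-free and the bound $|N_G(V(C))|\le 4k$ follows. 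For $\Gcal=\Pcal\Tcal_k$, graphs of planar treewidth at most $k$: such a graph has a tree decomposition where every bag is either of size at most $k+1$ or has a planar torso; a clique minor of size $r$ must live in a single bag's torso (a clique minor cannot be ``spread'' across a tree decomposition, by the standard bramble/clique-in-a-bag argument applied to the torso decomposition), so $r\le\max\{k+1,4\}$, giving $K_{\max\{k+1,4\}+1}$-minor-freeness and hence $|N_G(V(C))|\le\max\{k+1,4\}$.

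I expect the main obstacle to be the bookkeeping in the $\Pcal^k$ and $\Pcal\Tcal_k$ cases: one must verify that a clique minor can always be ``pushed into'' a single planar torso (respectively a single bag), which requires the standard but slightly fiddly observation that contracting a branch set that touches the modulator $X'$ (or is confined to a subtree of the tree decomposition) corresponds to a legitimate contraction inside the relevant torso, so that $K_r\preceq G'$ implies $K_r\preceq\torso(G',X')$ (up to a lower-order additive term coming from branch sets disjoint from $X'$). The base-case bound $K_5\not\preceq\Pcal$ and the ``clique minor in one bag'' lemma for tree decompositions are both classical, so once the torso-contraction observation is set up cleanly, the induction and the treewidth case follow immediately.
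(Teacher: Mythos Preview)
Your argument is correct, and since the paper states this as a bare observation with no proof, there is nothing to compare against; your route via ``$N_G(V(C))$ is a clique in the torso, hence the torso contains $K_{|N_G(V(C))|}$'' is exactly the intended one-line justification for the main claim.

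For the three particular cases you do more work than strictly necessary. The paper explicitly assumes that every parameter $\p$ under consideration (including $\ptd$ and $\ptw$) is minor-monotone, so the shortest route is simply to compute $\ptd(K_n)=\lceil n/4\rceil$ and $\ptw(K_n)=n-1$ for $n\ge 5$ (both are easy: at most four vertices of a clique can go into a planar torso/bag at a time), and then minor-closure of $\Pcal^k$ and $\Pcal\Tcal_k$ gives the $K_{4k+1}$- and $K_{\max\{k+1,4\}+1}$-minor-freeness immediately. Your direct combinatorial arguments are nonetheless valid: the ``project the branch sets meeting $X'$ into $\torso(G',X')$'' step is sound (connectedness and adjacency survive because any excursion of a path into a component $C$ of $G'-X'$ can be shortcut by a torso-edge between its entry and exit points in $N_{G'}(C)$), and the $\Pcal\Tcal_k$ case follows from the Helly property of subtrees plus the same torso-projection, which is the precise content of your ``clique-in-a-bag'' remark (though ``bramble'' is a slight misnomer here). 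So the obstacle you flag at the end is not a real obstacle—the projection argument goes through cleanly.
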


Let $a,k\in\Nbbb$, $\Hcal$ and $\Gcal$ be two graph classes such that every graph in $\Gcal$ is $K_{k+1}$-minor-free, and $G$ be an $(a,k)$-unbreakable graph.
We say that a $\Gcal\triangleright\Hcal$-modulator $X$ of a graph $G$ is a \emph{big-leaf $\Gcal\triangleright\Hcal$-modulator} of $G$ if there is exists a (unique) component $D\in\cc(G-X)$ of size at least $a$, called \emph{big leaf with respect to $X$}.

Therefore, if we work on $(\alpha(4),4)$-unbreakable graphs, then given a planar $\Hcal$-modulator $S$ of a graph $G$, either $S$ is a big-leaf planar $\Hcal$-modulator of $G$, that is, there is a unique component $C\in\cc(G-S)$ such that $|V(C)|\ge \alpha(4)$ 
and $|V(G)\setminus V(C)|<\alpha(4)+|N_G(V(C))|$
or for each $C\in\cc(G-S)$, $|V(C)|< \alpha(4)$.
In this paper, we will thus solve \Hpl\ on $(\alpha(4),4)$-unbreakable graphs, which, by applying \autoref{prop_CMSOMetaTheorem}, immediately implies \autoref{th_th}.

More specifically, we will split \Hpl\ into two complementary subproblems.
The first one is \blHpl, which is defined as follows.

\begin{center}
	\fbox{
		\begin{minipage}{12cm}
			\noindent\blHpl\\
			\noindent\textbf{Input}:~~A graph $G$.\\
			\textbf{Question}:~~Does $G$ admit a planar $\Hcal$-modulator $S$ such that there is $D\in \cc(G-S)$ of size at least $\alpha(4)$? 
		\end{minipage}
	}
\end{center}

This problem is easy to solve using a brute-force method.

\begin{lemma}\label{lem_big_leaf}
Let $\Hcal$ be a polynomial-time decidable graph class.
Then there is an algorithm that solves \blHpl\ on $(\alpha(4),4)$-unbreakable graphs in polynomial time.
\end{lemma}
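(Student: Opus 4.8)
The plan is to solve \blHpl\ by brute force, using unbreakability to show that a big-leaf planar $\Hcal$-modulator, if one exists, is ``essentially small'': everything outside its big leaf fits in a set of constant size. Concretely, I would first prove the following structural claim. Suppose $G$ is $(\alpha(4),4)$-unbreakable and $S$ is a planar $\Hcal$-modulator with a component $D\in\cc(G-S)$ satisfying $|V(D)|\ge\alpha(4)$. Since $\torso(G,S)$ is planar, it is $K_5$-minor-free, so \autoref{obs_clique} applied with $\Gcal=\Pcal$ yields $|N_G(V(D))|\le 4$. Put $R:=V(D)\cup N_G(V(D))$ and $L:=V(G)\setminus V(D)$. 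Every edge leaving $V(D)$ has its other endpoint in $N_G(V(D))\subseteq L\cap R$, so $(L,R)$ is a separation of $G$ of order $|N_G(V(D))|\le 4$. As $|R\setminus L|=|V(D)|\ge\alpha(4)$ and $G$ is $(\alpha(4),4)$-unbreakable, we must have $|L\setminus R|<\alpha(4)$, i.e.\ $|V(G)\setminus(V(D)\cup N_G(V(D)))|<\alpha(4)$; hence $W:=V(G)\setminus V(D)$ has at most $\alpha(4)+4$ vertices, and $S\subseteq W$ since $S\cap V(D)=\emptyset$. (The same separation also shows that $D$ is the unique component of $G-S$ of size $\ge\alpha(4)$, although we do not need this.)

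Given the claim, the algorithm is: iterate over all sets $W\subseteq V(G)$ with $|W|\le\alpha(4)+4$; for each such $W$, iterate over all $S\subseteq W$; and answer \yes\ iff for some pair $(W,S)$ all of the following hold: $\torso(G,S)$ is planar, every component of $G-S$ belongs to $\Hcal$, and some component of $G-S$ has at least $\alpha(4)$ vertices. Planarity of $\torso(G,S)$ is tested in linear time; since $\Hcal$ is polynomial-time decidable, checking that all components of $G-S$ lie in $\Hcal$ takes polynomial time; the size test is immediate. For soundness, any accepting pair $(W,S)$ is by definition a planar $\Hcal$-modulator with a component of size at least $\alpha(4)$, so $G$ is a \yes-instance (this direction uses no unbreakability). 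For completeness, if $G$ is a \yes-instance then we fix a witness $S$ with big component $D$; by the structural claim $W:=V(G)\setminus V(D)$ has size at most $\alpha(4)+4$ and contains $S$, so the pair $(W,S)$ is among those examined and passes all three tests. For the running time, $\alpha(4)$ is a fixed constant (it depends only on the CMSO sentence expressing $\Hcal$-planarity via \autoref{obs_CMSO} and on the fixed exponent of \autoref{prop_CMSOMetaTheorem}), so there are $n^{O(\alpha(4))}$ choices of $W$, at most $2^{O(\alpha(4))}$ choices of $S$ per $W$, and each pair is processed in polynomial time, giving a total running time of $n^{O(1)}$.

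There is no real obstacle in this lemma: it is the easy half of the split of \Hpl\ into \blHpl\ and \slHpl. The only step worth isolating is the separation argument above, which is precisely how $(\alpha(4),4)$-unbreakability is converted into a bound on the size of the complement of the big leaf; once that is established, the brute force is routine. All the genuine difficulty behind \autoref{th_th} is concentrated in the complementary small-leaves problem \slHpl, handled elsewhere.
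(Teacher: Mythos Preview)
Your proposal is correct and follows essentially the same approach as the paper: both use the planarity of the torso (via \autoref{obs_clique}) to bound $|N_G(V(D))|\le 4$, invoke $(\alpha(4),4)$-unbreakability to conclude that $V(G)\setminus V(D)$ has constant size, and then brute-force over candidate modulators $S$ inside this small set. The only difference is organizational: the paper first enumerates separations $(A,B)$ of order at most four with $|B\setminus A|<\alpha(4)$ and then guesses $S$ with $A\cap B\subseteq S\subseteq B$, which yields a running time with polynomial degree independent of $\alpha(4)$ (roughly $n^{O(1)}\cdot 2^{\alpha(4)}$), whereas your direct enumeration of $W$ gives $n^{O(\alpha(4))}$; since the lemma only asks for polynomial time and $\alpha(4)$ is a constant for fixed $\Hcal$, either is fine.
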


\begin{proof}
Let $G$ be an $(\alpha(4),4)$-unbreakable graph.
For each set $X\subseteq V(G)$ of size at most four, we check whether there is a connected component $C$ of $G-X$ of size at least $\alpha(4)$. If that is the case, then $(A=X\cup V(C), B=V(G)\setminus V(C))$ is a separation of order at most four such that $|B\setminus A|< \alpha(4)$, given that $G$ is an $(\alpha(4),4)$-unbreakable graph.
For each such separation $(A,B)$, 
we consider all sets $S$ with $X=A\cap B\subseteq S\subseteq B$ (there are at most $2^{\alpha(4)-1}$ such sets). For every $S$, we check whether the torso of $S$ is planar and that each connected component of $G-S$ belongs to $\Hcal$.  If there is such a set $S$, then we conclude that $G$ is  a \yes-instance of \blHpl. 
If, for each such $(A,B)$ and each such $S$, we did not report a \yes-instance, then we report a \no-instance.
These checks take polynomial time given that $|B\setminus A|<\alpha(4)$. This concludes the proof. 
\end{proof}

The second subproblem is the following.

\begin{center}
	\fbox{
		\begin{minipage}{12cm}
			\noindent\slHpl\\
			\noindent\textbf{Input}:~~A graph $G$.\\
			\textbf{Question}:~~Does $G$ admit a planar $\Hcal$-modulator $S$ such that, for each $D\in \cc(G-S)$, $|V(D)|<\alpha(4)$?
		\end{minipage}
	}
\end{center}

Let $k\in\Nbbb$ and $\Hcal$ be a graph class.
Recall that $\Hcal^{(k)}$ is the subclass of  $\Hcal$ containing graphs with  at most $k$ vertices.
In this setting, \slHpl\ is exactly {\sc $\Hcal^{(\alpha(4)-1)}$-Planarity}.
More generally than \slHpl, we prove that \Hkpl\ is solvable in \FPT-time parameterized by $k$ in 
the following theorem (see \autoref{subsec_proof} for the proof).

 \begin{restatable}{theorem}{smallleaves}\label{lem_small_leaves}
Let $k\in\Nbbb$ and let $\Hcal$ be a polynomial-time decidable
hereditary graph class.
Then there is an algorithm that solves \Hkpl\ in time $f(k)\cdot n(n+m)$ for some computable function $f$.
\end{restatable}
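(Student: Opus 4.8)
The plan is to follow the ``irrelevant vertex'' scheme sketched in \cref{sec_intro}, combining three ingredients: a clique-minor bound on \yes-instances, the algorithmic flat wall theorem, and a new irrelevant-vertex argument tailored to the flat-wall case.

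\emph{Bounding \yes-instances.} First I would show that if $G$ admits an $\Hcal^{(k)}$-planar modulator $X$, then $G$ is $K_{k+5}$-minor-free. Let $\{S_i\}_{i\in[m]}$ be a model of $K_m$ in $G$ and set $I=\{i : S_i\cap X\neq\emptyset\}$. For $i\in I$, I claim that $T_i:=S_i\cap X$ induces a connected subgraph of $\torso(G,X)$: any path in $G[S_i]$ joining two vertices of $T_i$ splits into subpaths lying in $X$ (already paths in the torso) and maximal subpaths whose internal vertices lie in a single component $C$ of $G-X$; the two $X$-endpoints of such a subpath lie in $N_G(V(C))$, hence are adjacent in $\torso(G,X)$, so the path can be shortcut inside $T_i$. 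Examining the edge realising a given adjacency $S_i\sim S_j$ in the same way shows $T_i\sim T_j$ in $\torso(G,X)$ for all $i,j\in I$, so planarity of $\torso(G,X)$ forces $|I|\leq 4$. On the other hand, every $S_i$ with $i\notin I$ lies in a single component of $G-X$, and two such indices adjacent in $K_m$ must lie in the \emph{same} component; hence all indices outside $I$ sit in one component of $G-X$, which has at most $k$ vertices, so $|[m]\setminus I|\leq k$. Altogether $m\leq k+4$.

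\emph{The dichotomy.} Next I would apply the algorithmic flat wall theorem \cite{RobertsonS95XIII,KawarabayashiTW18anew,SauST24amor} with excluded clique $K_{k+5}$: in time $\Ocal_k(n+m)$ it returns one of (i) a model of $K_{k+5}$ in $G$, in which case $G$ is a \no-instance by the previous paragraph; (ii) a tree decomposition of $G$ of width $\Ocal_k(1)$; or (iii) a flat wall $W$ of height $h(k)$ --- with $h(k)$ chosen large enough for the next step --- together with an apex set of size $\Ocal_k(1)$. In case (ii) I would exploit that $\Hcal^{(k)}$ is a \emph{finite} class: it consists of the graphs on at most $k$ vertices that belong to $\Hcal$, and this finite list is computable in time $\Ocal_k(1)$ since $\Hcal$ is polynomial-time decidable. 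A finite class is CMSO-definable, so, arguing as in \cref{obs_CMSO} (planarity and the torso relation being CMSO-expressible), \Hkpl\ is captured by a CMSO sentence $\psi_k$ of size $\Ocal_k(1)$, and Courcelle's theorem (\cref{Courcelle}) decides it on $G$ in time $\Ocal_k(n)$; note that $\Hcal$ itself need not be CMSO-definable for this.

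\emph{The irrelevant vertex and the main obstacle.} In case (iii) the goal is to find a vertex $v$ in the flat interior of $W$ that is \emph{irrelevant}, meaning that $G$ is a \yes-instance of \Hkpl\ if and only if $G-v$ is. One direction is routine: if $X$ is an $\Hcal^{(k)}$-planar modulator of $G$ with $v\notin X$, then $\torso(G-v,X)$ is a subgraph of $\torso(G,X)$, hence planar, and every component of $(G-v)-X$ is an induced subgraph of a component of $G-X$, hence lies in $\Hcal$ by heredity and has fewer than $k$ vertices; so $X$ is an $\Hcal^{(k)}$-planar modulator of $G-v$. The substance --- and where I expect the real difficulty --- is to prove that a witnessing modulator of $G$ can always be taken to avoid the centre of $W$ and, conversely, that a modulator of $G-v$ can be repaired across $v$. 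Since the modulator here may be a \emph{planar} graph of unbounded treewidth interacting with the whole two-dimensional area of $W$, none of the existing irrelevant-vertex meta-theorems of \cite{SauST2024parame,FominGSST21compound,GolovachST23model} applies, and a genuinely new rerouting argument, exploiting the ``room'' provided by a large flat wall, is required; this is the technical heart carried out in \cref{sec_renditions}. Granting it, the algorithm deletes $v$ and recurses on $G-v$; since each deletion is irreversible and a graph on few vertices has no height-$h(k)$ wall, after at most $n$ rounds the procedure halts via case (i) or (ii). With $\Ocal_k(n+m)$ work per round, the total running time is $\Ocal_k(n(n+m))$, i.e., $f(k)\cdot n(n+m)$ for some computable $f$.
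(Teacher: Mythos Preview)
Two genuine gaps.

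\textbf{The apex set.} Your clique bound (\yes-instances are $K_{k+5}$-minor-free) is correct and nicely argued, but feeding it into the standard flat wall theorem (\cref{prop_FWth}) yields a flat wall in $G-A$ for an apex set $A$ of size $\Ocal_k(1)$, not a flat wall in $G$. The irrelevant-vertex machinery of \cref{sec_renditions} (in particular \cref{lem_combine}) needs a flatness pair of $G$ itself, and you never say how to dispose of~$A$. The paper does not take your route here: instead of a clique it shows that a large apex grid obstructs $\Hcal^{(k)}$-planarity (\cref{lem:obstructions}) and proves a tailored flat wall theorem (\cref{prop_flatwallth}) that trades the clique for the apex grid and in return produces a flat wall in $G$ with \emph{no} apex set. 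That is what makes Case~(iii) go through.

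\textbf{Unconditional irrelevance is false.} You assert that the central vertex $v$ is irrelevant in the strong sense ``$G$ is a \yes-instance iff $G-v$ is'', and defer the proof to \cref{sec_renditions}. But \cref{sec_renditions} proves the strictly weaker \cref{lem_combine}: $G$ is a \yes-instance iff \emph{both} the compass $G'$ and $G-v$ are. The paper explicitly warns (Case~3 of \cref{sec_outline}) that the stronger claim cannot be made. For a concrete failure take $k=0$ (so $\Hcal^{(0)}$-planarity is just planarity), and let $G$ be a large wall together with four fresh pairwise-adjacent vertices attached only to the central branch vertex $v$; the wall is still flat in $G$ (the $K_5$ sits in a single cell with boundary $\{v\}$), $G-v$ is planar, yet $G$ contains $K_5$ and is a \no-instance. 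Your algorithm would delete $v$, recurse, and wrongly accept. The paper's algorithm first applies Courcelle to the bounded-treewidth compass $G'$; if $G'$ is a \no-instance it reports \no{} immediately, and only otherwise does it delete $v$ and invoke \cref{lem_combine}.

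A minor point: your ``routine'' forward direction restricts to $v\notin X$ unnecessarily; if $v\in X$ then $X\setminus\{v\}$ is already a planar $\Hcal^{(k)}$-modulator of $G-v$, since $\torso(G-v,X\setminus\{v\})$ is a subgraph of $\torso(G,X)-v$ and the components of $G-X$ are unchanged.
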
 

Given that a graph $G$ is a \yes-instance of \Hpl\ if and only if it is a \yes-instance of at least one of \blHpl\ and \slHpl, \autoref{th_th} immediately follows from \autoref{prop_CMSOMetaTheorem}, \autoref{lem_big_leaf}, and \autoref{lem_small_leaves}.

\smallskip
We do exactly the same for \hptd{$\Hcal$} and $\Hcal$-planar treewidth, but here, by \autoref{obs_clique}, we consider $(\alpha(k'),k')$-unbreakable graphs for $k'=4k$ and $k'=\max\{4,k+1\}$, respectively, instead of $k'=4$.

In this case, similarly to \autoref{lem_big_leaf}, we prove the following.
\begin{lemma}\label{cor_bigH}
Let $\Hcal$ be a hereditary graph class that is union-closed and 
such that there is an \FPT-algorithm that solves \pbdel parameterized by the solution size $h$ in time $f(h)\cdot n^c$.
Let $a,k\in\Nbbb$.
Let $\Gcal_k$ be the class of graphs with planar treewidth (resp. treewidth / planar treedepth / treedepth) at most $k$.
Let $k':=\max\{4,k+1\}$ (resp. $k+1$ / $4k$ / $k$).
Then there is an algorithm that, given an $(a,k')$-unbreakable graph $G$, either reports that $G$ has no big-leaf $\Gcal_k\triangleright\Hcal$-modulator, or outputs a $\Gcal_k\triangleright\Hcal$-modulator of $G$, in time $f(k)\cdot 2^{\Ocal((a+k)^2)}\cdot \log n\cdot (n^c+n+m)$.
\end{lemma}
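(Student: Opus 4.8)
The plan is to adapt the brute-force approach of \autoref{lem_big_leaf} to the parametric setting, with the size of the non-big components now bounded by $a$ (not $\alpha(4)$) and the torso condition now asking for membership in $\Gcal_k$ rather than just planarity. First I would observe, via \autoref{obs_clique}, that if $X$ is a $\Gcal_k\triangleright\Hcal$-modulator then every $C\in\cc(G-X)$ satisfies $|N_G(V(C))|\le k'$, since every graph in $\Gcal_k$ is $K_{k'+1}$-minor-free (for $\Gcal_k$ the class of graphs of planar treewidth $\le k$ this follows because such graphs have treewidth $\le\max\{k,3\}$; for the other three choices of $\Gcal_k$ it is exactly the statement of \autoref{obs_clique}). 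Since $G$ is $(a,k')$-unbreakable, a big-leaf modulator $X$ with big leaf $D$ induces a separation $(V(D)\cup N_G(V(D)),\, (V(G)\setminus V(D))\cup N_G(V(D)))$ of order $\le k'$ whose ``small side'' $B:=(V(G)\setminus V(D))\cup N_G(V(D))$ has $|B\setminus A|<a$, where $A:=V(D)\cup N_G(V(D))$; unbreakability forbids both sides from being large, and $|A\setminus B|=|V(D)|\ge a$, so indeed $|B\setminus A|<a$.

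The algorithm then enumerates all separations $(A,B)$ of $G$ of order $\le k'$ with $G[A\setminus B]$ connected and $|B\setminus A|<a$; there are at most $n^{k'}\cdot\text{poly}(n)$ such separations and they can be listed in time $\Ocal_{k'}(\text{poly}(n))$ by guessing the $\le k'$ vertices of $A\cap B$ and the connected component structure. For each, the candidate big leaf is $D=G[A\setminus B]$, and the modulator $X$ must satisfy $A\cap B\subseteq X\subseteq B$, so $|X\cap(B\setminus A)|$ of the at most $a-1$ vertices of $B\setminus A$ lie in $X$; there are at most $2^{a-1}$ choices of $X$. For each candidate $X$ I would: (i) check that $\torso(G,X)\in\Gcal_k$ — for $\Gcal_k$ being (planar) treewidth $\le k$ this is a treewidth/planar-treewidth test on a graph of size $<a+k'$, computable in time $2^{\Ocal((a+k)^2)}$; for (planar) treedepth $\le k$ it is likewise a test on a small graph; (ii) check that $|V(D)|\ge a$ and that every component of $G-X$ other than $D$ lies in $\Hcal$ — these small components have total size $<a$, so this is done by the polynomial-time recognition of $\Hcal$ (which follows from \pbdel being \FPT, since $\Hcal$ is the $k=0$ slice); and (iii) check that the unique big component $D$ itself belongs to $\Gcal_k\triangleright\Hcal$ — equivalently, that $D$ has an $\Hcal$-modulator of size $0$, i.e.\ $D\in\Hcal$. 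Wait — this last point is the subtlety: $D$ is a component of $G-X$, so we need $D\in\Hcal$ directly, and $|V(D)|$ may be as large as $n$. This is where \pbdel enters: we need to recognize membership of the single large component $D$ in $\Hcal$. Since $\Hcal$-recognition is polynomial (the $h=0$ case of the \FPT\ algorithm for \pbdel, running in time $f(0)\cdot n^c$), this check takes time $f(0)\cdot n^c$. If some $(A,B)$ and $X$ pass all checks, output $X$; otherwise report that no big-leaf $\Gcal_k\triangleright\Hcal$-modulator exists.

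For correctness, one direction is immediate (any $X$ output satisfies the definition). For the other, if $G$ has a big-leaf $\Gcal_k\triangleright\Hcal$-modulator $X$ with big leaf $D$, then by the unbreakability argument above the separation $(A,B)=(V(D)\cup N_G(V(D)),\, (V(G)\setminus V(D))\cup N_G(V(D)))$ is enumerated, $G[A\setminus B]=D$ is connected, $A\cap B=N_G(V(D))\subseteq X$, and $X\subseteq V(G)\setminus V(D)\cup N_G(V(D))=B$ (since $X$ is disjoint from each component of $G-X$, in particular from $D$), so $X$ is among the candidates tested for this separation and is accepted. Multiplying the number of separations $\Ocal_{k'}(n^{k'})$ — which I would fold into the $2^{\Ocal((a+k)^2)}$ factor together with the $\log n$ that, as in the analogous treewidth-computation routines elsewhere in the paper, comes from a recursive/iterative application rather than from a single sweep — by the $2^{a-1}$ choices of $X$, by the torso test cost $2^{\Ocal((a+k)^2)}$, by the cost $f(0)\cdot n^c$ of the $\Hcal$-membership test on $D$, and by the $\Hcal$-recognition cost $f(0)\cdot\text{poly}(n)$ on the small components, gives the claimed bound $f(k)\cdot 2^{\Ocal((a+k)^2)}\cdot\log n\cdot(n^c+n+m)$, absorbing $f(0)$ and all $k'$-dependent polynomial factors into $f(k)$. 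The main obstacle I anticipate is bookkeeping the running time cleanly — in particular justifying the $\log n$ factor and the precise exponent $2^{\Ocal((a+k)^2)}$ for the planar-treewidth torso test (which requires invoking a constructive treewidth/planar-treewidth algorithm on the bounded-size torso), and making sure the $\Hcal$-membership test on the single large leaf $D$ is correctly attributed to the $h=0$ case of the \pbdel\ algorithm so that the final bound is $n^c$ and not $n^{c}\cdot$(something larger).
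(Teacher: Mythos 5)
Your structural analysis is sound and matches the paper's: unbreakability forces $|V(G)\setminus N_G[V(D)]|<a$, so the whole modulator lives in a set of size $<a+k'$ and the torso test becomes a bounded-size check. But the algorithmic core has a genuine gap: your enumeration of all separations $(A,B)$ of order at most $k'$ costs $n^{\Theta(k')}=n^{\Theta(k)}$ time, and this factor cannot be ``folded into'' $2^{\Ocal((a+k)^2)}\cdot\log n$ as you assert --- it is a polynomial in $n$ whose degree grows with $k$, whereas the claimed bound is $f(k)\cdot 2^{\Ocal((a+k)^2)}\cdot\log n\cdot(n^c+n+m)$, i.e.\ fixed polynomial dependence on $n$. (In \autoref{lem_big_leaf} the separator order is the constant $4$, which is why brute-force enumeration is affordable there; here $k'$ depends on $k$.) The paper instead guesses the separator via the derandomized random-sampling family of Chitnis et al.\ (\autoref{rd_sampling}), which yields only $2^{\Ocal(k\log(a+k))}\cdot\log n$ candidate sets $U$ with $N_G(V(D))\subseteq U$ and $V(G)\setminus N_G[V(D)]$ disjoint from $U$ --- this is exactly where the $\log n$ in the statement comes from, not from any recursion. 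Moreover, such a $U$ does \emph{not} identify $N_G(V(D))$ (it may contain most of $D$), so one cannot then simply brute-force subsets of $B$ as you do; the paper must recover an approximate big leaf $C_U\supseteq D$ via the components of $G-U$ outside $\Hcal$, run the \FPT{} \pbdel{} algorithm on $C_U$ to obtain a minimum deletion set $S_U$, and then prove via an explicit replacement/injection argument that swapping $N_G(V(D))\setminus A_U$ for $S_U$ preserves membership of the torso in $\Gcal_k$ (rebuilding the elimination sequence, resp.\ the tree decomposition). None of this machinery appears in your proposal, and without it the claimed running time is not achieved.

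A secondary slip: your justification of the degree bound for the planar-treewidth case --- ``graphs of planar treewidth $\le k$ have treewidth $\le\max\{k,3\}$'' --- is false (planar graphs have planar treewidth $0$ and unbounded treewidth). The correct route is \autoref{obs_clique} directly: graphs of planar treewidth at most $k$ are $K_{\max\{k+1,4\}+1}$-minor-free, which bounds $|N_G(V(C))|$ without any detour through treewidth.
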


The proof of \autoref{cor_bigH} is based on random sampling technique from \cite{ChitnisCHPP16desi} (\autoref{rd_sampling}).
Assuming $G$ has a big-leaf $\Gcal_k\triangleright\Hcal$-modulator $X$ with big leaf $D$,
we guess a set $U\subseteq V(G)$ such that $N_G(V(D))\subseteq U$ and $V(G)\setminus V(D)\subseteq V(G)\setminus U$, and deduce $X$ and $D$ from this guess. 
See \autoref{subsec_big_leaf} for the proof.

\smallskip
Meanwhile, similarly to \autoref{lem_small_leaves}, we prove the two following results, whose proofs can be found respectively in \autoref{subsec_small_ed} and \autoref{subsec_small_tw}.

\begin{lemma}\label{lem_edk}
Let $\Hcal$ be a graph class that is hereditary. 
Let $a,k\in\Nbbb$, $k'=4k$. 
Then there is an algorithm that, given an $(a,k')$-unbreakable graph $G$, checks whether 
$G$ has \hptd{$\Hcal^{(a-1)}$} at most $k$
in time $\Ocal_{k,a}(n\cdot(n+m))$.
\end{lemma}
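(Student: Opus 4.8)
The plan is to argue by induction on $k$. The base case $k=0$ is immediate: $\Hcal^{(a-1)}\mbox{-}\ptd(G)\le 0$ holds exactly when every connected component of $G$ lies in $\Hcal$ and has at most $a-1$ vertices, and since each component has bounded size, testing membership in $\Hcal$ costs $\Ocal_{a}(1)$ per component, hence $\Ocal_{a}(n+m)$ in total. For the inductive step, set $\Hcal_{k-1}:=\Pcal^{k-1}\triangleright\Hcal^{(a-1)}$; by associativity of $\triangleright$ the task is to decide whether $G\in\Pcal\triangleright\Hcal_{k-1}$, i.e.\ whether there exists $X\subseteq V(G)$ with $\torso(G,X)$ planar and every $C\in\cc(G-X)$ in $\Hcal_{k-1}$. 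Two preliminary facts are needed. First, $\Hcal_{k-1}$ is hereditary: $\Pcal^{k-1}$ is minor-closed (since $\ptd$ is minor-monotone), $\Hcal^{(a-1)}$ is hereditary, and $\Gcal\triangleright\Hcal$ is hereditary whenever $\Gcal$ is minor-closed and $\Hcal$ is hereditary (restricting the modulator to an induced subgraph produces a torso that is a subgraph of the original torso). Consequently the restriction $\widehat\Hcal:=\Hcal_{k-1}^{(a-1)}$ of $\Hcal_{k-1}$ to graphs on at most $a-1$ vertices is hereditary, and, being a class of bounded-size graphs, is decidable in $\Ocal_{k,a}(1)$ time by brute force (recursively unfolding the definition of $\Hcal^{(a-1)}\mbox{-}\ptd\le k-1$ over the $\le 2^{a}$ candidate first modulators at each of the $\le k$ recursion levels).

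The second ingredient is the structural dichotomy that unbreakability provides. A planar torso is $K_5$-minor-free, so $N_G(V(C))$ is a clique in a planar graph and hence, by \autoref{obs_clique}, $|N_G(V(C))|\le 4$ for every $C\in\cc(G-X)$; thus $\bigl(V(G)\setminus V(C),\,V(C)\cup N_G(V(C))\bigr)$ is a separation of $G$ of order at most $4\le 4k$, and $(a,4k)$-unbreakability forces, for each such $C$, either $|V(C)|<a$ or $|V(G)\setminus(V(C)\cup N_G(V(C)))|<a$. The latter can hold for at most one component — the \emph{big leaf} $D$ — and then $|V(G)\setminus V(D)|<a+4$. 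Moreover $D$, viewed as an induced subgraph, is $(a,4(k-1))$-unbreakable: a separation of $D$ of order $t$ extends to a separation of $G$ of order $t+|N_G(V(D))|\le t+4$ by moving $V(G)\setminus V(D)$ to the left part and $N_G(V(D))$ into the separator, which preserves the sizes of both large sides. Note that this dichotomy already follows from $\torso(G,X)$ being planar, regardless of whether the components of $G-X$ lie in $\Hcal_{k-1}$.

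Hence every valid modulator $X$ falls into one of two regimes, each handled by its own subroutine; the algorithm reports \yes\ iff at least one succeeds. In the \emph{small-leaves} regime all components of $G-X$ have fewer than $a$ vertices, so membership of a component in $\Hcal_{k-1}$ is equivalent to membership in $\widehat\Hcal$; this regime is exactly the problem solved by \autoref{lem_small_leaves} applied to the hereditary, polynomially decidable class $\widehat\Hcal$ with parameter $a-1$, in time $\Ocal_{k,a}(n(n+m))$. In the \emph{big-leaf} regime there is one component $D$ with $|V(D)|\ge a$; here we (i) produce a short candidate list for $D$ — equivalently for the bounded set $V(G)\setminus V(D)$ containing the size-$\le 4$ separator $N_G(V(D))$ — either by enumerating the order-$\le 4$ separations $(A,B)$ with $G[A\setminus B]$ connected and $|B\setminus A|<a$ (as in \autoref{lem_big_leaf}), or, to keep the running time within $\Ocal_{k,a}(n(n+m))$, by the random-sampling technique of~\cite{ChitnisCHPP16desi} used in \autoref{cor_bigH}; then, for each candidate, we (ii) try all $X$ with $N_G(V(D))\subseteq X\subseteq V(G)\setminus V(D)$ (at most $2^{\Ocal(a)}$ choices), (iii) check that $\torso(G,X)$ is planar — which, via the dichotomy, forces the components of $G-X$ other than $D$ to be small — and that each such component lies in $\widehat\Hcal$ (brute force, $\Ocal_{k,a}(1)$ each), and (iv) check $\Hcal^{(a-1)}\mbox{-}\ptd(D)\le k-1$ by invoking the induction hypothesis on the $(a,4(k-1))$-unbreakable graph $D$. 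Unrolling the recursion — depth at most $k$, with one call to \autoref{lem_small_leaves} and one big-leaf search per level — gives the claimed bound.

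The main obstacle is the big-leaf regime: one must extract the big component efficiently (a naive search over the fewer than $a+4$ ``outside'' vertices would cost $n^{\Omega(a)}$, far too much) while ensuring that what remains is itself an unbreakable instance of the next level down, so that the induction both applies and composes within the target running time. The remaining ingredients — the dichotomy, the heredity and fast decidability of $\Hcal_{k-1}$ and $\widehat\Hcal$, and the identification of the small-leaves regime with an instance of \autoref{lem_small_leaves} — are routine.
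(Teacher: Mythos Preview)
Your approach is genuinely different from the paper's. The paper does \emph{not} induct on $k$: it applies the Flat Wall theorem with apex set (\autoref{prop_FWth}) directly to $G$, disposes of the large-clique-minor and bounded-treewidth cases, and in the flat-wall case identifies a single irrelevant vertex $v$ (via an adaptation of \autoref{lem_newirr} to certifying elimination sequences) and recurses on $G-v$. That recursion has depth $n$ with $\Ocal_{k,a}(n+m)$ work per step, giving exactly $\Ocal_{k,a}(n(n+m))$. Your route---peel off one planar layer, push the remainder into $\Hcal_{k-1}=\Pcal^{k-1}\triangleright\Hcal^{(a-1)}$, and reuse \autoref{lem_small_leaves} as a black box for the small-leaves regime---is more modular and reuses existing machinery cleanly; the heredity of $\Hcal_{k-1}$ and the $(a,4(k-1))$-unbreakability inheritance for $D$ are both correct.

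There is, however, a real gap in your running-time accounting. In the big-leaf regime you must enumerate candidate big leaves $D$ and, for \emph{each}, invoke the level-$(k-1)$ algorithm to test $D\in\Hcal_{k-1}$. Brute-force enumeration of order-$\le4$ separations yields $\Theta(n^4)$ candidates; random sampling via \autoref{rd_sampling} still yields $\Omega_a(\log n)$ candidates. Since you recurse on each one, the recursion tree has branching factor $\Omega_a(\log n)$ and depth $k$, contributing an extra $(\log n)^{\Theta(k)}$ factor and contradicting your ``one big-leaf search per level'' claim. You therefore obtain $\Ocal_{k,a}\bigl(n(n+m)(\log n)^k\bigr)$, not the stated $\Ocal_{k,a}(n(n+m))$. (You also cannot fall back on \autoref{cor_bigH} here, since that lemma needs an \FPT\ algorithm for $\Hcal_{k-1}$-\textsc{Deletion}, which your induction hypothesis does not supply.) The paper avoids all of this because its recursion removes one vertex and never branches. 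Your argument still delivers an \FPT\ algorithm---just not with the asserted bound.
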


\begin{lemma}\label{lem_twk}
Let $\Hcal$ be a graph class that is hereditary and union-closed.
Let $a,k\in\Nbbb$. 
Then there is an algorithm that, given an $(a,3)$-unbreakable graph $G$, checks whether $G$ has $\Hcal^{(a-1)}$-planar treewidth at most $k$ in time $\Ocal_{k,a}(n\cdot(n+m))$.
\end{lemma}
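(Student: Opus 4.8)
I would follow the template of \autoref{lem_small_leaves} --- reduce the instance either to one of bounded treewidth, decided by Courcelle's Theorem, or to deleting an irrelevant vertex from a large flat wall --- exploiting two observations that also explain why the hypotheses here are lighter than in \autoref{lem_small_leaves}. First, there are only finitely many graphs on at most $a-1$ vertices, so the family $\Hcal^{(a-1)}$ of graphs of $\Hcal$ on at most $a-1$ vertices is \emph{finite}; hence it is trivially CMSO-definable and decidable in $\Ocal_{a}(1)$ time, with the entire dependence on $\Hcal$ absorbed into $\Ocal_{k,a}(\cdot)$. Unfolding the tree-decomposition definition of planar treewidth and combining \autoref{obs_CMSO} with the CMSO-definability of treewidth and planarity, the predicate ``$G$ has $\Hcal^{(a-1)}$-planar treewidth at most $k$'' is captured by a CMSO sentence $\psi$ whose size depends only on $a$ and $k$.

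Second, I would show that every yes-instance $G$ excludes $K_{c}$ as a minor, where $c:=a-1+\max\{k+1,4\}$. Fix a witness $X$: every component of $G-X$ has fewer than $a$ vertices and $\torso(G,X)\in\Pcal\Tcal_k$, which by \autoref{obs_clique} is $K_{\max\{k+2,5\}}$-minor-free, so every clique model of $\torso(G,X)$ has size at most $\max\{k+1,4\}$. In a $K_{c}$-model of $G$, the branch sets contained in a component of $G-X$ all lie in a single such component $C^{*}$ (distinct components are nonadjacent), so there are fewer than $a$ of them; every remaining branch set is connected and not inside a component, hence meets $X$, and, being adjacent to a branch set inside $C^{*}$ (or, if there are none, to the other remaining branch sets), meets $N_{G}(V(C^{*}))$. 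Intersecting these remaining branch sets with $X$ and shortcutting through the cliques $N_{G}(V(C))$ yields a clique model of $\torso(G,X)$ of the same size; since $N_{G}(V(C^{*}))$ has at most $\max\{k+1,4\}$ vertices, the count is at most $c$.

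The algorithm is then: run the flat-wall theorem of \cite{RobertsonS95XIII,KawarabayashiTW18anew,SauST24amor} for the forbidden minor $K_{c}$. If it reports a $K_{c}$-minor, output that $G$ is a no-instance (correct by the previous paragraph). If it reports that $\tw(G)$ is bounded by a function of $a$ and $k$, decide $\psi$ on $G$ by \autoref{Courcelle} in time $\Ocal_{k,a}(n)$. Otherwise it returns a flat wall of height much larger than $a$; then, exactly as in \autoref{lem_small_leaves}, we identify an irrelevant vertex $v$ in its interior with the property that $G$ and $G-v$ are equivalent instances, delete $v$, and recurse. Each round costs $\Ocal_{k,a}(n+m)$ and removes a vertex, so the overall running time is $\Ocal_{k,a}(n\cdot(n+m))$.

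\textbf{Main obstacle.} The crux is the irrelevant-vertex step. Unlike in \autoref{lem_small_leaves}, the modulator here may be an arbitrary graph of planar treewidth at most $k$ --- of unbounded treewidth and rankwidth --- so it can interact with the flat wall everywhere, and its certificate is layered (a bounded-treewidth core, planar middle bags, and $\Hcal$-leaf bags of size below $a$). One has to prove that an optimal certificate can always be rerouted away from a sufficiently central brick of the wall; the hypotheses that should make this possible are exactly that every $\Hcal$-leaf has fewer than $a$ vertices and that $G$ is $(a,3)$-unbreakable --- the latter guaranteeing that no large, ``essential'' portion of the wall (or of a hypothetical certificate) can be cut off by a separator of order at most $3$. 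Establishing that these conditions suffice, by reusing and extending the linkage/rerouting apparatus developed for \autoref{lem_small_leaves}, is the technical heart of the proof.
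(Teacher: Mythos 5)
Your high-level template is the right one and matches the paper's: reduce to Courcelle's theorem on bounded treewidth, detect an obstruction, or delete an irrelevant vertex in a large flat wall; and your first observation (that $\Hcal^{(a-1)}$ is finite, hence the target property is CMSO-definable, cf.\ \autoref{lem_twCMSO}) is exactly what the paper uses. However, the obstruction you chose does not mesh with the machinery you then invoke. Excluding $K_c$ and running the Flat Wall theorem (\autoref{prop_FWth}) only yields a flat wall in $G-A$ for an apex set $A$ of size depending on $c$, whereas the irrelevant-vertex step ``exactly as in \autoref{lem_small_leaves}'' requires a flatness pair of $G$ itself: the gluing lemma \autoref{lem_newirr} and its consequence \autoref{lem_combinetw} are stated for a wall flat in $G$, and nothing in your plan removes the apices. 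The paper instead proves that an apex grid of height $\Theta(\sqrt{a})$ is itself an obstruction to $\Hcal^{(a-1)}$-planar treewidth at most $k$ (chaining \autoref{lem:obstructions}, \autoref{obs_sol_compatible}, and \autoref{lem_twrend2}) and then uses the apex-free variant \autoref{prop_flatwallth}. Your clique-exclusion argument is correct as far as it goes, but an apex grid has no large clique minor, so $K_c$-exclusion neither covers the first outcome of \autoref{prop_flatwallth} nor delivers the apex-free wall you need.

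The second and larger gap is that you explicitly defer the irrelevant-vertex step as ``the technical heart.'' That step is precisely where the new work lies, and it is not obtained by merely rerouting a certificate away from a central brick. The paper first recasts ``$\Hcal^{(a-1)}$-planar treewidth at most $k$'' as the existence of a ground-maximal sphere decomposition all of whose cells satisfy a hereditary property $\Pi_{\Hcal,k}$ (\autoref{lem_twrend1}, \autoref{lem_twrend2}, \autoref{lem_twrend3}); this uses Grohe's quasi-$4$-connected decomposition (\autoref{prop_Grohe}) together with the $(a,3)$-unbreakability to locate a bag of the witness whose torso must be planar and can serve as the ground of the decomposition, and a separate argument that $\Pi_{\Hcal,k}$ survives regrounding. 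Only with this reformulation in hand can \autoref{lem_newirr} be applied to combine a certificate for the compass with one for $G-v$. As written, your proposal is a correct outline of the easy cases plus an acknowledgment that the hard case remains open, so it does not constitute a proof of the lemma.
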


Therefore, \autoref{th_param} for $\p=\ptd$ immediately follows from \autoref{prop_CMSOMetaTheorem}, \autoref{cor_bigH}, and \autoref{lem_edk}, and \autoref{th_param} for $\p=\ptw$ immediately follows from \autoref{prop_CMSOMetaTheorem}, \autoref{cor_bigH}, and \autoref{lem_twk}.

\subsection{Outline of our technique of $\Hcal$-planarity (and $\Hcal$-planar treewidth)}\label{sec_outline}
The proof of \autoref{lem_small_leaves} for $\Hcal$-planar graphs is quite involved, and in particular, requires the introduction of many notions.
In this section, we sketch the proof, and 
the formal proof with all details is deferred to~\autoref{sec_renditions}.
Also, as we will argue at the end of this part, the proof of \autoref{lem_twk} for graphs of bounded $\Hcal$-planar treewidth is very similar to the one of \autoref{lem_small_leaves}.
For the sake of a good understanding, we give here informal 
definitions of the necessary notions. 
The notion of flat wall introduced below originates from \cite{RobertsonS95XIII}, see also 
\cite{KawarabayashiTW18anew}, and especially \cite{SauST24amor}.
The notion of rendition comes from \cite{KawarabayashiTW18anew}, see also \cite{kawarabayashi2020quickly}, where the more general notion of $\Sigma$-decomposition is defined (in our case, $\Sigma$ is the sphere).

\paragraph{Renditions.} See \autoref{fig_rendition} for an illustration and \autoref{subsec_rend} for accurate definitions.
A \emph{society} is a pair $(G,\Omega)$, where $G$ is a graph and $\Omega$ is a cyclic permutation with $V(\Omega)\subseteq V(G)$.
A \emph{rendition} $\rho$ of $(G,\Omega)$ is a pair $(\Gamma,\Dcal)$, where $\Gamma$ is a drawing (with crossings) of $G$ in a disk $\Delta$ such that the vertices of $\Omega$ are on the boundary of $\Delta$, in the order dictated by $\Omega$, and $\Dcal$ is a set of pairwise internally-disjoint disks  called \emph{cells} whose boundary is crossed by at most three vertices of $G$ and no edges, and such that every edge of $G$ is contained inside a cell. 
The vertices on the boundary of cells are called \emph{ground vertices}.
A rendition $\rho$ is \emph{well-linked} if every cell $c$ of $\rho$ has $|\tilde{c}|$ vertex disjoint paths between $\tilde{c}$ and  $V(\Omega)$, where $\tilde{c}$ is the set of vertices on the boundary of $c$.
In this sketch, we will only use renditions for the sake of simplicity, even though we actually need the more general notion of \emph{\sdecomp} in the formal proof, which is essentially a rendition on the sphere instead of a disk.
\begin{figure}[h]
\begin{minipage}[c]{.5\textwidth}
\center
\includegraphics[scale=0.5]{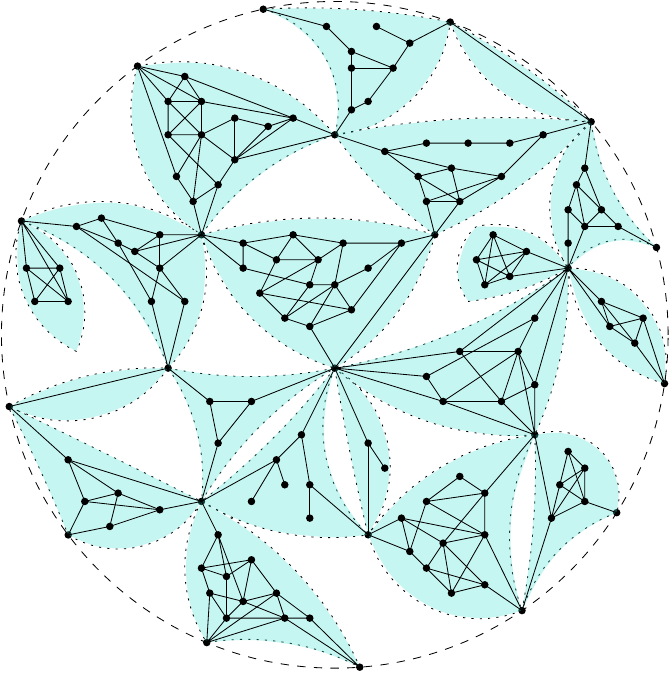}
\caption{A rendition.}
\label{fig_rendition}
\end{minipage}%
\begin{minipage}[c]{.5\textwidth}
\center
\includegraphics[scale=0.5]{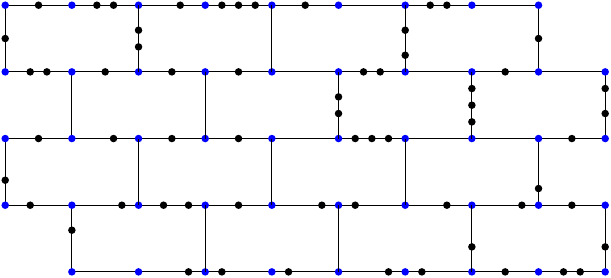}
\caption{A  wall.}
\label{fig_wall}
\end{minipage}
\end{figure}

\paragraph{Flat walls.} See \autoref{fig_wall} for an illustration of a wall and \autoref{subsec_flat} for accurate definitions.
A \emph{wall} is a graph that is a subdivision of a hexagonal grid.
A wall $W$ is \emph{flat} in $G$ if there is a separation $(X,Y)$ of $V(G)$ such that $X\cap Y$ is a part of the perimeter $D(W)$ of $W$ and contain the degree-3 vertices of $D(W)$, that the vertices of the interior of $W$ are in $Y$, and that $(G[Y],\Omega)$ has a rendition, that we can choose to be well-linked, where $\Omega$ is a cyclic permutation of the perimeter of $W$ with $V(\Omega)=X\cap Y$.
$G[Y]$ is called the \emph{compass} of $W$.
An \emph{apex grid} is a graph composed of a square grid and an additional vertex adjacent to every vertex of the grid.

\paragraph{The first step of the algorithm}\hspace{-0.2cm}is to find a flat wall in $G$.
For this, one would usually use the classical Flat Wall theorem (see \autoref{prop_FWth} and also \cite{KawarabayashiTW18anew,RobertsonS95XIII,SauST24amor,GiannopoulouT13opti}) that says that either $G$ contains a big clique as a minor, or $G$ has bounded treewidth, or there is a set $A\subseteq V(G)$, called \emph{apex set}, of bounded size and a wall $W$ that is flat in $G-A$.
However, in our case, we want $W$ to be flat in $G$ with no apex set.
Therefore, we use a variant of the Flat Wall theorem that is implicit in~\cite[Theorem 1]{GiannopoulouT13opti} and~\cite[Lemma 4.7]{KorhonenPS24mino}.
However, we are not aware of an algorithmic statement of this variant, and so we prove it here (see \autoref{prop_flatwallth}).
We thus prove that there is a function $f$ and an algorithm that, given a graph $G$ and $h,r\in\Nbbb$, 
outputs one of the following:
\begin{itemize}
\item {\bf Case 1:} a report that $G$ contains an apex  $(h\times h)$-grid  as a minor, or
\item {\bf Case 2:} a tree decomposition of $G$ of width at most $f(h)\cdot r$, or
\item {\bf Case 3:} a wall $W$ of height $r$ that is flat in $G$ and whose compass has treewidth at most $f(h)\cdot r$.
\end{itemize}

We apply the above algorithm to the instance graph $G$ of \Hkpl{} for $h,r=\Theta(\sqrt{k})$ (precise values are given in~\autoref{subsec_proof}) 
 and consider three cases depending on the output in the {\bf next step}.

\paragraph{Case 1:} If $G$ contains the apex grid of height $h$ as a minor, then we are able to argue that $G$ is a \no-instance (\autoref{lem:obstructions}). This is easy if the apex grid is a subgraph of $G$ as for any separation of the apex grid of order at most four, there is a nonplanar part whose size is bigger than $k$, for our choice of $h$. We use this observation to show that the same holds if the apex grid is a minor.

\paragraph{Case 2:} If $G$ has a tree decomposition of bounded treewidth, then we can use Courcelle's theorem to solve the problem.
Recall that Courcelle's theorem (\autoref{Courcelle}) says that if a problem can be defined in CMSO logic, then it is solvable in \FPT-time on graphs of bounded treewidth.
In our case, observe that, in \autoref{lem_small_leaves}, we do not ask for $\Hcal$ to be CMSO-definable.
Nevertheless, given that there is a finite number of graphs of size at most $k$, $\Hcal^{(k)}$ is trivially CMSO-definable by enumerating all the graphs in $\Hcal^{(k)}$.
Then, \Hkpl\ is easily expressible in CMSO logic (\autoref{obs_CMSO}), hence the result.

It remains to consider the third and most complicated case.

\paragraph{Case 3:} There is a flat wall $W$ in $G$ of height $r$ whose compass $G'$ has  treewidth upper bounded by $f(h)\cdot r$.
Then we apply Courcelle's theorem again, on $G'$ this time, since it has bounded treewidth.
If $G'$ is not $\Hcal^{(k)}$-planar, then neither is $G$ because $\Hcal$ is hereditary, so we conclude that $G$ is a no-instance of \Hkpl.
Otherwise, $G'$ is $\Hcal^{(k)}$-planar.
Let $v$ be a central vertex of $W$.
Then we prove that $v$ is an \emph{irrelevant vertex} in $G$, i.e. a vertex such that $G$ and $G-v$ are equivalent instances of the problem.
Therefore, we call our algorithm recursively on the instance $G-v$ and return the obtained answer.

\medskip
The description of our algorithm in Case~3 is  simple. However, proving its correctness is the crucial and most technical part of our paper which deviates significantly from other applications of the irrelevant vertex technique. The reason for this is that the modulator is a planar graph that might have big treewidth and might be
spread ``all around the input graph'', therefore it does not satisfy any locality condition that might make possible the application of standard irrelevant-vertex arguments (such as those crystallized in algorithmic-meta-theorems in \cite{FominGSST21compound,GolovachST23model,SauST2024parame}). 
Most of the technical part of our paper is devoted to dealing with this 
situation where the modulator is not anymore ``local''.\medskip

It is easy to see that if $G$ has a planar $\Hcal^{(k)}$-modulator, then the same holds for $G'$ and $G-v$ as $\Hcal$ is hereditary. The difficult part is to show the opposite implication.   

First, we would like to emphasize that in the standard irrelevant vertex technique of Robertson and Seymour~\cite{RobertsonS95XIII}, the existence of a big flat wall $W$ with some specified properties implies that a central vertex is irrelevant. Here, we cannot make such a claim because the compass of $W$ may be nonplanar and we cannot guarantee that $v$ does not belong to $G-X$ for 
a (potential) planar $\Hcal^{(k)}$-modulator $X$. Therefore, we have to verify that $G'$ is a yes-instance before claiming that $v$ is irrelevant.  Then we have to show that if both $G'$ and $G-v$ have  
planar $\Hcal^{(k)}$-modulators, then $G$ also has a planar $\Hcal^{(k)}$-modulator. The main idea is to combine modulators for $G'$ and $G-v$ and construct a modulator for $G$. However, this is nontrivial because the choice of a connected component $C$ which is outside of a modulator restricts the possible choices of other such components and this may propagate arbitrarily around a rendition. 
In fact, this propagation effect is used in the proof of \autoref{thm_lower} where we show that \Hpl{} is \NP-complete for nonhereditary classes. Still, for hereditary classes $\Hcal$, we are able to show that if both $G'$ and $G-v$ are \yes-instances of \Hkpl, then
these instances have some particular solutions that could be glued together to obtain a planar  $\Hcal^{(k)}$-modulator for $G$. In the remaining part of this section, we informally explain the choice of compatible solutions.

\paragraph{$\Hcal$-compatible renditions.} 
Let $G$ be an $\Hcal$-planar graph. Let $X$ be a planar $\Hcal$-modulator in $G$, and $\Gamma$ be a planar embedding of the torso of $X$.
For each component $D$ in $G-X$, the neighborhood of $D$ induces a clique of size at most four in the torso of $X$.
Therefore, it is contained in a disk $\Delta_D$ in $\Gamma$ whose boundary is the outer face of the clique, that is, a cycle of size at most three.
Therefore, for each pair of such disks, either one is included in the other, or their interiors is disjoint.
If we take all \textsl{inclusion-wise maximal} such disks, along with the rest of the embedding disjoint from these disks, this then essentially defines the set of cells $\Dcal$ of a rendition $\rho$ of $G$.
In particular, for each cell $c$ of $\rho$, the graph induced by $c$ has a planar $\Hcal$-modulator $X_c$ such that the torso of $X_c$ admits a planar embedding with the vertices of $\tilde{c}$ on the outer face.
Such a rendition $\rho$ (resp. cell $c$) is said to be \emph{$\Hcal$-compatible}, and it is direct to check that if $G$ admits an $\Hcal$-compatible rendition, then $G$ is $\Hcal$-planar.
This is what we prove, with additional constraints and more formally, in \autoref{obs_sol_compatible}. In particular, we talk here in terms of renditions for simplicity, while we actually use \sdecomps, which are similar to renditions, but defined on the sphere instead of a disk, with no cyclic permutation $\Omega$ defining a boundary.
See \autoref{fig_compatible} for an illustration and \autoref{subsec_compatible} for an accurate definition.

\begin{figure}[h]
\centering
\includegraphics[scale=0.8]{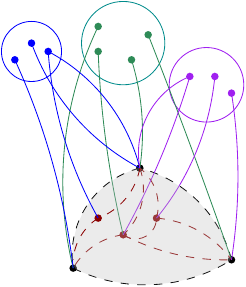}
\caption{An $\Hcal$-compatible cell $c$ (in gray). The $\Hcal$-modulator $X$ is the set of red and black vertices, with the black vertices being the vertices of $\tilde{c}$. The blue, green, and purple balls are the connected components obtained after removing $X$, and the dashed lines are used for the edges of the torso that are not necessarily edges of the graph.}
\label{fig_compatible}
\end{figure}
Therefore, since $G'$ and $G-v$ are $\Hcal^{(k)}$-planar, there are a rendition $\rho'$ of $G'$ and a rendition $\rho_v$ of $G-v$ that are $\Hcal^{(k)}$-compatible.
A way to prove that $G$ is $\Hcal^{(k)}$-planar would then be to 
 glue these two renditions together to find an $\Hcal^{(k)}$-compatible rendition $\rho^*$ of $G$.
For instance, we may take a disk $\Delta$ in the rendition corresponding to the flat wall $W$ with $v\in\Delta$ and define $\rho^*$ to be the rendition equal to $\rho'$ when restricted to $\Delta$, and equal to $\rho_v$ outside of $\Delta$.
A major problem in such an approach is however that it may only work if we are able to guarantee that the cells of $\rho'$ and $\rho_v$ are pairwise disjoint on the boundary of $\Delta$.
Given that a graph $G$ may have many distinct renditions, the ideal would be to manage to find a \textsl{unique} rendition of $G$ that is minimal or maximal in some sense and could then guarantee that its cells are crossed\footnote{A cell $c_1$ of a rendition $\rho_1$ \emph{crosses} a cell $c_2$ of a rendition $\rho_2$ if their vertex sets have a non-empty intersection, but that neither of them is contained in the other.} by no cell of another rendition of $G$.
Unfortunately, we do not have such a result; {in fact, we have no way to ``repair'' these renditions so as to achieve cell-compatibility}.   
Instead, we prove a new property of renditions that allows us to glue renditions together.

\paragraph{Well-linked and ground-maximal renditions.}
Recall that a rendition $\rho$ is well-linked if every cell $c$ of $\rho$ has $|\tilde{c}|$ vertex disjoint paths to $V(\Omega)$, where $\tilde{c}$ is the set of vertices on the boundary of $c$.
A rendition $\rho$ is \emph{more grounded} than a rendition $\rho'$ if every cell of $\rho$ is contained in a cell of $\rho'$, or equivalently, if $\rho$ can be obtained from $\rho'$ by splitting cells to ground more vertices.
A rendition $\rho$ is \emph{ground-maximal} if no rendition is more grounded than $\rho$.
See \autoref{subsec_compatible} and \autoref{subsec_compare} for the accurate definition of ground-maximality and well-linkedness respectively.
Note that the idea of working with tight and ground-maximal renditions already appeared in~\cite{LokshtanovPPS22high}.

Let $(G,\Omega)$ be a society.
What we prove is that for any ground-maximal rendition $\rho_1$ of $(G,\Omega)$ and for any well-linked rendition $\rho_2$ of $(G,\Omega)$, $\rho_1$ is always more grounded than $\rho_2$, or, in other words, every cell $c_1$ of $\rho_1$ is contained in a cell $c_2$ of $\rho_2$ (see \autoref{cor_well-ground}).
To prove this, we assume toward a contradiction that either $c_1$ and $c_2$ cross, or that $c_2$ is contained in $c_1$.
In both cases, we essentially prove that we can replace $c_1$ by the cells obtained from the restriction of $\rho_2$ in $c_1$, which gives a rendition $\rho^*$ that is more grounded than $\rho_1$, contradicting the ground-maximality of $\rho_1$.
See \autoref{fig_minmax} for some illustration.

\begin{figure}[h]
\centering
\includegraphics[scale=1]{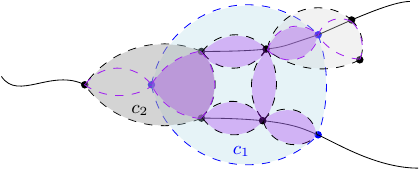}
\caption{$c_2$ is a cell of a well-linked rendition $\rho_2$ (in gray). If $c_2$ crosses a cell $c_1$ of another (blue) rendition $\rho_1$, then there is a new (purple) rendition $\rho^*$ that is more grounded than the blue one.}
\label{fig_minmax}
\end{figure}

Going back to $G'$ and $G-v$, given that $\Hcal$, and thus $\Hcal^{(k)}$, is \emph{hereditary}, it implies that any rendition that is more grounded than an $\Hcal^{(k)}$-compatible rendition is also $\Hcal^{(k)}$-compatible.
Therefore, we can assume $\rho'$ and $\rho_v$ to be ground-maximal.
Additionally, by definition, there is a rendition $\rho$ of $(G',\Omega)$ witnessing that $W$ is a flat wall of $G$, where $V(\Omega)$ is a vertex subset of the perimeter of $W$.
Then, by known results (\autoref{lem_rend_to_min_or_max}), we can assume $\rho$ to be well-linked.

Therefore, the cells of $\rho'$ and $\rho_v$ are contained in cells of $\rho$ when restricted to $G'-v$.
Actually, it is a bit more complicated given that, while $\rho$ is a rendition of the society $(G',\Omega)$, we just know that $\rho'$ and $\rho_v$ are renditions of $(G',\Omega')$ and $(G-v,\Omega_v)$ respectively, for some cyclic permutations $\Omega'$ and $\Omega_v$ that we do not know.
What we actually prove, thanks to the structure of the wall, is that the cells of $\rho'$ and $\rho_v$ in $G'-v$ that are far enough from $v$ and the perimeter of $W$ are contained in cells of $\rho$.
For this, rather than \autoref{cor_well-ground} that says that a ground-maximal rendition of a society is more-grounded than a well-linked rendition of the exactly same rendition, we prefer to use an auxiliary lemma (see \autoref{cell_in_cell2}) that says that, given two renditions $\rho_1$ and $\rho_2$ of possibly different societies, if a ground-maximal cell $c_1$ of $\rho_1$ that does not intersect $V(\Omega)$ intersects only well-linked cells of $\rho_2$, then $c_1$ is contained in some cell $c_2$ of $\rho_2$.

\begin{figure}[h]
\centering
\includegraphics[scale=.9]{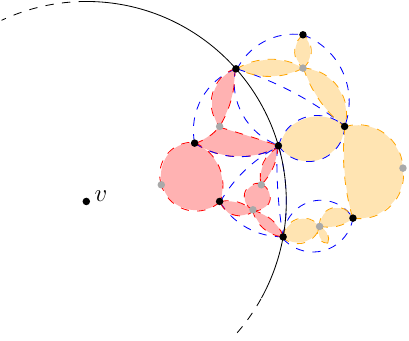}
\caption{The black circle represents the boundary of $\Delta$, $\rho$ is represented in blue, $\rho'$ in red, and $\rho_v$ in orange.}
\label{fig_glue}
\end{figure}

Given that we found a strip around $v$ in $W$ where the cells of $\rho'$ and $\rho_v$ are contained in cells of $\rho$, we can pick a disk $\Delta$, whose boundary is in this strip, that intersects $\rho$ only on ground vertices. In other words, the interior of each cell of $\rho$ is either contained in $\Delta$ or outside of $\Delta$.
But then, the same holds for the cells of $\rho'$ and $\rho_v$.
Therefore, we can finally correctly define the rendition $\rho^*$ of $G$ that is equal to $\rho'$ when restricted to $\Delta$ and equal to $\rho_v$ outside of $\Delta$ (see \autoref{lem_newirr}).
See \autoref{fig_glue} for an illustration.
Each cell of $\rho^*$ is either a cell of $\rho'$ or of $\rho_v$ and is thus $\Hcal^{(k)}$-compatible.
Therefore, $\rho^*$ is $\Hcal^{(k)}$-compatible, and thus $G$ is $\Hcal^{(k)}$-planar.
This concludes this sketch of the proof.

\medskip
Interestingly, this new irrelevant vertex technique (\autoref{lem_newirr}) can essentially be applied to any problem $\Pi$ as long as we can prove the following:
\begin{enumerate}
\item there is a big enough flat wall $W$ in the input graph $G$, and
\item there is a hereditary property $\Pcal_\Pi$ such that $\Pi$ can be expressed as the problem of finding a ground-maximal rendition $\rho$ where each cell of $\rho$ has the property $\Pcal_\Pi$.
\end{enumerate}
This is actually what we do for $\Hcal$-planar treewidth, where, instead of asking the cells to be $\Hcal$-compatible, we ask them to have another property $\Pi_{\Hcal,k}$ (see \autoref{subsec_twrend}).

\medskip
For item 2 above, there is actually also another constraint that was hidden under the carpet for this sketch, demanding that we should be able to choose $\rho$ such that no cell of $\rho$ contains $W$.
In our case, this is what requires us to consider an embedding on the sphere instead of a disk to wisely choose the disks that will become $\Hcal^{(k)}$-compatible cells, hence our use of sphere decompositions instead of renditions.
Additionally, we crucially use the fact that a big wall is not contained in a component of bounded size, so the same argument would not work if we were to consider {\sc $\Hcal$-Planarity} in general instead of {\sc $\Hcal^{(k)}$-Planarity} (it could actually work as long as $\Hcal$ is a class of bounded treewidth, given the parametric duality between walls and treewidth).

\subsection{Changes for \hptd{$\Hcal$}}\label{subsec_changes_ed}

We sketch here \autoref{lem_edk}, or more exactly, we explain the differences with \autoref{lem_small_leaves}.
Contrary to the $\Hcal^{(a)}$-planarity case, a big apex grid is not an obstruction for a graph of bounded \hptd{$\Hcal^{(a)}$}.
Therefore, we cannot use the Flat Wall theorem variant of \autoref{prop_flatwallth}.
Instead, we use the classical Flat Wall theorem (\autoref{prop_FWth}).
If it reports that $G$ has bounded treewidth, then we use Courcelle's theorem (\autoref{Courcelle}) to conclude.
If we obtain that $G$ has a clique-minor of size $a+4k$, then this is a \no-instance.
So we can assume that we find a small apex set $A$ and a wall $W$ that is flat in $G-A$ and whose compass has bounded treewidth.
Let $A^+$ be the set of vertices of $A$ with many disjoint paths to $W$ and $A^-=A\setminus A^+$.
Assuming $G$ has a certifying elimination sequence $X_1,\dots, X_k$, what we can essentially prove is that $W$ is mostly part of $X_i$ for some $i\in[k]$, and that the vertices of $A^+$ must belong to $\bigcup_{j<i}X_j$ and can hence be somewhat ignored.
Hence, we do the following. We find a subwall $W'$ of $W$ that avoids the vertices of $A^-$, and thus is flat in $G-A^+$.
We divide $W'$ further in enough subwalls $W_1,\dots,W_r$ (which are flat in $G-A^+$) so that we are sure that one of them, say $W_p$, is totally contained in $X_i$ (though we do not know which one).
Given that the compass of the walls $W_\ell$ have bounded treewidth, we can use Courcelle's theorem to compute the \hptd{$\Hcal^{(a)}$} $d_\ell$ of each of them.
If, for each $\ell\in[r]$, $d_\ell>k$, then this is a \no-instance.
Otherwise, we choose $\ell$ such that $d_\ell$ is minimum.
Then, given a central vertex $v$ of $G-v$, we report a \yes-instance if and only if $G-v$ has \hptd{$\Hcal$} at most $k$.

The argument for the irrelevancy of $v$ uses the core result as in the $\Hcal^{(a)}$-planarity case, that is \autoref{lem_newirr}.
What we essentially prove is that, given a certifying elimination sequence $Y_1,\dots, Y_k$ of $G-v$, most of $W_\ell$ is in $Y_i$, and that, as such, we can replace $Y_i,\dots,Y_k$ according to the certifying elimination sequence we found for the compass of $W_\ell$, to obtain a certifying elimination sequence $Y_1',\dots, Y_k'$ of $G'$.

\section{The \FPT\-algorithm for \Hkpl}\label{sec_renditions} 

To prove \autoref{lem_small_leaves}, we need the notion of \emph{\sdecomps}.
In \autoref{subsec_rend}, we thus define \sdecomps\ and related notions.
In \autoref{subsec_flat}, we define \emph{flat walls} and introduce the Flat Wall theorem.
In \autoref{subsec_obs}, we provide an obstruction set to \Hkpl. 
In \autoref{subsec_compatible}, we observe that a \yes-instance of \Hkpl\ is essentially a graph with a special \sdecomp, called \emph{$\Hcal^{(k)}$-compatible \sdecomp}, and we introduce the notion of \emph{ground-maximality} for \sdecomps.
In \autoref{subsec_compare}, we introduce the notion of \emph{well-linkedness} and prove that a ground-maximal \sdecomp\ is always \emph{more grounded} than a well-linked \sdecomp.
Finally, in \autoref{subsec_proof}, we combine the results of the previous subsections together to prove \autoref{lem_small_leaves}.

\subsection{Drawing in a surface}

We begin by defining drawings, \sdecomps, vortices, and other related notions.\label{subsec_rend}
  
 \paragraph{Drawing a graph in the sphere.}
  Let $\mathbb{S}^2=\{(x,y,z)\in\Rbbb^3\mid x^2+y^2+z^2=1\}$ be the sphere.
  A \emph{drawing} (with crossings) in $\mathbb{S}^2$ is a triple $\Gamma=(U,V,E)$ such that
  \begin{itemize}[itemsep=-2pt] 
  \setlength\itemsep{0em}
 \item $V$ is a finite set of point of $\mathbb{S}^2$ and  $E$ is a finite family of subsets of   $\mathbb{S}^2$,
 \item    $V\cup(\bigcup_{e\in E}e)=U$ and $V\cap (\bigcup_{e\in E}e)=\emptyset,$
  \item for every $e\in E,$ $e=h((0,1)),$ where $h\colon[0,1]_{\mathbb{R}}\to U$ is a homeomorphism onto its image with $h(0),h(1)\in V,$ 
  and
  \item if $e,e'\in E$ are distinct, then $|e\cap e'|$ is finite.
  \end{itemize}
  We call the elements of the set $V,$ sometimes denoted by $V(\Gamma),$ the \emph{vertices of $\Gamma$} and the set $E,$ denoted by $E(\Gamma),$ the set of \emph{edges of $\Gamma$}. We also denote $U(\Gamma)=U.$
  If $G$ is a graph and $\Gamma=(U,V,E)$ is a drawing with crossings in $\mathbb{S}^2$ 
  such that $V$ and $E$ naturally correspond to $V(G)$ and $E(G)$ respectively, we say that $\Gamma$ is a \emph{drawing of $G$} in $\mathbb{S}^2$ 
  (possibly with crossings).

 \medskip 
 We remind that a \emph{closed disk} $D$ on $\mathbb{S}^2$ is a set of points homeomorphic to the set $\{(x,y)\in \mathbb{R}^2\mid x^2+y^2\leq 1\}$ on the Euclidean plane. We use $\bd(D)$ to denote the boundary of $D$.

\paragraph{Sphere decompositions.}
  A \emph{\sdecomp} of a graph $G$ is a pair $\delta=(\Gamma,\mathcal{D}),$ where $\Gamma$ is a drawing of $G$ in the sphere $\mathbb{S}^2$ (with crossings), and $\mathcal{D}$ is a collection of closed disks, each a subset of $\mathbb{S}^2$, such that 
  \begin{enumerate}[itemsep=-2pt]
  \setlength\itemsep{0em}
  \item the disks in $\mathcal{D}$ have pairwise disjoint interiors, 
  \item the boundary of each disk $D$ in $\mathcal{D}$ intersects $\Gamma$ in vertices only, and $|\bd(D)\cap V(\Gamma)|\le3$, 
  \item if $\Delta_1,\Delta_2\in\mathcal{D}$ are distinct, then $\Delta_1\cap\Delta_2\subseteq V(\Gamma),$ and 
  \item every edge of $\Gamma$ belongs to the interior of one of the disks in $\mathcal{D}.$ 
  \end{enumerate} 
The notion of a sphere decomposition has been introduced in \cite{kawarabayashi2020quickly} in the more general case where, instead 
of a sphere, we have a surface $\Sigma$ with a boundary, where it is called a $\Sigma$-decomposition. It has also been recently used in \cite{thilikos2023approx}.

\paragraph{Sphere embeddings.}
A \emph{sphere embedding} of a graph $G$ is a \sdecomp\ $\delta = (\Gamma,\Dcal)$ where  $\Dcal$ is a collection of closed disks such that, for any disk in $\Dcal,$  only a single edge of $\Gamma$ is drawn in its interior.  In words, $\Gamma$ is a drawing of $G$ on the sphere without crossings.

\paragraph{Nodes, cells, and ground vertices.}
For a \sdecomp\ $\delta = (\Gamma, \Dcal)$, 
  let $N$ be the set of all vertices of $\Gamma$ that do not belong to the interior of the disks in $\mathcal{D}.$ 
  We refer to the elements of $N$ as the \emph{nodes} of $\delta.$ 
  If $\Delta\in\mathcal{D},$ then we refer to the set $\Delta-N$ as a \emph{cell} of $\delta.$ 
  We denote the set of nodes of $\delta$ by $N(\delta)$ and the set of cells by $C(\delta).$ 
  For a cell $c\in C(\delta)$ the set of nodes that belong to the closure of $c$ is denoted by $\tilde{c}.$ Given a cell $c\in C(\delta),$ we define 
  its \emph{disk} as 
  $\Delta_{c}=\bd(c)\cup c.$
  We define $\pi_{\delta}\colon N(\delta)\to V(G)$ to be the mapping that assigns to every node in $N(\delta)$ the corresponding vertex of $G.$
We also define \emph{ground vertices} in $\delta$ as $\ground(\delta) = π_{\delta}(N(\delta)).$
For a cell $c\in C(\delta)$, we define the graph $\sigma_{\delta}(c),$ or $\sigma(c)$ if $\delta$ is clear from the context, to be the subgraph of $G$ consisting of all vertices and edges drawn in $\Delta_{c}.$ 
Note that, for any cell $c\in C(\delta)$ such that $\sigma_{\delta}(c)$ is not connected, we can split $c$ into $|\cc(\sigma_{\delta}(c))|$ cells, and obtain a sphere decomposition $\delta'$ such that each cell $c\in C(\delta')$ is such that $\sigma_{\delta}(c)$ is connected.
Hence, without loss of generality, when introducing a sphere rendition $\delta$, we assume $\sigma_\delta(c)$ to be connected for each $c\in C(\delta)$.
 
\paragraph{Deleting a set.}
Given a set $X\subseteq V(G),$ we denote by $\delta-X$ the \sdecomp\ $\delta'=(\Gamma',\mathcal{D}')$ of $G-X$ where $\Gamma'$ is obtained by the drawing $\Gamma$ after removing 
all points in $X$ and all drawings of edges with an endpoint in $X.$ For every point $x\in π_{\delta}^{-1}(X\cap N(\delta))$, we pick $\Delta_{x}$ to be an open disk containing $x$ and not containing any point of some remaining vertex or edge and such that no two such disks intersect.
We also 
set $\Delta_{X}=\bigcup_{x\in π_{\delta}^{-1}(X)} \Delta_x$ and we define $\mathcal{D}'=\{D\setminus \Delta_{X}\mid D\in\Dcal\}$.
Clearly, there is a one to one correspondence between the cells of $\delta$ and the cells of $\delta'.$
If a cell $c$ of $\delta$ corresponds to a cell $c'$ of $\delta',$ then 
we call $c'$ the \emph{heir} of $c$ in $\delta'$ and we call $c$ the \emph{precursor}
of $c'$ in $\delta.$

\paragraph{Grounded graphs.}
Let $δ$ be a \sdecomp\ of a graph $G$ in $\mathbb{S}^2.$ 
We say that a cycle $C$ of $G$ is \emph{grounded in $δ$} if $C$ uses edges of $\sigma(c_{1})$ and $\sigma(c_{2})$ for two distinct cells $c_{1}, c_{2} \in C(δ).$
A $2$-connected subgraph $H$ of $G$ is said to be \emph{grounded in $δ$} if every cycle in $H$ is grounded in $δ.$

\paragraph{Tracks.}
Let $δ$ be a \sdecomp\ of a graph $G$ in $\mathbb{S}^2.$ 
For every cell $c \in C(δ)$ with $|\tilde{c}| = 2$, we select one of the components of $\bd(c) - \tilde{c}.$  
This selection is called a \emph{tie-breaker} in $δ,$ and we assume every $Σ$-decomposition to come equipped with a tie-breaker. 
Let $C$ be a cycle grounded in $δ$.
We define the \emph{track} of $C$ as follows.
Let $P_{1}, \dots, P_{k}$ be distinct maximal subpaths of $C$ such that $P_{i}$ is a subgraph of $\sigma(c)$ for some cell $c.$ Fix an index $i.$
The maximality of $P_{i}$ implies that its endpoints are $π_δ(n_{1})$ and $π_δ(n_{2})$ for distinct $δ$-nodes $n_{1}, n_{2} \in N(δ).$
If $|\tilde{c}| = 2,$ define $L_{i}$ to be the component of $\bd(c) - \{ n_{1}, n_{2} \}$ selected by the tie-breaker, and if $|\tilde{c}| = 3,$ define $L_{i}$ to be the component of $\bd(c) - \{ n_{1}, n_{2} \}$ that is disjoint from $\tilde{c}.$
Finally, we define $L'_{i}$ by slightly pushing $L_{i}$ to make it disjoint from all cells in $C(δ).$ We define such a curve $L'_{i}$ for all $i$ while ensuring that the curves intersect only at a common endpoint.
The \emph{track} of $C$ is defined to be $\bigcup_{i \in [k]} L'_{i}.$
So the track of a cycle is the homeomorphic image of the unit circle.

\paragraph{Societies.}
  Let $\Omega$ be a cyclic permutation of the elements of some set which we denote by $V(\Omega).$
  A \emph{society} is a pair $(G,\Omega),$ where $G$ is a graph and $\Omega$ is a cyclic permutation with $V(\Omega)\subseteq V(G).$

\paragraph{Renditions.}
Let $(G,\Omega)$ be a society, and let $\Delta$ be a closed disk in $\mathbb{S}^2$.
A \emph{rendition} of $(G,\Omega)$ is a \sdecomp\ $ρ=(\Gamma,\Dcal)$ of $G$ such that, for each $D\in\Dcal$, $D\subseteq\Delta$, and $\pi_{ρ}(N(ρ)\cap \bd(\Delta))=V(\Omega)$,
mapping one of the two cyclic orders (clockwise or counter-clockwise) of $\bd(\Delta)$ to the order of $\Omega.$

\paragraph{$δ$-aligned disks.}
We say  a closed disk $Δ$ in $\mathbb{S}^2$ is \emph{$δ$-aligned} if
 its boundary intersects $Γ$ only in nodes of $δ$.
 We denote by $Ω_{Δ}$ one of the cyclic orderings of the vertices on the boundary of $Δ.$
We define the \emph{inner graph} of a $δ$-aligned closed disk $Δ$ as 
$$\inG_{δ}(Δ) \coloneqq \bigcup_{\textrm{$c \in C(δ)$ and $c \subseteq Δ$}} \sigma(c)$$ 
and the \emph{outer graph of $Δ$} as 
$$\outG_{δ}(Δ) \coloneqq \bigcup_{\textrm{$c \in C(δ)$ and $c \cap Δ \subseteq \ground(δ)$}} \sigma(c).$$ 
  
If $Δ$ is $δ$-aligned, we define $Γ \cap Δ$ to be the drawing of $\inG_{δ}(Δ)$ in $Δ$ which is the restriction of $Γ$ in $Δ.$
If moreover $|V(Ω_{Δ})|≥4,$ we denote by $δ[Δ]$ the rendition $(Γ \cap Δ, \{ Δ_{c} \in \mathcal{D} \mid c \subseteq Δ\})$ of $(\inG_{δ}(Δ), Ω_{Δ})$ in $Δ.$

Let $δ = (\Gamma,\mathcal{D})$ be a \sdecomp\ of a graph $G$ in $\mathbb{S}^2.$
Let $C$ be a cycle in $G$ that is grounded in $δ,$ such that the track $T$ of $C$ bounds a closed disk $Δ_{C}$ in $\mathbb{S}^2.$
We define the \emph{outer} (resp. \emph{inner}) \emph{graph} of $C$ in $δ$ as the graph $\outG_{δ}(C) := \outG_{δ}(Δ_C)$ (resp. $\inG_{δ}(C) := \inG_{δ}(Δ_C)$).

\subsection{Flat walls}\label{subsec_flat}

We will need to use an appropriate version of the Flat Wall theorem \cite{RobertsonS95XIII,KawarabayashiTW18anew}.
Let us first define walls, flat walls, and the relevant vocabulary.

\paragraph{Walls.}
Let $k,r\in\Nbbb$. The
\emph{$(k\times r)$-grid}, denoted by $\Gamma_{k,r}$, is the
graph whose vertex set is $[k]\times[r]$ and two vertices $(i,j)$ and $(i',j')$ are adjacent if and only if $|i-i'|+|j-j'|=1$.
An \emph{elementary $r$-wall} (see \autoref{fig_wall}), for some odd integer $r\geq 3$, is the graph obtained from a
$(2 r\times r)$-grid
with vertices $(x,y)
	\in[2r]\times[r]$,
after the removal of the
``vertical'' edges $\{(x,y),(x,y+1)\}$ for odd $x+y$, and then the removal of
all vertices of degree one.
Notice that, as $r\geq 3$, an elementary $r$-wall is a planar graph
that has a unique (up to topological isomorphism) embedding in the plane $\Rbbb^{2}$
such that all its finite faces are incident to exactly six
edges.
The {\em perimeter} of an elementary $r$-wall is the cycle bounding its infinite face,
while the cycles bounding its finite faces are called {\em bricks}.
Also, the vertices
in the perimeter of an elementary $r$-wall that have degree two are called {\em pegs},
while the vertices $(1,1), (2,r), (2r-1,1), (2r,r)$ are called {\em corners} (notice that the corners are also pegs).

An {\em $r$-wall} is any graph $W$ obtained from an elementary $r$-wall $\bar{W}$
after subdividing edges. A graph $W$ is a {\em wall} if it is an $r$-wall for some odd $r\geq 3$
and we refer to $r$ as the {\em height} of $W$. Given a graph $G$,
a {\em wall of} $G$ is a subgraph of $G$ that is a wall.
We always assume that for every $r$-wall, the number $r$ is  odd.

We call the vertices of degree three of a wall $W$ {\em branch vertices}.
We use $D(W)$ to denote the perimeter of the wall $W$.

\paragraph{Tilts.}
The \emph{interior} of a wall $W$ is the graph obtained from $W$ if we remove from it all edges of $D(W)$ and all vertices of $D(W)$ that have degree two in $W$.
Given two walls $W$ and $\tilde{W}$ of a graph $G$, we say that $\tilde{W}$ is a \emph{tilt} of $W$ if $\tilde{W}$ and $W$ have identical interiors.

\paragraph{Layers.}
The {\em layers} of an $r$-wall $W$ are recursively defined as follows.
The first layer of $W$ is its perimeter.
For $i=2,\ldots,(r-1)/2$, the $i$-th layer of $W$ is the $(i-1)$-th layer of the subwall $W'$
obtained from $W$ after removing from $W$ its perimeter and
removing recursively all occurring vertices of degree one.
We refer to the $(r-1)/2$-th layer as the {\em inner layer} of $W$.
The {\em central vertices} of an $r$-wall are its two branch vertices that do not belong to any of its layers and that are connected by a path of $W$ that does not intersect any layer. 

\paragraph{Central walls.}
Given an $r$-wall $W$ and an odd $q\in\Nbbb_{\ge3}$ where $q\le r$, we define the \emph{central $q$-subwall}
of $W$, denoted by $W^{(q)}$, to be the $q$-wall obtained from $W$ after removing its first $(r-q)/2$ layers
and all occurring vertices of degree one.

\paragraph{Flat walls.}
Let $G$ be a graph and let $W$ be an $r$-wall of $G$, for some odd integer $r\geq 3$.
We say that a pair $(P,C)\subseteq V(D(W))\times V(D(W))$ is a {\em choice of pegs and corners for $W$} if $W$ is a subdivision of an elementary $r$-wall $\bar{W}$ where $P$ and $C$ are the pegs and the corners of $\bar{W}$, respectively (clearly, $C\subseteq P$).

We say that $W$ is a {\em flat $r$-wall}
of $G$ if there is a separation $(X,Y)$ of $G$ and a choice $(P,C)$
of pegs and corners for $W$ such that:
\begin{itemize}[itemsep=-2pt] 
  \setlength\itemsep{0em}
	\item $V(W)\subseteq Y$,
	\item $P\subseteq X\cap Y\subseteq V(D(W))$, and
	\item if $\Omega$ is the cyclic ordering of the vertices $X\cap Y$ as they appear in $D(W)$,
	      then there exists a rendition $\rho$ of  $(G[Y],\Omega)$.
\end{itemize}

We say that $W$ is a {\em flat wall}
of $G$ if it is a flat $r$-wall for some odd integer $r \geq 3$.

\paragraph{Flatness pairs.}
Given the above, we say that the choice of the 5-tuple $\mathfrak{R}=(X,Y,P,C,\rho)$
\emph{certifies that $W$ is a flat wall of $G$}.
We call the pair $(W,\mathfrak{R})$ a \emph{flatness pair} of $G$ and define
the \emph{height} of the pair $(W,\mathfrak{R})$ to be the height of $W$.

We call the graph $G[Y]$ the \emph{$\mathfrak{R}$-compass} of $W$ in $G$,
denoted by $\compass_\mathfrak{R}(W)$.
We can assume that $\compass_\mathfrak{R} (W)$ is connected, updating $\mathfrak{R}$ by removing from $Y$ the vertices of all the connected components of $\compass_\mathfrak{R} (W)$
except for the one that contains $W$ and including them in $X$ ($\Gamma$ can also be easily modified according to the removal of the aforementioned vertices from $Y$).
We define the \emph{flaps} of the wall $W$ in $\mathfrak{R}$ as
$\flaps_\mathfrak{R}(W):=\{\sigma(c)\mid c\in C(\rho)\}$.

\paragraph{Cell classification.}
Given a cycle $C$ of $\compass_\mathfrak{R}(W)$, we say that
$C$ is \emph{$\mathfrak{R}$-normal} if it is {\sl not} a subgraph of a flap $F\in \flaps_\mathfrak{R}(W)$.
Given an $\mathfrak{R}$-normal cycle $C$ of $\compass_\mathfrak{R}(W)$,
we call a cell $c$ of $\mathfrak{R}$ \emph{$C$-perimetric} if
$\sigma(c)$ contains some edge of $C$.
Notice that if $c$ is $C$-perimetric, then $\pi_\rho(\tilde{c})$ contains two points $p,q\in N(\rho)$
such that $\pi_\rho(p)$ and $\pi_\rho(q)$ are vertices of $C$ where one,
say $P_{c}^{\rm in}$, of the two $(\pi_\rho(p),\pi_\rho(q))$-subpaths of $C$ is a subgraph of $\sigma(c)$ and the other,
denoted by $P_{c}^{\rm out}$, $(\pi_\rho(p),\pi_\rho(q))$-subpath contains at most one internal vertex of $\sigma(c)$,
which should be the (unique) vertex $z$ in $\pi_\rho(\tilde{c})\setminus\{\pi_\rho(p),\pi_\rho(q)\}$.
We pick a $(p,q)$-arc $A_{c}$ in $\hat{c}:={c}\cup\tilde{c}$ such that $\pi_\rho^{-1}(z)\in A_{c}$ if and only if $P_{c}^{\rm in}$ contains
the vertex $z$ as an internal vertex.

We consider the circle $K_{C}=\cupall\{A_{c}\mid \mbox{$c$ is a $C$-perimetric cell of $\mathfrak{R}$}\}$
and we denote by $\Delta_{C}$ the closed disk bounded by $K_{C}$ that is contained in $\Delta$.
A cell $c$ of $\mathfrak{R}$ is called \emph{$C$-internal} if $c\subseteq \Delta_{C}$
and is called \emph{$C$-external} if $\Delta_{C}\cap c=\emptyset$.
Notice that the cells of $\mathfrak{R}$ are partitioned into $C$-internal, $C$-perimetric, and $C$-external cells.

\paragraph{Influence.}
For every $\mathfrak{R}$-normal cycle $C$ of $\compass_\mathfrak{R}(W)$, we define the set
$\influence_\mathfrak{R}(C)=\{\sigma(c)\mid \mbox{$c$ is a cell of $\mathfrak{R}$ that is not $C$-external}\}$.

A wall $W'$ of $\compass_\mathfrak{R}(W)$ is \emph{$\mathfrak{R}$-normal} if $D(W')$ is $\mathfrak{R}$-normal.
Notice that every wall of $W$ (and hence every subwall of $W$) is an $\mathfrak{R}$-normal wall of $\compass_\mathfrak{R}(W)$. We denote by ${\cal S}_\mathfrak{R}(W)$ the set of all $\mathfrak{R}$-normal walls of $\compass_\mathfrak{R}(W)$. Given a wall $W'\in {\cal S}_\mathfrak{R}(W)$ and a cell $c$ of $\mathfrak{R}$,
we say that $c$ is \emph{$W'$-perimetric/internal/external} if $c$ is $D(W')$-perimetric/internal/external, respectively.
We also use $K_{W'}$, $\Delta_{W'}$, $\influence_\mathfrak{R}(W')$ as shortcuts
for $K_{D(W')}$, $\Delta_{D(W')}$, $\influence_\mathfrak{R}(D(W'))$, respectively.

\paragraph{Tilts of flatness pairs.}
Let $(W,\mathfrak{R})$ and $(\tilde{W}',\tilde{\mathfrak{R}}')$ be two flatness pairs of a graph $G$ and let $W'\in {\cal S}_\mathfrak{R}(W)$.
We assume that $\mathfrak{R}=(X,Y,P,C,\rho)$ and $\tilde{\mathfrak{R}}'=(X',Y',P',C',\rho')$.
We say that $(\tilde{W}',\tilde{\mathfrak{R}}')$ is a \emph{$W'$-tilt} of $(W,\mathfrak{R})$ if
\begin{itemize}
	\item $\tilde{\mathfrak{R}}'$ does not have $\tilde{W}'$-external cells,
	\item $\tilde{W}'$ is a tilt of $W'$,
	\item the set of $\tilde{W}'$-internal cells of $\tilde{\mathfrak{R}}'$ is the same as the set of $W'$-internal
	      cells of $\mathfrak{R}$ and their images via $\sigma_{\rho'}$ and $\sigma_\rho$ are also the same,
	\item $\compass_{\tilde{\mathfrak{R}}'}(\tilde{W}')$ is a subgraph of $\cupall\influence_\mathfrak{R}(W')$, and
	\item if $c$ is a cell in $C(\Gamma') \setminus C(\Gamma)$, then $|\tilde{c}| \leq 2$.
\end{itemize}

In particular, one can find a tilt of a subwall of a flat wall in linear time.

\begin{proposition}[Theorem 5, \cite{SauST24amor}]\label{prop_tilt}
There exists an algorithm that, given a graph $G$, a flatness pair $(W,\frak{R})$ of $G$, and a wall $W'\in\Scal_{\frak{R}}(W)$, outputs a $W'$-tilt of $(W,\frak{R})$ in $\Ocal(n+m)$ time.
\end{proposition}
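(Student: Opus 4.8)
The plan is to obtain the $W'$-tilt by \emph{restricting} $\mathfrak{R}=(X,Y,P,C,\rho)$ to the closed disk $\Delta_{W'}$ bounded by the circle $K_{W'}$ and then performing local surgery on the cells that $D(W')$ passes through. First I would locate $D(W')$ inside $\compass_{\mathfrak{R}}(W)$; since $W'\in\Scal_{\mathfrak{R}}(W)$, its perimeter $D(W')$ is an $\mathfrak{R}$-normal cycle, hence grounded in $\rho$, so it is not a subgraph of a single flap. Using the combinatorial description of the sphere decomposition $\rho$ supplied with the flatness pair (of size $\Ocal(n+m)$), a single pass over the edges of $D(W')$ tells us, for each such edge, inside which cell of $\rho$ it is drawn; the cells receiving at least one edge of $D(W')$ are exactly the $W'$-perimetric cells. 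For each perimetric cell $c$ I then read off the two nodes $p,q\in\tilde c$ with $\pi_{\rho}(p),\pi_{\rho}(q)\in V(D(W'))$, the subpaths $P_c^{\mathrm{in}},P_c^{\mathrm{out}}$ of $D(W')$, the third node of $\tilde c$ if $|\tilde c|=3$, and the $(p,q)$-arc $A_c\subseteq\hat c$ prescribed in the cell-classification definition; concatenating the arcs $A_c$ in the cyclic order in which the perimetric cells meet $D(W')$ produces the circle $K_{W'}$ and the disk $\Delta_{W'}\subseteq\Delta$.

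Next I would classify the remaining cells and assemble $(\tilde W',\tilde{\mathfrak{R}}')$. Starting from a cell drawn just inside $D(W')$ — e.g.\ one carrying an edge incident to the interior of $W'$ — and flood-filling through the relation ``two cells share a node'', never crossing a perimetric cell, the cells reached are the $W'$-internal ones and the rest are $W'$-external; correctness is immediate since the cell-classification paragraph already records that these three classes partition $C(\rho)$. The tilted rendition $\tilde\rho'=(\tilde\Gamma',\tilde{\mathcal{D}}')$ keeps every $W'$-internal cell together with its flap verbatim; each $W'$-perimetric cell $c$ is replaced by the region $\hat c\cap\Delta_{W'}$, carrying the part of $\sigma(c)$ drawn in it, and if the third node of $\tilde c$ lies on $A_c$ this region is split at that node so that every newly created cell has at most two boundary nodes; all vertices and edges drawn outside $\Delta_{W'}$ are deleted from the drawing, $\tilde X'\cap\tilde Y'$ is the set of ground vertices on $K_{W'}$, $\tilde Y'=V(\compass_{\tilde{\mathfrak{R}}'}(\tilde W'))$, and $\tilde X'$ is the rest. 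The tilted wall $\tilde W'$ is obtained from $W'$ by rerouting, inside each perimetric cell $c$, the boundary subpath $P_c^{\mathrm{in}}$ to a $(p,q)$-path of the flap $\sigma(c)$ that follows $A_c$ and avoids the interior of $W'$, leaving that interior untouched, so that $\tilde W'$ is genuinely a tilt of $W'$ and $D(\tilde W')$ runs along $K_{W'}$. One then verifies the five defining conditions: $\tilde{\mathfrak{R}}'$ has no $\tilde W'$-external cell since exactly those were discarded; the $\tilde W'$-internal cells and their $\sigma$-images coincide with the $W'$-internal cells of $\mathfrak{R}$ since they were copied verbatim; $\compass_{\tilde{\mathfrak{R}}'}(\tilde W')$ is a subgraph of $\cupall\influence_{\mathfrak{R}}(W')$ because only flaps of non-external cells survive; the new cells have at most two boundary nodes by the splitting step; and $(\tilde X',\tilde Y',\tilde P',\tilde C',\tilde\rho')$ certifies that $\tilde W'$ is flat, with $\tilde\rho'$ a rendition of $(G[\tilde Y'],\Omega_{\Delta_{W'}})$ since $\tilde X'\cap\tilde Y'$ are precisely the vertices of $D(\tilde W')$ bounding $\Delta_{W'}$.

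The running time is $\Ocal(n+m)$: each of the steps — extracting $D(W')$ and the per-edge cell membership, computing the arcs $A_c$ and the circle $K_{W'}$, the flood-fill classification of cells, and writing out $\tilde\Gamma'$ together with $\tilde{\mathcal{D}}'$ — is a constant number of traversals of $G$ and of the linear-size combinatorial data of $\mathfrak{R}$, and there are only $\Ocal(n)$ perimetric cells to operate on. The step I expect to be the main obstacle is precisely this local surgery on the perimetric cells: one must simultaneously choose inside each flap the $(p,q)$-path realising the rerouted perimeter, cut the cell along $A_c$ so that the retained side is a legitimate cell (splitting at the third node when it falls on $A_c$), and then argue that after all these cuts the interior of $W'$ is genuinely preserved and that $\tilde\rho'$ still satisfies the sphere-decomposition axioms — pairwise disjoint interiors, at most three boundary nodes per disk, and every edge drawn in the interior of some disk. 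All of this is confined to the perimetric cells, so it does not affect the linear bound, but it carries the bulk of the bookkeeping.
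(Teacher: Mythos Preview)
The paper does not contain a proof of this proposition: it is stated as ``Theorem~5, \cite{SauST24amor}'' and used as a black box imported from that reference. There is therefore nothing in the present paper to compare your proposal against.

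That said, your sketch follows the natural strategy one would expect for such a result --- restrict the rendition to $\Delta_{W'}$, keep internal cells verbatim, and perform local surgery on perimetric cells --- and this is indeed the spirit of the construction in \cite{SauST24amor}. One point where your write-up is imprecise: you say the rerouted perimeter ``follows $A_c$'', but $A_c$ is an arc in the disk, not a path in the graph; the actual perimeter of $\tilde W'$ must be a cycle of $G$, so in each perimetric cell you need a genuine $(\pi_\rho(p),\pi_\rho(q))$-path inside $\sigma(c)$ (which exists since $\sigma(c)$ is connected), and the new boundary of the rendition is then the set of ground vertices along $K_{W'}$ rather than $A_c$ itself. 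You should also be explicit about how the pegs and corners $\tilde P',\tilde C'$ are chosen for $\tilde W'$. These are bookkeeping issues rather than conceptual gaps; if this were meant to be a self-contained proof rather than a citation, they would need to be filled in.
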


\medskip
The Flat Wall theorem \cite{KawarabayashiTW18anew,RobertsonS95XIII} essentially states that a graph $G$ either contains a big clique as a minor or has bounded treewidth, or contains a flat wall after removing some vertices.
We need here the version of the Flat Wall theorem of \cite[Theorem 8]{SauST24amor}.

\begin{proposition}[Theorem 8, \cite{SauST24amor}]\label{prop_FWth}
There exist two functions $f_{\ref{prop_FWth}},g_{\ref{prop_FWth}}:\Nbbb\to\Nbbb$ and an algorithm that, given a graph $G$, $t\in\Nbbb_{\ge1}$, and an odd $r\in\Nbbb_{\ge3}$, outputs one of the following in time $2^{\Ocal_t(r^2)}\cdot n$:\footnote{We use parameters as indices in the $\Ocal$-notation to indicate that the hidden constant depends (only) of these parameters.} 
\begin{itemize}[itemsep=-2pt] 
  \setlength\itemsep{0em}
\item a report that $K_t$ is a minor of $G$, or
\item a tree decomposition of $G$ of width at most $f_{\ref{prop_FWth}}(t)\cdot r$, or
\item a set $A\subseteq V(G)$ of size at most $g_{\ref{prop_FWth}}(t)$, a flatness pair $(W,\frak{R})$ of $G-A$ of height $r$, and a tree decomposition of the $\frak{R}$-compass of $W$ of width at most $f_{\ref{prop_FWth}}(t)\cdot r$.
\end{itemize}
Moreover, $f_{\ref{prop_FWth}}(t)=2^{\Ocal(t^2\log t)}$ and $g_{\ref{prop_FWth}}(t)=\Ocal(t^{24})$.
\end{proposition}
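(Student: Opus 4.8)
We use \cref{prop_FWth} as a black box from \cite{SauST24amor}; for completeness I sketch the route one would follow to prove it. The plan is to bootstrap from the algorithmic grid/wall theorem. First I would test whether $K_t$ is a minor of $G$: if so, report the first outcome. Otherwise $G$ is $K_t$-minor-free, and for minor-closed classes treewidth and the size of the largest wall are \emph{linearly} related, so in time $2^{\Ocal_t(r^2)}\cdot n$ one either obtains a tree decomposition of $G$ of width at most $f_{\ref{prop_FWth}}(t)\cdot r$ (the second outcome) or a wall $W'$ of $G$ of height $r'$, where $r'=\Theta_t(r)$ is taken large enough for the two steps below.

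Next I would analyse how $G$ hangs off $W'$. Consider the $W'$-bridges: the connected components of $G-V(W')$ together with their attachments on $W'$, plus the chords of $W'$. Call a bridge \emph{spread} if its attachments cannot be enclosed in a bounded region of the canonical planar drawing of $W'$. The routing lemma at the heart of the argument says that if more than $g_{\ref{prop_FWth}}(t)$ essentially independent spread bridges occur, or if a single bridge wraps around $W'$ incompatibly, then the grid structure of $W'$ lets one route $\binom{t}{2}$ pairwise vertex-disjoint paths realising a $K_t$-minor model --- impossible in our case --- so at most $g_{\ref{prop_FWth}}(t)$ spread bridges remain, and I would put a set $A$ of at most $g_{\ref{prop_FWth}}(t)$ vertices aside so that no remaining bridge is spread.

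Then I would extract the flat subwall of $G-A$. Once $A$ is removed every remaining bridge is \emph{local}: its attachments lie in a bounded part of $W'$. A homogeneity/pigeonhole argument over the bricks of $W'$ produces a subwall $W$ of height $r$ together with a separation $(X,Y)$ of $G-A$ isolating the interior of $W$, such that all local bridges inside $Y$ can be drawn simultaneously in one disk $\Delta$ with the perimeter of $W$ on $\bd(\Delta)$, each bridge sitting in a cell whose boundary meets at most three vertices. This is the rendition $\rho$ witnessing the flatness pair $(W,\mathfrak R)$ of $G-A$. Finally, the $\mathfrak R$-compass of $W$ is drawn inside $\Delta$ with the $r$-wall $W$ as a backbone and all cells of boundary size $\le 3$, so its treewidth is bounded by the size of the rendition's skeleton, namely $f_{\ref{prop_FWth}}(t)\cdot r$; the decomposition, and a $W'$-tilt when one is needed (\cref{prop_tilt}), can be read off in $\Ocal(n+m)$ time.

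The main obstacle is the third step: one must choose $A$ so that \emph{all} remaining bridges become local \emph{simultaneously}, and then select the subwall so that the resulting drawings are globally consistent inside a single disk while still leaving a flat wall of the required height $r$ and a compass of treewidth linear in $r$. Balancing these --- the size of $A$, the height loss from $W'$ down to $W$, and the treewidth bound --- is precisely the technical core of \cite{KawarabayashiTW18anew} and its quantitative refinement in \cite{SauST24amor}, and it is where the $2^{\Ocal(t^2\log t)}$ and $\Ocal(t^{24})$ bounds on $f_{\ref{prop_FWth}}$ and $g_{\ref{prop_FWth}}$ come from.
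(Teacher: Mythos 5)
This statement is not proved in the paper at all: it is imported verbatim as Theorem~8 of \cite{SauST24amor}, and your proposal likewise treats it as a black box, which is exactly the paper's treatment. Your accompanying sketch of the underlying Flat Wall theorem argument is a reasonable high-level account of the strategy in \cite{KawarabayashiTW18anew,SauST24amor}, but since the paper offers no proof to compare against, nothing further needs to be (or can be) checked here.
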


\paragraph{Apex grid.}
The \emph{apex grid} of height $k$ is the graph $\Gamma_k^+$ obtained by adding a universal vertex to the $(k\times k)$-grid, i.e., a vertex adjacent to every vertex of $\Gamma_{k,k}$.

\begin{proposition}[Lemma 3.1, \cite{DemaineFHT04bidimen}]\label{prop_apex}
Let $m,k\in\Nbbb$ with $m\ge k^2+2k$ and let $H$ be the $(m\times m)$-grid. 
Let $X$ be a subset of at least $k^4$ vertices in the central $((m-2k)\times (m-2k))$-subgrid of $H$.
Then there is a model of the $(k\times k)$-grid in $H$ in which every branch set intersects $X$.
\end{proposition}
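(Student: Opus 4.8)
The plan is to convert the sheer number of points in $X$ into a combinatorially rigid sub-configuration, and then to realize the $(k\times k)$-grid as a minor by routing connected, pairwise disjoint branch sets through these points, using the width-$k$ margin that separates the central $((m-2k)\times(m-2k))$-subgrid from the outermost $k$ rows and columns of $H$.

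\emph{Step 1: extracting $k^2$ pivots with grid-like structure.} Write $s=m-2k\ge k^2$ for the side length of the central subgrid. Using a two-level pigeonhole, I would locate $k^2$ points of $X$, indexed as $x_{a,b}$ with $a,b\in[k]$, such that the ``super-rows'' are separated in the row coordinate — all rows occupied by $\{x_{a,b}:b\in[k]\}$ precede all rows occupied by $\{x_{a+1,b}:b\in[k]\}$ — and, inside each super-row, $\mathrm{col}(x_{a,1})<\dots<\mathrm{col}(x_{a,k})$. Concretely: sort $X$ by row and cut it into $k$ consecutive bands, each of size at least $|X|/k\ge k^3$, placing the cuts between points with distinct rows so that the bands are row-separated; then sort each band by column and cut it into $k$ consecutive classes of size at least $k^2\ge1$, keeping one pivot from each. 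When some splitting degenerates because $X$ is too concentrated (for instance when $X$ meets fewer than $k$ rows, or fewer than $k$ columns, or is packed into a single line), one passes to the transpose or argues directly on the heavy line; it is exactly in these extreme cases that the construction in Step 2 must borrow almost all of the roughly $k^2$ rows (or columns) available, which is what the hypothesis $m\ge k^2+2k$ provides.

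\emph{Step 2: building the model.} Each branch set $B_{a,b}$ is a ``comb'' piece: a short vertical stem in the column of $x_{a,b}$, together with a horizontal segment placed at a level reserved for super-row $a$. The horizontal adjacencies $B_{a,b}\!\sim\!B_{a,b+1}$ then hold for free, since inside super-row $a$ the pivots occur in strictly increasing column order. The vertical adjacencies $B_{a,b}\!\sim\!B_{a+1,b}$ are the delicate point: since $x_{a,b}$ and $x_{a+1,b}$ need not lie in compatible columns, I would join them by a detour that drops into a private ``lane'' — a reserved row, taken from the margin or from rows just outside super-row $a$ — travels horizontally along that lane, and climbs back up the column of $x_{a+1,b}$. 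Choosing the lanes nested, distinct detours never meet, and one checks that the rows and columns consumed in total (pivot rows and columns, stems, horizontal levels, and the $\Theta(k^2)$ detour lanes) number at most $m$, which is the whole point of $m\ge k^2+2k$. In the lucky special case where the pivots sit on $k$ fixed rows and $k$ fixed columns, no detours are needed and $B_{a,b}$ is just a ``cross''; the general construction interpolates between this clean picture and the fully degenerate single-line situation.

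The main obstacle is the disjointness bookkeeping of Step 2: one must record, for every piece, which rows and columns it uses, verify that the $\Theta(k^2)$ detour lanes needed for the vertical adjacencies can be packed into the available margin without collision, and confirm that the whole construction fits inside an $m\times m$ grid with $m$ as small as $k^2+2k$. That accounting, together with the separate treatment of the degenerate distributions of $X$, is where essentially all of the work lies; the pigeonhole of Step 1 and the adjacency pattern of the grid are routine.
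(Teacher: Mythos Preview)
The paper does not prove this proposition; it is quoted as Lemma~3.1 of \cite{DemaineFHT04bidimen} and used as a black box, so there is no proof in the paper to compare your attempt against.

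On its own terms, your outline follows the standard architecture for such statements --- a two-stage pigeonhole to extract $k^2$ structured pivots, followed by routing branch sets through reserved lanes --- and with enough care this can be pushed through. But as written it is a plan, not a proof: you explicitly flag the disjointness accounting for the $\Theta(k^2)$ detour lanes in Step~2 and the degenerate distributions of $X$ in Step~1 as the places ``where essentially all of the work lies,'' and then stop. Those are exactly the spots where an argument of this shape breaks if the constants are off even slightly (note in particular that the margin outside the central subgrid is only $k$ rows/columns on each side, so almost all of your lanes must be carved out of the interior and shown not to collide with the pivots or with one another). Until that bookkeeping is actually carried out, or you simply cite the original, the proposal does not establish the claim.
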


\smallskip
The following result is a version of the Flat Wall theorem that is already somewhat known (see for instance in the proof of \cite[Lemma 4.7]{KorhonenPS24mino}).
We write here an algorithmic version that has yet to be stated, to the authors' knowledge.

\begin{theorem}\label{prop_flatwallth}
There exist a function $f_{\ref{prop_flatwallth}}:\Nbbb\to\Nbbb$ and an algorithm that, 
given a graph $G$ and $k,r\in\Nbbb$ with $r$ odd, 
outputs one of the following in time $2^{\Ocal_k(r^2)}\cdot(n+m)$: 
\begin{itemize}[itemsep=-2pt] 
  \setlength\itemsep{0em}
\item a report that $G$ contains an apex grid of height $k$ as a minor,
\item a report that $\tw(G)\le f_{\ref{prop_flatwallth}}(k)\cdot r$, or
\item a flatness pair $(W,\frak{R})$ of $G$ of height $r$ whose $\frak{R}$-compass has treewidth at most $f_{\ref{prop_flatwallth}}(k)\cdot r$.
\end{itemize}
Moreover, $f_{\ref{prop_flatwallth}}(k)=2^{\Ocal(k^4\log k)}$.
\end{theorem}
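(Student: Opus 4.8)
The plan is to derive \autoref{prop_flatwallth} from the standard Flat Wall theorem (\autoref{prop_FWth}) by \emph{eliminating the apex set}: over a big flat wall, an apex vertex is either ``heavy'' (it has many internally disjoint paths into the wall), in which case it forces an apex grid minor, or ``light'', in which case it can be separated off from a large central subwall, leaving that subwall flat in the whole graph with no apex set.

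\textbf{Step 1 (invoke the Flat Wall theorem).} I would set $t:=k^2+1$ and pick a wall height $r':=\Ocal_k(r)$, large enough for the estimates below, and run the algorithm of \autoref{prop_FWth} on $(G,t,r')$. If it reports that $K_t$ is a minor of $G$, then, since $\Gamma_k^+$ has $k^2+1$ vertices and hence embeds in $K_{k^2+1}=K_t$, the apex grid of height $k$ is a minor of $G$, and we report that. If it returns a tree decomposition of width at most $f_{\ref{prop_FWth}}(t)\cdot r'$, then, as $r'=\Ocal_k(r)$, this is at most $f_{\ref{prop_flatwallth}}(k)\cdot r$ for an appropriate choice of $f_{\ref{prop_flatwallth}}$, and we report $\tw(G)\le f_{\ref{prop_flatwallth}}(k)\cdot r$. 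Otherwise we obtain an apex set $A$ with $|A|\le g_{\ref{prop_FWth}}(t)$, a flatness pair $(W,\mathfrak{R})$ of $G-A$ of height $r'$, and a tree decomposition of $\compass_{\mathfrak{R}}(W)$ of width at most $f_{\ref{prop_FWth}}(t)\cdot r'$.

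\textbf{Step 2 (heavy apex gives an apex grid minor).} For each $a\in A$ I would compute, by a max-flow computation in $G-(A\setminus\{a\})$, a maximum collection $\Qcal_a$ of internally vertex-disjoint $a$--$V(W)$ paths, and fix a threshold $q=\Theta(k^4)$. If some $|\Qcal_a|\ge q$: since $r'\ge 2(k^2+2k)$, the wall $W$ structurally contains, and one computes in linear time, a model of $\Gamma_{m,m}$ with $m=k^2+2k$ whose branch sets are connected subgraphs of $W$ of bounded size whose union is $V(W)$. Then the $q$ disjoint paths of $\Qcal_a$ reach $\Omega(q)$ distinct branch sets, and since only $\Ocal(km)$ of them lie in the boundary region of the grid, at least $k^4$ reach the central $((m-2k)\times(m-2k))$-subgrid; truncating each such path at its first vertex of $W$ keeps them pairwise disjoint and internally disjoint from $W$, with endpoints meeting $k^4$ distinct central branch sets. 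Applying \autoref{prop_apex} to $\Gamma_{m,m}$ and the positions of these branch sets yields a $\Gamma_{k,k}$-model each of whose branch sets is hit by one of these endpoints; pulling it back to $W$ and adjoining the branch set consisting of $a$ together with $k^2$ of the truncated paths produces a model of $\Gamma_k^+$ in $G$, which we report. All of this is algorithmic and runs in $\Ocal_k(n+m)$ on top of the flow computations.

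\textbf{Step 3 (light apices, and flatness in $G$).} Otherwise $|\Qcal_a|<q$ for all $a\in A$, so Menger's theorem gives, for each $a$, a set $Z_a$ with $|Z_a|<q$ separating $a$ from $V(W)$ in $G-(A\setminus\{a\})$; put $Z:=\bigcup_{a\in A}Z_a$, so $|Z|<q\cdot g_{\ref{prop_FWth}}(t)=\Ocal_k(1)$, and since the suffix of any $A$--$V(W)$ path in $G$ past its last vertex of $A$ is an $a$--$V(W)$ path avoiding $A\setminus\{a\}$ for a single $a$, the set $Z$ separates $A$ from $V(W)$ in $G$. As $|Z|=\Ocal_k(1)$ and $r'$ is large enough, I would pick a central subwall $W'$ of $W$ of height $r$ that avoids $Z$ and its influence (so that $W'$ together with the portion of $\compass_{\mathfrak{R}}(W)$ inside it lies in the component of $G-Z$ containing the deep interior of $W$), take $W'$ to be $\mathfrak{R}$-normal, and compute a $W'$-tilt of $(W,\mathfrak{R})$ via \autoref{prop_tilt}, say $(W'',\mathfrak{R}'')$. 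Its compass is a subgraph of $\cupall\,\influence_{\mathfrak{R}}(W')$, hence of $\compass_{\mathfrak{R}}(W)$, so its treewidth is at most $f_{\ref{prop_FWth}}(t)\cdot r'\le f_{\ref{prop_flatwallth}}(k)\cdot r$. Finally I would upgrade the separation certifying flatness of $W''$ in $G-A$ to one certifying flatness of $W''$ in $G$, by moving $Z$ and all components of $G-Z$ other than the one containing $\compass_{\mathfrak{R}''}(W'')$ --- in particular all of $A$ --- into the $X$-side, leaving the $Y$-side (the compass) and its rendition untouched; then $(W'',\mathfrak{R}'')$ is output.

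\textbf{Parameters, running time, and the main obstacle.} With $t=k^2+1$ we get $g_{\ref{prop_FWth}}(t)=\Ocal(k^{48})$, $f_{\ref{prop_FWth}}(t)=2^{\Ocal(k^4\log k)}$, $q=\Theta(k^4)$, $|Z|=2^{\Ocal(k^4\log k)}$, and choosing $r'=\Ocal_k(r)$ so that a central $r$-subwall survives deleting $\Ocal_k(1)$ vertices (with room to spare) keeps everything within $f_{\ref{prop_flatwallth}}(k)=2^{\Ocal(k^4\log k)}$. The bottleneck in running time is \autoref{prop_FWth}, costing $2^{\Ocal_t(r'^2)}\cdot n=2^{\Ocal_k(r^2)}\cdot n$; the $\Ocal_k(1)$ flow computations and \autoref{prop_tilt} add only $\Ocal_k(n+m)$, for a total of $2^{\Ocal_k(r^2)}\cdot(n+m)$. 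The step I expect to be the real obstacle is the last part of Step 3: verifying \emph{rigorously} that absorbing $Z$ and the $A$-side of $G-Z$ into the $X$-part of the separation produces a valid separation whose $Y$-side is exactly $\compass_{\mathfrak{R}''}(W'')$ and still carries the rendition of $\mathfrak{R}''$, so that $(W'',\mathfrak{R}'')$ is a flatness pair of $G$ with bounded-treewidth compass and \emph{no} apex set. Reconciling the tilt machinery with the separator $Z$, and making the whole argument uniform and within the claimed time, is precisely where the known non-algorithmic arguments (cf.\ \cite{GiannopoulouT13opti,KorhonenPS24mino}) concentrate their effort.
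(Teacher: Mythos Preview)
Your overall strategy---invoke \autoref{prop_FWth} with $t=k^2+1$ and then eliminate the apex set by a heavy/light dichotomy---matches the paper's, but your choice of ``heavy'' criterion is different and Step~2 contains a genuine gap.

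\textbf{The gap in Step 2.} First, the grid model you describe cannot exist: with only $m^2=\Theta(k^4)$ branch sets covering all of $|V(W)|=\Theta(r'^2)$ vertices, the branch sets necessarily have size $\Theta((r'/k^2)^2)$, not ``bounded''. More seriously, ``$a$ has $q$ internally vertex-disjoint paths to $V(W)$'' does \emph{not} force a $\Gamma_k^+$ minor, for any $q$. Take $G-A$ to be the wall $W$ itself and let a single apex $a$ be adjacent exactly to the vertices of $D(W)$; then $a$ has $\Theta(r')$ internally disjoint length-$1$ paths to $V(W)$, yet $G$ is planar (place $a$ in the outer face of $W$), so it contains no $\Gamma_3^+$ minor. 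The point is that all your truncated endpoints can land on the perimeter, whereas \autoref{prop_apex} requires the $k^4$ hits to lie in the \emph{central} $(m-2k)\times(m-2k)$ subgrid. Hence Step~2 fails as written, and Step~3 is no longer the complementary case.

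\textbf{What the paper does instead.} The paper's ``heavy'' criterion is adjacency to many \emph{disjoint central} subwalls, which forces centrality and avoids flows altogether. It tiles the central $r_2$-subwall of the big wall into $s=(k^4-1)\cdot(g_{\ref{prop_FWth}}(t)-1)+1$ pairwise influence-disjoint $r$-subwalls and takes their tilts via \autoref{prop_tilt}, obtaining $s$ flatness pairs of $G-A$ with pairwise disjoint compasses, all inside the central part of $W$. If some $v\in A$ is adjacent to the compasses of at least $k^4$ of them, then contracting the horizontal/vertical paths of the big wall to an $(r_1\times r_1)$-grid makes $v$ adjacent to $\ge k^4$ vertices of the central $(r_1-2k)\times(r_1-2k)$-subgrid, and \autoref{prop_apex} yields $\Gamma_k^+$ directly. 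Otherwise each apex touches at most $k^4-1$ compasses, and by pigeonhole some subwall's compass is adjacent to \emph{no} apex; its tilt is then already a flatness pair of $G$ (just add $A$ to the $X$-side). In particular, the verification you flagged as the ``real obstacle'' in Step~3---absorbing a Menger separator $Z$ into the separation---never arises: there is no $Z$, and by construction no edge from $A$ to the surviving compass.
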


\begin{proof}
Let  $t:=k^2+1$, $a:=g_{\ref{prop_FWth}}(t)$, $d:=k^4$, $s:=(d-1)\cdot(a-1)+1$, $r_2=\odd(\lceil \sqrt{s}\cdot (r+2)\rceil)$, and $r_1:=r_2+2k$.

We apply the algorithm of \autoref{prop_FWth} that, in time $2^{\Ocal_t(r_1^2)}\cdot n$, 
either reports that $K_t$ is a minor of $G$, 
or finds a tree decomposition of $G$ of width at most  $f_{\ref{prop_FWth}}(t)\cdot r_1$, 
or finds a set $A\subseteq V(G)$ of size at most $a$, a flatness pair $(W_1,\frak{R}_1)$ of $G-A$ of height $r_1$, and a tree decomposition of the $\frak{R}_1$-compass of $W_1$ of width at most $f_{\ref{prop_FWth}}(t)\cdot r_1$.
In the first case, $G$ thus contains an apex grid of order $k$ as a minor, so we conclude.
In the second case, we also immediately conclude.
We can thus assume that we found a flatness pair $(W_1,\frak{R}_1)$ of $G-A$.
Let $W_2$ be the central $r_2$-subwall of $W_1$.

Given that $r_2\ge\lceil \sqrt{s}\cdot (r+2)\rceil$, we can find a collection $\Wcal'=\{W_1',\dots,W_s'\}$ of $r$-subwalls of $W_2$ such that $\influence_{\frak{R}_1}(W_i')$ and $\influence_{\frak{R}_1}(W_j')$ are disjoint for distinct $i,j\in[s']$.
Then, by applying the algorithm of \autoref{prop_tilt}, in time $\Ocal(n+m)$, we find a collection $\Wcal=\{(W_1,\frak{R}^1),\dots,(W_s,\frak{R}^s)\}$ such that, for $i\in[s]$, $(W_i,\frak{R}^i)$ is a $W_i'$-tilt of $(W_1,\frak{R}_1)$, and the $\frak{R}^i$-compasses of the $W_i$s are pairwise disjoint and have treewidth at most $f_{\ref{prop_FWth}}(t)\cdot r_1$.

For each vertex $v\in A$, we check whether $v$ is adjacent to vertices of the compass of $d$ subwalls in $\Wcal$. 
If that is the case for some $v\in A$, then observe that $G$ contains as a minor an $(r_1\times r_1)$-grid (obtained by contracting the intersection of horizontal and vertical paths of $W_1$) along with a vertex (corresponding to $v$) that is adjacent to $d$ vertices of its central $(r_2\times r_2)$-subgrid (corresponding to $W_2$).
But then, by \autoref{prop_apex}, $G$ contains an apex grid of height $k$ as a minor, so we once again conclude.

We can thus assume that every vertex in $A$ is adjacent to vertices of the compass of at most $d-1$ subwalls in $\Wcal$. 
Given that $|\Wcal|=s$ and that $|A|\le a$, it implies that there is at least one wall $W_i$ in $\Wcal$ whose $\frak{R}^i$-compass is adjacent to no vertex in $A$.
Hence the result.
\end{proof}

\subsection{An obstruction to \Hkpl}\label{subsec_obs}
In this section, we show that apex grids are obstructions to the existence of planar $\Hcal^{(k)}$-modulators.

\begin{lemma}\label{lem:obstructions}
Let $\Hcal$ be an arbitrary graph class and let $k$ be a positive integer. Then any graph $G$ containing the apex grid $\Gamma_{k'}^+$ for $k'\geq \sqrt{k+4}+2$ as a minor does not admit a planar $\Hcal^{(k)}$-modulator. 
\end{lemma}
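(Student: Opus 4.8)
The plan is to argue by contradiction, after a reduction. It suffices to prove the statement for $\Hcal$ equal to the class of \emph{all} graphs, because a planar $\Hcal^{(k)}$-modulator of $G$ is in particular a set $Y\subseteq V(G)$ with $\torso(G,Y)$ planar and every component of $G-Y$ on at most $k$ vertices. So suppose $G$ admits such a $Y$ and contains a minor model $\{S_w\mid w\in V(\Gamma_{k'}^+)\}$ of $\Gamma_{k'}^+$; let $a$ be the apex of $\Gamma_{k'}^+$ and $W_Y=\{w\mid S_w\cap Y\neq\emptyset\}$. Since $\torso(G,Y)$ is planar it is $K_5$-minor-free, so by \autoref{obs_clique} every component $C$ of $G-Y$ has $|N_G(V(C))|\le 4$; moreover every $w\notin W_Y$ has $S_w$ inside a single component of $G-Y$. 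I distinguish two cases according to whether $S_a$ meets $Y$.

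\emph{If $S_a\cap Y=\emptyset$}, then $S_a\subseteq V(C)$ for one component $C$ of $G-Y$, and connectivity of $S_a\cup S_{(p,q)}$ forces each of the $k'^2$ grid branch sets $S_{(p,q)}$ to meet $V(C)\cup N_G(V(C))$; since they are pairwise disjoint, $k'^2\le|V(C)|+|N_G(V(C))|\le k+4$, contradicting $k'\ge\sqrt{k+4}+2$. This case is immediate.

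\emph{If $S_a\cap Y\neq\emptyset$} (so $a\in W_Y$), the crux is that $w\mapsto S_w\cap Y$ is a minor model of $\torso(\Gamma_{k'}^+,W_Y)$ in $\torso(G,Y)$: any connecting walk inside a branch set, or inside the union of two adjacent branch sets, that enters a component $C$ of $G-Y$ does so through the clique $N_G(V(C))$ of $\torso(G,Y)$ and can be shortcut; the torso-clique edges of $\torso(\Gamma_{k'}^+,W_Y)$ are realized the same way, using that $w\in N_{\Gamma_{k'}^+}(V(D))$ for a component $D$ of $\Gamma_{k'}^+-W_Y$ implies $S_w$ meets $N_G(V(C))$ for the component $C$ of $G-Y$ hosting $D$. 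Hence $\torso(\Gamma_{k'}^+,W_Y)$ is planar and every component of $\Gamma_{k'}^+-W_Y$ has at most $k$ vertices. Since $a\in W_Y$, writing $J=W_Y\setminus\{a\}$ one has $\torso(\Gamma_{k'}^+,W_Y)=\{a\}\ast\torso(\Gamma_{k',k'},J)$, so $\torso(\Gamma_{k',k'},J)$ is outerplanar; and because $a$ neighbors every component $C$ of $\Gamma_{k',k'}-J$, $K_5$-minor-freeness gives $|N_{\Gamma_{k',k'}}(V(C))|\le 3$.

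It then remains to reach a contradiction from: $\torso(\Gamma_{k',k'},J)$ outerplanar, and every component of $\Gamma_{k',k'}-J$ has at most $k$ vertices and at most $3$ neighbors. I would use two elementary facts about grids: (a) for $S\subseteq V(\Gamma_{k',k'})$ with $|S|\le 3$, the graph $\Gamma_{k',k'}-S$ has a component containing all but at most $3$ of its vertices; and (b) a connected subgraph of $\Gamma_{k',k'}$ on at most $3$ vertices with at most $3$ neighbors lies on the perimeter of $\Gamma_{k',k'}$. From (a), the bound $|V(C)|\le k$, and $k'^2>k+6$ (here the ``$+2$'' is used: $(\sqrt{k+4}+2)^2=k+8+4\sqrt{k+4}$), every component $C$ of $\Gamma_{k',k'}-J$ has at most $3$ vertices; then (b) puts all of them on the perimeter, so $J$ contains the entire central $(k'-2)\times(k'-2)$-subgrid. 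As $k'\ge 5$, this subgrid contains an induced $\Gamma_{3,3}$, hence a $K_{2,3}$-minor, so $\torso(\Gamma_{k',k'},J)$ is not outerplanar — a contradiction. The main obstacle is this second case: both the precise verification that $\torso(\Gamma_{k'}^+,W_Y)$ is a minor of $\torso(G,Y)$, and establishing the grid facts (a)--(b) together with the exact inequalities that make $k'\ge\sqrt{k+4}+2$ suffice.
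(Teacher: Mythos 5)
Your proposal is correct, but it takes a genuinely different route from the paper's in its core part. The paper first proves the statement for $G=\Gamma_{k'}^+$ itself: it considers the set $S$ of inner-grid vertices, observes that any two nonadjacent vertices of $S$ are joined by five internally disjoint paths (four in the grid, one through the apex), so $S$ lies entirely on one side of every separation of order at most four; combined with $|S|>k$ and the apex lying in every component's neighborhood, this forces $S\cup\{v\}\subseteq X$, and then $\Gamma_{3,3}$ plus the apex inside $X$ contradicts planarity of the torso. The general case is then dispatched in one line by asserting that the class of $\Gcal^{(k)}$-planar graphs is minor-closed. You instead make that last step explicit by pushing the modulator $Y$ through the minor model to obtain the modulator $W_Y$ of $\Gamma_{k'}^+$ (your shortcutting-through-torso-cliques argument is exactly the right way to verify that $w\mapsto S_w\cap Y$ models $\torso(\Gamma_{k'}^+,W_Y)$ in $\torso(G,Y)$, and your handling of the case $S_a\cap Y=\emptyset$ is correct), and then you analyze the apex grid differently: factoring out the apex to deduce that $\torso(\Gamma_{k',k'},J)$ is outerplanar, and using the two grid facts (a) and (b) to force the interior of the grid into $J$. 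I checked that both facts are true (the vertex-isoperimetric behaviour of the grid gives that a $3$-separator cuts off at most a $3$-vertex corner, and any connected $\le 3$-set containing an interior vertex has at least four neighbours) and that your constants work out ($k'^2\ge k+8+4\sqrt{k+4}>k+6$ and $k'\ge 5$). The trade-off is that the paper's five-paths argument handles the apex-grid case in one stroke, whereas your route is more self-contained on the minor-closedness side but still owes complete proofs of (a), (b), and the minor-model verification — all elementary, and you have correctly identified them as the remaining work.
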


\begin{proof}
Let $\Gcal$ be the class of all graphs. Because any graph class $\Hcal\subseteq\Gcal$, it is sufficient to show the lemma for $\Hcal=\Gcal$. For this, we prove the claim for $G=\Gamma_{k'}^+$.

\begin{claim}\label{cl:apex}
The apex grid $\Gamma_{k'}^+$ does not admit a planar $\Gcal^{(k)}$-modulator.
\end{claim}

\begin{cproof}
The proof is by contradiction. Assume that $X\subseteq V(\Gamma_{k'}^+)$ is a planar $\Gcal^{(k)}$-modulator.
Because $|V(\Gamma_{k'}^+)|>k+5$, it implies that $X\ne\emptyset$, and because $\Gamma_{k'}^+$ is not planar as $k'\geq 3$, it implies that $V(\Gamma_{k'}^+)\setminus X\neq\emptyset$.
Let $v$ be the apex of $\Gamma_{k'}^+$. 
Denote by $B$ the vertices of $\Gamma_{k'}$ of degree at most four,
and set $S:=V(\Gamma_{k'}^+)\setminus N_{\Gamma_{k'}^+}(B)$, that is $S$ is the set of vertices of $\Gamma_{k'}^+-v$ that do not belong to the two outermost cycles of the grid.
It is straightforward to verify that for any two distinct nonadjacent vertices $x,y\in S$, $\Gamma_{k'}^+$ has five internally vertex disjoint $x$-$y$-paths: 
four paths in $\Gamma_{k'}^+-v$ and one path with the middle vertex $v$. 
Therefore, for any separation $(L,R)$ of $\Gamma_{k'}^+$ of order at most four with $L\setminus R\neq\emptyset$ and $R\setminus L\neq\emptyset$, either $S\subseteq L$ or $S\subseteq R$.
 Furthermore, because $v$ is universal, $v\in L\cap R$. 
For each connected component $C$ of $\Gamma_{k'}^+-X$, $(N_{\Gamma_{k'}^+}[V(C)],V(G)\setminus V(C))$ is a separation of order at most four.
This implies that either $S\subseteq N_{\Gamma_{k'}^+}[V(C)]$ for a connected component $C$ of $\Gamma_{k'}^+-X$ or $S\subseteq X$.
However, because $|S|\geq k+4$ and $v\in N_{\Gamma_{k'}^+}(V(C))$, in the first case, we would have that $|V(C)|>k$ contradicting that each connected component of $\Gamma_{k'}^+-X$ has at most $k$ vertices. 
Thus, $S\subseteq X$.  
We also have that $v\in X$. 
Then because $k'\geq 5$, $\Gamma_{k'}^+[S]$ contains $\Gamma_{3,3}$ as a subgraph and, therefore, $\Gamma_{k'}^+[S\cup\{v\}]$ is not planar. 
This contradicts that the torso of $X$ is planar and proves the claim. 
\end{cproof}

Given that $\Gcal^{(k)}$ is a minor-closed graph class, so is the class of $\Gcal^{(k)}$-planar graphs. 
Therefore, for any graph $G$ containing $\Gamma_{k'}^+$ as a minor, $G$ is not a $\Gcal^{(k)}$-planar graph by \autoref{cl:apex}. This completes the proof.
\end{proof}

\subsection{$\Hcal$-compatible \sdecomps}\label{subsec_compatible}

In this subsection, we observe that the problem of \Hpl\ has an equivalent definition using \sdecomps.

\paragraph{$\Hcal$-compatible \sdecomps.}
Let $\Hcal$ be a graph class.
Let also $G$ be a graph and $\delta=(\Gamma,\Dcal)$ be a \sdecomp\ of $G$.
We say that a cell $c$ of $\delta$ is \emph{$\Hcal$-compatible} if 
there is a set $S_c\subseteq V(\sigma(c))$ containing $\pi_\delta(\tilde{c})$ such that $\torso(\sigma(c),S_c)$ has a planar embedding with the vertices of $\pi_\delta(\tilde{c})$ on the outer face and such that, for each $D\in\cc(\sigma(c)-S_c)$, $D\in\Hcal$.
We say that $\delta$ is \emph{$\Hcal$-compatible} if every cell of $\delta$ is $\Hcal$-compatible.
See \autoref{fig_compatible} for an illustration.

We show the following lemma.

\begin{lemma}\label{obs_sol_compatible}
Let $\Hcal$ be a graph class, $k\in\Nbbb$, and $G$ be a graph.
Then $G$ is $\Hcal^{(k)}$-planar if and only if $G$ has an $\Hcal^{(k)}$-compatible \sdecomp\ $\delta$.
Additionally, for any $r$-wall $W$ of $G$ with $r\ge\max\{\sqrt{(k+7)/2}+2,7\}$, we can choose $\delta$ such that the $(r-2)$-central wall $W'$ of $W$ is grounded in $\delta$.
\end{lemma}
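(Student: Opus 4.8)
The plan is to establish the two directions of the equivalence and then to revisit the construction in the backward direction to arrange the groundedness of the central wall; I expect only the latter to require real work.

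\textbf{From a compatible \sdecomp\ to $\Hcal^{(k)}$-planarity.} Suppose $\delta=(\Gamma,\Dcal)$ is $\Hcal^{(k)}$-compatible, and for each cell $c\in C(\delta)$ fix a set $S_c\subseteq V(\sigma_\delta(c))$ witnessing this, so $\pi_\delta(\tilde c)\subseteq S_c$, $\torso(\sigma_\delta(c),S_c)$ has a planar embedding with $\pi_\delta(\tilde c)$ on the outer face, and every component of $\sigma_\delta(c)-S_c$ lies in $\Hcal^{(k)}$. I would set $X:=\bigcup_{c\in C(\delta)}S_c$, so that $\ground(\delta)\subseteq X$. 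The first step is to note that, because ground vertices lie in $X$ and every edge of $\Gamma$ sits in a unique disk, any connected component $D$ of $G-X$ is drawn in the interior of a single disk $\Delta_c$ and is in fact a full connected component of $\sigma_\delta(c)-S_c$; hence $D\in\Hcal^{(k)}$ and $N_G(V(D))=N_{\sigma_\delta(c)}(V(D))\subseteq S_c$. From this one reads off that $\torso(G,X)$ has vertex set $\bigcup_cS_c$ and edge set $\bigcup_c E(\torso(\sigma_\delta(c),S_c))$. The second step is to draw this graph: inside each $\Delta_c$, redraw the contents of $\Delta_c$ as a planar embedding of $\torso(\sigma_\delta(c),S_c)$ with $\pi_\delta(\tilde c)$ placed on $\bd(\Delta_c)$ --- possible because $|\tilde c|\le3$, so a planar graph with at most three distinguished vertices on the outer face can be drawn in a disk with those vertices on the boundary circle in either cyclic order (reflecting the embedding if needed). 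Distinct disks have disjoint interiors and meet only at ground vertices, so these pieces never cross one another; after deleting the at most duplicated edges, this is a crossing-free drawing of $\torso(G,X)$ on $\mathbb{S}^2$. Thus $X$ is a planar $\Hcal^{(k)}$-modulator and $G$ is $\Hcal^{(k)}$-planar.

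\textbf{From $\Hcal^{(k)}$-planarity to a compatible \sdecomp.} Conversely, let $X$ be a planar $\Hcal^{(k)}$-modulator and fix a crossing-free drawing $\Gamma_X$ of $\torso(G,X)$ on $\mathbb{S}^2$. By \autoref{obs_clique} with $\Gcal=\Pcal$, for every $D\in\cc(G-X)$ the set $Q_D:=N_G(V(D))$ is a clique of $\torso(G,X)$ of size at most $4$, hence drawn by $\Gamma_X$ inside a homeomorphic copy of $K_4$. The plan is to build a \sdecomp\ $\delta$ of $G$ by carving, for each $D$, a closed disk $\Delta_D$ whose boundary meets $\Gamma_X$ only in a set of at most three vertices forming a subclique of $Q_D$, which contains all of $Q_D$ and all edges of $\torso(G,X)[Q_D]$, and which has no edge of $\Gamma_X$ on its boundary; when $|Q_D|=4$ this forces $\bd(\Delta_D)$ to be a triangle through three vertices of $Q_D$ with the fourth vertex of $Q_D$ in the interior of $\Delta_D$, so in general $\Delta_D$ is one side of such a triangle. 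Any two of these disks are, after a harmless perturbation, nested or interior-disjoint; I would keep the inclusion-wise maximal ones, route each $D$ together with its edges to $Q_D$ inside its maximal disk (crossings allowed there), and split the part of $\Gamma_X$ outside all maximal disks into one cell per remaining edge of $G[X]$. For a maximal-disk cell $c$ put $S_c:=X\cap V(\sigma_\delta(c))$: all ground vertices lie in $X$, so $\sigma_\delta(c)-S_c$ is an induced subgraph of $G-X$ whose components are exactly the components of $G-X$ routed into $\Delta_c$ (each in $\Hcal^{(k)}$), and $\torso(\sigma_\delta(c),S_c)$ is a spanning subgraph of the planar graph $\torso(G,X)[S_c]$, so it has a planar embedding with the clique $\pi_\delta(\tilde c)$ (of size $\le 3$) on the outer face. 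Edge-cells are trivially $\Hcal^{(k)}$-compatible, so $\delta$ is $\Hcal^{(k)}$-compatible.

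\textbf{Grounding the central wall, and the main obstacle.} For the last statement I would show that, given an $r$-wall $W$ with $r\ge\max\{\sqrt{(k+7)/2}+2,7\}$, one can choose $X$, $\Gamma_X$ and the disks $\Delta_D$ so that $W':=W^{(r-2)}$ is grounded in $\delta$, i.e.\ no cycle of $W'$ is drawn inside a single cell. Edge-cells carry no cycle, so the only danger is a cycle of $W'$ trapped inside some component-cell $c$, whose vertices would then lie in $\pi_\delta(\tilde c)$ (at most three vertices) together with the components of $G-X$ routed into $\Delta_c$. The numerical hypothesis on $r$ is precisely what forces an elementary $(r-2)$-wall to have more than $k$ vertices, so that $W'$ cannot be accommodated, up to deleting at most three vertices, inside a single connected component of $G-X$; this rules out the ``small'' component-cells. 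For the ``large'' component-cells --- those arising when $Q_D$ is a $K_3$ or $K_4$ not already bounding a face of $\Gamma_X$ --- I would exploit the remaining freedom in which triangular region of $Q_D$ becomes the \emph{exterior} of $\Delta_D$, choosing it so as not to enclose the connected subgraph $W'$, so that $W'$ again straddles several cells. I expect this last point --- keeping every cell $\Hcal^{(k)}$-compatible while simultaneously keeping $W'$ out of every single cell, in the presence of components attaching to $4$-cliques and thereby forcing large cells --- to be the genuine difficulty; the first implication is bookkeeping, and the construction in the second is routine except for this choice.
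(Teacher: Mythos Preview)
Your two directions of the equivalence are essentially the paper's argument, so nothing to add there.

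The groundedness argument has a real gap. ``Grounded'' means \emph{every cycle} of $W'$ uses edges of two distinct cells, not merely that $W'$ as a whole is not trapped in one cell. Your size argument --- that $W'$ has more than $k$ branch vertices and hence cannot fit, up to three extra vertices, in a single component of $G-X$ --- only controls the whole wall; it says nothing about a single brick of $W'$, which is a cycle with as few as six vertices and may well live inside one component of $G-X$ together with its $\le 3$ attachment vertices. So neither your treatment of ``small'' cells nor your criterion for ``large'' cells (choose the side ``so as not to enclose the connected subgraph $W'$'') is strong enough: you must choose so as not to enclose \emph{any cycle} of $W'$, and you have not shown such a choice exists.

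What the paper uses instead is a separation fact: for any separation $(A,B)$ of $G$ of order at most three, one side, say $B$, contains \emph{no cycle} of $W'$, and moreover $B\setminus A$ contains at most one branch vertex of $W'$. This is where the hypothesis $r\ge\max\{\sqrt{(k+7)/2}+2,7\}$ actually enters (it gives $|V'|\ge k+5$, together with $r-2\ge 5$). With this in hand, for each $D$ with $|N_G(V(D))|=3$ one takes as $\Delta_D$ the side of the triangle certified to contain no cycle of $W'$; a short argument, using that each component of $G-X$ has at most $k<|V'|-4$ vertices and hence cannot carry a cycle of $W'$ on its own, shows that absorbing into the cell all components whose neighborhood lies in $\Delta_D$ still yields a set $B_D$ inducing no cycle of $W'$. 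For $|N_G(V(D))|=4$ one runs the size-$3$ argument on each of the four facial triangles of the $K_4$; if none of the four resulting disks contains the opposite fourth vertex, then the four disks are pairwise interior-disjoint and together cover everything, forcing every cycle of $W'$ to meet the four-element set $N_G(V(D))$ at least twice --- contradicting that $W'$ (with $r-2\ge5$) has at least three pairwise disjoint cycles. You were right that this is where the work lies; the missing ingredient is this per-cycle separation observation, not a refinement of the size bound.
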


As a side note, the following proof can be easily adapted to prove that $G$ is $\Hcal$-planar if and only if $G$ has an $\Hcal$-compatible \sdecomp\ $\delta$. However, in this case, the wall $W$ may be completely contained in a cell of $\delta$.

\begin{proof}
\emph{Suppose that $G$ has an $\Hcal^{(k)}$-compatible \sdecomp\ $\delta$.}
Then, for each cell $c\in C(\delta)$, there is a set $S_c\subseteq V(\sigma(c))$ containing $\pi_\delta(\tilde{c})$ such that $\torso(\sigma(c),S_c)$ has a planar embedding with the vertices of $\pi_\delta(\tilde{c})$ on the outer face and such that, for each $D\in\cc(\sigma(c)-S_c)$, $D\in\Hcal^{(k)}$.
Then we immediately get that $\bigcup_{c\in C(\delta)}S_c$ is a planar $\Hcal^{(k)}$-modulator of $G$.
This comes from the fact that the disks in the sphere decomposition are disjoint apart from shared boundary vertices.
Therefore, we can take the planar embeddings of the torsos of the individual cells that have the size-3 boundaries on the outer face, mirror them as needed to get the ordering along the boundary to match the ordering in the sphere decomposition, and then to use the cell-torso drawings into the sphere decomposition to get a complete drawing of the entire torso on the sphere, which is a planar drawing. 

{\em Suppose now that $G$ is $\Hcal^{(k)}$-planar.}
Let $S$ be a planar $\Hcal^{(k)}$-modulator in $G$.
Let $V'\subseteq V(W')$ be the branch vertices of $W$ that are vertices of $W'$.
Then $|V'|=2(r-2)^2-2\ge k+5$.
We first make the following observation.
\begin{claim}\label{obs_smallsepwall}
For any separation $(A,B)$ of order at most three in $G$, 
the graph induced by one of $A$ and $B$, say $B$, contains no cycle of $W'$, and $B\setminus A$ contains at most one vertex of $V'$.
\end{claim}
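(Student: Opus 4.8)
The plan is to reduce the statement to a purely structural fact about the larger wall $W$ — of which $W'$ is the central $(r-2)$-subwall $W'=W^{(r-2)}$ — and then to exploit the grid-like connectivity of walls. First I would restrict the separation to $W$: given a separation $(A,B)$ of $G$ of order at most three, put $S:=A\cap B$; then $(A_0,B_0):=(A\cap V(W),\,B\cap V(W))$ is a separation of $W$ of order $|A_0\cap B_0|\le|S|\le 3$. A cycle that is a subgraph of $W'$ lies in $G[A]$ (resp.\ $G[B]$) exactly when all its vertices lie in $A$ (resp.\ $B$), and $(B\setminus A)\cap V(W)=B_0\setminus A_0$; so it suffices to prove that one of $A_0,B_0$ — which we then call $B_0$, the corresponding one of $A,B$ being the promised $B$ — is such that $W[B_0]$ contains no cycle of $W'$ and $B_0\setminus A_0$ contains at most one vertex of $V'$ (the case where one side is contained in the other being trivial, since $W$ is triangle-free and a side equal to $S$ has at most three vertices). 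Since $W$ is a $2$-connected plane graph, such a separation is witnessed by a noose $\gamma$ of the sphere embedding of $W$ meeting the drawing only in the (at most three) vertices of $A_0\cap B_0$ and meeting no edge, bounding two closed disks one of which, $\Delta$, contains $B_0\setminus A_0$ while the other contains $A_0\setminus B_0$.

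Next I would show that for $r\ge 7$ the disk $\Delta$ may be chosen so as to contain no finite face (``brick'') of $W'$. Indeed, the corner vertices of any brick of $W'$, together with whatever that brick encloses, are joined to the remainder of $W$ by more pairwise vertex-disjoint paths than the three vertices of $A_0\cap B_0$ — this is where the height bound on $W$ is used — so no brick of $W'$ can be cut off by $A_0\cap B_0$; in particular the two sides of $S$ cannot both contain a cycle of $W'$, and we let $B_0$ be the face-free side. Then $\Delta\cap W'$ is a ``thin'' peripheral piece of $W'$: it may contain an arc of the perimeter $D(W')$, but no brick of $W'$. It remains to bound $|V'\cap(B_0\setminus A_0)|$, and here I would use that $W'$ is a \emph{strict} central subwall of $W$: each vertex of $V'$ on $D(W')$ carries a radial edge of $W$ directed towards the outer layer $D(W)$, and each vertex of $V'$ lying strictly inside $W'$ is surrounded by bricks of $W'$. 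Since $\gamma$ meets no edge, the far endpoint of such a radial edge that points out of $\Delta$ must itself lie on $\gamma$, hence in $A_0\cap B_0$, and these far endpoints are pairwise distinct; whereas cutting off an interior vertex of $V'$ (or the adjacent pair of central vertices) already exhausts the noose. A short case analysis on how $\gamma$ enters and leaves the strip of $W$ around $D(W')$, and near the corners of $W$, then forces $B_0\setminus A_0$ to contain at most one vertex of $V'$.

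The step I expect to be the main obstacle is precisely this last accounting with the tight constants: a separator of order only three against a wall $W'$ of height only $5$ (equivalently $W$ of height $r\ge 7$). One has to verify exactly that a cycle of $W'$ has enough vertex-disjoint links to the exterior of $W'$ that a $3$-element separator can neither isolate it nor separate two such cycles, and one has to upgrade the naive estimate ``at most three vertices of $V'$'' to ``at most one'' by a careful analysis of how a short noose can meet the layers of $W$ near $D(W')$ and near the corners of $W$. This is where the hypotheses $r\ge 7$ and the definition of $V'$ as the branch vertices of $W$ (rather than of $W'$, which would lack the radial edges) are genuinely needed, and it is the part that the formal proof must carry out in detail.
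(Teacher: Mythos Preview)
The paper supplies no proof of this claim — it is announced as an observation and the argument moves on immediately — so there is nothing to compare your approach against. Your plan (restrict the separation to $W$, pass to a noose in the planar embedding, exploit that $W'$ is the strict central subwall) is the natural route.

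Two refinements. First, walls are only $2$-connected (they have degree-$2$ subdivision vertices), so the passage from a $3$-separation of $W$ to a single noose through $A_0\cap B_0$ is not automatic; it holds because every component of $W-(A_0\cap B_0)$ attaches to at least two separator vertices, but you should say so. Second, your justification that at most one side can contain a cycle of $W'$ is phrased as ``no brick of $W'$ can be cut off from the remainder of $W$'', which is ambiguous when both sides are large; what you actually need is that two cycles of $W'$ lying on opposite sides would be joined by four vertex-disjoint paths in $W$ (the outer layer $W\setminus W'$ supplies the extra routing), and that should be stated explicitly.

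For the tight bound $|V'\cap(B_0\setminus A_0)|\le 1$, which you rightly flag as the crux, the strip-and-radial-edge picture can be organised around a single observation: every $v\in V'$ is incident in the embedding of $W$ to at least one brick of $W'$ — three if $v\notin V(D(W'))$ — while the noose, having at most three arcs, traverses the interiors of at most three faces of $W$. If $v$ lies strictly inside the $B_0$-disk $\Delta$ and $\Delta$ contains no brick of $W'$, then every $W'$-brick incident to $v$ must be among those three traversed faces. For $v$ interior to $W'$ this pins down the three noose-vertices as exactly $N_W(v)$, whence $B_0\setminus A_0=\{v\}$. The residual case of two candidate vertices both on $D(W')$ then reduces to tracking $D(W')$ itself (a cycle of $W'$) through $B_0$ together with the brick count; this is still a case check, but shorter and more robust than the geometric ``how $\gamma$ enters and leaves the strip'' analysis you describe.
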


Let $\delta=(\Gamma,\Dcal)$ be a sphere embedding of $\torso(G,S)$.
Note that, for a sphere embedding, the set of its nodes contains all vertices of the drawing, since each disk of the sphere embedding surrounds a single edge and therefore no vertex of the drawing lies in the interior of any disk. 
For each $\delta$-aligned disk $\Delta$, let 
$Z_\Delta:=\{C\in\cc(G-S)\mid N_G(V(C))\subseteq \pi_\delta(N(\delta)\cap \Delta)\}$.
Let also $V(Z_\Delta):=\bigcup_{C\in Z_\Delta}V(C)$ be the set of vertices of components in $Z_\Delta$.

\begin{claim}\label{cl_compat}
Let $D\in\cc(G-S)$.
There is a $\delta$-aligned disk $\Delta_D$ such that 
\begin{itemize}[itemsep=-2pt] 
  \setlength\itemsep{0em}
 \item the vertices of $N_G(V(D))$ are in the disk $\Delta_D$, i.e. $N_G(V(D))\subseteq \pi_\delta(N(\delta)\cap \Delta_D)$, with all but at most one (in the case  $|N_G(V(D))|\le 4$) being exactly the vertices of the boundary of $\Delta_D$, i.e. there is a set $X_D\subseteq N_G(V(D))$ of size $\min\{|N_G(V(D))|,3\}$ such that $X_D=\pi_\delta(\bd(\Delta_D)\cap N(\delta))$, and
\item the graph induced by $B_D:=V(\inG_\delta(\Delta_D))\cup V(Z_\Delta)$ contains no cycle of $W'$.
\end{itemize}
\end{claim}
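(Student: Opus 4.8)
The plan is to reduce the claim to \autoref{obs_smallsepwall} by packaging the component $D$ into a $\delta$-aligned disk and reading off a low-order separation. First I would use \autoref{obs_clique} (with $\Gcal=\Pcal$) to note that $N_G(V(D))$ induces a clique of size $k_D:=|N_G(V(D))|\le 4$ in $\torso(G,S)$, and that this clique is drawn, together with the rest of the torso, by the sphere embedding $\delta$. When $k_D\le 3$ it is a complete graph on at most three vertices, and when $k_D=4$ it is a $K_4$ whose planar drawing bounds four facial triangles. From this I would extract a family of \emph{candidate} disks: $\delta$-aligned closed disks $\Delta$ with $N_G(V(D))\subseteq\Delta$ and $\pi_\delta(\bd(\Delta)\cap N(\delta))=X_\Delta$ for some $X_\Delta\subseteq N_G(V(D))$ of size $\min\{k_D,3\}$. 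For $k_D\le 3$ I take $X_\Delta=N_G(V(D))$ and let $\Delta$ range over the disks bounded by a curve through these vertices meeting $\Gamma$ only there, including the ``thin'' disk whose interior contains no further torso vertex; for $k_D=4$ I pick a facial triangle of the $K_4$ as $X_\Delta$ and take the side of its track that contains the fourth vertex. For each candidate I set $B_\Delta:=V(\inG_\delta(\Delta))\cup V(Z_\Delta)$ and $A_\Delta:=X_\Delta\cup(V(G)\setminus B_\Delta)$.

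Next I would show that $(A_\Delta,B_\Delta)$ is always a separation of $G$ of order $|X_\Delta|\le 3$. The crucial point is that $\delta$ is a \emph{sphere embedding}, so every edge of $\torso(G,S)$ lies in the interior of a cell, and since $\Delta$ is $\delta$-aligned its boundary meets $\Gamma$ only in nodes; hence no edge of $\torso(G,S)$ can straddle $\bd(\Delta)$. Consequently, for every component $C$ of $G-S$ its neighbourhood clique $N_G(V(C))$ lies entirely in $\Delta$ --- whence $C\in Z_\Delta$ and $V(C)\subseteq B_\Delta$ --- or misses the interior of $\Delta$ --- whence $V(C)\cap B_\Delta=\emptyset$. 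A short verification then gives that no $G$-edge joins $B_\Delta\setminus A_\Delta$ to $A_\Delta\setminus B_\Delta$, and $A_\Delta\cap B_\Delta=X_\Delta$. Applying \autoref{obs_smallsepwall} to $(A_\Delta,B_\Delta)$, I obtain that either $B_\Delta$ contains no cycle of $W'$ --- in which case $\Delta_D:=\Delta$ already witnesses both items of the claim --- or else $A_\Delta$ contains no cycle of $W'$ and $|(A_\Delta\setminus B_\Delta)\cap V'|\le 1$, so $B_\Delta$ absorbs all but at most one vertex of $V'$.

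Finally I would argue that some candidate disk has $B_\Delta$ free of $W'$-cycles, and this is the hard part. I would choose $\Delta_D$ among the candidates so that $|V(W')\cap B_{\Delta_D}|$ is minimum (the thin disk competes in the $k_D\le 3$ case, and all four facial triangles compete in the $k_D=4$ case). If $B_{\Delta_D}$ still contained a cycle of $W'$, then by the previous step it would contain at least $|V'|-1\ge k+4$ vertices of $V'$ (recall $|V'|=2(r-2)^2-2\ge k+5$), while $A_{\Delta_D}$ would induce no cycle of $W'$. On the other hand, every vertex of $B_{\Delta_D}$ outside $X_{\Delta_D}\cup V(\inG_\delta(\Delta_D))$ lies in a component of $G-S$ absorbed by $\Delta_D$, and since $S$ is a planar $\Hcal^{(k)}$-modulator each such component has at most $k$ vertices; together with minimality (which, via the thin disk / an innermost facial triangle, lets me push all of $W'$ that can be pushed out of $\Delta_D$), this forces $W'-X_{\Delta_D}$ to decompose as a disjoint union of a forest and a graph each of whose blocks sits inside a single such small component and hence has at most $k$ vertices. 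The obstacle --- the crux of the claim --- is to rule this out: a wall $W'$ of height $r-2\ge\max\{\sqrt{(k+7)/2}+2,7\}$ cannot have this shape after deleting at most three vertices, because it still contains a $2$-connected subwall on more than $k$ vertices, which would then have to live inside a single component of $W'-X_{\Delta_D}$, contradicting that the blocks on the $B_{\Delta_D}$-side have at most $k$ vertices and the remaining side is a forest. I expect the delicate points to be the precise interaction of the thin disk with the components hanging off $X_{\Delta_D}$ and the $k_D=4$ bookkeeping (matching the height lower bound on $W'$ against the size upper bound on components of $G-S$); everything upstream is routine once the sphere-embedding observation is in hand.
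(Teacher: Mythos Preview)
Your setup (candidate disks, verifying $(A_\Delta,B_\Delta)$ is a separation of order at most three, invoking \autoref{obs_smallsepwall}) is correct and matches the paper's approach. The gap is in the ``crux'' step.

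First, your punchline---that $W'-X_{\Delta_D}$ still contains a $2$-connected subwall on more than $k$ vertices---is not supported by the height bound. We only have $r-2\ge\sqrt{(k+7)/2}$, so $|V'|=2(r-2)^2-2$ is essentially $k+O(1)$. After deleting three vertices you cannot extract a subwall on more than $k$ vertices: losing even a single row already drops the branch-vertex count below $k$. So the direct size argument fails. Second, for $k_D=4$ none of your four candidate disks is ``thin'': each must contain the fourth vertex and hence may swallow an arbitrary portion of $\torso(G,S)$. Your ``innermost facial triangle'' does not force $V(\inG_\delta(\Delta_D))\setminus X_{\Delta_D}$ to be small, so the claimed decomposition of $W'-X_{\Delta_D}$ into a forest and pieces confined to single components of $G-S$ breaks down.

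The paper circumvents both issues by a different localization. Rather than hunting for a large $2$-connected piece, it uses that $W'$ contains at least four pairwise vertex-disjoint cycles; since $|X_{\Delta_D}|\le 3$, one such cycle avoids $X_{\Delta_D}$ entirely, and once you know the complementary side has no cycle of $W'$, this forces the surviving cycle into a single component $C\in Y_D$. Then \autoref{obs_smallsepwall} is applied \emph{again}, this time to the separation $(N_G[V(C)],V(G)\setminus V(C))$, which boosts a single cycle in $C$ into $|V(C)|\ge |V'|-4>k$, contradicting $|V(C)|\le k$. For $k_D=4$, the paper reuses the $k_D=3$ case on each of the four facial triangles to get cycle-free $B_{\Delta_i}$, and then a separate disjoint-cycles count shows some $\Delta_i$ must contain $v_i$. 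The missing idea in your argument is this second application of \autoref{obs_smallsepwall} to the individual component: you need it precisely because the height of $W'$ gives you only $\sim k$ vertices, not enough to win by direct counting.
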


\begin{cproof}
Let $Y_D:=\{C\in\cc(G-S)\mid N_G(V(C)) \subseteq N_G(V(D))\}$ 
and $V(Y_D):=\bigcup_{C\in Y_D}V(C)$ be the set of vertices of components in $Y_D$. We consider three cases depending on the size of $N_G(V(D))$.

\paragraph{Case 1:} $|N_G(V(D))|\le 2$. 
We set $\Delta_D$ to be
\begin{itemize}[itemsep=-2pt] 
  \setlength\itemsep{0em}
\item the empty disk if $N_G(V(D))=\emptyset$
\item $\pi_\delta^{-1}(v)$ if $N_G(V(D))=\{v\}$
\item the closure of the cell $c\in C(\delta)$ such that $\sigma(c)$ is the edge induced by $u,v$ in $\torso(G,S)$ if $N_G(V(D))=\{u,v\}$.
\end{itemize}

\paragraph{Case 2:} $|N_G(V(D))|= 3$.
Informally, the triangle induced by $N_G(V(D))$ in $\torso(G,S)$ defines two disks $\Delta_1$ and $\Delta_2$ in $\delta$. We need to show that one of them is the desired disk, that does not contain a cycle of $W'$.

Let $T$ be the cycle in the embedding $\delta$ induced by $N_G(V(D))$.
$\mathbb{S}^2\setminus T$ is the union of two open disks whose closure is respectively called $\Delta_1$ and $\Delta_2$.
For $i\in[2]$, let $A_i:=V(\inG_\delta(\Delta_i))\cup V(Z_{\Delta_i})\setminus V(Y_D)$.
Then $(A_1,A_2)$ is a separation of $G-V(Y_D)$ with $A_1\cap A_2=N_G(V(D))$.
By \autoref{obs_smallsepwall}, the graph induced by one side of the separation, 
say $A_2$ contains no cycle of $W'$.
We set $\Delta_D$ to be a $\delta$-aligned disk containing $\Delta_1$.

It remains to prove that $G[B_D]$ contains no cycle of $W'$, where $B_D:=V(\inG_\delta(\Delta_D))\cup V(Z_\Delta)=A_2\cup V(Y_D)$.
$(A_1,B_D)$ is a separation of $G$ with $A_1\cap B_D=N_G(V(D))$, so by \autoref{obs_smallsepwall}, one of $A_1$ and $B_D$ induce a graph containing no cycle of $W'$.
Assume toward a contradiction that $G[A_1]$ contains no cycle of $W'$.
Note that, given that $r-2\ge 5$, $W'$ contains a set $\Ccal$ of pairwise disjoint cycles with $|\Ccal|\ge 4$.
Given that $|A_1\cap A_2|=|N_G(V(D))|=3$, at most three cycles of $\Ccal$ intersects $N_G(V(D))$ and $G[A_1\cup A_2]$ contains at most one cycle of $\Ccal$.
Therefore, $Y_D$ contains at least one cycle of $\Ccal$.
Let $C\in Y_D$ by a component containing such a cycle.
We have that $(V(C)\cup N_G(V(C)),V(G)\setminus V(C))$ is a separation of order at most three so, given that $C$ contains a cycle of $W'$, by \autoref{obs_smallsepwall}, $C$ contains at least $|V'|-4>k$ vertices of $V'$.
This contradicts the fact that $|V(C)|\ge k$.

\paragraph{Case 3:} $|N_G(V(D))|=4$.
Let $\{v_i\mid i\in[4]\}$ be the vertices in $N_G(V(D))$ and $X_i:=N_G(V(D))\setminus\{v_i\}$.
For $i\in[4]$, we define $\Delta_i$ similarly to $\Delta_D$ in the previous case. 
Then $X_i=\pi_\delta(\bd(\Delta_D)\cap N(\delta))$ and
the graph induced by $B_i:=V(\inG_\delta(\Delta_i))\cup V(Z_{\Delta_i})$ contains no cycle of $W'$.
However, it might be the case that $v_i\notin \pi_\delta(N(\delta)\cap \Delta_i)$.

If $v_i\notin \pi_\delta(N(\delta)\cap \Delta_i)$ for all $i\in[4]$, then the interior of the $\Delta_i$ are pairwise disjoint.
Moreover, $\bigcup_{i\in[4]}B_i=V(G)$, $\bigcap_{i\in[4]}B_i=V(Y_D)\cup N_G(V(D))$, and $N_G(V(Y_D))\subseteq N_G(V(D))$.
As shown in the previous case, $G[B_i]$ contains no cycle of $W'$ for $i\in[4]$.
This implies that, for any cycle of $W'$, there are distinct $i,j\in[4]$ such that the cycle has a vertex in $B_i\setminus B_j$ and $B_j\setminus B_i$, and thus intersects $N_G(V(D))$ twice.
However, $W'$ has at least three pairwise disjoint cycles, a contradiction to the fact that $|N_G(V(D))|=4$.
Therefore, there is $i\in[4]$ such that $v_i\in \pi_\delta(N(\delta)\cap \Delta_i)$.
We then set $\Delta_D:=\Delta_i$. This completes the case analysis and the proof of the claim.
\end{cproof}

Note that, if $N_G(V(D))=N_G(V(D'))$, then we can assume that $\Delta_D=\Delta_{D'}$.
Let $\Dcal^*$ be the inclusion-wise maximal elements of $\Dcal\cup\{\Delta_D\mid D\in\cc(G-S)\}$.
By maximality of $\Dcal^*$ and planarity of $\torso(G,S)$, any two distinct $\Delta_D,\Delta_{D'}\in\Dcal^*$ may only intersect on their boundary.
For each $C\in\cc(G-S)$, we draw $C$ in a $\Delta_D\in\Dcal$ such that $V(C)\subseteq B_D$, and add the appropriate edges with $\pi_\delta(N(\delta)\cap \bd(\Delta_D))$.
We similarly draw the edges of $G[S]$ to obtain a drawing $\Gamma^*$ of $G$.

For each cell $c$ of $\delta^*=(\Gamma^*,\Dcal^*)$, there is $D\in\cc(G-S)$ such that $\sigma_{\delta^*}(c)$ contains no cycle of $W'$, since $\sigma_{\delta^*}(c)$ is a subgraph of $G[B_D]$, so $W'$ is grounded in $\delta^*$.
Moreover, $S_c:=V(\sigma(c))\cap V(\torso(G,S))$ certifies that $c$ is $\Hcal^{(k)}$-compatible.
\end{proof}

Moreover, if $\Hcal$ is a hereditary graph class, then we can ``ground'' an $\Hcal$-compatible \sdecomp\ as much as possible.
This is what we prove in \autoref{lem_compatible} after defining the relevant definitions.

\paragraph{Containment of cells.}
Let $\delta=(\Gamma,\Dcal)$ and $\delta'=(\Gamma',\Dcal')$ be two \sdecomps\ of $G$.
Let $c\in C(\delta)$ and $c'\in C(\delta')$ be two cells.
We say that \emph{$c$ is contained in $c'$} if $V(\sigma(c))\subseteq V(\sigma_{\delta'}(c'))$. 
We say that \emph{$c$ and $c'$ are equivalent} if $c$ is contained in $c'$ and $c'$ is contained in $c$.

\paragraph{Ground-maximal \sdecomps.}
Let $\delta=(\Gamma,\Dcal)$ and $\delta'=(\Gamma',\Dcal')$ be two \sdecomps\ of $G$.
We say that $\delta$ is \emph{more grounded} than $\delta'$ (and that $\delta'$ is \emph{less grounded} than $\delta$) 
if each cell $c\in C(\delta)$ is contained in a cell $c'\in C(\delta')$, and in case $c$ and $c'$ are equivalent, if $\pi_{\delta'}(\tilde{c}')\subseteq\pi_\delta(\tilde{c})$.
We say that $\delta$ is \emph{ground-maximal} 
if no other \sdecomp\ of $G$ is more grounded 
than $\delta$.
We say that a cell $c\in C(\delta)$ is \emph{ground-maximal} if, for any \sdecomp\ $\delta'=(\Gamma',\Dcal')$ that is more grounded than $\delta$ and for any cell $c'\in C(\delta')$ that is contained in $c$, $V(\sigma(c))=V(\sigma_{\delta'}(c'))$ and $\pi_{\delta'}(\tilde{c}')=\pi_\delta(\tilde{c})$.

\begin{lemma}\label{lem_compatible}
Let $\Hcal$ be a hereditary graph class and $G$ be a graph.
Let $\delta$ be a \sdecomp\ of $G$ that is $\Hcal$-compatible.
Then any \sdecomp\ of $G$ that is more grounded than $\delta$ is also $\Hcal$-compatible.
\end{lemma}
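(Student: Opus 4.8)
The plan is to prove that every cell of the given \sdecomp\ is $\Hcal$-compatible. Let $\delta'$ be an arbitrary \sdecomp\ of $G$ that is more grounded than $\delta$; I must show $\delta'$ is $\Hcal$-compatible. Fix a cell $c'$ of $\delta'$. By definition of ``more grounded'', $c'$ is contained in some cell $c$ of $\delta$; if $c$ and $c'$ are equivalent then (since both boundaries have size at most $3$ and $\delta,\delta'$ share the same underlying drawing) they have the same boundary and there is nothing to prove, so I may assume $B:=\sigma_{\delta'}(c')\subsetneq A:=\sigma_\delta(c)$. Write $\partial:=\pi_{\delta'}(\tilde c')$, so $|\partial|\le 3$ and $\partial\subseteq V(B)\subseteq V(A)$. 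First I would record two structural facts: every vertex of $V(B)\setminus\partial$ is drawn in the interior of $c'$, hence in the interior of $c$, so all its incident edges of $G$ lie both in $B$ and in $A$. Consequently (i) $B$ and $A[V(B)]$ differ only in edges inside $\partial$, and (ii) $B-\partial$ is an induced subgraph of $A-\partial$ that is moreover a union of connected components of $A-\partial$.

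Next I would build the modulator for $c'$ from the witness $S_c$ of $c$. Let $\Lambda$ be a planar embedding of $\torso(A,S_c)$ with $\pi_\delta(\tilde c)$ on the outer face, every component of $A-S_c$ lying in $\Hcal$. Let $\mathcal D_\partial$ be the set of components of $A-S_c$ meeting $\partial$; then $|\mathcal D_\partial|\le 3$. I would set $S_{c'}:=\partial\cup(S_c\cap V(B))\cup Z$, where $Z\subseteq\bigcup_{D\in\mathcal D_\partial}(V(D)\cap V(B))$ is a ``repair set'' to be chosen below. The component condition holds for any choice of $Z$: by (ii), a component of $B-S_{c'}$ is disjoint from $S_c$ and from $\partial$, hence an induced subgraph of a single component $D'$ of $A-S_c$ with $D'\notin\mathcal D_\partial$; since $D'\in\Hcal$ and $\Hcal$ is hereditary, this component lies in $\Hcal$.

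The crux — and the step I expect to be the main obstacle — is choosing $Z$ so that $\torso(B,S_{c'})$ is planar with $\partial$ on the outer face. Neither extreme works: with $Z=\emptyset$ a fragment of some $D\in\mathcal D_\partial$ may see four vertices of $N_A(D)$ plus a vertex of $\partial$, creating a $K_5$; with $Z=\bigcup_{D\in\mathcal D_\partial}(V(D)\cap V(B))$ one may drag an entire non-planar $D\in\Hcal$ into the torso. Instead I would transplant $\Lambda$ piece by piece: its restriction to $S_c\cap V(B)$ is already plane; for a component $D\notin\mathcal D_\partial$ contributing to $B$, the fragments $D''$ have $N_B(D'')\subseteq N_A(D)$, a clique of size at most $4$, which can be inserted near the copies of $N_A(D)$ in $\Lambda$; and for each of the at most three $D\in\mathcal D_\partial$ one works inside the part of $\Lambda$ charged to $D$, where the only interface to the outside is the clique $N_A(D)$ (size $\le 4$) together with the vertices $\partial\cap V(D)$ (size $\le 3$), and uses planarity of $\torso(A,S_c)$ and the boundedness of these interface sets to pick $Z\cap V(D)$ and a plane drawing of the associated piece of $\torso(B,S_{c'})$. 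Finally one reroutes $\partial$ to the outer face, adding if necessary the at most three edges inside $\partial$ to make it facial. This shows $c'$ is $\Hcal$-compatible, and since $c'$ was arbitrary, $\delta'$ is $\Hcal$-compatible.

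The genuinely delicate points are: verifying that the interface sets are really as claimed (here (i)--(ii) and the clique-size bound $|N_A(D)|\le 4$ coming from planarity of $\torso(A,S_c)$ are essential), and proving that a valid repair set $Z$ always exists — equivalently, that the local piece associated with each $D\in\mathcal D_\partial$ admits the required plane drawing with its at most three prescribed outer vertices. Everything else (the reduction to a single cell, the structural facts, the component condition) is routine given the definitions and the heredity of $\Hcal$.
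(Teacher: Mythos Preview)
The paper's proof is far simpler than your sketch and uses no repair set. It sets $U := V(\sigma_\delta(c)) \setminus V(\sigma_{\delta'}(c'))$ and $S_{c'} := S_c \setminus U$, then observes that every component of $\sigma_{\delta'}(c') - S_{c'}$ is an induced subgraph of a component of $\sigma_\delta(c) - S_c$ (hence in $\Hcal$ by heredity) and that $\torso(\sigma_{\delta'}(c'), S_{c'})$ is a subgraph of $\torso(\sigma_\delta(c), S_c)$ (hence planar). That is the entire argument: there is no analysis of the components meeting $\partial$, no transplantation of embeddings, and no case split. Your structural observations (i)--(ii) are essentially what makes this short argument work.

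Where your proposal genuinely fails is the step you yourself flag: the existence of the repair set $Z$ is asserted but never established --- ``transplant $\Lambda$ piece by piece'' is a hope, not a proof --- so the planarity-with-$\partial$-on-the-outer-face claim remains open. You also make smaller errors along the way: $\delta$ and $\delta'$ are different sphere decompositions with different drawings $\Gamma,\Gamma'$, so they do \emph{not} ``share the same underlying drawing''; in the equivalent case the definition of ``more grounded'' only gives $\pi_\delta(\tilde c) \subseteq \pi_{\delta'}(\tilde c')$, not equality, so something remains to check even there; and your claim that the containing component satisfies $D' \notin \mathcal{D}_\partial$ is false in general (though harmless for the conclusion, since heredity applies to \emph{any} component of $A - S_c$). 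It is worth noting that the two requirements you struggle with --- that $\pi_{\delta'}(\tilde c') \subseteq S_{c'}$ and that $\pi_{\delta'}(\tilde c')$ lie on the outer face --- are precisely the ones the paper's short argument does not explicitly verify either; so your concern is legitimate, but the elaborate repair-set machinery is not the way to address it.
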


\begin{proof}
Let $\delta'=(\Gamma',\Dcal')$ be a \sdecomp\ that is more grounded than $\delta=(\Gamma,\Dcal)$.
Let $c'\in C(\delta')$. Let us show that $c'$ is $\Hcal$-compatible.
Given that $\delta'$ is more grounded than $\delta$, there is a cell $c\in C(\delta)$ such that $V(\sigma_{\delta'}(c'))\subseteq V(\sigma(c))$.

Let $U=V(\sigma(c))\setminus V(\sigma_{\delta'}(c'))$.
Given that $c$ is $\Hcal$-compatible, there is a set $S_c\subseteq V(\sigma(c))$ containing $\pi_{\delta}(\tilde{c})$ such that $\torso(\sigma(c),S_c)$ has a planar embedding with the vertices of $\pi_{\delta}(\tilde{c})$ on the outer face and such that, for each $D\in\cc(\sigma(c)-S_c)$, $D\in\Hcal$.
Let $S_{c'}:=S_c\setminus U$.
By heredity of $\Hcal$, each connected component $C$ of $\sigma(c)-S_c-U=\sigma_{\delta'}(c')-S_{c'}$ belong to $\Hcal$.
Additionally, we have $\torso(\sigma_{\delta'}(c'),S_{c'})=\torso(\sigma(c)-U,S_{c}\setminus U)=\torso(\sigma(c),S_{C})-U$, so $\torso(\sigma_{\delta'}(c'),S_{c'})$ is planar.
The result follows.
\end{proof}

Combining \autoref{obs_sol_compatible} and \autoref{lem_compatible}, we obtain the following result.

\begin{corollary}\label{cor_compatible}
Let $\Hcal$ be a hereditary graph class, $k\in\Nbbb$, $G$ be a graph, and $W$ be an $r$-wall in $G$ with $r\ge\max\{\sqrt{(k+7)/2}+2,7\}$.
A graph $G$ is $\Hcal^{(k)}$-planar if and only if $G$ has a ground-maximal $\Hcal^{(k)}$-compatible \sdecomp\ $\delta$.
Additionally, we can choose $\delta$ such that the $(r-2)$-central wall of $W$ is grounded in $\delta$.
\end{corollary}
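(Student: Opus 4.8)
The plan is to read the corollary off of \autoref{obs_sol_compatible} and \autoref{lem_compatible}, the only genuine work being the existence of a ground-maximal \sdecomp\ lying ``below'' a given one. First I would dispatch the easy implication: if $G$ admits a ground-maximal $\Hcal^{(k)}$-compatible \sdecomp\ $\delta$, then in particular $\delta$ is an $\Hcal^{(k)}$-compatible \sdecomp\ of $G$, so \autoref{obs_sol_compatible} immediately yields that $G$ is $\Hcal^{(k)}$-planar.

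For the converse, suppose $G$ is $\Hcal^{(k)}$-planar. Since $r\ge\max\{\sqrt{(k+7)/2}+2,7\}$, \autoref{obs_sol_compatible} provides an $\Hcal^{(k)}$-compatible \sdecomp\ $\delta_0$ of $G$ in which the $(r-2)$-central wall $W':=W^{(r-2)}$ is grounded. I would then consider the family of all \sdecomps\ of $G$ that are more grounded than $\delta_0$ (this family is nonempty, as the ``more grounded'' relation is reflexive) and pick a member $\delta$ of it that is maximal with respect to this relation. To justify that such a $\delta$ exists, I would check that the ``more grounded'' relation admits no infinite strictly ascending chains on the \sdecomps\ of a fixed finite graph: a strict step either splits some cell of the current decomposition, strictly increasing the number of cells (which is bounded, since each edge of $G$ lies in the interior of a unique cell and one may assume every $\sigma(c)$ connected, so edgeless cells can be discarded), or it leaves the cell structure intact but strictly enlarges $\ground(\cdot)\subseteq V(G)$; hence the lexicographic potential $(|C(\cdot)|,|\ground(\cdot)|)$ strictly increases along any strict chain and is bounded. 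Finally, since ``more grounded'' is transitive, every \sdecomp\ of $G$ that is more grounded than $\delta$ is also more grounded than $\delta_0$, so the maximality of $\delta$ in this family makes $\delta$ ground-maximal among \emph{all} \sdecomps\ of $G$.

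It remains to verify the two asserted properties of $\delta$. Since $\delta_0$ is $\Hcal^{(k)}$-compatible and $\delta$ is more grounded than $\delta_0$, \autoref{lem_compatible} gives that $\delta$ is $\Hcal^{(k)}$-compatible. For groundedness of $W'$, I would argue that it is preserved under passing to a more grounded decomposition: each cell $c$ of $\delta$ satisfies $\sigma_\delta(c)\subseteq\sigma_{\delta_0}(c^{\circ})$ for some cell $c^{\circ}$ of $\delta_0$, and, since edges meet cell boundaries only in vertices, an edge of $G$ lies in $\sigma_{\delta_0}(c^{\circ})$ only for the unique cell $c^{\circ}$ whose disk contains it; hence any two edges of a cycle of $W'$ lying in two distinct cells of $\delta_0$ must also lie in two distinct cells of $\delta$. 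As $W'$ is $2$-connected with $r-2\ge5$ and every cycle of $W'$ is grounded in $\delta_0$, every cycle of $W'$ is grounded in $\delta$, i.e., $W'$ is grounded in $\delta$. This completes the proof.

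The step I expect to be the main obstacle is the existence of the ground-maximal decomposition below $\delta_0$: one must make precise that the ``more grounded'' preorder on \sdecomps\ of a fixed finite graph is well-founded, which requires careful bookkeeping of how cells and ground vertices evolve (in particular ruling out pathological edgeless cells) so that the monotone potential argument goes through. Everything else is a direct combination of \autoref{obs_sol_compatible} and \autoref{lem_compatible} together with the transitivity of the relation.
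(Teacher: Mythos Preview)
Your approach is exactly the one the paper intends: the corollary is stated as an immediate combination of \autoref{obs_sol_compatible} and \autoref{lem_compatible}, and you correctly supply the missing pieces (existence of a ground-maximal refinement, and preservation under refinement).

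There is one small but genuine gap in your groundedness-preservation argument. You write that ``each cell $c$ of $\delta$ satisfies $\sigma_\delta(c)\subseteq\sigma_{\delta_0}(c^{\circ})$ for some cell $c^{\circ}$ of $\delta_0$'', but the definition of ``more grounded'' only gives vertex containment $V(\sigma_\delta(c))\subseteq V(\sigma_{\delta_0}(c^{\circ}))$, not subgraph containment. Indeed, an edge of $\sigma_\delta(c)$ whose two endpoints are both ground vertices of $\delta_0$ can lie in a cell of $\delta_0$ different from $c^{\circ}$, so your inference ``two edges in distinct cells of $\delta_0$ stay in distinct cells of $\delta$'' is not justified as written. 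The fix is to use the slightly stronger property actually established inside the proof of \autoref{obs_sol_compatible}: for every cell $c'$ of $\delta_0$ the \emph{induced} subgraph $G[V(\sigma_{\delta_0}(c'))]$ contains no cycle of $W'$ (this is what \autoref{cl_compat} gives via the sets $B_D$). Then if some cycle $C$ of $W'$ had all its edges in one cell $c$ of $\delta$, you would get $V(C)\subseteq V(\sigma_{\delta_0}(c^{\circ}))$ and hence $C\subseteq G[V(\sigma_{\delta_0}(c^{\circ}))]$, a contradiction. With this adjustment your argument goes through cleanly.
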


\subsection{Comparing \sdecomps}\label{subsec_compare}

In this subsection, we essentially want to prove that, given two \sdecomps\ $\delta_1$ and $\delta_2$ on the same graph, if $\delta_1$ is \emph{ground-maximal} and $\delta_2$ is \emph{well-linked}, then $\delta_1$ is always \emph{more grounded} than $\delta_2$ (see \autoref{cor_well-ground}).

\paragraph{Well-linkedness.}
Let $\rho=(\Gamma,\Dcal)$ be a rendition of $(G,\Omega)$.
We say that a cell $c\in C(\delta)$ is \emph{well-linked} if there are $|\tilde{c}|$ vertex-disjoint paths from $\pi_\delta(\tilde{c})$ to $V(\Omega)$.
We say that $\delta$ is \emph{well-linked} if every cell $c\in C(\delta)$ is well-linked.

The following result is a corollary of \cite{SauST24amor}.

\begin{proposition}[Lemma 3, \cite{SauST24amor}]\label{lem_rend_to_min_or_max}
If a society $(G,\Omega)$ has a rendition, then $(G,\Omega)$ has a well-linked rendition.
\end{proposition}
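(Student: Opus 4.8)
The plan is to take a rendition of $(G,\Omega)$ that is minimal with respect to a suitable complexity measure and to show that minimality already forces well-linkedness. Up to homeomorphism there are only finitely many combinatorial types of renditions of $(G,\Omega)$, so I may choose one, say $\rho=(\Gamma,\Dcal)$, minimizing $\mu(\rho):=\sum_{c\in C(\rho)}|\tilde c|$, the total number of incidences between cells and their boundary nodes. (If $\mu(\rho)=0$ there is nothing to prove: every cell then has empty boundary, so well-linkedness is vacuous.) I claim this $\rho$ is well-linked; the argument is a ``reduce a bad cell'' surgery.

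Suppose some cell $c$ with $k:=|\tilde c|\in\{1,2,3\}$ is not well-linked. The structural fact we use is that for every cell $c$ we have $N_G(v)\subseteq V(\sigma(c))$ for every $v\in V(\sigma(c))\setminus\pi_\rho(\tilde c)$ — an edge incident to such a $v$ lies in the interior of the disk $\Delta_c$ by the axioms of a \sdecomp\ — equivalently, $\pi_\rho(\tilde c)$ separates $V(\sigma(c))\setminus\pi_\rho(\tilde c)$ from the rest of $G$. Hence ``$c$ is not well-linked'' says, via Menger's theorem for vertex-disjoint paths between two vertex sets, that the minimum $\pi_\rho(\tilde c)$--$V(\Omega)$ separator in $G$ has size $k'<k$, so $k'\le 2$. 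Fix such a separator $S$, and let $R$ be the union of $S$ with all components of $G-S$ meeting $\pi_\rho(\tilde c)$; then $V(\sigma(c))\subseteq R$ and $V(\Omega)\cap R\subseteq S$.

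The surgery is the following. Using that the ``skeleton'' of $\rho$ — the subdrawing consisting of the nodes of $\rho$ and the arcs of $\Gamma$ not contained in any cell — is a genuine plane graph without crossings, together with a Jordan-curve argument, I produce a $\rho$-aligned closed disk $\Delta'\subseteq\Delta$ (where $\Delta$ is the disk of $\rho$) whose boundary meets $\Gamma$ exactly in the $k'\le 2$ points $\pi_\rho^{-1}(S)$, such that $\Delta'$ contains $c$ and the whole subdrawing of $G[R\setminus S]$ and of all cells meeting $R\setminus S$, while the subdrawing of $G-(R\setminus S)$ lies outside $\Delta'$. Since crossings are allowed inside a cell, I may now erase the drawing inside $\Delta'$ and redraw $G[R]$ inside $\Delta'$ with the $\le 2$ vertices of $S$ on $\bd(\Delta')$, declaring $\Delta'$ a single new cell $c'$ with $|\tilde{c'}|=k'$. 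Nothing outside $\Delta'$ changes, so the result $\rho^{\ast}$ is again a rendition of $(G,\Omega)$; but $\rho^{\ast}$ is obtained from $\rho$ by deleting all cells contained in $\Delta'$ (in particular $c$, contributing $k$ incidences) and adding the single cell $c'$ (contributing $k'<k$ incidences), so $\mu(\rho^{\ast})<\mu(\rho)$, contradicting the choice of $\rho$. Hence no such bad cell exists and $\rho$ is well-linked.

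The main obstacle is the topological step: realizing the small separator $S$ by a short closed curve that bounds a $\rho$-aligned disk isolating exactly the $\tilde c$-side of the separation. This is where one genuinely needs the planarity of the cell-free part of $\Gamma$ and the disk-like arrangement of the cells; the cases $|S|=2$, $|S|=1$, $|S|=0$ should be treated separately (the last meaning $\pi_\rho(\tilde c)$ lies in a component of $G$ disjoint from $V(\Omega)$, which lets one absorb that whole component into a single boundaryless cell), and minor care is needed to ensure $S$ can be taken among ground vertices of $\rho$. The argument is essentially the standard ``a minimal separator in a plane graph is non-crossing'' argument familiar from proofs around the Flat Wall theorem.
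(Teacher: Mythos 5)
First, a remark on scope: the paper does not prove this statement at all — it imports it verbatim as Lemma~3 of \cite{SauST24amor} — so there is no in-paper argument to compare against. Your overall strategy (pick a rendition extremal for a counting measure, apply Menger to a non-well-linked cell $c$ to get a separator $S$ with $|S|<|\tilde c|$, and re-render the $\tilde c$-side of the separation as a single new cell with boundary $S$) is indeed the standard route to results of this type, so the skeleton is right.

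However, the surgery as written has a genuine gap, concentrated exactly where you defer to ``a Jordan-curve argument.'' The disk $\Delta'$ you describe cannot exist in general: you require every cell meeting $R\setminus S$ to lie entirely inside $\Delta'$, the drawing of $G-(R\setminus S)$ to lie entirely outside, and $\bd(\Delta')$ to meet $\Gamma$ only in $S$. But a cell $d$ may \emph{straddle} the separation: $V(\sigma(d))$ may meet both $R\setminus S$ and $V(G)\setminus R$, with $S$ meeting $\sigma(d)$ only in one interior vertex or one boundary node. Such a $\Delta_d$ can be neither inside nor outside $\Delta'$ — a node $t\in\tilde d\cap(V(G)\setminus R)$ would be forced into the interior of $\Delta'$ even though your construction requires it to stay outside. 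The correct surgery must additionally \emph{split} every straddling cell into its $R$-part (absorbed into the new cell) and its complementary part(s), and one must then verify (i) that each resulting piece still has at most three boundary nodes — this rests on the non-obvious facts that at most one vertex of $S$ can lie in the interior of any single cell (two disjoint paths cannot both traverse a cell with at most three boundary nodes) and that a straddling cell must expose a boundary node on the $R$-side — and (ii) that your measure $\mu$ still strictly decreases after these splits, since splitting a three-boundary cell can a priori produce pieces with total boundary four. None of this bookkeeping appears in your proof, and the fallback assertion that $S$ ``can be taken among ground vertices'' is likewise unproved and is not a routine reduction. In short, the extremal framework is correct, but the heart of the lemma — why a small graph-theoretic separator is realizable as a curve compatible with the cell structure, and why the extremal measure survives the cell splits this forces — is missing.
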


\paragraph{Intersection and crossing of cells.}
Let $\delta=(\Gamma,\Dcal)$ and $\delta'=(\Gamma',\Dcal')$ be two \sdecomps\ of a graph $G$.
Let $c\in C(\delta)$ and $c'\in C(\delta')$ be two cells.
We say that \emph{$c$ and $c'$ intersect} if $(V(\sigma(c))\cap V(\sigma_{\delta'}(c')))\setminus(\pi_\delta(\tilde{c})\cap\pi_{\delta'}(\tilde{c}'))\neq\emptyset$.
We say that \emph{$c$ and $c'$ cross} if $c$ and $c'$ intersect but that neither of them is contained in the other.

\smallskip
Our goal in this subsection is to prove that if $c$ is a ground-maximal cell in some \sdecomp\ $\delta$ and that $c'$ is some well-linked cell in some rendition $\delta'$, such that $c$ and $c'$ intersect, then $c$ is contained in $c'$ (see \autoref{cell_in_cell2}).
We first define a splitting operation on a \sdecomp\ that will be extensively used in this subsection.

\paragraph{Splitting a cell.}
Let $\delta=(\Gamma,\Dcal)$ and $\delta'=(\Gamma',\Dcal')$ be two \sdecomps\ of a graph $G$.
Let $c\in C(\delta)$ be a cell and $v\in V(\sigma(c))$ be a vertex such that at least two connected components of $\sigma(c)-v$ contains a vertex of $\pi_{\delta}(\tilde{c})$.
Such a vertex $v$ is called a \emph{cut-vertex of $c$ in $\delta$}.
We say that $\delta^*$ is \emph{obtained from $\delta$ by splitting $c$ at $v$} if $\delta'$ can be constructed from $\delta-(V(\sigma(c))\setminus\pi_{\delta(\tilde{c})})=(\Gamma',\Dcal')$ as follows.

Let $C_1$ be a disjoint union of components of $\sigma(c)-v$ such that $C_1$ and $C_2:=\sigma(c)-v-V(C_1)$ are both non-empty.
Let $A_1=V(C_1)\cup\{v\}$ and $A_2=V(\sigma(c))\setminus V(C_1)$.
For $i\in[2]$, let $B_i=A_i\cap\pi_{\delta}(\tilde{c})$. Note that $1\le|B_i|\le 2$.

We set $N(\delta^*)=N(\delta')\cup\{x\}$ for some arbitrary point $x$ contained in $c$.
We set $\Dcal^*=\Dcal'\cup\{\Delta_{c_1},\Delta_{c_2}\}$, where $c_1,c_2\subseteq c$ be two new cells such that $\tilde{c}_1=\{x\}\cup\pi_{\delta}^{-1}(B_1)$ and $\tilde{c}_2=\{x\}\cup\pi_{\delta}^{-1}(B_2)$.
Then $\Gamma^*$ is obtained from $\Gamma'$ by arbitrarily drawing $G[A_i]$ in $c_i$, for $i\in[2]$.
See \autoref{fig_split} for an illustration.

\begin{figure}[h]
\centering
\includegraphics[scale=0.8]{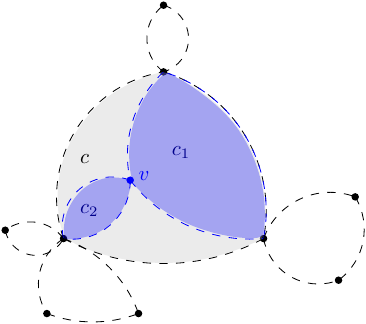}
\caption{The blue \sdecomp\ is obtained from the gray one by splitting $c$ at $v$.}
\label{fig_split}
\end{figure}

Observe that $\delta'$ is well-defined and that it is more grounded than $\delta$.
Thus, $c$ is not ground-maximal. 

\smallskip
In the following lemma, we prove that if two cells $c$ and $c'$ cross and that $c'$ has exactly one vertex of its boundary in $c$, then $c$ is not ground-maximal.

\begin{lemma}\label{lem_cut_vtx}
Let $\delta=(\Gamma,\Dcal)$ and $\delta'=(\Gamma',\Dcal')$ be two \sdecomps\ of a graph $G$.
Let $c\in C(\delta)$ and $c'\in C(\delta')$ be two cells that cross and suppose that $v\in V(G)$ is the unique vertex of $\pi_{\delta'}(\tilde{c}')$ in $\sigma(c)- \pi_{\delta}(\tilde{c})$.
Then $v$ is a cut-vertex of $c$ in $\delta$, and therefore, $c$ is not ground-maximal.
\end{lemma}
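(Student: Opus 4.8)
The plan is to run the standard ``boundary of a cell is a separator'' argument for \sdecomps\ twice -- once for $c'$ inside $\delta'$, once for $c$ inside $\delta$ -- and then feed the crossing hypothesis into the resulting picture. First I would record the elementary fact that $S':=\pi_{\delta'}(\tilde c')$ separates $G$: since every edge of $G$ lies in the interior of a unique disk of $\delta'$, every edge incident to a vertex drawn in the interior of $\Delta_{c'}$ must itself lie in $\Delta_{c'}$, so there is no edge of $G$ between the interior vertices $A:=V(\sigma_{\delta'}(c'))\setminus S'$ of $c'$ and the outside vertices $B:=V(G)\setminus V(\sigma_{\delta'}(c'))$; thus $(A\cup S',\,B\cup S')$ is a separation of $G$ with $L'\cap R'=S'$ and no $A$--$B$ edge. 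The same reasoning inside $\delta$, together with the hypothesis that $v\in V(\sigma(c))\setminus\pi_\delta(\tilde c)$ is drawn in the interior of $\Delta_c$, gives $N_G(v)\subseteq V(\sigma(c))$, and more generally that $\pi_\delta(\tilde c)$ separates every vertex drawn in the interior of $\Delta_c$ from the rest of $G$.

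Next I would restrict the separation to $\sigma(c)$, writing $V(\sigma(c))=A_c\sqcup S'_c\sqcup B_c$ with $A_c=A\cap V(\sigma(c))$, $B_c=B\cap V(\sigma(c))$, $S'_c=S'\cap V(\sigma(c))$ and no edge of $\sigma(c)$ between $A_c$ and $B_c$. The hypothesis that $v$ is the \emph{unique} vertex of $\pi_{\delta'}(\tilde c')$ lying in $V(\sigma(c))\setminus\pi_\delta(\tilde c)$ forces $S'_c=\{v\}\cup Z_0$ with $Z_0\subseteq\pi_\delta(\tilde c)$. I would then unpack ``$c$ and $c'$ cross'': from $c\not\subseteq c'$ there is a vertex of $\sigma(c)$ outside $\sigma_{\delta'}(c')$, so $B_c\neq\emptyset$ meets $\sigma(c)$; from $c'\not\subseteq c$ and the (WLOG) connectivity of $\sigma_{\delta'}(c')$, any path in $\sigma_{\delta'}(c')$ from $v$ to a vertex outside $V(\sigma(c))$ must leave $\Delta_c$ and hence pass through $\pi_\delta(\tilde c)$, placing a boundary vertex of $c$ on the $L'$-side; and the fact that $c$ and $c'$ \emph{intersect} produces a vertex of $A_c$ (or else exhibits $v$ itself as the witness). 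Combining these observations shows that the boundary $\pi_\delta(\tilde c)$ is split between the two parts of $\sigma(c)$ that are cut off by $v$.

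It then remains to promote this to the statement that at least two components of $\sigma(c)-v$ each contain a vertex of $\pi_\delta(\tilde c)$, i.e.\ that $v$ is a cut-vertex of $c$ in $\delta$ in the sense of the definition preceding the lemma; once that is established, the ``splitting $c$ at $v$'' construction yields a \sdecomp\ of $G$ that is more grounded than $\delta$ (as observed just before the lemma), so $c$, and hence $\delta$, is not ground-maximal. The main obstacle is precisely this last step: the purely combinatorial separation only gives that the whole set $S'_c=\{v\}\cup Z_0$ separates $A_c$ from $B_c$ inside $\sigma(c)$, which may have size larger than $1$, so one must argue that the \emph{single} vertex $v$ already does the separating. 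This is where the planar drawing has to be used -- the nodes $\pi_\delta(\tilde c)$ sit in a cyclic order on $\bd(\Delta_c)$ and the trace of $\sigma_{\delta'}(c')$ inside $\Delta_c$ touches $\bd(\Delta_c)$ only at these nodes, with its unique interior ``entry point'' being $v$ -- to reduce the effective separator inside $\sigma(c)$ from $\{v\}\cup Z_0$ down to $\{v\}$.
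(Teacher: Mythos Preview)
Your route is essentially the paper's, only much longer. The paper does the whole thing in four sentences with no planarity: set $A=\pi_\delta(\tilde c)\setminus V(\sigma_{\delta'}(c'))$ and $A'=\pi_\delta(\tilde c)\cap V(\sigma_{\delta'}(c'))$, observe both are nonempty (from crossing and connectivity), and note that any $A$--$A'$ path in $\sigma(c)$ goes from outside $V(\sigma_{\delta'}(c'))$ to inside and hence meets $\pi_{\delta'}(\tilde c')$; since, as the proof writes, ``$v$ is the unique vertex of $\pi_{\delta'}(\tilde c')$ in $V(\sigma(c))$'', every such path contains $v$, and $v$ is a cut-vertex. All of your first two paragraphs are an unpacking of exactly this.

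Where you diverge is in the last paragraph, and there your instinct is right but your proposed fix is wrong. You correctly notice that the hypothesis only says $v$ is unique in $V(\sigma(c))\setminus\pi_\delta(\tilde c)$, so a priori $Z_0=\pi_{\delta'}(\tilde c')\cap\pi_\delta(\tilde c)$ may be nonempty and the separator inside $\sigma(c)$ is $\{v\}\cup Z_0$, not $\{v\}$. But invoking the planar drawing of $\Delta_c$ cannot shrink this separator to $\{v\}$. Consider $G$ the star on $\{a,z,v,u\}$ with center $z$; take $\sigma(c)$ to be the path $a\text{--}z\text{--}v$ with $\pi_\delta(\tilde c)=\{a,z\}$, and $\sigma_{\delta'}(c')$ the path $v\text{--}z\text{--}u$ with $\pi_{\delta'}(\tilde c')=\{v,z\}$. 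Then $c$ and $c'$ cross, $v$ is the unique vertex of $\pi_{\delta'}(\tilde c')$ in $V(\sigma(c))\setminus\pi_\delta(\tilde c)$, yet $\sigma(c)-v$ is the single edge $az$ and $v$ is \emph{not} a cut-vertex of $c$. Since $G$ is a tree, no planar-geometric argument can manufacture the separation you want. The paper's short proof is in fact written under the stronger reading ``unique in $V(\sigma(c))$'' (i.e.\ $Z_0=\emptyset$), and that is what makes the last step go through; your plan to ``use the planar drawing'' to cover the case $Z_0\neq\emptyset$ is a dead end.
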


\begin{proof}
Let $A=\pi_{\delta}(\tilde{c})\setminus V(\sigma_{\delta'}(c'))\ne\emptyset$ and $A'=\pi_{\delta}(\tilde{c})\cap V(\sigma_{\delta'}(c'))\ne\emptyset$.
Let $P$ be an $(A-A')$-path in $V(\sigma(c))$ (it exists by the connectivity of $\sigma(c)$).
Given that one endpoint of $P$ is in $V(\sigma_{\delta'}(c'))$ and that the other is not, it implies that $P$ intersects $\pi_{\delta'}(\tilde{c}')$.
Given that $v$ is the unique vertex of $\pi_{\delta'}(\tilde{c}')$ in $V(\sigma(c))$, it implies that $v\in V(P)$ for all $(A-A')$-paths $P$.
We conclude that at least two connected components of $\sigma(c)-v$ ocontains a vertex of $\pi_{\delta}(\tilde{c})$, and thus, $v$ is a cut-vertex of $c$.
\end{proof}

We now prove that a ground-maximal cell and a well-linked cell cannot cross. 

\begin{lemma}\label{cell_in_cell}
Let $(G,\Omega)$ be a society.
Let $\delta$ be a \sdecomp\ of $G$ and let $\delta'$ be a rendition  of $(G,\Omega)$.
Let $c\in C(\delta)$ be a ground-maximal cell such that $V(\Omega)\cap (V(\sigma(c))\setminus\pi_{\delta}(\tilde{c}))=\emptyset$ and 
$c'\in C(\delta')$ be a well-linked cell.
Then $c$ and $c'$ do not cross.
\end{lemma}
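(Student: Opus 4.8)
The plan is to argue by contradiction: assume $c$ and $c'$ cross and build from $\delta$ a \sdecomp\ of $G$ that is strictly more grounded than $\delta$, contradicting the ground-maximality of $c$. Throughout, write $B_c:=\pi_\delta(\tilde c)$, $B_{c'}:=\pi_{\delta'}(\tilde c')$, $I_c:=V(\sigma(c))\setminus B_c$, and $O_c:=V(G)\setminus V(\sigma(c))$, and use the basic fact that $(V(\sigma(c)),\,O_c\cup B_c)$ is a separation of $G$ of order $|B_c|\le 3$: no edge incident with an interior vertex of $\sigma(c)$ can leave the disk of $c$, so there is no edge between $I_c$ and $O_c$. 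The entire argument is organized around the set $Z:=B_{c'}\cap I_c$ of boundary nodes of $c'$ lying strictly inside $\sigma(c)$; note $|Z|\le|B_{c'}|\le 3$.

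The cases $|Z|\le 1$ are short. If $|Z|=1$, the unique vertex of $Z$ is exactly the vertex whose existence is hypothesized in \autoref{lem_cut_vtx}, so that lemma already yields that $c$ is not ground-maximal. If $|Z|=0$, then every vertex of $B_{c'}$ lying in $V(\sigma(c))$ lies in $B_c$; using the crossing hypothesis one locates a vertex $w$ in the ``chunk'' $V(\sigma(c))\cap V(\sigma(c'))$ that is interior to $\sigma(c')$ (a common vertex outside $B_c\cap B_{c'}$ cannot be in $B_{c'}$ under the case assumption). Every vertex of the chunk that is interior to both $\sigma(c)$ and $\sigma(c')$ has all its neighbours inside the chunk, and the remaining boundary of the chunk sits in $(B_c\cap V(\sigma(c')))\cup(B_{c'}\cap V(\sigma(c)))\subseteq B_c$; so the chunk is attached to the rest of $G$ only through at most three vertices of $B_c$. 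If the chunk meets $I_c$ we can split $c$ into this chunk and its complement in $\sigma(c)$, each bounded by at most three vertices of $B_c$, getting a strictly more grounded \sdecomp; and if the chunk is contained in $B_c$, then $V(\sigma(c))\cap V(\sigma(c'))\subseteq B_c$, from which one checks that $c$ and $c'$ do not actually cross. Either way we reach a contradiction.

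The main case, and the main obstacle, is $|Z|\in\{2,3\}$, and this is where the two special hypotheses are used. Since $c'$ is well-linked there are $|B_{c'}|$ vertex-disjoint paths from $B_{c'}$ to $V(\Omega)$; restrict the $|Z|$ of them starting in $Z$ to their first exit from $V(\sigma(c))$. Because $V(\Omega)\cap I_c=\emptyset$ and there is no edge from $I_c$ to $O_c$, each such path leaves $\sigma(c)$ through $B_c$, and disjointness forces these exit vertices to be distinct; hence $|B_c|\ge|Z|$ and we obtain $|Z|$ vertex-disjoint $(Z,B_c)$-paths inside $\sigma(c)$. I would then cross the decomposition $\delta'$ with the separation $(V(\sigma(c)),\,O_c\cup B_c)$: the traces of the cells of $\delta'$ on $V(\sigma(c))$ form a \sdecomp\ $\delta''$ of $\sigma(c)$ which properly refines the single cell $c$, since the $\ge 2$ vertices of $Z\subseteq I_c$ appear on boundaries of cells of $\delta''$. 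Using the $(Z,B_c)$-linkage to push the vertices of $B_c$ onto cell boundaries of $\delta''$ (splitting cells of $\delta''$ at cut-vertices along the linkage as needed), one splices $\delta''$ back into $\delta$ in place of the cell $c$, obtaining a \sdecomp\ $\delta^\ast$ of $G$. Every cell of $\delta^\ast$ is contained in a cell of $\delta$ (an untouched cell of $\delta$, or a cell of $\delta''\subseteq c$) and at least one is strictly smaller, so $\delta^\ast$ is strictly more grounded than $\delta$, contradicting the ground-maximality of $c$.

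The hard part is this last splicing step: verifying that the trace $\delta''$ of $\delta'$ on $V(\sigma(c))$ is a genuine \sdecomp\ (all cell boundaries of size at most three, every edge drawn inside a cell), that the vertices of $B_c$ can indeed be grounded within $\delta''$ via the linkage, and that the resulting $\delta^\ast$ is genuinely more grounded rather than merely equivalent to $\delta$. I expect the $V(\Omega)$-avoidance hypothesis to be exactly what makes the $Z$-paths stay inside $\sigma(c)$ until reaching $B_c$, so that they supply a genuine $2$- or $3$-cut of $\sigma(c)$ through which the splice can be made, and I expect well-linkedness of $c'$ to be invoked once more to exclude the degenerate situation in which the trace of $c'$ inside $\sigma(c)$ ``wraps around'' and cannot be enclosed in a disk disjoint from the remainder of $\sigma(c)$.
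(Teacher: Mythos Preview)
Your overall strategy matches the paper's: assume a crossing, case-split on $|Z|=|B_{c'}\cap I_c|$, and manufacture a \sdecomp\ strictly more grounded than $\delta$. The $|Z|=1$ case via \autoref{lem_cut_vtx} is exactly what the paper does. However, there are two places where your plan diverges from (and is weaker than) the paper's execution.

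First, the paper eliminates the cases $|Z|=0$ and $|Z|=3$ before the case analysis. By a symmetric connectivity argument it shows that at least one vertex of $\tilde c'$ lies in $I_c$ \emph{and} at least one vertex of $\tilde c'$ lies outside $V(\sigma(c))$; so $1\le|Z|\le 2$. Your separate treatment of $|Z|=0$ is therefore unnecessary, and your ``main case'' $|Z|\in\{2,3\}$ reduces to $|Z|=2$.

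Second, and more importantly, for $|Z|=2$ your ``trace of $\delta'$ on $V(\sigma(c))$'' is where the real difficulty sits, and the paper does \emph{not} build such a trace. Restricting cells of $\delta'$ that cross $c$ does not obviously yield a \sdecomp\ (boundaries can exceed three, edges may be cut). Instead, the paper extracts more from the well-linked paths: the two disjoint paths from $a_1,a_2\in Z$ to $V(\Omega)$ can be taken to avoid the interior of $c'$, so they hit $B:=B_c\setminus V(\sigma(c'))$ at distinct vertices $b_1,b_2$. This forces $|B|=2$ and hence $|B'|:=|B_c\cap V(\sigma(c'))|=1$; the unique $b_3\in B'$ is then the sole vertex of $\tilde c$ in the interior of $c'$, so by \autoref{lem_cut_vtx} (with the roles of $c,c'$ swapped) $b_3$ is a cut-vertex of $c'$. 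Likewise, the cells $c_1,c_2\in C(\delta')$ containing the edges $b_1'b_1,b_2'b_2$ (the last edges of $P_1,P_2$ inside $\sigma(c)$) either lie entirely in $c$ or have $b_i$ as a cut-vertex. Splitting $c',c_1,c_2$ at $b_3,b_1,b_2$ yields a rendition $\delta''$ more grounded than $\delta'$ in which \emph{no cell crosses $c$}; only then does one take the cells of $\delta''$ contained in $c$ and splice them into $\delta$ in place of $c$. The point you were missing is that the linkage does not merely connect $Z$ to $B_c$ — it pins down $|B'|=1$, which is precisely what lets you invoke \autoref{lem_cut_vtx} a second time and make the splice rigorous.
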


\begin{proof}
Assume toward a contradiction that $c$ and $c'$ cross.
Therefore, by connectivity of $\sigma(c)$ and $\sigma_{\delta'}(c')$, there is at least one vertex of $\pi_{\delta}(\tilde{c})$ in $V(\sigma_{\delta'}(c'))\setminus \pi_{\delta'}(\tilde{c}')$ and at least one in $V(\sigma(c))\setminus V(\sigma_{\delta'}(c'))$.
Similarly, there is at least one vertex of $\pi_{\delta'}(\tilde{c}')$ in $V(\sigma(c))\setminus \pi_{\delta}(\tilde{c})$ and at least one in $V(\sigma_{\delta'}(c'))\setminus V(\sigma(c))$.
Given that $|\tilde{c}|\le3$ and $|\tilde{c}'|\le3$, we can distinguish two cases:
\begin{itemize}
\item {\bf Case 1:} there is exactly one vertex of $\pi_{\delta'}(\tilde{c}')$ in $\sigma(c)- \pi_{\delta}(\tilde{c})$ (\autoref{fig_case1}).
\item {\bf Case 2:} there are exactly two vertices of $\pi_{\delta'}(\tilde{c}')$ in $\sigma(c)- \pi_{\delta}(\tilde{c})$ (\autoref{fig_case2}).
\end{itemize}

\emph{Assume first that Case 1 happens.} 
\begin{figure}[h]
\centering
\includegraphics[scale=0.8]{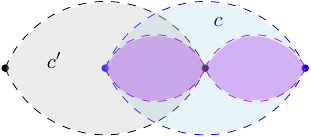}
\caption{Case 1 of \autoref{cell_in_cell}.}
\label{fig_case1}
\end{figure}
Let $v$ be the unique vertex of $\pi_{\delta'}(\tilde{c}')$ in $\sigma(c)- \pi_{\delta}(\tilde{c})$.
By \autoref{lem_cut_vtx}, $v$ is a cut-vertex of $c$ in $\delta$, and therefore $c$ is not ground-maximal.
Hence, Case 1 does not apply by the maximality of $c$.

\emph{Assume that Case 2 applies.}
\begin{figure}[h]
\centering
\includegraphics[scale=1]{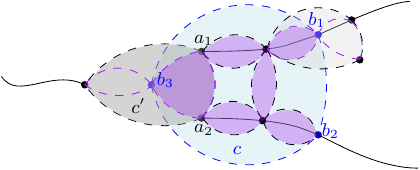}
\caption{Case 2 of \autoref{cell_in_cell}.}
\label{fig_case2}
\end{figure}
Let $a_1$ and $a_2$ be the vertices of $\pi_{\delta'}(\tilde{c}')$ in $\sigma(c)- \pi_{\delta}(\tilde{c})$.
Let $B=\pi_{\delta}(\tilde{c})\setminus V(\sigma_{\delta'}(c'))$ and $B'=\pi_{\delta}(\tilde{c})\cap V(\sigma_{\delta'}(c'))$.
Remember that $1\le|B|,|B'|\le2$.
Given that $c'$ is well-linked, there are two vertex-disjoint paths $P_1$ and $P_2$ in $G-V(\sigma(c'))$, the first one from $a_1$ to $V(\Omega)$ and the other from $a_2$ to $V(\Omega)$.
Given that $a_1,a_2\in V(\sigma(c))$ and that $V(\Omega)\cap (V(\sigma(c))\setminus \pi_{\delta}(\tilde{c}))=\emptyset$, it implies that $P_1$ and $P_2$ intersect $B$, say in $b_1$ and $b_2$ respectively.
Given that $B=\{b_1,b_2\}$, we thus have $|B'|=1$. By \autoref{lem_cut_vtx}, the unique vertex $b_3$ in $B'$ is a cut-vertex for $c'$ in $\delta'$.
Given that every edge of $G$ is contained in a cell of $\delta'$, for $i\in[2]$, there is a cell $c_i\in C(\delta')$ containing the edge $b_i'b_i$ where $b_i'$ is the neighbor of $b_i$ in $V(P_i)\cap V(\sigma(c))$.

Observe that $c_1$ and $c_2$ are distinct cells since otherwise we would have 
$|\pi_{\delta'}(\tilde{c_1})|\ge|\pi_{\delta'}(\tilde{c_1})\cap V(P_1)|+|\pi_{\delta'}(\tilde{c_1})\cap V(P_2)|\ge 4$.
Therefore, for $i\in[2]$,
either $V(\sigma_{\delta'}(c_i))\subseteq V(\sigma(c))$, or by \autoref{lem_cut_vtx}, $b_i$ is a cut-vertex for $c_i$ in $\delta'$.
Let $\delta''$ be the rendition obtained from $\delta'$ by splitting $c'$ at $b_3$, $c_1$ at $b_1$, and $c_2$ at $b_2$ (only if those are cut-vertices for the last two cases).
Note that $\delta''$ is more grounded that $\delta'$ and that no cell $c''\in C(\delta'')$ crosses $c$.
Let $\Ccal:=\{c''\in C(\delta'')\mid c'' \text{ contained in } c\}$ and $\Delta''$ be a $\delta''$-aligned disk containing exactly the cells of $\Ccal$.
Up to homeomorphism, we may assume that $\Delta''=\Delta_c$ and that $\pi_{\delta''}(v)=\pi_{\delta}(v)$ for all $v\in \bd(\Delta'')\cap N(\delta'')=\bd(\Delta_c)\cap N(\delta)$.
Then we define $\delta^*$ to be the \sdecomp\ of $G$
that is equal to $\delta''$ when restricted to $c$, and equal to $\delta$ otherwise.
$\delta^*$ is more grounded than $\delta$, and, in particular, $c$ is not ground-maximal.
Hence, Case 2 does not apply by maximality of $c$.
This contradiction concludes the proof.
\end{proof}

\begin{lemma}\label{cell_in_cell2}
Let $(G,\Omega)$ be a graph.
Let $\delta=(\Gamma,\Dcal)$ be a \sdecomp\ of $G$ and let $\delta'=(\Gamma',\Dcal')$ be a rendition  of $(G,\Omega)$.
Let $c\in C(\delta)$ be a ground-maximal cell such that $V(\Omega)\cap (V(\sigma(c))\setminus\pi_{\delta}(\tilde{c}))=\emptyset$.
Suppose that every cell 
$c'\in C(\delta')$ that intersects $c$ is well-linked.
Then $c$ is contained in a cell $c'\in C(\delta')$.
\end{lemma}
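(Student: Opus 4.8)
The plan is to argue by contradiction. Assume that $c$ is contained in no cell of $\delta'$; I will construct a \sdecomp\ $\delta^{*}$ of $G$ that is strictly more grounded than $\delta$, which contradicts the ground-maximality of $c$ and hence proves the lemma.

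The first step is to pin down how the cells of $\delta'$ meet $c$. Since, by hypothesis, every cell of $\delta'$ that intersects $c$ is well-linked, \autoref{cell_in_cell} applies to each such cell (its hypotheses on $c$ are exactly the ones we have here), so no cell of $\delta'$ that intersects $c$ crosses $c$. As ``crossing'' means ``intersecting while being incomparable'', and as we assumed $c$ is contained in no cell of $\delta'$, this forces every cell of $\delta'$ that intersects $c$ to be \emph{contained} in $c$. Let $\Ccal=\{c'\in C(\delta')\mid c'\text{ is contained in }c\}$. Using that $\sigma_{\delta}(c)$ is connected (which we may assume throughout), I would then show that the cells of $\Ccal$ cover $\sigma_{\delta}(c)$ up to a bounded defect: every edge $e\in E(\sigma_{\delta}(c))$ lies in some cell $c'$ of $\delta'$, both endpoints of $e$ then lie in $V(\sigma_\delta(c))\cap V(\sigma_{\delta'}(c'))$, and unless both endpoints belong to $\pi_\delta(\tilde{c})$ this witnesses that $c'$ intersects $c$, hence $c'\in\Ccal$. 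Consequently $\bigcup_{c'\in\Ccal}\sigma_{\delta'}(c')$ contains $\sigma_{\delta}(c)$ minus at most $\binom{|\tilde{c}|}{2}\le 3$ ``boundary'' edges joining two vertices of $\pi_\delta(\tilde{c})$, and (by connectivity of $\sigma_\delta(c)$) it contains every vertex of $\sigma_{\delta}(c)$.

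The second step reuses the surgery from the end of the proof of \autoref{cell_in_cell}. Delete from $\delta$ every vertex and edge of $\sigma_{\delta}(c)$ except the nodes $\pi_\delta(\tilde{c})$, turning $\Delta_{c}$ into an empty disk with boundary nodes $\pi_\delta(\tilde{c})$; then redraw $\sigma_\delta(c)$ inside $\Delta_{c}$ according to the cell structure induced by $\Ccal$, together with at most three tiny cells of boundary size $2$ carrying the boundary edges. Concretely, take a $\delta'$-aligned disk $\Delta$ containing exactly the cells of $\Ccal$; any ``hole'' of $\Delta$ is bounded by at most $|\tilde{c}|\le 3$ nodes, all lying in $\pi_\delta(\tilde{c})$, and contains only vertices outside $V(\sigma_\delta(c))$, which stay drawn outside $\Delta_{c}$ in $\delta^{*}$ and attach to its boundary exactly as they did in $\delta$, so such holes can be capped off harmlessly. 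Identifying $\Delta$ with $\Delta_{c}$ up to a homeomorphism respecting boundary nodes yields a valid \sdecomp\ $\delta^{*}$ which agrees with $\delta$ outside $\Delta_{c}$ and properly refines $c$ inside it. Every cell of $\delta^{*}$ is contained in a cell of $\delta$, and since $c$ lies in no cell of $\delta'$ the refinement is proper (some cell of $\delta^{*}$ inside $\Delta_{c}$ has strictly fewer vertices than $\sigma_\delta(c)$; the degenerate case $|V(\sigma_\delta(c))|\le 3$ is dealt with separately). Thus $\delta^{*}$ is strictly more grounded than $\delta$, contradicting the ground-maximality of $c$.

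The main obstacle is the topological bookkeeping of this last step---exactly as in the corresponding place in the proof of \autoref{cell_in_cell}. One must verify that the restriction of $\delta'$ to $\sigma_{\delta}(c)$ can genuinely be realized as a \sdecomp\ inside the single disk $\Delta_{c}$: that the cells of $\Ccal$ and the few boundary edges fill a disk whose outer boundary meets the drawing precisely in $\pi_\delta(\tilde{c})$, and that every ``foreign'' vertex (one not in $V(\sigma_\delta(c))$) ends up in a cappable hole with neighborhood inside $\pi_\delta(\tilde{c})$. This relies essentially on the bound $|\tilde{c}|\le 3$, on the connectivity of $\sigma_\delta(c)$ and of the graphs $\sigma_{\delta'}(c')$, and, where needed, on the cell-splitting operation introduced before \autoref{lem_cut_vtx}.
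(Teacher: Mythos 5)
Your proposal is correct and follows essentially the same route as the paper: invoke \autoref{cell_in_cell} to rule out crossings, reduce to the case where every intersecting cell of $\delta'$ is contained in $c$, replace the interior of $\Delta_c$ by the restriction of $\delta'$ to obtain a more grounded \sdecomp, and contradict the ground-maximality of $c$. Your extra care with the edges joining two vertices of $\pi_\delta(\tilde{c})$ (which need not witness intersection and so are not covered by the hypothesis) is a detail the paper leaves implicit in its reference to Case~2 of \autoref{cell_in_cell}, and is a welcome addition.
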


\begin{proof}
By \autoref{cell_in_cell}, no well-linked cell $c'\in C(\delta')$ crosses $c$.
Hence, every cell $c'\in C(\delta')$ that intersects $c$ is either contained in $c$ or contains $c$.
If $c$ is contained in $c'$, we can immediately conclude, so let us assume that every cell $c'\in C(\delta')$ that intersects $c$ is contained in $c$.
We define $\delta^*=(\Gamma^*,\Dcal^*)$ to be the rendition of $(G,\Omega)$ that is equal to $\delta'$ when restricted to $c$, and equal to $\delta$ otherwise, similarly to $\delta^*$ in the Case 2 of \autoref{cell_in_cell}
Thus, $\delta^*$ is more grounded than $\delta$.
Hence, given that $c$ is ground-maximal, we conclude that, for any $c^*\in C(\delta^*)$ contained in $c$, we have $V(\sigma(c))= V(\sigma_{\delta^*}(c^*))$.
But, for any $c^*\in C(\delta^*)$ contained in $c$, there is $c'\in C(\delta')$ such that $V(\sigma_{\delta^*}(c^*))=V(\sigma_{\delta'}(c'))$.
Therefore, $c$ is contained in a cell $c'\in C(\delta')$.
\end{proof}

Given that, if $\delta$ is a rendition of a society $(G,\Omega)$, then $V(\Omega)\subseteq \pi_{\delta}(N(\delta))$, and thus $V(\Omega)\cap (V(\sigma(c))\setminus\pi_{\delta}(\tilde{c}))=\emptyset$ for any cells in $\delta$, we immediately get the following corollary from \autoref{cell_in_cell2}.

\begin{corollary}\label{cor_well-ground}
Let $G$ be a connected graph.
Let $\delta$ be a ground-maximal rendition of $(G,\Omega)$ and let $\delta'$ be a well-linked rendition of $(G,\Omega)$.
Then $\delta$ is more grounded than $\delta'$.
\end{corollary}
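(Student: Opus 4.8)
The plan is to derive the statement cellwise from \autoref{cell_in_cell2}, taking the ground-maximal rendition $\delta$ as the first \sdecomp\ and the well-linked rendition $\delta'$ as the second. Fix an arbitrary cell $c\in C(\delta)$. By the definition of \emph{more grounded}, it suffices to prove that $c$ is contained in some cell $c'\in C(\delta')$ and that, in case $c$ and $c'$ are equivalent, $\pi_{\delta'}(\tilde{c}')\subseteq\pi_\delta(\tilde{c})$.

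To invoke \autoref{cell_in_cell2} for this $c$ I would check its three hypotheses. First, $c$ is a ground-maximal cell: this is forced by $\delta$ being a ground-maximal \sdecomp, since from any witness that a cell of $\delta$ fails to be ground-maximal one can assemble an \sdecomp\ strictly more grounded than $\delta$ (keep $\delta$ outside that cell and refine it inside using the witnessing \sdecomp), contradicting ground-maximality of $\delta$. Second --- and this is the key observation --- $V(\Omega)\cap\big(V(\sigma(c))\setminus\pi_\delta(\tilde{c})\big)=\emptyset$: because $\delta$ is a rendition of $(G,\Omega)$ we have $V(\Omega)\subseteq\pi_\delta(N(\delta))=\ground(\delta)$, and a ground vertex of $\delta$ that occurs in $\sigma(c)$ is the image of a node, hence cannot lie in the interior of $\Delta_c$, hence lies on $\bd(\Delta_c)$ and belongs to $\pi_\delta(\tilde{c})$. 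Third, since $\delta'$ is well-linked, every one of its cells is well-linked, in particular every cell of $\delta'$ that intersects $c$. With these three facts, \autoref{cell_in_cell2} applies and yields a cell $c'\in C(\delta')$ containing $c$. (We also use that $G$ is connected, so that, under the standing convention, each cell graph is connected, as the cited lemmas require.)

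It remains to handle the ``equivalent cells'' clause: assume $c$ and $c'$ are equivalent, i.e. $V(\sigma(c))=V(\sigma_{\delta'}(c'))$. Again I would argue from ground-maximality of $\delta$: if some node of $\tilde{c}'$ had image $v$ with $v\notin\pi_\delta(\tilde{c})$, then, using $V(\sigma(c))=V(\sigma_{\delta'}(c'))$, $v$ is a vertex of $\sigma(c)$ drawn in the interior of $\Delta_c$ that is nonetheless a ground vertex of $\delta'$; one could then refine $\delta$ inside $\Delta_c$ --- by transplanting the cells of $\delta'$ lying in $\Delta_c$, or by an appropriate cell-splitting --- to obtain an \sdecomp\ strictly more grounded than $\delta$, a contradiction. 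Hence $\pi_{\delta'}(\tilde{c}')\subseteq\pi_\delta(\tilde{c})$, and $\delta$ is more grounded than $\delta'$. The main obstacle I anticipate is exactly this kind of definitional bookkeeping --- confirming that ground-maximality of $\delta$ descends to each of its cells and settling the ``equivalent cells'' refinement --- while the heart of the argument, the containment of $c$ in $c'$, drops out of \autoref{cell_in_cell2} once the ground-vertex observation of the previous paragraph is in place.
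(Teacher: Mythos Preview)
Your proposal is correct and follows essentially the same route as the paper: the paper derives the corollary directly from \autoref{cell_in_cell2} via the single observation that, since $\delta$ is a rendition of $(G,\Omega)$, one has $V(\Omega)\subseteq\pi_\delta(N(\delta))$ and hence $V(\Omega)\cap(V(\sigma(c))\setminus\pi_\delta(\tilde{c}))=\emptyset$ for every cell $c$. You make this same observation and then add bookkeeping that the paper leaves implicit --- namely that ground-maximality of $\delta$ forces each cell to be ground-maximal, and the verification of the equivalent-cells clause --- but the core argument is identical.
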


\subsection{Combining sphere decompositions}

To prove \autoref{lem_small_leaves}, as well as \autoref{th_param} later, the main ingredient is the following result that says that, given a ground-maximal \sdecomp\ 
of the compass of some flat wall $W$ of $G$ and given a ground-maximal \sdecomp\ of $G-Y$, where $Y$ is a central part of $W$, these two \sdecomps\ can be glued to obtain a \sdecomp\ of $G$. 

\begin{lemma}\label{lem_newirr}
Let $k,r,q\in\Nbbb$ with $r,q$ odd and $r\ge q+10$.
Let $G$ be a graph, $(W,\frak{R}=(A,B,P,C,\delta))$ be a flatness pair of $G$ of height $r$, $G'$ be the $\frak{R}$-compass of $W$, and $Y$ be the vertex set of the $\frak{R}^{(q)}$-compass of a $W^{(q)}$-tilt $(\tilde{W}^{(q)},\frak{R}^{(q)})$ of $(W,\frak{R})$.
Suppose also that:
\begin{itemize}
\item $\delta=(\Gamma,\Dcal)$ is a well-linked rendition of $(G',\Omega)$, where $\Omega$ is the cyclic ordering of the vertices of $A\cap B$ as they appear in $D(W)$, and
\item there are two ground-maximal 
\sdecomps\ $\delta'=(\Gamma',\Dcal')$ and $\delta_Y=(\Gamma_Y,\Dcal_Y)$ of $G'$ and $G-Y$, respectively, 
such that the $(r-2)$-central wall $W'$ of $W$ is grounded in both $\delta'$ and $\delta_Y$. 
\end{itemize}
Let $T$ be the track of third layer $C_3$ of $W$ with respect to $\delta$.
Then $\pi_\delta(T)\subseteq N(\delta')\cap N(\delta_Y)$.

Consequently, 
there is a ground-maximal \sdecomp\ $\delta^*$ of $G$ such that 
each cell of $\delta^*$ is either a cell of $\delta'$ or a cell of $\delta_Y$, and more specifically:
\begin{itemize}
\item $T$ is the track of $C_3$ with respect to $\delta^*$,
\item the restriction of $\delta^*$ to the closed disk delimited by $T$ containing $Y$ (resp. \emph{not} containing $Y$) is, up to homeomorphism, the restriction of $\delta'$ (resp. $\delta_Y$) to the closed disk delimited by $T$ containing $Y$ (resp. \emph{not} containing $Y$).
\end{itemize}
\end{lemma}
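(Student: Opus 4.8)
The plan is to establish the track claim first; the construction of $\delta^{*}$ then follows by gluing.

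\emph{Preliminaries.} Since $r\ge q+10$, the third layer $C_{3}$ of $W$ is the second layer of the $(r-2)$-central wall $W'$, hence a cycle of $W'$; as $W'$ is grounded in both $\delta'$ and $\delta_{Y}$, the cycle $C_{3}$ is grounded in each of them, so its track is well defined with respect to each. Being the perimeter of a subwall of $W$, $C_{3}$ is also $\frak{R}$-normal, hence grounded in $\delta$, and its track $T$ bounds a closed disk $\Delta_{T}$ in $\mathbb{S}^{2}$; we take $\Delta_{T}$ to be the side containing $Y$. Since $Y$ lies inside the $((r-q)/2+1)$-st layer of $W$ and $(r-q)/2\ge5$, the perimeter $D(W)\supseteq V(\Omega)$ lies strictly outside $\Delta_{T}$, with the layers $C_{1}=D(W)$ and $C_{2}=D(W')$ in between. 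Finally, since $\delta$ is well-linked, every cell of $\delta$ is well-linked, so that hypothesis of \autoref{cell_in_cell2} is automatic when $\delta$ is taken as the well-linked rendition of $(G',\Omega)$.

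\emph{Localization.} The core step is to show that every cell $c'$ of $\delta'$ whose flap contains an edge of $C_{3}$ agrees, up to a redrawing that leaves the abstract decomposition unchanged, with a single cell of $\delta$; symmetrically for $\delta_{Y}$. I would deduce this from \autoref{cell_in_cell2}, applied with $\delta'$ (resp.\ $\delta_{Y}$) as the ground-maximal decomposition and $\delta$ as the well-linked rendition. Cells of a ground-maximal decomposition are ground-maximal (a cell with a cut-vertex can be split, contradicting ground-maximality), so the one hypothesis that needs work is that no vertex of $\Omega$ lies in the interior of $c'$. Here one uses that $C_{3}$ lies strictly inside the grounded cycle $C_{2}$, which acts as a separator between $D(W)$ and $C_{3}$ inside the wall: a cell carrying a vertex of $C_{3}$ in its interior is $C_{2}$-internal or $C_{2}$-perimetric in $\delta'$, and then — since the pieces of $W'$ inside a single flap form a forest and ground-maximality forbids a flap from reaching from $C_{3}$ out past $C_{2}$ to $D(W)$ — the flap contains no vertex of $D(W)\supseteq V(\Omega)$. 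For $\delta_{Y}$ there is the extra bookkeeping that it decomposes $G-Y$, not $G'$: since $G$ attaches to the compass only along $A\cap B\subseteq V(D(W))$, far from $C_{3}$, one first restricts $\delta_{Y}$ to a $\delta_{Y}$-aligned disk carrying exactly $G'-Y$ and compares it with $\delta-Y$.

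\emph{The track claim.} By construction of the track, each $v\in\pi_{\delta}(T)$ is the common endpoint of two consecutive maximal subpaths of $C_{3}$ lying in distinct flaps of $\delta$, so the two $C_{3}$-edges at $v$ lie in distinct cells of $\delta$. If $v$ lay in the interior of a cell $c'$ of $\delta'$, both these edges would lie in $c'$, and by the localization step $c'$ agrees with a single cell of $\delta$, forcing those two edges into one flap of $\delta$ — a contradiction. Hence the two $C_{3}$-edges at $v$ lie in distinct cells of $\delta'$, so $v$ is a common vertex of two cells of $\delta'$, i.e.\ $v\in N(\delta')$; likewise $v\in N(\delta_{Y})$. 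Thus $\pi_{\delta}(T)\subseteq N(\delta')\cap N(\delta_{Y})$, and in fact no cell of $\delta'$ or $\delta_{Y}$ crosses the $\delta$-aligned curve $T$, so each such cell lies entirely on one side of $T$.

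\emph{Gluing.} Let $\Delta_{T}$ be as above. The non-compass part of $G$ attaches along $D(W)$, which lies outside $\Delta_{T}$, so the subgraph of $G$ drawn inside $\Delta_{T}$ is a subgraph of $G'$, while the one drawn outside avoids $Y$ and is a subgraph of $G-Y$. Using the previous paragraph, I would set $\delta^{*}$ to be the \sdecomp\ of $G$ obtained by placing, up to homeomorphism and matching the common ground vertices $\pi_{\delta}(T)$ of $C_{3}$ in their unique cyclic order, the restriction of $\delta'$ to its $\Delta_{T}$-side inside $\Delta_{T}$ and the restriction of $\delta_{Y}$ to the complementary side outside $\Delta_{T}$. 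Every cell of $\delta^{*}$ is then a cell of $\delta'$ or of $\delta_{Y}$, the disk $\Delta_{T}$ delimited by $T$ witnesses the two displayed properties, and $T$ is the track of $C_{3}$ with respect to $\delta^{*}$. Ground-maximality of $\delta^{*}$ follows because each of its cells, being a cell of a ground-maximal decomposition, has no cut-vertex and the gluing along $T$ creates none, so no \sdecomp\ of $G$ is strictly more grounded than $\delta^{*}$. The main obstacle I anticipate is the localization step — specifically, verifying the $\Omega$-avoidance hypothesis of \autoref{cell_in_cell2} for cells near $C_{3}$, which must combine the grounded-$W'$ hypothesis, ground-maximality of $\delta'$ and $\delta_{Y}$, and the separation structure of the flat wall, while keeping track of the three decompositions living on the three distinct graphs $G'$, $G'$, and $G-Y$.
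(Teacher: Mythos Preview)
Your overall architecture matches the paper's: show via \autoref{cell_in_cell2} that cells of $\delta'$ and $\delta_Y$ touching $C_3$ are contained in cells of $\delta$, deduce $\pi_\delta(T)\subseteq N(\delta')\cap N(\delta_Y)$, then glue. Two steps, however, are not right as written.

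First, the $\delta_Y$ case is missing a key ingredient. A $\delta_Y$-aligned disk carrying exactly $G'-Y$ need not exist (a cell of $\delta_Y$ may straddle $D(W)$), so the paper instead passes to $\tilde\delta_Y:=\delta_Y-(V(G)\setminus V(G'))$ and $\tilde\delta:=\delta-Y$, both sphere decompositions of $G'-Y$, and applies \autoref{cell_in_cell2} there. For this to work one must know that every cell $c$ of $\delta$ intersecting the relevant $c_Y$ satisfies $Y\cap V(\sigma(c))=\emptyset$, so that $c$ is literally a cell of $\tilde\delta$ and its well-linkedness survives. This is the paper's Claim~3, and it is where $r\ge q+10$ is actually used: since the perimeter of $W^{(q)}$ is layer $\ge 6$, one argues with the disjoint layers $C_4,C_5$ that if such a $c$ met $Y$ then either $\tilde c$ or $\tilde{c}_Y$ would need four boundary vertices. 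You invoke only $C_2$ for the $\Omega$-avoidance and never use the extra layers; without them the well-linkedness hypothesis of \autoref{cell_in_cell2} on $\tilde\delta$ is not established. (Incidentally, the $\Omega$-avoidance itself is simpler than you make it: if $c_1\in C(\delta')$ has an interior vertex on $C_3$ and a vertex of $\Omega\subseteq V(C_1)$, then by connectivity it meets $C_2$; since $W'$ is grounded in $\delta'$, neither $C_2$ nor $C_3$ lies entirely in $c_1$, so $\tilde{c}_1$ carries at least two vertices of each, contradicting $|\tilde{c}_1|\le 3$. No appeal to ground-maximality is needed.)

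Second, your ground-maximality argument for $\delta^*$ is wrong: ``no cut-vertex in any cell'' is necessary but far from sufficient for ground-maximality (a single cell with a planar $\sigma(c)$ and no cut-vertex can still be refined by grounding more vertices). The correct argument is by contradiction: if some $\delta^{**}$ of $G$ were strictly more grounded than $\delta^*$, a cell $c^{**}$ of $\delta^{**}$ would be strictly contained in some cell $c^*$ of $\delta^*$. Since $c^*$ lies on one side of $T$, $\sigma(c^*)\subseteq G'$ (resp.\ $\subseteq G-Y$); transplanting the cells of $\delta^{**}$ contained in $c^*$ into $\delta'$ (resp.\ $\delta_Y$) in place of $c^*$ yields a sphere decomposition of $G'$ (resp.\ $G-Y$) strictly more grounded than $\delta'$ (resp.\ $\delta_Y$), contradicting their ground-maximality.
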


\begin{figure}[h]
\centering
\includegraphics[scale=1]{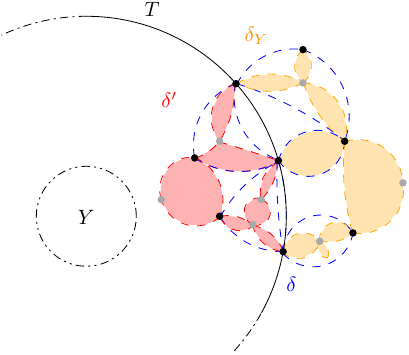}
\caption{The black circle represents $T$, $\delta$ is represented in blue, $\delta'$ in red, and $\delta_Y$ in orange. $\delta^*$ is obtained by combining $\delta'$ inside of $T$ and $\delta_Y$ outside of $T$.}
\label{fig_glue2}
\end{figure}

\begin{proof}
Let $C_i$ be the $i$-th layer of $W$ for $i\in[(r-1)/2]$ (so $C_1$ is the perimeter of $W$).
Let $T$ be the track of $C_3$ in $\delta$. Informally, we want to prove that near $T$, the cells of $\delta'$ and $\delta_Y$ are contained in cells of $\delta$. Hence, we may replace $\delta$ by $\delta'$ in the disk bounded by $T$ containing $v$, and we may replace $\delta$ by $\delta_Y$ in the disk bounded by $T$ not containing $v$. See \autoref{fig_glue2} for an illustration.

To prove that each cell $c$ of $\delta'$ near $T$ is contained in a cell of $\delta$, we want to apply \autoref{cell_in_cell2}.
To do so, we need to check that the interior of $c$ does not contain any vertex of $V(\Omega)$.
Here, "near" $T$ means intersecting $T$ in at least two points, that is $|V(\sigma_{\delta'}(c))\cap\pi_{\delta}(T)|\ge 2$.
This is what we prove in \autoref{claim:1}.
We also prove it for $\delta_Y$ at the same time.

\begin{claim}\label{claim:1}
Let $\delta_1\in\{\delta',\delta_Y\}$.
For each cell $c_1\in C(\delta_1)$ such that 
$|V(\sigma_{\delta_1}(c_1))\cap\pi_{\delta}(T)|\ge 2$, 
it holds that $V(\Omega)\cap V(\sigma_{\delta_1}(c_1))=\emptyset$.
\end{claim}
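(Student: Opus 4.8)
The plan is to argue by contradiction. Suppose some cell $c_1\in C(\delta_1)$, with $\delta_1\in\{\delta',\delta_Y\}$, satisfies $|V(\sigma_{\delta_1}(c_1))\cap\pi_\delta(T)|\ge 2$ and yet $V(\Omega)\cap V(\sigma_{\delta_1}(c_1))\neq\emptyset$. Fix distinct $x,y\in V(\sigma_{\delta_1}(c_1))\cap\pi_\delta(T)$ and $z\in V(\sigma_{\delta_1}(c_1))\cap V(\Omega)$, and write $H:=\sigma_{\delta_1}(c_1)$ and $S:=\pi_{\delta_1}(\tilde c_1)$. The only facts about $H$ that I would use are: $H$ is connected; the pair $\bigl(V(H),(V(G_1)\setminus V(H))\cup S\bigr)$ is a separation of order $|S|\le 3$ of the host graph $G_1$ of $\delta_1$ (here $G_1=G'$ if $\delta_1=\delta'$ and $G_1=G-Y$ if $\delta_1=\delta_Y$); and, crucially, since $W':=W^{(r-2)}$ is grounded in $\delta_1$, no cycle of $W'$ — in particular no brick of $W'$ — can lie inside $H$.

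First I would pin down where $x,y,z$ sit in the wall and how large $W'$ is. Because $\delta$ is the well-linked rendition certifying that $W$ is flat, every layer $C_i$ of $W$ is grounded in $\delta$ (a layer-cycle of $W$ is too long to lie in a single flap $\sigma_\delta(c)$), so its track $T_i$ in $\delta$ is a closed curve whose $\delta$-nodes all project into $V(C_i)$; hence $\pi_\delta(T)=\pi_\delta(T_3)\subseteq V(C_3)$, so $x,y\in V(C_3)$, while $z\in V(\Omega)\subseteq V(D(W))=V(C_1)$. Moreover $W'$ has perimeter $C_2$ and height $r-2\ge 11$ (since $r\ge q+10$ with $q$ odd, $q\ge 3$), so $W'$ is a large wall; the perimeter $C_1$ is not a subgraph of $W'$, so $z\notin V(W')$, whereas $x,y\in V(C_3)\subseteq V(W')$.

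The second step forces $H$ to reach $C_2$ and then derives a contradiction. For each $i\ge 2$ the subwall of $W$ with perimeter $C_i$ is $\mathfrak R$-normal, and its $\mathfrak R$-influence is enclosed by $C_i$: there is a separation of $G'$ one of whose sides contains the whole influence (in particular all of $C_i,C_{i+1},\dots$) and whose separator is contained in $V(C_i)$; in the $\delta_Y$-case this transfers to $G-Y$ since $C_i$ and the flaps of $\mathfrak R$ incident to it avoid the deep part $Y$. Applying this with $i=2$, and using that $V(C_1)$ and $V(C_3)$ are disjoint from $V(C_2)$, the vertices $z\in V(C_1)$ and $x\in V(C_3)$ — endpoints of a path inside the connected set $V(H)$ — lie strictly on opposite sides of that separation, so $H$ contains a vertex $w\in V(C_2)$. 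Now restrict the order-$\le 3$ separation $\bigl(V(H),(V(G_1)\setminus V(H))\cup S\bigr)$ to the subgraph $W'$: one gets a separation of $W'$ of order at most $3$ whose $H$-side $A'$ contains $w\in V(C_2)$ together with the two vertices $x,y\in V(C_3)$, while $A'$ contains no brick of $W'$. This is where I would invoke a wall-connectivity statement: in a large (subdivided) wall, a side of a separation of order at most $3$ that touches two consecutive layers and, on one of those layers, two vertices that are far enough apart, must contain a complete brick — the distance between $x$ and $y$ along $C_3$ being guaranteed by the fact that in the concentric rendition $\delta$ the breakpoints of $C_3$ are spread around $C_3$, so that two of them lying in $A'$ force $A'$ to span several columns of $W'$. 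That contradicts ``no brick of $W'$ inside $H$'', finishing the proof; and with this in hand, the reduction chain of the lemma — feeding cells near $T$ into \autoref{cell_in_cell2} — goes through.

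The step I expect to be the main obstacle is exactly this last wall-connectivity argument, that a bounded-order separator cannot slice off a genuinely two-dimensional piece of a large wall, together with the bookkeeping it requires: passing cleanly from the order-$\le3$ separation of $G_1$ to one of the wall $W'$ (checking that subdivision vertices cause no trouble), verifying that the cut-off piece $A'$ truly contains a brick of $W'$ and not merely pieces of the boundaries of several bricks, and controlling the separation between the two breakpoints $x,y$ on $C_3$. One must also be a touch careful with the enclosure property of the $\mathfrak R$-influence in the $\delta_Y$-setting, where the host graph is $G-Y$ rather than $G'$; but since the relevant layers and incident flaps avoid $Y$, the separation one needs survives intact. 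Everything else is routine once these points are in place, and they are standard wall/grid-connectivity facts dressed up in the flat-wall vocabulary (see \cite{RobertsonS95XIII,KawarabayashiTW18anew,SauST24amor}).
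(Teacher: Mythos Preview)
Your setup is fine up to the point where you have the cell $H=\sigma_{\delta_1}(c_1)$ containing $z\in V(\Omega)\subseteq V(C_1)$, two vertices $x,y\in\pi_\delta(T)\subseteq V(C_3)$, and (by the separating property of $C_2$) a vertex $w\in V(C_2)$, all subject to $|\tilde c_1|\le 3$. The problem is the final step. Your ``wall-connectivity statement'' is neither precisely formulated nor justified, and the justification you sketch rests on an unwarranted assumption. You claim that $x$ and $y$ must be far apart along $C_3$ because ``the breakpoints of $C_3$ are spread around $C_3$'', but nothing forces this: $x$ and $y$ are simply two ground vertices of $\delta$ on $C_3$ and may well be consecutive along the track $T$, lying in adjacent bricks of $W'$ or even in the same brick. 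When $x,y$ are close, there is no reason a side of an order-$\le 3$ separation of $W'$ containing $w,x,y$ must enclose a complete brick; a thin path-like region can meet both $C_2$ and $C_3$ without containing any cycle of $W'$. You also tacitly pass from ``no cycle of $W'$ is a subgraph of $H$'' to ``no brick has all its vertices in $A'=V(H)\cap V(W')$'', but groundedness only prevents all \emph{edges} of a cycle from lying in one cell --- a brick could have its full vertex set in $V(H)$ if one of its edges runs between two points of $\tilde c_1$.

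The paper dispenses with all of this wall geometry via a short counting argument on $\tilde c_1$. Both $C_2$ and $C_3$ are cycles of $W'$ and hence grounded in $\delta_1$, so neither is entirely contained in $H$; on the other hand each meets $H$ (for $C_3$ by hypothesis, for $C_2$ by your own connectivity argument through the separating layer). A cycle that has vertices both inside $H$ and outside $H$ must meet the separator $\pi_{\delta_1}(\tilde c_1)$ in at least two vertices, because removing a single vertex from a cycle leaves it connected. Hence $|\pi_{\delta_1}(\tilde c_1)\cap V(C_2)|\ge 2$ and $|\pi_{\delta_1}(\tilde c_1)\cap V(C_3)|\ge 2$, and since $C_2$ and $C_3$ are vertex-disjoint this forces $|\tilde c_1|\ge 4$, a contradiction. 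The only structural facts used are that $C_2$ and $C_3$ are disjoint grounded cycles of $W'$; no bricks, no ``far apart'' hypothesis, no auxiliary wall-connectivity lemma.
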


\begin{cproof}
Let $c_1\in C(\delta_1)$ be such that $V(\sigma_{\delta_1}(c_1))\cap\pi_{\delta}(T)\ne\emptyset$.
If $V(\Omega)\cap V(\sigma_{\delta_1}(c_1))\ne\emptyset$, then $\sigma_{\delta_1}(c_1)$ contains vertices of both $C_1$ and $C_3$, and thus also of $C_2$.
Given that $W'$ is grounded in $\delta_1$,
it holds that $\sigma_{\delta_1}(c_1)$ does not contain $C_2$ nor $C_3$.
Therefore, $\pi_{\delta}(\tilde{c}_1)$ contains two vertices of $C_2$.
But given that $C_2$ and $C_3$ are vertex disjoint and that $\pi_{\delta}(\tilde{c}_1)$ contains two vertices of $\pi_{\delta}(T)\subseteq v(C_3)$, this implies that $|\tilde{c}_1|\ge 4$, a contradiction.
Therefore, $V(\Omega)\cap V(\sigma_{\delta_1}(c_1))=\emptyset$. This completes the proof.
\end{cproof}

Therefore, given that $\delta$ and $\delta'$ are sphere decomposition of the same graph $G'$, we prove using \autoref{cell_in_cell2} that any cell of $\delta'$ intersecting $T$ in at least two point is contained in a cell of $\delta$.

\begin{claim}\label{claim:2}
For any $c'\in C(\delta')$ such that $|V(\sigma_{\delta'}(c'))\cap\pi_{\delta}(T)|\ge2$, there is a cell $c\in C(\delta)$ such that $c'$ is contained in $c$.
\end{claim}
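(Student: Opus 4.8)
The plan is to obtain the claim as a direct application of \autoref{cell_in_cell2}, with the two \sdecomps\ playing swapped roles. Concretely, I would invoke \autoref{cell_in_cell2} for the society $(G',\Omega)$, letting $\delta'$ (which is ground-maximal) play the role of the \sdecomp\ called ``$\delta$'' in that lemma, letting $\delta$ (which is a well-linked rendition of $(G',\Omega)$) play the role of the rendition called ``$\delta'$'' there, and letting the cell $c'\in C(\delta')$ from the statement be the ground-maximal cell ``$c$''. The conclusion of \autoref{cell_in_cell2} is then precisely that $c'$ is contained in some cell $c\in C(\delta)$, which is exactly what \autoref{claim:2} asserts.

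To justify the application I would verify the hypotheses of \autoref{cell_in_cell2} one by one. The ambient society is $(G',\Omega)$; by assumption $\delta'$ is a \sdecomp\ of $G'$ and $\delta$ is a rendition of $(G',\Omega)$, so the setting matches. Next, $c'$ must be a \emph{ground-maximal} cell: since $\delta'$ is a ground-maximal \sdecomp, each of its cells is a ground-maximal cell — if some cell of $\delta'$ admitted, inside a strictly more grounded \sdecomp, a contained cell with strictly smaller $\sigma$-vertex set or with strictly larger ground set on its boundary, that \sdecomp\ would witness $\delta'$ not being ground-maximal; I would record this as a short observation. It then remains to check the two ``boundary'' hypotheses on $c'$. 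That every cell of $\delta$ meeting $c'$ is well-linked is immediate, because $\delta$ is a \emph{well-linked} rendition and hence \emph{all} of its cells are well-linked. That $V(\Omega)\cap\big(V(\sigma_{\delta'}(c'))\setminus\pi_{\delta'}(\tilde{c}')\big)=\emptyset$ is where I invoke \autoref{claim:1}: the standing hypothesis $|V(\sigma_{\delta'}(c'))\cap\pi_\delta(T)|\ge 2$ is exactly the trigger of \autoref{claim:1} applied with $\delta_1=\delta'$, which yields the stronger conclusion $V(\Omega)\cap V(\sigma_{\delta'}(c'))=\emptyset$.

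With all hypotheses in place, \autoref{cell_in_cell2} produces a cell $c\in C(\delta)$ with $c'$ contained in $c$, completing the proof. I do not anticipate a genuine obstacle here: all the difficult geometric work has already been absorbed into \autoref{cell_in_cell}/\autoref{cell_in_cell2} and into \autoref{claim:1}. The only points requiring care are the bookkeeping of which \sdecomp\ is the ``ground-maximal'' one and which is the ``well-linked rendition'' in the statement of \autoref{cell_in_cell2} (here it is $\delta'$ that is ground-maximal while $\delta$ is the well-linked rendition — the opposite of the naming used inside that lemma), and the small remark that ground-maximality of a \sdecomp\ descends to each of its cells.
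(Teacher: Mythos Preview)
Your proposal is correct and matches the paper's own proof essentially verbatim: the paper's argument is the single sentence ``Given that $\delta$ is well-linked, that $\delta'$ is ground-maximal, and that $V(\Omega)\cap V(\sigma_{\delta'}(c'))=\emptyset$ by \autoref{claim:1}, the proof immediately follows from \autoref{cell_in_cell2}.'' Your write-up simply unpacks the role-swapping and the hypothesis checks that the paper leaves implicit.
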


\begin{cproof}
Given that $\delta$ is well-linked, that $\delta'$ is ground-maximal, and that $V(\Omega)\cap V(\sigma_{\delta'}(c'))=\emptyset$ by \autoref{claim:1}, the proof immediately follows from \autoref{cell_in_cell2}.
\end{cproof}

For $\delta_Y$ however, using \autoref{cell_in_cell2} is not that easy.
Indeed, $\delta$ and $\delta_Y$ are not sphere decompositions of the same graph, given that one is a sphere decomposition of $G'$ and the other of $G-Y$.
Therefore, we need to consider the restrictions $\tilde{\delta}$ and $\tilde{\delta}_Y$ of $\delta$ and $\delta_Y$ to $G'-Y$ and to apply \autoref{cell_in_cell2} to these sphere decompositions.
To do so, we need to check that a cell $c_Y$ of $\delta_Y$ near $T$ is still a cell of $\tilde{\delta}_Y$, that is ground-maximal, and that a cell $c$ of $\delta$ intersecting $c_Y$ is still a cell of $\tilde{\delta}$, that is well-linked.
Let us first prove that the vertices of $Y$ are not vertices of $\sigma(c)$, and thus that $c$ is still a cell of $\tilde{\delta}$.

\begin{claim}\label{claim:3}
Let $c_Y\in C(\delta_Y)$ be such that $|V(\sigma_{\delta_Y}(c_Y))\cap\pi_{\delta}(T)|\ge2$.
Then, for any $c\in C(\delta)$ that intersect $c_Y$, $Y\cap V(\sigma(c))=\emptyset$.
\end{claim}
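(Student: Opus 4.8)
The plan is to argue by contradiction. Suppose some vertex $y\in Y$ lies in $V(\sigma(c))$. The guiding idea is that $Y$ is confined to the deep interior of $W$, far from the third layer $C_3$ whose track is $T$, whereas the hypotheses that $c$ meets $c_Y$ and that $\sigma_{\delta_Y}(c_Y)$ contains two vertices of $\pi_\delta(T)\subseteq V(C_3)$ force the single flap $\sigma_{\delta_Y}(c_Y)$ to stretch from $C_3$ all the way into that deep interior. Since a flap attaches to the rest of the graph through at most three vertices and $W'$ is grounded in $\delta_Y$, this is too far to stretch, and we obtain the contradiction.

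First I would localize $Y$. By the definition of a $W^{(q)}$-tilt, $Y=V(\compass_{\frak{R}^{(q)}}(\tilde{W}^{(q)}))$ is the vertex set of a subgraph of $\cupall\influence_{\frak{R}}(W^{(q)})$, so every vertex of $Y$ lies in a flap of a cell of $\delta$ that is not $D(W^{(q)})$-external. Because $r\ge q+10$, the cycle $D(W^{(q)})$ equals the $\ell$-th layer $C_\ell$ of $W$ for some $\ell\ge 6$; using the nesting of the disks $\Delta_{C_3}\supsetneq\Delta_{C_4}\supsetneq\cdots\supsetneq\Delta_{W^{(q)}}$ attached to consecutive layers of a flat wall, every $D(W^{(q)})$-internal or $D(W^{(q)})$-perimetric cell of $\delta$ is contained in $\Delta_{C_5}$. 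Consequently every flap of $\delta$ that meets $Y$ -- in particular $\sigma(c)$ -- is contained in $\Delta_{C_5}$; hence it is disjoint from $\pi_\delta(T)$ (which lies on $C_3$, outside $C_4$), and the vertex $z\in V(\sigma(c))\cap V(\sigma_{\delta_Y}(c_Y))$ witnessing that $c$ and $c_Y$ intersect lies strictly on the inner side of layer $C_4$ in $\delta$, i.e. $z\in V(\inG_{\delta}(C_4))\setminus V(\outG_{\delta}(C_4))$.

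Then I would derive the contradiction inside $G'-Y$. By \autoref{claim:1}, $V(\sigma_{\delta_Y}(c_Y))$ is disjoint from $V(\Omega)=A\cap B$; since $A\cap B$ separates the $\frak{R}$-compass $G'$ from the rest of $G$, the connected graph $\sigma_{\delta_Y}(c_Y)$ is contained in $G'-Y$. Now $\sigma_{\delta_Y}(c_Y)$ contains two distinct vertices $u_1,u_2\in\pi_\delta(T)\subseteq V(C_3)$, which lie in $V(\outG_{\delta}(C_4))\setminus V(\inG_{\delta}(C_4))$, and the vertex $z$, which lies strictly inside $C_4$; the separator of the separation $(\outG_{\delta}(C_4),\inG_{\delta}(C_4))$ of $G'$ lies next to $C_4$, hence is disjoint from $Y$ and from $\{u_1,u_2,z\}$, so it restricts to a separation of $G'-Y$, and connectedness forces $V(\sigma_{\delta_Y}(c_Y))$ to meet it, i.e. to meet $V(C_4)$. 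Since $W'$ is grounded in $\delta_Y$ and $C_3,C_4$ are cycles of $W'$, $\sigma_{\delta_Y}(c_Y)$ contains neither $C_3$ nor $C_4$ in its entirety, so the part of $C_3$ (resp. of $C_4$) lying in $\sigma_{\delta_Y}(c_Y)$ is a nonempty proper subgraph of a cycle, and its ``boundary'' vertices -- the endpoints of its maximal subpaths together with its isolated vertices -- must lie in $\pi_{\delta_Y}(\tilde{c_Y})$; there are at least two for $C_3$ (it contains the distinct vertices $u_1,u_2$) and at least one for $C_4$. As $C_3$ and $C_4$ are disjoint this already gives $|\tilde{c_Y}|\ge 3$, and if $C_4$ is met in two or more vertices the count is $\ge 4$, a contradiction. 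The remaining case, that $\sigma_{\delta_Y}(c_Y)\cap V(C_4)$ is a single vertex $w$, makes $w$ a cut vertex of $\sigma_{\delta_Y}(c_Y)$ separating $z$ from $\{u_1,u_2\}$; taking a slightly larger $\delta$-aligned disk around $T$ -- possible thanks to the buffer $r\ge q+10$, which leaves at least two layers of $W'$ between $C_3$ and the reach of $Y$ -- then forces $\sigma_{\delta_Y}(c_Y)$ to meet a second intermediate layer as well, producing the same over-count in $\pi_{\delta_Y}(\tilde{c_Y})$. Either way we reach a contradiction, so $V(\sigma(c))\cap Y=\emptyset$.

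The step I expect to be the main obstacle is making the interaction between the two sphere decompositions precise: transferring the layer separations of $W$ from $\delta$ (a decomposition of $G'$) to the flap $\sigma_{\delta_Y}(c_Y)$ (a cell of a decomposition of $G-Y$), and, within that, pinning down exactly how many layers of buffer the inequality $r\ge q+10$ buys, so that the single-vertex case above is genuinely ruled out. All of this rests on the disk-nesting structure of flat walls, which must be invoked with care; everything else is bookkeeping with \autoref{claim:1} and the definition of the $W^{(q)}$-tilt.
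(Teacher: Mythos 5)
Your overall strategy is the same as the paper's: confine $Y$ strictly inside the fifth layer using $r\ge q+10$ and the definition of a $W^{(q)}$-tilt, use $C_3,C_4,C_5$ as nested separators, and derive a contradiction by counting boundary nodes against the bound $|\tilde{c}|\le3$. The paper distributes the burden differently: it shows that one of $\sigma(c)$, $\sigma_{\delta_Y}(c_Y)$ must meet both $C_4$ and $C_5$, rules out $c_Y$ because two of its at most three boundary nodes are already spent on $T$, and lands the contradiction on $|\tilde{c}|\ge4$. Your variant places the entire count on $c_Y$; this is viable, but as written the final count does not close. Meeting $C_3$ in two vertices forces at least two boundary nodes of $c_Y$ on $C_3$, and crossing $C_4$ forces at least \emph{one} more (a connected flap can pass through a separating cycle via a single ground vertex), so you only reach $|\tilde{c}_Y|\ge3$, which is permitted. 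Your patch for the remaining case --- ``taking a slightly larger $\delta$-aligned disk around $T$'' --- is not an argument: enlarging a disk of $\delta$ says nothing about the cells of $\delta_Y$, and the single-vertex-on-$C_4$ case is not the real obstruction.

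The fix is already implicit in your own localization and needs no case distinction: since $\sigma(c)\subseteq\Delta_{C_5}$, the vertex $z$ lies in $\Delta_{C_5}$, so the connected flap $\sigma_{\delta_Y}(c_Y)$, containing $u_1\in V(C_3)$ and $z$, must meet $V(C_5)$ as well as $V(C_4)$ (either $z\in V(C_5)$, or $V(C_5)$ separates $z$ from $C_3$ and $V(C_5)\cap Y=\emptyset$). Groundedness of $W'$ in $\delta_Y$ then yields at least one boundary node on each of $C_4$ and $C_5$; as $C_3,C_4,C_5$ are pairwise disjoint, $|\tilde{c}_Y|\ge2+1+1=4$, the desired contradiction. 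One further point: your assertion that \emph{every} flap of $\delta$ meeting $Y$ lies in $\Delta_{C_5}$ needs the fact that $Y\cap V(C_5)=\emptyset$ (otherwise a vertex of $Y$ could be a ground vertex shared with a $D(W^{(q)})$-external cell whose flap escapes $\Delta_{C_5}$); this is precisely what the paper extracts from $j=(r-q)/2+1\ge6$, so it is available, but it is the load-bearing step and should be stated rather than folded into ``disk nesting''.
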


\begin{cproof}
Suppose toward a contradiction that $Y\cap V(\sigma(c))\ne\emptyset$.
The perimeter of $W^{(q)}$ is $C_j$ for $j=(r-q)/2+1$. Given that $r\ge q+10$, this implies that $j\ge 6$.
Hence, by definition of a tilt of a flatness pair, we conclude that $Y\cap V(C_5)=\emptyset$.
Then, by connectivity, either $\sigma(c)$ or $\sigma_{\delta_Y}(c_Y)$ intersects $C_4$ and $C_5$.
Given that $W'$ is grounded in both $\delta$ and $\delta_Y$, neither $C_5$ nor $C_4$ is totally contained in $\sigma(c)$ or $\sigma_{\delta_Y}(c_Y)$.

Given that $|\pi_{\delta_Y}(\tilde{c_Y})\cap\pi_{\delta}(T)|\ge2$ and that $|\pi_{\delta_Y}(\tilde{c_Y})|\le3$, $C_4$ cannot intersect $\pi_{\delta_Y}(\tilde{c_Y})$ in two vertices.
Therefore, $\pi_\delta(c)$ must intersect both $C_4$ and $C_5$.
Hence $|\tilde{c}|\ge4$, a contradiction proving the claim.
\end{cproof}

Let us now prove that any cell of $\delta_Y$ intersecting $T$ in at least two points is contained in a cell of $\delta$.

\begin{claim}\label{claim:4}
For any $c_Y\in C(\delta_Y)$ such that $|V(\sigma_{\delta_Y}(c_Y))\cap\pi_{\delta}(T)|\ge2$, there is a cell $c\in C(\delta)$ such that $c_Y$ is contained in $c$.
\end{claim}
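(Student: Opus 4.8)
Here is the plan. The proof of \autoref{claim:4} should run parallel to that of \autoref{claim:2}, the goal being again to invoke \autoref{cell_in_cell2}; the single new complication is that $\delta$ and $\delta_Y$ are \sdecomps\ of the \emph{different} graphs $G'$ and $G-Y$, so I would first descend to the common graph $G'-Y$. Concretely, I would set $\tilde\delta:=\delta-Y$ and $\tilde\delta_Y:=\delta_Y-(V(G)\setminus V(G'))$, which are \sdecomps\ of $G'-Y$. Since $Y$ is the vertex set of the compass of a $W^{(q)}$-tilt, it is a subset of $V(\compass_{\frak R}(W))=V(G')$, and since the perimeter of $W^{(q)}$ is the layer $C_j$ with $j=(r-q)/2+1\ge 6$ (because $r\ge q+10$), one has $Y\cap V(C_5)=\emptyset$, in particular $Y\cap V(C_1)=Y\cap V(D(W))=\emptyset$, so $A\cap B=V(\Omega)$ is disjoint from $Y$ and $\tilde\delta$ is still a rendition of $(G'-Y,\Omega)$.

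Next I would handle the $c_Y$ side. Fix $c_Y\in C(\delta_Y)$ with $|V(\sigma_{\delta_Y}(c_Y))\cap\pi_\delta(T)|\ge 2$. By \autoref{claim:1} we get $V(\Omega)\cap V(\sigma_{\delta_Y}(c_Y))=\emptyset$, so $\sigma_{\delta_Y}(c_Y)$ avoids the separator $A\cap B$ of $G$; being connected and meeting $\pi_\delta(T)\subseteq V(C_3)\subseteq B\setminus A$, it must lie entirely in $G[B]=G'$, hence in $G'-Y$. Therefore the heir $d_Y$ of $c_Y$ in $\tilde\delta_Y$ satisfies $\sigma_{\tilde\delta_Y}(d_Y)=\sigma_{\delta_Y}(c_Y)$ with the same boundary nodes, so $V(\Omega)\cap(V(\sigma_{\tilde\delta_Y}(d_Y))\setminus\pi_{\tilde\delta_Y}(\tilde d_Y))=\emptyset$, and, since any regrounding of $d_Y$ is a split at a cut-vertex of $\sigma_{\tilde\delta_Y}(d_Y)=\sigma_{\delta_Y}(c_Y)$, ground-maximality of $c_Y$ in $\delta_Y$ carries over: $d_Y$ is ground-maximal in $\tilde\delta_Y$.

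Finally I would handle the $\delta$ side and conclude. Let $d\in C(\tilde\delta)$ intersect $d_Y$, with precursor $c\in C(\delta)$. By \autoref{claim:3}, $Y\cap V(\sigma_\delta(c))=\emptyset$, so $\sigma_{\tilde\delta}(d)=\sigma_\delta(c)$ and $d$ is a genuine cell of $\tilde\delta$; it then remains to exhibit $|\tilde c|\le 3$ pairwise vertex-disjoint $\pi_\delta(\tilde c)$–$V(\Omega)$ paths inside $G'-Y$. For this I would argue, as in the proof of \autoref{claim:3}, that such a cell $c$ lies outside the disk bounded by the track of $C_5$ whereas $Y$ lies inside it, and that $G'-Y$ still contains the ``outer frame'' of $W$, namely $W$ with the interior of $W^{(q)}$ deleted; this frame provides the desired disjoint paths from $\pi_\delta(\tilde c)$ to $V(\Omega)\subseteq V(C_1)$ routed outward through the layers $C_3,C_2,C_1$, all of which avoid $Y$, so $d$ is well-linked in $\tilde\delta$. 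Then \autoref{cell_in_cell2}, applied to $\tilde\delta_Y$, the rendition $\tilde\delta$ of $(G'-Y,\Omega)$, and the ground-maximal cell $d_Y$, yields a cell $d\in C(\tilde\delta)$ containing $d_Y$; taking the precursor $c$ of $d$ in $\delta$ and using $\sigma_{\tilde\delta}(d)=\sigma_\delta(c)$ and $\sigma_{\tilde\delta_Y}(d_Y)=\sigma_{\delta_Y}(c_Y)$ gives $V(\sigma_{\delta_Y}(c_Y))\subseteq V(\sigma_\delta(c))$, i.e.\ $c_Y$ is contained in $c$. The main obstacle is exactly the well-linkedness step: routing the linkage paths of the cells of $\delta$ near $T$ through the outer frame of $W$ so that they stay clear of $Y$, which is of the same layer-counting flavour as \autoref{claim:1} and \autoref{claim:3} and relies on the slack $r\ge q+10$ pushing $Y$ strictly inside $C_5$.
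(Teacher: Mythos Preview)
Your proposal is correct and follows essentially the same approach as the paper: restrict both \sdecomps\ to the common graph $G'-Y$, use \autoref{claim:3} to see that the relevant cells of $\delta$ survive unchanged and remain well-linked after rerouting the linkage through the outer layers of $W$, use \autoref{claim:1} to see that $c_Y$ survives unchanged in $\tilde\delta_Y$, and then apply \autoref{cell_in_cell2}. One small imprecision: your justification that ``any regrounding of $d_Y$ is a split at a cut-vertex'' is not literally true (more-grounded decompositions need not arise from cut-vertex splits); the clean argument, which the paper gives, is that any strictly more grounded $\tilde\delta_Y'$ of $G''$ can be lifted back to a \sdecomp\ of $G-Y$ by reinserting the deleted vertices $V(G)\setminus V(G')$ into their original cells of $\delta_Y$ (none of which is $c_Y$, since $V(\sigma_{\delta_Y}(c_Y))\subseteq V(G')$), contradicting ground-maximality of $\delta_Y$.
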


\begin{cproof}
Let $\tilde{\delta}=(\tilde{\Gamma},\tilde{\Dcal})$ and $\tilde{\delta}_Y=(\tilde{\Gamma}_Y,\tilde{\Dcal}_Y)$ be the restrictions of $\delta$ and $\delta_Y$, respectively, to $G'':=G'-Y$, i.e., $\tilde{\delta}=\delta-Y$ and $\tilde{\delta}_Y=\delta_Y-(V(G)\setminus V(G'))$.

Let $c\in C(\delta)$ be a cell that intersects $c_Y$.
By \autoref{claim:3}, $Y\cap V(\sigma(c))=\emptyset$, so $c\in C(\tilde{\delta})$.
Let us show that $c$ is still well-linked in $\tilde{\delta}$.
Given that $c$ is well-linked in $\delta$, there are three vertex-disjoint paths from $\pi_{\delta}(\tilde{c})$ to $V(\Omega)$.
Even if one of them contains a vertex of $Y$, we can reroute the paths using the wall so that they avoid $Y$.
Hence, $c$ is still well-linked after removing $v$.

Also, by \autoref{claim:1}, $V(\Omega)\cap V(\sigma_{\delta_Y}(c_Y))=\emptyset$, so $c_Y\in C(\tilde{\delta}_Y)$.
Therefore, $c_Y$ is ground-maximal in $\tilde{\delta}_Y$.
Indeed, otherwise, there is a \sdecomp\ $\tilde{\delta}_Y'$ of $G''$ that is more grounded than $\tilde{\delta}_Y$ such that, for any cell $c_Y'$ in $\tilde{\delta}_Y'$, $c_Y'$ is either contained in $c_Y$, or is equal to another cell of $\tilde{\delta}_Y$.
But then we can easily add $Y$ back to obtain a \sdecomp\ $\delta_Y'$ that is more grounded than $\delta_Y$, contradicting the maximality of $\delta_Y$.

Hence, the proof follows from \autoref{cell_in_cell2} applied on $(G'',\Omega)$. This completes the proof of the claim.%
\end{cproof}

Given that the cells of $\delta'$ and $\delta_Y$ near $T$ are contained in cells of $\delta$, it implies that no cell of $\delta'$ nor $\delta_Y$ intersects $T$.
This is what we prove in \autoref{claim:5}.

\begin{claim}\label{claim:5}
$\pi_\delta(T)\subseteq N(\delta')\cap N(\delta_Y)$.
\end{claim}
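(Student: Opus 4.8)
\begin{cproof}
The plan is to establish $\pi_\delta(T)\subseteq N(\delta')$; the inclusion $\pi_\delta(T)\subseteq N(\delta_Y)$ then follows by the identical argument with \autoref{claim:4} in the role of \autoref{claim:2} (restricting, as in its proof, to $G'-Y$). Fix $v\in\pi_\delta(T)$ and suppose, for contradiction, that $v\notin N(\delta')$. Then $v$ lies in the interior of the disk $\Delta_{c'}$ of a unique cell $c'\in C(\delta')$, so every edge of $G'$ incident to $v$ is drawn inside $\Delta_{c'}$; in particular both edges of $C_3$ at $v$, and hence both $C_3$-neighbours of $v$, belong to $\sigma_{\delta'}(c')$.

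I would first show that $c'$ is contained in a cell of $\delta$. Since $C_3$ is a cycle of $W'$ and $W'$ is grounded in $\delta'$, the cycle $C_3$ uses edges of at least two cells of $\delta'$, so $C_3\not\subseteq\sigma_{\delta'}(c')$; following $C_3$ out of $v$ in its two directions it therefore leaves $\sigma_{\delta'}(c')$ at two vertices $w_1,w_2\in\pi_{\delta'}(\tilde{c}')$ that are distinct (otherwise the two $v$--$w_1$ arcs of $C_3$ would together lie inside $\sigma_{\delta'}(c')$, i.e.\ $C_3\subseteq\sigma_{\delta'}(c')$) and different from $v$. Thus $\pi_{\delta'}(\tilde{c}')$ already contains two vertices of $C_3$. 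Now, if $\sigma_{\delta'}(c')$ contained a vertex of $V(\Omega)$ in its interior, then — since $V(\Omega)\subseteq V(D(W))=V(C_1)$ and, by the nesting of the layers $C_1,C_2,C_3$ of $W$ that drives \autoref{claim:1}, $C_1\not\subseteq\sigma_{\delta'}(c')$ — the maximal subpath of $C_1$ through that vertex inside $\sigma_{\delta'}(c')$ would put two further vertices of $C_1$ on $\pi_{\delta'}(\tilde{c}')$, which together with $C_1\cap C_3=\emptyset$ forces $|\tilde{c}'|\ge 4$, a contradiction. Hence $V(\Omega)\cap\bigl(V(\sigma_{\delta'}(c'))\setminus\pi_{\delta'}(\tilde{c}')\bigr)=\emptyset$, so \autoref{cell_in_cell2}, applied to the society $(G',\Omega)$ with the ground-maximal \sdecomp\ $\delta'$, the well-linked rendition $\delta$, and the (ground-maximal) cell $c'$, yields a cell $c\in C(\delta)$ with $V(\sigma_{\delta'}(c'))\subseteq V(\sigma_\delta(c))$ (this is the role played by \autoref{claim:2}, and by \autoref{claim:4} in the symmetric argument for $\delta_Y$).

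It remains to contradict $v\in\pi_\delta(T)$. The two $C_3$-edges $f,f'$ at $v$ lie in distinct cells $c_1\neq c_2$ of $\delta$; all of their endpoints lie in $V(\sigma_{\delta'}(c'))\subseteq V(\sigma_\delta(c))$, and $v$ is a node of $\delta$. Tracing the maximal $\delta$-subpaths of $C_3$ through $f$ and through $f'$, every edge of the first is drawn in $\Delta_{c_1}$ and every edge of the second in $\Delta_{c_2}$, while all their vertices lie in $\sigma_\delta(c)$; using that each cell of $\delta$ has at most three boundary vertices, the requirements that $c_1\neq c_2$, that $v$ be interior to $c'$ with both $C_3$-neighbours inside $\sigma_{\delta'}(c')$, and that $V(\sigma_{\delta'}(c'))$ be confined to the single flap $\sigma_\delta(c)$ cannot be met simultaneously, which is the desired contradiction. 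Hence $v\in N(\delta')$, and, as noted, the same argument gives $v\in N(\delta_Y)$.

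The step I expect to be the main obstacle is the last one: \autoref{cell_in_cell2} only delivers the \emph{vertex-set} containment $V(\sigma_{\delta'}(c'))\subseteq V(\sigma_\delta(c))$, and since $\delta$ and $\delta'$ are a priori unrelated drawings of $G'$ this does not upgrade to an edge containment, so the clash with ``$v$ is a $\delta$-switch vertex of $C_3$'' must be extracted purely from the size-$\le 3$ bounds on cell boundaries together with the position of $v$ in the interior of $c'$ — the same kind of delicate bookkeeping that underlies \autoref{claim:1} and \autoref{claim:3}.
\end{cproof}
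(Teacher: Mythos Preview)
Your approach is essentially the paper's: assume some $x\in\pi_\delta(T)$ lies in the interior of a cell $c_1$ of $\delta_1\in\{\delta',\delta_Y\}$, feed this into \autoref{claim:2}/\autoref{claim:4} (whose hypotheses you correctly re-derive via the argument of \autoref{claim:1} and \autoref{cell_in_cell2}; one small correction: the layer-counting in your step~4 should use $C_2$ and $C_3$, not $C_1$, since $C_1=D(W)$ is not a cycle of the central subwall~$W'$) to obtain containment $V(\sigma_{\delta_1}(c_1))\subseteq V(\sigma_\delta(c))$ for some $c\in C(\delta)$, and then look for a contradiction.

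Where you part ways with the paper is the endgame. The paper does \emph{not} chase the two $C_3$-edges at $x$ into distinct $\delta$-cells as you do. It finishes in one stroke: since $T$ is a track of $\delta$, every vertex of $\pi_\delta(T)$ is a ground vertex of $\delta$, so $x\notin V(\sigma_\delta(c))\setminus\pi_\delta(\tilde c)$; and it asserts the inclusion
\[
V(\sigma_{\delta_1}(c_1))\setminus\pi_{\delta_1}(\tilde c_1)\ \subseteq\ V(\sigma_\delta(c))\setminus\pi_\delta(\tilde c),
\]
which, together with $x\in V(\sigma_{\delta_1}(c_1))\setminus\pi_{\delta_1}(\tilde c_1)$, is the contradiction. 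So the ``main obstacle'' you flag is exactly the point where the paper's argument is terse — you are right that containment from \autoref{cell_in_cell2} is only a vertex-set inclusion and does not literally give this interior-to-interior inclusion for free — but the paper's route is not your edge-by-edge bookkeeping; it simply invokes this inclusion (implicitly relying on the ground-maximality of $c_1$ together with the well-linkedness of $\delta$ already used in \autoref{cell_in_cell2}). Your attempt to extract the clash from $f,f'$ living in two different flaps of $\delta$ is a harder path and, as you yourself note, does not close cleanly; the paper's one-line inclusion is both simpler and the intended argument.
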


\begin{cproof}
Let $\delta_1\in\{\delta',\delta_Y\}$.
Suppose toward a contradiction that $\pi_\delta(T)\setminus N(\delta_1)\ne\emptyset$.
Then there is a cell $c_1\in C(\delta_1)$ and a vertex $x\in\pi_\delta(T)$ such that $x\in V(\sigma_{\delta_1}(c_1))\setminus \pi_{\delta_1}(\tilde{c}_1)$.
Therefore, $|V(\sigma_{\delta_Y}(c_Y))\cap\pi_{\delta}(T)|\ge2$.
Hence, by one of \autoref{claim:2} and \autoref{claim:4}, there exists a cell $c\in C(\delta)$ such that $c_1$ is contained in $c$.
However, given that $T$ is a track in $\delta$, this implies that $x\notin V(\sigma_{\delta}(c))\setminus \pi_{\delta}(\tilde{c})\supseteq V(\sigma_{\delta_1}(c_1))\setminus \pi_{\delta_1}(\tilde{c}_1)$, a contradiction proving the claim.
\end{cproof}

Therefore, there exists a $\delta'$-aligned disk $\Delta'$
such that ${\sf inner}_\delta(C_3)={\sf inner}_{\delta'}(\Delta')$,
and a $\delta_Y$-aligned disk $\Delta_v$
such that ${\sf inner}_\delta(C_3)-v={\sf outer}_{\delta_Y}(\Delta_v)$.
$\mathbb{S}^2-T$ is composed of two open disks.
Up to homeomorphism, we may assume that the closures of these disks are $\Delta_v$ and $\Delta'$, respectively.
Hence, we can define $\delta^*=(\Gamma^*,\Dcal^*)$ such that
$\Dcal^*:=\{\Delta_c\in\Dcal'\mid c\subseteq \Delta'\}\cup\{\Delta_c\in\Dcal_v\mid c\subseteq \Delta_v\}$ and $\Gamma^*=(\Gamma'\cap\Delta')\cup (\Gamma_v\cap\Delta_v)$.
Each cell $c^*$ of $\delta^*$ is either a cell of $\delta'$ or of $\delta_Y$, and is thus ground-maximal, hence the result.
\end{proof}

A direct corollary is that, if the compass of some flat wall $W$ of $G$ is $\Hcal^{(k)}$-planar and that $G-v$ is $\Hcal^{(k)}$-planar, where $v$ is the cental vertex of $W$, then $G$ is $\Hcal^{(k)}$-planar.

\begin{corollary}\label{lem_combinebis}
Let $\Hcal$ be a hereditary graph class.
Let $k,q,r\in\Nbbb$ with $q,r$ odd and $r\ge\max\{\sqrt{(k+7)/2}+2,q+10\}$.
Let $G$ be a graph, $(W,\frak{R}=(A,B,P,C,\rho))$ be a flatness pair of $G$ of height $r$, $G'$ be the $\frak{R}$-compass of $W$, and $Y$ be the vertex set of the $\frak{R}^{(q)}$-compass of a $W^{(q)}$-tilt $(\tilde{W}^{(q)},\frak{R}^{(q)})$ of $(W,\frak{R})$.
Then $G$ is a \yes-instance of \Hkpl\ if and only if $G'$ and $G-Y$ are both \yes-instances of \Hkpl.
\end{corollary}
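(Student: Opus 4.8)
The plan is to prove both directions of the equivalence; the forward one is a routine monotonicity observation, while the backward one is obtained by feeding suitable \sdecomps\ into the gluing lemma \autoref{lem_newirr}.

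\emph{Forward direction.} First I would record that, since $\Hcal$ is hereditary, the class of $\Hcal^{(k)}$-planar graphs is closed under taking induced subgraphs: if $X$ is a planar $\Hcal^{(k)}$-modulator of a graph $G$ and $H=G[V']$ for some $V'\subseteq V(G)$, then $X\cap V'$ is a planar $\Hcal^{(k)}$-modulator of $H$. Indeed, every connected component of $H-(X\cap V')$ is an induced subgraph of a component of $G-X$, hence lies in $\Hcal$ and has at most $k$ vertices; and every edge of $\torso(H,X\cap V')$ is either an edge of $G$ or arises from a path of $G$ whose internal vertices avoid $X$, so $\torso(H,X\cap V')$ is a subgraph of $\torso(G,X)[X\cap V']$ and is thus planar. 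Since $G'=\compass_{\mathfrak{R}}(W)$ and $G-Y$ are induced subgraphs of $G$, applying this to each of them settles this direction.

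\emph{Backward direction.} Assume $G'$ and $G-Y$ are $\Hcal^{(k)}$-planar; I would construct an $\Hcal^{(k)}$-compatible \sdecomp\ of $G$, which by \autoref{obs_sol_compatible} is enough, by verifying the hypotheses of \autoref{lem_newirr}. We may assume $G'$ connected; since $(W,\mathfrak{R})$ is a flatness pair, the society $(G',\Omega)$, with $\Omega$ the cyclic order of $A\cap B$ along $D(W)$, has a rendition, so \autoref{lem_rend_to_min_or_max} provides a well-linked rendition $\delta$ of it. The height bound $r\ge\max\{\sqrt{(k+7)/2}+2,\,q+10\}$ gives in particular $r\ge\max\{\sqrt{(k+7)/2}+2,7\}$, so applying \autoref{cor_compatible} to the $\Hcal^{(k)}$-planar graph $G'$ and the wall $W$ yields a ground-maximal $\Hcal^{(k)}$-compatible \sdecomp\ $\delta'$ of $G'$ in which the $(r-2)$-central wall $W'$ of $W$ is grounded. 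For $G-Y$, I would obtain analogously a ground-maximal $\Hcal^{(k)}$-compatible \sdecomp\ $\delta_Y$ of $G-Y$ in which the outer layers $C_2,\dots,C_5$ of $W$ are grounded; this makes sense because $r\ge q+10$ forces $W^{(q)}$, hence $Y$, to lie inside the sixth layer of $W$, so these outer layers survive in $G-Y$. Then \autoref{lem_newirr} produces a ground-maximal \sdecomp\ $\delta^*$ of $G$ each of whose cells is a cell of $\delta'$ or a cell of $\delta_Y$. Since $\Hcal^{(k)}$-compatibility of a cell depends only on the subgraph $\sigma(c)$ and the set of its boundary vertices, every cell of $\delta^*$ inherits $\Hcal^{(k)}$-compatibility from $\delta'$ or $\delta_Y$; hence $\delta^*$ is $\Hcal^{(k)}$-compatible, and \autoref{obs_sol_compatible} gives that $G$ is $\Hcal^{(k)}$-planar.

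The step I expect to require the most care is producing $\delta_Y$ in exactly the form required by \autoref{lem_newirr}: ground-maximal, $\Hcal^{(k)}$-compatible, and keeping the relevant outer layers of $W$ grounded even though the interior of $W$ (namely $Y$) has been removed, so that only the outer annular part of $W$ is present in $G-Y$. This amounts to re-running the argument behind \autoref{obs_sol_compatible}/\autoref{cor_compatible} with the annulus in place of the central wall, using that a separation of $G-Y$ of order at most four cannot cut more than a bounded number of the pairwise-disjoint outer cycles of $W$. Everything else is bookkeeping: the forward direction is plain induced-subgraph monotonicity, and the entire combinatorial core of the statement is already packaged inside \autoref{lem_newirr}.
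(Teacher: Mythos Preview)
Your proposal is correct and follows the same approach as the paper: heredity gives the forward direction, and for the backward direction you obtain a well-linked rendition of $(G',\Omega)$ via \autoref{lem_rend_to_min_or_max}, ground-maximal $\Hcal^{(k)}$-compatible \sdecomps\ of $G'$ and $G-Y$ via \autoref{cor_compatible}, and then glue using \autoref{lem_newirr} to get an $\Hcal^{(k)}$-compatible \sdecomp\ of $G$.

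One remark: you are in fact more careful than the paper on the point you flag as delicate. The paper invokes \autoref{cor_compatible} uniformly for both $G'$ and $G-Y$, asserting that the $(r-2)$-central wall $W'$ of $W$ is grounded in $\delta_Y$; taken literally this is ill-posed, since $W'$ is not a subgraph of $G-Y$ (the central part of $W$ lies in $Y$). Your observation that only the outer layers $C_2,\dots,C_5$ need to be grounded in $\delta_Y$, and that these survive in $G-Y$ because $r\ge q+10$ forces $Y$ inside the sixth layer, is exactly the right reading: inspecting the proof of \autoref{lem_newirr}, groundedness of $W'$ is used only through these outer layers. Your sketch of how to adapt the argument of \autoref{obs_sol_compatible} to the annular situation (a separation of order at most three in $G-Y$ cannot sever more than a few of the pairwise-disjoint outer cycles) is the correct fix, and no new idea is needed beyond that.
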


\begin{proof}
Obviously, if one of $G'$ and $G-Y$ is not $\Hcal^{(k)}$-planar, then neither is $G$ by heredity of the $\Hcal^{(k)}$-planarity.
Let us suppose that both $G'$ and $G-Y$ are $\Hcal^{(k)}$-planar.
We want to prove that $G$ is $\Hcal^{(k)}$-planar. For this, we find a well-linked rendition $\delta$ of $(G',\Omega)$ and ground-maximal sphere decompositions $\delta'$ of $G'$ and $\delta_Y$ of $G-Y$. 

Let $C_i$ be the $i$-th layer of $W$ for $i\in[(r-1)/2]$ (so $C_1$ is the perimeter of $W$).
Let $\Omega$ be the cyclic ordering of the vertices of $A\cap B$ as they appear in $C_1$.
Hence, $\rho$ is a rendition of $(G'=G[B],\Omega)$.
Then, by \autoref{lem_rend_to_min_or_max}, $(G',\Omega)$ has a well-linked rendition $\delta$.

By \autoref{cor_compatible}, given that $r\ge\max\{\sqrt{(k+7)/2}+2,7\}$, there are two ground-maximal $\Hcal^{(k)}$-compatible \sdecomps\ $\delta'=(\Gamma',\Dcal')$ and $\delta_Y=(\Gamma_Y,\Dcal_Y)$ of $G'$ and $G-Y$, respectively, 
such that the $(r-2)$-central wall $W'$ of $W$ is grounded in both $\delta'$ and $\delta_Y$. 

Then, by \autoref{lem_newirr}, given that $r\ge q+10$, there exists a ground-maximal \sdecomp\ $\delta^*$ of $G$ such that each cell $\delta^*$ is either a cell of $\delta'$ or a cell of $\delta_Y$, and is thus $\Hcal^{(k)}$-compatible.
Thus, by \autoref{obs_sol_compatible}, $G$ is $\Hcal^{(k)}$-planar.
\end{proof}

If we take $q=3$, then we get the following.

\begin{corollary}\label{lem_combine}
Let $\Hcal$ be a hereditary graph class.
Let $k,r\in\Nbbb$ with $r\ge\max\{\sqrt{(k+7)/2}+2,13\}$.
Let $G$ be a graph, $(W,\frak{R}=(X,Y,P,C,\rho))$ be a flatness pair of $G$ of height $r$, $G'$ be the $\frak{R}$-compass of $W$, and $v$ be a central vertex of $W$.
Then $G$ is a \yes-instance of \Hkpl\ if and only if $G'$ and $G-v$ are both \yes-instances of \Hkpl.
\end{corollary}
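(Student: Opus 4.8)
The plan is to obtain \autoref{lem_combine} as the special case $q=3$ of \autoref{lem_combinebis}, bridged by the elementary observation that the class of $\Hcal^{(k)}$-planar graphs is hereditary whenever $\Hcal$ is. For the latter: given a planar $\Hcal^{(k)}$-modulator $X$ of a graph $H$ and a vertex $u\in V(H)$, the set $X\setminus\{u\}$ (if $u\in X$) or $X$ (if $u\notin X$) is a planar $\Hcal^{(k)}$-modulator of $H-u$; indeed the connected components of the remainder are induced subgraphs of the original components, so they stay in $\Hcal$ and of size at most $k$ by heredity of $\Hcal$, while the new torso is a subgraph of the old torso minus $u$, hence planar. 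This is precisely the ``obviously\dots'' step already invoked at the start of the proof of \autoref{lem_combinebis}.

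With this in hand, the forward direction is immediate: $G'=\compass_{\frak R}(W)$ is an induced subgraph of $G$ and $G-v$ is obtained from $G$ by a single vertex deletion, so if $G$ is a \yes-instance of \Hkpl\ then so are $G'$ and $G-v$. For the converse, assume $G'$ and $G-v$ are both $\Hcal^{(k)}$-planar. Fix a $W^{(3)}$-tilt $(\tilde W^{(3)},\frak R^{(3)})$ of $(W,\frak R)$ (it exists, e.g.\ by \autoref{prop_tilt}, since $r\ge 13\ge 3$ and $3$ is odd) and let $Z$ be the vertex set of its $\frak R^{(3)}$-compass. The key small point is that $v\in Z$: a central vertex of $W$ lies in none of the layers of $W$, hence it survives in the central $3$-subwall $W^{(3)}$ and moreover lies in the interior of $W^{(3)}$ (the only layer of a $3$-wall is its perimeter, which a central vertex avoids); since a tilt preserves the interior, $v\in V(\tilde W^{(3)})\subseteq\compass_{\frak R^{(3)}}(\tilde W^{(3)})=G[Z]$. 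Therefore $G-Z$ is an induced subgraph of $G-v$, and so $G-Z$ is $\Hcal^{(k)}$-planar by the heredity fact above.

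Finally, the hypothesis $r\ge\max\{\sqrt{(k+7)/2}+2,13\}$ is exactly $r\ge\max\{\sqrt{(k+7)/2}+2,q+10\}$ for $q=3$, so \autoref{lem_combinebis} applies with this $q$ and this $Z$: from the fact that both $G'$ and $G-Z$ are \yes-instances of \Hkpl\ we conclude that $G$ is a \yes-instance, finishing the proof. I expect the only genuinely delicate point to be the verification that $v\in Z$ --- i.e.\ that the irrelevant-vertex candidate chosen in the center of $W$ really lies inside the piece $Z$ that \autoref{lem_combinebis} deletes --- since this is what lets one pass from the cleanly-stated ``delete $Z$'' version to the ``delete one vertex $v$'' version; everything else is routine bookkeeping with heredity. (Alternatively, one could bypass \autoref{lem_combinebis} and apply \autoref{lem_newirr} directly to $G$, $(W,\frak R)$ and $Y\coloneqq Z$, after using \autoref{cor_compatible} to produce the required ground-maximal $\Hcal^{(k)}$-compatible \sdecomps\ of $G'$ and $G-Z$; but routing through \autoref{lem_combinebis} is shorter.)
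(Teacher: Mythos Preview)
Your proposal is correct and takes essentially the same approach as the paper: the paper derives \autoref{lem_combine} from \autoref{lem_combinebis} by setting $q=3$, and your argument spells out exactly the implicit bridging steps---heredity of $\Hcal^{(k)}$-planarity and the fact that the central vertex $v$ lies in the compass $Z$ of any $W^{(3)}$-tilt, so that $G-Z$ inherits $\Hcal^{(k)}$-planarity from $G-v$. The paper's own proof is the single line ``If we take $q=3$, then we get the following,'' so your write-up is simply a careful unpacking of that sentence.
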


\subsection{Proof of \autoref{lem_small_leaves}}\label{subsec_proof}

We can finally prove \autoref{lem_small_leaves}.

\smallleaves*

\begin{proof}
We apply \autoref{prop_flatwallth} to $G$,
with $k'=\lceil\sqrt{k+4}\rceil+2$ and $r=\max\{\odd(\sqrt{(k+7)/2}+2),13\}$.
It runs in time $\Ocal_k(n+m)$.

If $G$ has treewidth at most $f_{\ref{prop_flatwallth}}(k')\cdot r$, then we apply \autoref{Courcelle} to $G$ in time $\Ocal_k(n)$ and solve the problem. We can do so because 
the graphs in $\Hcal^{(k)}$ have a bounded size, so $\Hcal^{(k)}$ is a finite graph class, hence trivially CMSO-definable.
Therefore, by \autoref{obs_CMSO}, \Hkpl\ is expressible in CMSO logic.

If $G$ contains an apex grid of height $k'$ as a minor, then by \autoref{lem:obstructions}, we obtain that $G$ has no planar $\Hcal^{(k)}$-modulator and report a \no-instance.

Hence, we can assume that there is a flatness pair $(W,\frak{R})$ of height $r$ in $G$ whose $\frak{R}$-compass $G'$ has treewidth at most $f_{\ref{prop_flatwallth}}(k')\cdot r$.
Let $v$ be a central vertex of $W$.
We apply \autoref{Courcelle} to $G'$ in time $\Ocal_k(n)$ and 
we recursively apply our algorithm to $G-v$. 
If the outcome is a \no-instance for one of them, then this is also a \no-instance for $G$.
Otherwise, the outcome is a \yes-instance for both.
Then, by \autoref{lem_combine}, we can return a \yes-instance.

The running time of the algorithm is $T(n)=\Ocal_k(n+m)+T(n-1)=\Ocal_k(n(n+m))$.
\end{proof}

Note that the running time of \autoref{prop_FWth}, and thus \autoref{prop_flatwallth} can be modified so that the dependence in $t$ (resp. $k$) and $r$ is explicit.
Therefore, the only reason we cannot give an explicit dependence in $k$ here is Courcelle's theorem.

\section{Planar elimination distance}
\label{cosi46y}

In this section, we prove \autoref{cor_bigH} in \autoref{subsec_big_leaf}, and \autoref{lem_edk} in \autoref{subsec_small_ed}, after having given a necessary auxiliary result in \autoref{subsec_aux}.

\subsection{Finding a big leaf in $\Hcal$}\label{subsec_big_leaf}

In this section, we prove \autoref{cor_bigH}.\medskip

The algorithm uses the following result.

\begin{proposition}[\cite{ChitnisCHPP16desi}]\label{rd_sampling}
Given a set $V$ of size $n$ and $a,b\in[0,n]$, one can construct in time $2^{\Ocal(\min\{a,b\}\log(a+b))}\cdot n\log n$
a family $\Fcal_{a,b}$ of at most $2^{\Ocal(\min\{a,b\}\log(a+b))}\cdot \log n$ subsets of $V$ such that the following holds: 
for any disjoint sets $A,B\subseteq U$ with $|A|\le a$ and $|B|\le b$, there exists a set $R\in\Fcal_{a,b}$ such that $A\subseteq R$ and $B\cap R=\emptyset$.
\end{proposition}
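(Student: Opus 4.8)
The plan is to prove \autoref{rd_sampling} by the standard two-stage derandomization scheme: shrink the universe to size polynomial in $a+b$ via a splitter, solve the bounded-universe version by brute force, and pull the family back. First I would reduce to the case $a=\min\{a,b\}$: if $a>b$, then asking for $R$ with $A\subseteq R$ and $B\cap R=\emptyset$ is the same as asking for $V\setminus R$ to contain $B$ and avoid $A$, so it suffices to build the family for the instance with $a$ and $b$ swapped and then complement every set in $V$. The degenerate cases are immediate ($\Fcal_{a,b}=\{\emptyset\}$ when $\min\{a,b\}=0$, and $\{\emptyset,V\}$ covers $a+b\le 1$), so from now on assume $1\le a\le b$.

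For the universe-shrinking step I would invoke a standard splitter construction (see, e.g., \cite{cygan2015parameterized}): with $k:=a+b$ and $m:=k^2$ there is a family $\Hcal$ of functions $h\colon V\to[m]$ of size $k^{\Ocal(1)}\log n$, constructible in time $k^{\Ocal(1)}\cdot n\log n$, such that every $S\subseteq V$ with $|S|\le k$ is mapped injectively by some $h\in\Hcal$. Since $1\le a$, the size and construction time here are $2^{\Ocal(\log(a+b))}\log n$ and $2^{\Ocal(\log(a+b))}\cdot n\log n$, comfortably within the target bounds.

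The bounded-universe step is where the gain over a naive $2^{\Ocal(a+b)}$ bound comes from, and it is almost immediate once one works with the \emph{smaller} set: let $\Gcal$ be the family of all subsets of $[m]$ of size at most $a$. Then $|\Gcal|\le(a+1)\binom{(a+b)^2}{a}=2^{\Ocal(a\log(a+b))}$, and $\Gcal$ is enumerable in that time. For any disjoint $A',B'\subseteq[m]$ with $|A'|\le a$, the set $G:=A'$ lies in $\Gcal$, contains $A'$, and is disjoint from $B'$ — so $\Gcal$ already handles every relevant pair on the small universe, with no dependence on $b$ or on $B'$ whatsoever. Setting $\Fcal_{a,b}:=\{h^{-1}(G)\mid h\in\Hcal,\ G\in\Gcal\}$ then gives $|\Fcal_{a,b}|=|\Hcal|\cdot|\Gcal|=2^{\Ocal(\min\{a,b\}\log(a+b))}\log n$, computable in time $|\Hcal|\cdot|\Gcal|\cdot\Ocal(n)$ plus the splitter cost, i.e. $2^{\Ocal(\min\{a,b\}\log(a+b))}\cdot n\log n$.

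For correctness, given disjoint $A,B\subseteq V$ with $|A|\le a$ and $|B|\le b$, choose $h\in\Hcal$ injective on $A\cup B$ (possible since $|A\cup B|\le k$); then $h(A)$ and $h(B)$ are disjoint and $|h(A)|\le a$, so $G:=h(A)\in\Gcal$, and $R:=h^{-1}(G)$ satisfies $A\subseteq R$ while $R\cap B=\emptyset$ (a vertex of $R\cap B$ would have its $h$-image in $h(A)$, forcing equality with some vertex of $A$ by injectivity on $A\cup B$, contradicting $A\cap B=\emptyset$). The only nontrivial external input is the splitter of the second step; everything else is elementary counting, and the conceptual point making the exponent $\min\{a,b\}\log(a+b)$ rather than $a+b$ is that, after shrinking to a universe of size $(a+b)^2$, disjointness of $A$ and $B$ lets us take the ``inside'' set to be exactly the image of $A$, so only the smaller side is ever paid for.
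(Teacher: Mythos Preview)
Your argument is correct and follows the standard splitter-based derandomization route (shrink the universe with an $(n,k,k^2)$-splitter, then enumerate subsets of size at most $\min\{a,b\}$ on the small universe). The paper, however, does not prove \autoref{rd_sampling} at all: it is stated as a black box cited from \cite{ChitnisCHPP16desi}, so there is no ``paper's own proof'' to compare against. Your write-up is a faithful reconstruction of the argument behind that cited result; the only small point worth tightening is that an $(n,k,k^2)$-splitter is usually stated for sets of size \emph{exactly} $k$, so you should mention the trivial padding step when $|A\cup B|<a+b$.
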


\begin{proof}[Proof of \autoref{cor_bigH}]
The algorithm goes as follows. 
We construct the family $\Fcal_{a,k'}$ of \autoref{rd_sampling} in time $2^{\Ocal(k\log(a+k))}\cdot n\log n$.
For each $U\in\Fcal_{a,k'}$, we do the following.
We construct $\Ccal_U:=\{C\in\cc(G-U)\mid C\notin\Hcal\}$ in time $\Ocal(n^c+n+m)$ and set $Z_U:=\bigcup_{C\in\Ccal_U}V(C)$.
If $|A_U|\le k'$, where $A_U:=N_G(Z_U)$, then let $C_U\in\cc(G-A_U)$ be the unique component of size at least $a$, if it exists.
We compute, if it exists, a minimum solution $S_U$ of \pbdel for the instance $(C_U,k'-|A_U|)$ in time $k'\cdot f(k')\cdot n^c$. 
For each subset $Y_U\subseteq V(G)\setminus N_G[V(C_U)]$ (which has size at most $a-1$), we set $X_U:=Y_U\cup A_U\cup S_U$.
We check whether $\torso(G,X_U)\in\Gcal_k$ (in time at most $2^{\Ocal((a+k)^2)}+n+m$ given that there are at most $2^{|X_U|^2}$ graphs with $|X_U|\le k'+a-1$ vertices) and $G-X_U\in\Hcal$ (in time $\Ocal(n^c)$).
If that is the case, we return the set $X_U$.
If, for every $U\in\Fcal_{a,k'}$, we did not return anything, then we return a \no-instance.

\paragraph{Running time.} Given that $|\Fcal_{a,k'}|\le 2^{\Ocal(k\log(a+k))}\cdot \log n$, the algorithm takes time $f(k)\cdot 2^{\Ocal((a+k)^2)}\cdot \log n\cdot (n^c+n+m)$.

\paragraph{Correctness.} Obviously, if the algorithm outputs a set $X_U$, then it is a $\Gcal_k\triangleright\Hcal$-modulator of $G$.
It remains to show that if $G$ admits a big-leaf $\Gcal_k\triangleright\Hcal$-modulator, then the algorithm indeed outputs some set $X_U$.
Assume that $G$ admits a big-leaf $\Gcal_k\triangleright\Hcal$-modulator $X$ with big leaf $D$.

We set $L:=N_G[V(D)]$, $R:=V(G)\setminus V(D)$, $A:=L\cap R=N_G(V(D))$, and $B:=R\setminus L=V(G)-N_G[V(D)]$.
By \autoref{obs_clique}, we have $|N_G(V(D))|\le k'$.
Therefore, $(L,R)$ is a separation of $G$ of order at most $k'$.
Moreover, $G$ is $(a,k')$-unbreakable, so given that $|L\setminus R|=|V(D)|\ge a$, we conclude that $|B|\le a-1$.
Therefore, by \autoref{rd_sampling}, we can construct in time $2^{\Ocal(k\log(a+k))}\cdot n\log n$
a family $\Fcal$ of at most $2^{\Ocal(k\log(a+k))}\cdot \log n$ subsets of $V(G)$ such that there exists $U\in\Fcal$ with $A\subseteq U$ and $B\cap U=\emptyset$.

\begin{figure}[h]
\center
\includegraphics{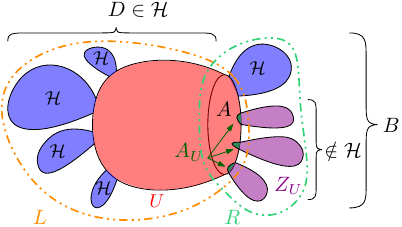}
\caption{Illustration for the correctness of \autoref{cor_bigH}. $C_U$ is the union of the red and the blue part.}
\end{figure}

By heredity of $\Hcal$, for each $C\in\Ccal_U$, $C$ is not a subgraph of $D$.
Therefore, $V(C)\subseteq B$, and thus $N_G(V(C))\subseteq A$.
Hence, $Z_U\subseteq B$ and $A_U\subseteq A$.
In particular, $|Z_U|\le|B|<a$.
Moreover, $V(D)\subseteq V(G)\setminus A_U$, so, given that $G$ is $(a,k')$-unbreakable, the component $C_U\subseteq \cc(G-A_U)$ such that $V(D)\subseteq V(C_U)$ is the unique component of size at least $a$, and $|V(G)\setminus N_G[V(C_U)]|<a$.
Given that, for each $C\in\cc(G-U)\setminus\Ccal_U$, $C\in\Hcal$, and that $\Hcal$ is union-closed, we conclude that $A\setminus A_U$ is a solution of \pbdel for the instance $(C_U,k'-|A_U|)$.
Therefore, the algorithm should find a minimum solution $S_U$ of \pbdel for the instance $(C_U,k'-|A_U|)$.

We set $Y_U:=X\setminus N_G[V(C_U)]$.
It remains to prove that $X_U:=Y_U\cup A_U\cup S_U$ is a $\Gcal_k\triangleright\Hcal$-modulator of $G$.
For each $C\in \cc(G-X_U)$, we have either $C\in \cc(G-X)$, or $V(C)\subseteq V(C_U)$.
In the first case, it immediately implies that $C\in \Hcal$.
In the second case, it implies $N_G(C)\subseteq A_U\cup S_U$. Hence, given that $C_U\in\cc(G-A_U)$ and that $S_U$ is a solution of \pbdel for the instance $(C_U,k'-|A_U|)$, we also conclude that $C_U\in\Hcal$.
It remains to prove that $\torso(G,X_U)\in\Gcal_k$.
We only write the proof for planar treedepth and planar treewidth, as the proof is simpler for treedepth and treewidth.

\begin{claim}
If $\Gcal_k$ is the class of graphs of planar treedepth at most $k$, then $\torso(G,X_U)\in\Gcal_k$.
\end{claim}

\begin{cproof}
Given that $X$ is a $\Gcal_k\triangleright\Hcal$-modulator of $G$, there is a certifying elimination sequence $X_1,\dots, X_k$.
We need to prove that $\torso(G,X_U)\in\Gcal_k$, or, equivalently, that there is a certifying elimination sequence $X_1',\dots, X_k'$ whose union is $X_U$.
Remember that $A':=A\setminus A_U$ is a solution of \pbdel for the instance $(C_U,k'-|A_U|)$.
By minimality of $S_U$, we have $|S_U|\le|A'|$, so there is an injective function $\tau$ from $S_U$ to $A'$.
We define the partition $(X_1',\dots,X_k')$ of $X_U$ such that, for $i\in[k]$, $X_i'=X_i\setminus A'\cup \tau^{-1}(A'\cap X_i)$ (remember that $X_i\setminus A'\subseteq Y\cup A_U$ and that $\tau^{-1}(A'\cap X_i)\subseteq S_U$).
We define $G_1'=G$ and $G_{i+1}'=G_i'-X_i'$ for $i\in[k]$.
It remains to prove that $\torso(G_i',X_i')$ is planar, for $i\in[k]$.
Given that $|A\cap X_i|\le 4$, we conclude that $(A_U\cup S_U)\cap X_i'$ induces at most a $K_4$ in $\torso(G_i',X_i')$.
Moreover, $N_G(A')\subseteq A_U$, and thus $N_{\torso(G_i',X_i')}(S_U\cap X_i')\subseteq A_U\cap X_i'$, so $\torso(G_i',X_i')$ is indeed planar.
Therefore, $X_1',\dots, X_k'$ is indeed a certifying elimination sequence.
\end{cproof}

\begin{claim}
If $\Gcal_k$ is the class of graphs of planar treewidth at most $k$, then $\torso(G,X_U)\in\Gcal_k$.
\end{claim}

\begin{cproof}
Let $(T,\beta)$ be a tree decomposition of 
$\torso(G,X)$ of planar width at most $k$.
Given that $A=N_G(V(D))$ induces a clique in $\torso(G,X)$, there is $t\in V(T)$ such that $A\subseteq \beta(t)$.
Let $A'=A\setminus A_U$.
Let $(T,\beta')$ be the tree decomposition of $\torso(G,X_U)$ such that $\beta'(t)=\beta(t)\setminus A'\cup S_U$ and $\beta'(t')=\beta(t')\setminus A'$ for $t'\in V(T)\setminus\{t\}$.
Observe that, for each $C\in\cc(G-X_U)$, either $C\subseteq C_U$, in which case $N_G(V(C))\subseteq A_U\cup S_U\subseteq \beta'(t)$, or $C\nsubseteq C_U$, in which case $N_G(V(C))\subseteq A_U\cup Z_U\subseteq\beta'(t')$ for some $t'\in V(T)$.
Additionally, $N_{\torso(G,X_U)}(S_U)\subseteq A_U$.
This justify that $(T,\beta')$ is indeed a tree decomposition of $\torso(G,X_U)$.
If $\beta(t)$ has size at most $k$, then so does $\beta'(t)$ given that $|S_U|\le |A'|$.
If $\beta(t)$ has a planar torso,  then $|A_U\cup S_U|\le|A|\le 4$, and $N_{\torso(G,X_U)}(S_U)\subseteq A_U$, so the torso of $\beta'(t)$ is also planar.
Therefore, $(T,\beta')$ has planar width at most $k$, hence the result.
\end{cproof}
\end{proof}

\subsection{An auxiliary lemma}\label{subsec_aux}

The following lemma says that, given a big enough flat wall $W$ and a vertex set $S$ of size at most $k$, there is a smaller flat wall $W'$ such that the central vertices of $W$ and the intersection of $S$ with the compass of $W'$ are contained in the compass of central 5-wall of $W'$.
This result is used in \cite{SauST23kapices}, but is not stated as a stand-alone result, so we reprove it here for completeness.

\begin{lemma}\label{claim_irr}
There exists a function $f_{\ref{claim_irr}}: \mathbb{N}^{3}\to \mathbb{N},$
whose images are odd integers, such that the following holds.

Let
$k,q,z\in\mathbb{N}$, with odd $q\geq 3$ and odd $z\ge 5$,
$G$ be a graph,
$S\subseteq V(G)$, where $|S|\le k$,
$(W,\mathfrak{R})$ be a flatness pair of $G$ of height at least $f_{\ref{claim_irr}}(k,z,q)$,
and $(W',\mathfrak{R}')$ be a $W^{(q)}$-tilt of $(W,\mathfrak{R})$.
Then, there is a flatness pair $(W^*,\frak{R}^*)$ of $G$ such that:
\begin{itemize}
\item $(W^*,\frak{R}^*)$ is a $\tilde{W}'$-tilt of $(W,\frak{R})$ for some $z$-subwall $\tilde{W}'$ of $W$, and
\item $V(\textsf{Compass}_{\mathfrak{R}'}(W'))$ and $S\cap V(\textsf{Compass}_{\mathfrak{R}^*}(W^*))$ are both subsets of the vertex set of the compass of every $W^{*(5)}$-tilt of $(W^*,\frak{R}^*)$. 
\end{itemize}
Moreover, $f_{\ref{claim_irr}}(k,z,q)=\odd((k+1)\cdot (z+1)+q).$
\end{lemma}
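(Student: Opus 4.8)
The plan is to prove \autoref{claim_irr} by a pigeonhole argument over a family of well-separated subwalls. Fix $k,q,z\in\Nbbb$ with $q,z$ odd, $q\ge3$, $z\ge5$, and set $f_{\ref{claim_irr}}(k,z,q):=\odd((k+1)\cdot(z+1)+q)$. Let $G$, $S$ with $|S|\le k$, the flatness pair $(W,\frak{R})$ of height at least $f_{\ref{claim_irr}}(k,z,q)$, and a $W^{(q)}$-tilt $(W',\frak{R}')$ be given. First I would locate, inside $W$, a packing of $k+1$ pairwise ``far apart'' subwalls $W_1,\dots,W_{k+1}$, each of height $z$, arranged so that (i) each $W_i$ is an $\frak{R}$-normal subwall of $\compass_{\frak{R}}(W)$, (ii) the central $q$-subwall $W^{(q)}$ sits strictly inside the region carved out by one designated ``frame'' that all the $W_i$ lie outside of, and (iii) the influences $\influence_{\frak{R}}(W_i)$ are pairwise disjoint and each is disjoint from $\influence_{\frak{R}}(W^{(q)})$. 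The height bound $(k+1)(z+1)+q$ is exactly what is needed to fit $k+1$ disjoint copies of a $z$-wall plus a central $q$-wall with one layer of separation between consecutive pieces; this is a routine ``subwall packing'' computation of the kind used repeatedly in the flat-wall literature (and is why the image of $f_{\ref{claim_irr}}$ is taken odd).

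**The pigeonhole step.**
Since $|S|\le k$ and the influences $\cupall\influence_{\frak{R}}(W_i)$ (more precisely, the compasses of the central $5$-subwalls of $W_i$-tilts) are pairwise vertex-disjoint, at least one index $i_0\in[k+1]$ satisfies $S\cap V(\compass_{\frak{R}_{i_0}}(W^{*(5)}))$ capturing \emph{no} vertex outside the central region — more carefully, there is an $i_0$ such that none of the (at most $k$) vertices of $S$ that could spoil the conclusion lies in the ``outer ring'' of $W_{i_0}$. Concretely, I would argue: for each $s\in S$, the cell(s) of $\frak{R}$ whose $\sigma$-image contains $s$ can meet $\influence_{\frak{R}}(W_i)$ for at most one $i$; hence the set of ``bad'' indices has size at most $k$, so some $i_0$ is good. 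Set $\tilde W':=W_{i_0}$ and let $(W^*,\frak{R}^*)$ be a $\tilde W'$-tilt of $(W,\frak{R})$, obtained via \autoref{prop_tilt}. By the definition of a tilt, $\frak{R}^*$ has no $\tilde W'$-external cells, its internal cells agree with the internal cells of $W_{i_0}$ in $\frak{R}$, and $\compass_{\frak{R}^*}(W^*)\subseteq\cupall\influence_{\frak{R}}(W_{i_0})$.

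**Checking the two conclusions.**
For the first bullet, $(W^*,\frak{R}^*)$ is a $\tilde W'$-tilt of $(W,\frak{R})$ with $\tilde W'$ a $z$-subwall of $W$ by construction. For the second bullet I would verify two containments. (a) $V(\compass_{\frak{R}'}(W'))$: because $W^{(q)}$ was placed strictly inside the frame separating it from all the $W_i$, and because a $W^{(q)}$-tilt's compass is contained in $\cupall\influence_{\frak{R}}(W^{(q)})$, the compass of $W'$ is disjoint from the outer layers of $W_{i_0}$; equivalently, after passing to any $W^{*(5)}$-tilt the discarded outer two layers of $W^*$ contain none of $V(\compass_{\frak{R}'}(W'))$, so $V(\compass_{\frak{R}'}(W'))$ lies in the compass of every $W^{*(5)}$-tilt. (b) $S\cap V(\compass_{\frak{R}^*}(W^*))$: by the choice of $i_0$ as a good index, every $s\in S$ lying in $\compass_{\frak{R}^*}(W^*)$ actually lies in the central-$5$-wall region of $W^*$, hence survives into every $W^{*(5)}$-tilt. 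Combining (a) and (b) gives the claim.

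**Main obstacle.**
The delicate point is not the counting but making the topological bookkeeping of tilts precise: one must ensure that ``being inside the central $5$-subwall region'' is preserved when passing from $\frak{R}$ to the tilt $\frak{R}^*$ and then to an arbitrary $W^{*(5)}$-tilt, since tilts may introduce new cells of boundary size at most $2$ and re-route the rendition near the perimeter. I would handle this by invoking the tilt axioms (the internal cells and their $\sigma$-images are preserved, and the compass of the tilt is squeezed inside $\cupall\influence_{\frak{R}}(\tilde W')$) together with the fact that the ``outer two layers'' of a wall that get stripped to form the central $5$-subwall are exactly the part where tilts are allowed to differ — so anything lying in the influence of $W_{i_0}^{(5)}$ is untouched. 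This is precisely the argument that \cite{SauST23kapices} uses implicitly; spelling it out with the cell-classification vocabulary from \autoref{subsec_flat} is the technical heart of the proof.
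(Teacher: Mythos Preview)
Your packing picture is the wrong one, and it breaks the first containment. You set up the $W_i$ so that the influences $\influence_{\frak{R}}(W_i)$ are pairwise disjoint and each is disjoint from $\influence_{\frak{R}}(W^{(q)})$. But then, by the tilt axioms you yourself cite, $\compass_{\frak{R}^*}(W^*)\subseteq\cupall\influence_{\frak{R}}(W_{i_0})$, which is \emph{disjoint} from $\cupall\influence_{\frak{R}}(W^{(q)})\supseteq V(\compass_{\frak{R}'}(W'))$. So $V(\compass_{\frak{R}'}(W'))$ is not contained in $\compass_{\frak{R}^*}(W^*)$ at all, let alone in the compass of any $W^{*(5)}$-tilt of it. Your sentence ``the discarded outer two layers of $W^*$ contain none of $V(\compass_{\frak{R}'}(W'))$, so $V(\compass_{\frak{R}'}(W'))$ lies in the compass of every $W^{*(5)}$-tilt'' is a non-sequitur: not being in the stripped annulus does not put you in the remaining disk if you were never in the bigger disk to begin with.

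What the paper does---and what the arithmetic $(k+1)(z+1)+q$ actually encodes---is a \emph{concentric} construction. The $z$-subwalls $W_1,\ldots,W_{k+1}$ are nested frames of $W$, each containing $W^{(q)}$ in its interior; explicitly, $W_i$ is built from the horizontal and vertical paths at depth roughly $(i-1)(z'+1)+1,\ldots,(i-1)(z'+1)+z'$ from both sides of $W$, where $z'=(z+1)/2$. The influences are then \emph{nested}, not disjoint, and what is pairwise disjoint is the annuli $D_i:=V(\compass_{\frak{R}_i}(W_i'))\setminus V(\cupall\influence_{\frak{R}_i}(L^i_{\sf in}))$, where $L^i_{\sf in}$ is the inner layer of $W_i$. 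Pigeonhole on the $D_i$ gives an index $j$ with $S\cap D_j=\emptyset$, i.e.\ $S\cap V(\compass_{\frak{R}_j}(W_j'))\subseteq V(\cupall\influence_{\frak{R}_j}(L^j_{\sf in}))$. Since $L^j_{\sf in}$ is the perimeter of $W_j^{(3)}$, both this set and $V(\compass_{\frak{R}'}(W'))\subseteq V(\cupall\influence_{\frak{R}}(W^{(q)}))\subseteq V(\cupall\influence_{\frak{R}}(L^j_{\sf in}))$ land inside the compass of any $W_j^{(5)}$-tilt. Your phrases ``outer ring'' and ``central $5$-wall region'' suggest you half-had this picture, but the setup you wrote down (disjoint influences, $W_i$ outside the frame) is incompatible with it.
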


\begin{proof}
Let $r:=f_{\ref{claim_irr}}(k,z,q)$.
For every $i\in[r],$ we denote by $P_{i}$ (resp. $Q_{i}$) the $i$-th vertical (resp. horizontal) path of $W.$ Let $z':=\frac{z+1}{2}$
and observe that, since $z$ is odd, we have $z'\in \mathbb{N}.$
We also define, for every $i\in[k+1]$ the graph
\[B_{i}:=\bigcup_{j\in [z'-1]}P_{f_{z'}(i,j)} \cup\bigcup_{j\in [z']} P_{r+1-f_{z'}(i,j)}\cup \bigcup_{j\in [z'-1]}Q_{f_{z'}(i,j)} \cup \bigcup_{j\in [z']} Q_{r+1-f_{z'}(i,j)},\]
where $f_{z'}(i,j):=j+(i-1)\cdot (z'+1)$.
For every $i\in[k+1],$ we define ${W}_{i}$ to be the graph obtained from $B_{i}$
after repeatedly removing from $B_{i}$ all vertices of degree one (see \autoref{label_desasosegado} for an example).
Since $z=2z'-1,$ for every $i\in[k+1]$, ${W}_{i}$ is a $z$-subwall of $W.$
For every $i\in[k+1],$  we set $L^{i}_\mathsf{in}$ to be the inner layer of $W_{i}.$
Notice that $L^{i}_\mathsf{in},$ for $i\in[k+1],$ and $D(W^{(q)})$ are $\mathfrak{R}$-normal cycles of $\mathsf{Compass}_{\mathfrak{R}}(W).$

\begin{figure}[ht]
\centering
\includegraphics[scale=0.5]{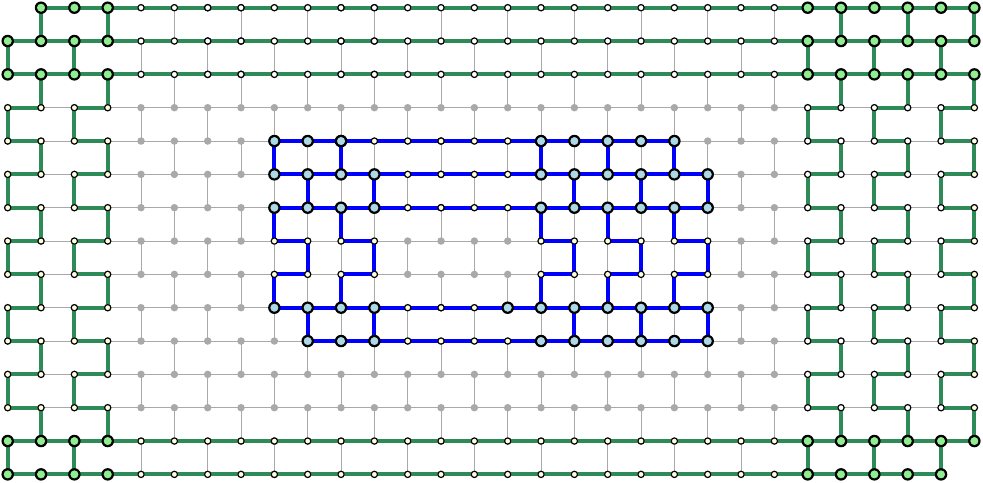}
\caption{A 15-wall and the 5-walls $W_1$ and $W_2$ as in the proof of  \autoref{claim_irr}, depicted in green and blue, respectively. The white vertices are subdivision vertices of the walls $W_1$ and $W_2$.}
\label{label_desasosegado}
\end{figure}

\smallskip
By definition of a tilt of a flatness pair, it holds that $V(\textsf{Compass}_{\mathfrak{R}'}(W'))\subseteq V(\cupall\mathsf{Influence}_{\mathfrak{R}}(W^{(q)}))$.
Moreover, for every $i\in[k+1],$ the fact that $r\geq (k+1)\cdot (z+2)+q$ implies that
$\cupall\mathsf{Influence}_{\mathfrak{R}} (W^{(q)})$ is a subgraph of $\cupall\mathsf{Influence}_{\mathfrak{R}} (L_{\mathsf{in}}^{i}).$
Hence, for every $i\in[k+1],$ we have that
$V(\textsf{Compass}_{\mathfrak{R}'}(W'))\subseteq V(\cupall\mathsf{Influence}_{\mathfrak{R}}(L_{\mathsf{in}}^{i})).$

\smallskip
For every $i\in[k+1],$ let $(W_{i}',\mathfrak{R}_{i})$ be a flatness pair of $G$ that is a $W_{i}$-tilt of $(W,\mathfrak{R})$ ({which exists due to \autoref{prop_tilt}}).
Also, note that, for every $i\in[k+1],$  $L_{\mathsf{in}}^i$ is the inner layer of $W_{i}$ and therefore it is
an $\mathfrak{R}_i$-normal cycle of $\mathsf{Compass}_{\mathfrak{R}_i}(W_i').$

\smallskip
For every $i\in[k+1],$ we set
$D_{i} := V(\textsf{Compass}_{\mathfrak{R}_i }(W_i'))\setminus V(\cupall\mathsf{Influence}_{\mathfrak{R}_i }(L_\mathsf{in}^{i}))$.
Given that the vertices of $V(W_i)$ are contained between the $((i-1)\cdot(z'+1)+1)$-th and the $(i\cdot(z'+1)-1)$-th layers of $W$ for $i\in[k+1]$, it implies that the vertex sets $D_i,$ $i\in[k+1]$, are pairwise disjoint.
Therefore, since that $|S|\le k$, there exists a $j\in [k+1]$ such that
$S\cap D_j=\emptyset.$
Thus, $S\cap V(\textsf{Compass}_{\mathfrak{R}_j}(W_j'))\subseteq V(\cupall\mathsf{Influence}_{\mathfrak{R}_j}(L_\mathsf{in}^j))$.

\smallskip
Let $Y$ be the vertex set of the compass of some $W_j^{(5)}$-tilt
of $(W_j',\mathfrak{R}_j)$.
Note that $L_{\sf in}^j$ is the perimeter of $W_j^{(3)}$, and therefore, we have
$S\cap V(\textsf{Compass}_{\mathfrak{R}_j}(W_j'))\subseteq V(\cupall\mathsf{Influence}_{\mathfrak{R}_j}(L_\mathsf{in}^j))\subseteq Y$ and $V(\textsf{Compass}_{\mathfrak{R}'}(W'))\subseteq V(\cupall\mathsf{Influence}_{\mathfrak{R}}(L_\mathsf{in}^j))\subseteq Y$.
Therefore, $(W_{j}',\mathfrak{R}_{j})$ is the desired flatness pair.
\end{proof}

\subsection{The algorithm}\label{subsec_small_ed}

We can now prove \autoref{lem_edk}.

\begin{proof}[Proof of \autoref{lem_edk}]
We set $\alpha=\sqrt{a+3}+2$, $d=\alpha^4$, $s'=a+4k-3$, $s=(d-1)\cdot g_{\ref{prop_FWth}}(k')+s'$, $z=\max\{\sqrt{(a+4k-2)/2}+6,11\}$, $r_3=f_{\ref{claim_irr}}(4(k-1),z,3)$, $r_2=\lceil \sqrt{s}\cdot (r_3+1)\rceil$, and $r_1=r_2+2\alpha$.
The algorithm goes as follows.
If $k=0$, then it reduces to checking whether $G\in\Hcal^{(a-1)}$, which can be done in time $\Ocal_a(1)$ given that $\Hcal^{(a-1)}$ is finite.
If $k=1$, then  it reduces to checking whether $G$ is $\Hcal^{(a-1)}$-planar.
Therefore, we can apply \autoref{lem_small_leaves} in time $\Ocal_a(n\cdot (n+m))$ and conclude. 
Hence, we now assume that $k\ge2$.
We apply the algorithm of \autoref{prop_FWth} with input $(G,k'+a,r_1)$, which runs in time $\Ocal_{k,a}(n)$.

If $K_{k'+a}$ is a minor of $G$, then we report a \no-instance.
We can do so because the graphs of planar treedepth at most $k$ are $K_{4k+1}$-minor-free, and thus so is the torso of any $\Pcal^k\triangleright\Hcal^{(a-1)}$-modulator of $G$, and the graphs in $\Hcal^{(a-1)}$ have at most $a-1$ vertices. Hence, if $G$ has \hptd{$\Hcal$} at most $k$, then $G$ is $K_{k'+a}$-minor-free.

If $G$ has treewidth at most $f_{\ref{prop_FWth}}(k'+a)\cdot r_1$, then we apply \autoref{Courcelle} to $G$ in time $\Ocal_{k,a}(n)$ and solve the problem.
We can do so because $\Hcal^{(a-1)}$-planarity is expressible in CMSO logic, and therefore, by induction, having \hptd{$\Hcal^{(a-1)}$} at most $k$ is also expressible in CMSO logic.

Hence, we can assume that there is a set $A\subseteq V(G)$ of size at most $g_{\ref{prop_FWth}}(k')$ and a flatness pair $(W_1,\frak{R}_1)$ of $G-A$ of height $r_1$ such that $\compass_{\frak{R}_1}(W_1)$ has treewidth at most $f_{\ref{prop_FWth}}(k'+a)\cdot r_1$.
Let $W_2$ be the central $r_2$-subwall of $W_1$.

Given that $r_2\ge\lceil \sqrt{s}\cdot (r_3+1)\rceil$, we can find a collection $\Wcal'=\{W'^1_3,\dots,W'^s_3\}$ of $r_3$-subwalls of $W_2$ such that the sets $\influence_{\frak{R}_i}(W_i)$ are pairwise disjoint.
Then, by applying the algorithm of \autoref{prop_tilt}, in time $\Ocal(n+m)$, we find a collection $\Wcal=\{W^1_3,\dots,W^s_3\}$ such that, for $i\in[s]$, $(W^i_3,\frak{R}^i_3)$ is a $W'^i_3$-tilt of $(W_1,\frak{R}_1)$, and the graphs $\compass_{\frak{R}^i_3}(W^i_3)$ are pairwise disjoint and have treewidth at most $f_{\ref{prop_FWth}}(t)\cdot r_1$.

Let $A^-$ denote the set of vertices of $A$ that are adjacent to vertices in the compass of at most $d-1$ walls of $\Wcal$, and let $A^+:=A\setminus A^-$.
$A^-$ can be constructed in time $\Ocal_k(n)$.
If $|A^+|\ge4k$, then return a \no-instance, as justified later in \autoref{cl_A+}.
Then, given that $s\ge (d-1)\cdot |A|+s'$, by the pigeonhole principle, there is $I\subseteq[s]$ of size $s'$ such that no vertex of $A^-$ is adjacent to the $\frak{R}^{i}_3$-compass of $W^i_3$ for $i\in I$.
Therefore, $(W^i_3,\frak{R}^{i\prime}_3)$ is a flatness pair of $G-A^+$ for $i\in I$, where $\frak{R}^{i\prime}_3$ is the 5-tuple obtained from $\frak{R}^{i}_3$ by adding $A^-$ to its first coordinate. 

For $i\in I$, let $F_i$ denote the $\frak{R}^{i\prime}_3$-compass of $W^i_3$.
Given that $F_i$ has treewidth at most $f_{\ref{prop_FWth}}(t)\cdot r_1$, we apply \autoref{Courcelle} to 
compute in time $\Ocal_{k,a}(n)$ the minimum $d_i\le k$, if it exists, such that $F_i$ is $(\Pcal^{d_i-1}\triangleright \Hcal)^{(a-1)}$-planar.
If, for all $i\in I$, such a $d_i$ does not exist,
then 
we report a \no-instance as justified later in \autoref{cl_dp_small}.
Otherwise, there is $p\in I$ such that $d_p$ is minimum. 

Let $v$ be a central vertex of $W_3^p$.
We apply recursively our algorithm to $G-v$ and return a \yes-instance if and only if it returns a \yes-instance.

\paragraph{Correctness.} We now prove the correctness of the algorithm.
Suppose that $G-v$ has \hptd{$\Hcal^{(a-1)}$} at most $k$.
Let us prove that it implies that $G$ has \hptd{$\Hcal^{(a-1)}$} at most $k$.
Let $X_1,\dots,X_k$ be a certifying elimination sequence of $G$ and let $G_i:=G_{i-1}-X_i$ for $i\in[k]$, where $G_0:=G-v$.

Let $j:=\max\{i\in[0,k]\mid \exists C_i\in\cc(G_i),|V(C_i)|\ge a\}$.
Given that $G_0=G-v$ contains as a connected subgraph the graph $W_1-v$ of size more that $a$, we conclude that there is $C_0\in\cc(G_0)$ such that $|V(C_0)|\ge a$, so $j$ is well-defined.
Let $C_j\in\cc(G_j)$ be such that $|V(C_j)|\ge a$.
See \autoref{fig_bigPlED} for an illustration.
Given that, for each $C\in\cc(G_k)$, $C\in\Hcal^{(a-1)}$, and thus $|V(C)|<a$, we conclude that $j<k$.

\begin{claim}\label{cl_Cj_pl}
$|V(G-v)\setminus V(C_j)|\le a+4k-5$ and 
$C_j$ is $(\Pcal^{k-j-1}\triangleright\Hcal)^{(a-1)}$-planar.
\end{claim}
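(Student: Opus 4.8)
The plan is to prove the two assertions separately, both organized around the chain of components of $G_0,\dots,G_j$ that contain $C_j$. First I would fix, for $i\in[0,j]$, the connected component $C'_i$ of $G_i$ with $V(C_j)\subseteq V(C'_i)$, so that $C'_j=C_j$, $C'_0$ is a component of $G-v$, and $V(C'_0)\supseteq V(C'_1)\supseteq\cdots\supseteq V(C'_j)$ with $V(C'_{i-1})\setminus V(C'_i)\subseteq X_i$ for each $i\in[j]$, since $X_1,\dots,X_k$ is a certifying elimination sequence for $G-v$.

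For the size bound, the key step is to show $|N_G(V(C_j))|\le 4k-3$. As $C_j$ is a component of $G_j$, every vertex of $N_G(V(C_j))$ lies in $\{v\}\cup X_1\cup\cdots\cup X_j$, so it suffices to bound $|N_G(V(C_j))\cap X_i|$ by $4$ for each $i\in[j]$. I would argue that any $x\in N_G(V(C_j))\cap X_i$ lies in $V(C'_{i-1})$ (it is adjacent, in the induced graph $G_{i-1}$, to a vertex of $C_j\subseteq C'_i$) but not in $V(C'_i)$, and that $C'_i$ is a connected component of $C'_{i-1}-(X_i\cap V(C'_{i-1}))$; hence $x$ belongs to $N_{C'_{i-1}}(V(C'_i))$, which is a clique of $\torso(C'_{i-1},X_i\cap V(C'_{i-1}))$, and this torso is planar because $X_i\cap V(C'_{i-1})$ is the planar modulator of the component $C'_{i-1}$ prescribed by the certifying elimination sequence; so this clique has at most $4$ vertices (cf.\ \autoref{obs_clique} with $\Gcal=\Pcal$). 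Using $j\le k-1$ this gives $|N_G(V(C_j))|\le 4j+1\le 4k-3$. Then $\bigl(N_G[V(C_j)],\,(V(G)\setminus V(C_j))\cup N_G(V(C_j))\bigr)$ is a separation of $G$ of order at most $4k$ whose side $V(C_j)$ has size at least $a$; since $G$ is $(a,4k)$-unbreakable, its other side satisfies $|V(G)\setminus N_G[V(C_j)]|\le a-1$, so $|V(G)\setminus V(C_j)|\le (a-1)+(4k-3)=a+4k-4$, and removing the vertex $v\notin V(C_j)$ yields $|V(G-v)\setminus V(C_j)|\le a+4k-5$.

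For the structural assertion, I would read it off the elimination sequence directly. Since $j<k$, the layer $X_{j+1}$ exists; as $C_j$ is a component of $G_j$, the set $Y:=X_{j+1}\cap V(C_j)$ is a planar modulator of $C_j$, i.e.\ $\torso(C_j,Y)$ is planar. Every connected component $D$ of $C_j-Y$ is a connected component of $G_{j+1}$ contained in $C_j$: by the maximality of $j$ it satisfies $|V(D)|<a$, and peeling the first $j+1$ layers of the elimination sequence from $G-v\in\Pcal^k\triangleright\Hcal^{(a-1)}$ shows $D\in\Pcal^{k-j-1}\triangleright\Hcal^{(a-1)}\subseteq\Pcal^{k-j-1}\triangleright\Hcal$. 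Hence each such $D$ lies in $(\Pcal^{k-j-1}\triangleright\Hcal)^{(a-1)}$, so $Y$ witnesses that $C_j$ is $(\Pcal^{k-j-1}\triangleright\Hcal)^{(a-1)}$-planar.

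The step I expect to be the main obstacle is the neighborhood bound: one must propagate planarity of the successive torsos along the nested chain $C'_{i-1}\supseteq C'_i$ so as to see that only four vertices of each layer $X_i$ can be adjacent to $C_j$, which is precisely what makes the $(a,4k)$-unbreakability of $G$ usable here. Everything afterwards is routine unwinding of the definitions of certifying elimination sequence and of the operator $\triangleright$.
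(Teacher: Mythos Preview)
Your proof is correct and follows essentially the same route as the paper. The only cosmetic difference is in the first part: you bound $|N_G(V(C_j))|\le 4j+1$ (the extra $+1$ accounting for $v$) and invoke the $(a,4k)$-unbreakability of $G$ directly, whereas the paper first passes to $G-v$, notes it is $(a,4k-1)$-unbreakable, and bounds $|N_{G-v}(V(C_j))|\le 4j$; both give $|V(G-v)\setminus V(C_j)|\le a+4k-5$. For the second part your argument via $Y=X_{j+1}\cap V(C_j)$ coincides with the paper's.
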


\begin{cproof}
Given that $X_1,\dots,X_k$ is a certifying elimination sequence of $G-v$, it implies that $C_j$ has \hptd{$\Hcal^{(a-1)}$} at most $k-j$ and that $X_j\cap V(C_j)$ is a planar $\Pcal^{k-j-1}\triangleright\Hcal^{(a-1)}$-modulator of $C_j$.
Since $G$ is an $(a,k')$-unbreakable graph and that $k'= 4k$, it implies that $G-v$ is $(a,4k-1)$-unbreakable.
Note that $(N_{G-v}[V(C_j)],V(G-v)\setminus V(C_j))$ is a separation of order at most $|N_{G-v}(V(C_j))|\le4j\le4(k-1)<4k-1$, which implies that $|V(G-v)\setminus V(C_j)|\le a-1+4j\le a+4k-5$.
By maximality of $j$, for each $C\in\cc(C_j-X_{j+1})\subseteq \cc(G_{j+1})$, $|V(C)|<a$.
Therefore, given that $X_{j+1}\cap V(C_j)$ is a planar $\Pcal^{k-j-1}\triangleright\Hcal^{(a-1)}$-modulator of $C_j$, it is also a planar $(\Pcal^{k-j-1}\triangleright\Hcal)^{(a-1)}$-modulator of $C_j$.
\end{cproof}

The next claim justify that we can report a \no-instance if, for all $i\in I$, there is no $d_i\le k$ such that $F_i$ is $(\Pcal^{d_i-1}\triangleright \Hcal)^{(a-1)}$-planar.

\begin{claim}\label{cl_dp_small}
$d_p\le k-j$.
\end{claim}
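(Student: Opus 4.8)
\begin{cproof}
The plan is to produce a single index $i_{0}\in I$ for which $F_{i_{0}}$ is $(\Pcal^{k-j-1}\triangleright\Hcal)^{(a-1)}$-planar: then $d_{i_{0}}$ is defined with $d_{i_{0}}\leq k-j$, and since $d_{p}$ is by construction the minimum of the defined values among $\{d_{i}\mid i\in I\}$, this yields $d_{p}\leq d_{i_{0}}\leq k-j$. By \autoref{cl_Cj_pl} the ``big leaf'' $C_{j}$ is already $(\Pcal^{k-j-1}\triangleright\Hcal)^{(a-1)}$-planar and satisfies $|V(G-v)\setminus V(C_{j})|\leq a+4k-5$, so it is enough to find $i_{0}$ such that $F_{i_{0}}$ is an induced subgraph of $C_{j}$ and then invoke heredity of the relevant class.

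First I would pin down $i_{0}$ by a counting argument on the disjoint compasses. Recall that $C_{j}$ is a connected component of $G_{j}=G-v-X_{1}-\dots-X_{j}$, so $V(C_{j})$ is disjoint from $\{v\}\cup X_{1}\cup\dots\cup X_{j}$ and $C_{j}=G[V(C_{j})]$; recall also that the graphs $\compass_{\frak{R}^{i}_{3}}(W^{i}_{3})$ for $i\in I$ are pairwise vertex-disjoint and that $v\in V(W^{p}_{3})\subseteq V(\compass_{\frak{R}^{p}_{3}}(W^{p}_{3}))$. Set $Z:=V(G-v)\setminus V(C_{j})$, so $|Z|\leq a+4k-5$ by \autoref{cl_Cj_pl}. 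As the compasses are disjoint, at most $|Z|$ of them meet $Z$; together with the single index $p$ this rules out at most $a+4k-4$ indices of $I$, and since $|I|=s'=a+4k-3>a+4k-4$ some index $i_{0}\in I\setminus\{p\}$ survives, with $V(\compass_{\frak{R}^{i_{0}}_{3}}(W^{i_{0}}_{3}))$ disjoint from both $v$ and $Z$, hence $V(\compass_{\frak{R}^{i_{0}}_{3}}(W^{i_{0}}_{3}))\subseteq V(C_{j})$. Since $i_{0}\in I$, the vertex set of $F_{i_{0}}$ is contained in that of $\compass_{\frak{R}^{i_{0}}_{3}}(W^{i_{0}}_{3})$ (forming $\frak{R}^{i_{0}\prime}_{3}$ only moves $A^{-}$, which has no neighbour in this compass, to the outside of the separation), and $F_{i_{0}}$ is an induced subgraph of $G$; hence $F_{i_{0}}$ is an induced subgraph of $C_{j}=G[V(C_{j})]$.

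Next I would transfer the property from $C_{j}$ to $F_{i_{0}}$ by heredity. Since $\Hcal$ is hereditary, $\Pcal$ is closed under subgraphs, and $\torso(G',X\cap V(G'))$ is a subgraph of $\torso(G,X)$ whenever $G'$ is an induced subgraph of $G$, a routine induction on $m$ gives that $\Pcal^{m}\triangleright\Hcal$ is hereditary for every $m\geq 0$ (with $\Pcal^{0}\triangleright\Hcal=\Hcal$), and then so is the class of $(\Pcal^{m}\triangleright\Hcal)^{(a-1)}$-planar graphs. Applying this with $m=k-j-1$ (nonnegative since $j<k$) to $C_{j}$ and its induced subgraph $F_{i_{0}}$, we conclude that $F_{i_{0}}$ is $(\Pcal^{k-j-1}\triangleright\Hcal)^{(a-1)}$-planar, so $d_{i_{0}}$ is defined and $d_{i_{0}}\leq k-j$, and therefore $d_{p}\leq k-j$.

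The step I expect to be the main obstacle is the counting argument isolating $i_{0}$: one has to keep the three vertex populations carefully separated---the big leaf $C_{j}$, the leftover set $Z=V(G-v)\setminus V(C_{j})$ (whose size is controlled via the unbreakability of $G$ in \autoref{cl_Cj_pl}), and the central vertex $v$ sitting inside $\compass_{\frak{R}^{p}_{3}}(W^{p}_{3})$---and use the pairwise disjointness of the $s$ compasses to convert the slack in $|I|=a+4k-3$ into the existence of one surviving index. The heredity of the $\triangleright$-classes is routine, but still deserves the short justification indicated, since the torso operation does not literally commute with taking induced subgraphs, only up to edge-set inclusion.
\end{cproof}
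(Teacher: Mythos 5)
Your proposal is correct and follows essentially the same route as the paper: a pigeonhole argument over the pairwise-disjoint compasses (using $|V(G-v)\setminus V(C_j)|\le a+4k-5\le s'-2$) to locate an index $i_0\in I\setminus\{p\}$ with $F_{i_0}$ an induced subgraph of $C_j$, followed by \autoref{cl_Cj_pl}, heredity of $(\Pcal^{k-j-1}\triangleright\Hcal)^{(a-1)}$-planarity, and the minimality of $d_p$. You are merely more explicit than the paper about the counting and about why the relevant class is hereditary, which is a welcome but inessential elaboration.
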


\begin{cproof}
Given that, by \autoref{cl_Cj_pl}, $|V(G-v)\setminus V(C_j)|\le a+4k-5\le s'-2$, it implies, by the pigeonhole principle, that there is $q\in I\setminus\{p\}$ such that $F_q$ is entirely contained in $C_j$. 
Hence, by \autoref{cl_Cj_pl}, $F_q$ is $(\Pcal^{k-j-1}\triangleright\Hcal)^{(a-1)}$-planar, and 
therefore, $d_q\le k-j$.
Additionally, by minimality of $d_p$, $d_q\ge d_p$.
Therefore, $d_p\le k-j$.
\end{cproof}

We now prove that we have a \no-instance when $|A^+|\ge 4k$.

\begin{claim}\label{cl_A+}
$A^+\subseteq N_G(V(C_j))$, and hence $|A^+|\le 4k-1$.
\end{claim}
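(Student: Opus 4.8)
The plan is to prove the inclusion $A^+\subseteq N_G(V(C_j))$; the bound $|A^+|\le 4k-1$ then follows at once, since $v$ lies in the compass of $W^p_3$ and hence $v\notin A\supseteq A^+$, so that $A^+\subseteq N_G(V(C_j))\setminus\{v\}\subseteq N_{G-v}(V(C_j))$, while $|N_{G-v}(V(C_j))|\le 4j\le 4(k-1)$ as shown in the proof of \autoref{cl_Cj_pl}. (In the same way one obtains $A^+\subseteq N_{G-v}(V(C_j))\subseteq X_1\cup\dots\cup X_j$, i.e.\ every vertex of $A^+$ is already removed in the first $j$ steps of the elimination, matching the informal picture.)

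To prove the inclusion I would argue by contradiction. Fix $u\in A^+$ and suppose $u\notin N_G(V(C_j))$, so that either $u\in V(C_j)$, or $u\notin V(C_j)$ and $u$ is nonadjacent to $V(C_j)$. Recall that $u$ is adjacent to the $\mathfrak{R}^i_3$-compass of at least $d=\alpha^4$ walls of $\Wcal$, that these compasses are pairwise vertex-disjoint, and that each of them contains a wall of height $r_3$, a quantity far exceeding the bound $|V(G-v)\setminus V(C_j)|\le a+4k-5$ coming from \autoref{cl_Cj_pl}. Consequently all but at most $a+4k-5$ of these compasses lie entirely inside $V(C_j)$, and $C_j$ contains all but at most $a+4k-4$ vertices of $W_1$; in particular $C_j$ contains a large subwall of $W_1$, and for all but $O(a+k)$ of the relevant indices $i$ it contains the whole $\mathfrak{R}^i_3$-compass of $W^i_3$ together with a large subwall of $W^i_3$.

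Now split into cases. If $u\notin V(C_j)$ and $u$ is nonadjacent to $V(C_j)$, then every neighbour of $u$ lies in $V(G)\setminus V(C_j)$, a set of size at most $a+4k-4$, whereas $u$ has at least $d$ distinct neighbours (one in each disjoint compass); this contradicts the size bound once $d$ is taken above $a+4k-4$. Otherwise $u\in V(C_j)$ (the remaining possibility, $u\notin V(C_j)$ but adjacent to $V(C_j)$, being exactly $u\in N_G(V(C_j))$): then inside $C_j$ the vertex $u$ is joined to at least $d-O(a+k)$ pairwise-disjoint large subwalls of the $W^i_3$'s, which occupy pairwise-disjoint ``windows'' of the large subwall of $W_1$ contained in $C_j$. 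Picking one vertex of each such subwall from the central part of this grid and invoking \autoref{prop_apex}, one obtains a $(\lceil\alpha\rceil\times\lceil\alpha\rceil)$-grid minor of $C_j$ every branch set of which is connected, inside $C_j$, to $u$; absorbing the connecting paths into the branch sets, this together with $u$ gives the apex grid $\Gamma_{\lceil\alpha\rceil}^+$ as a minor of $C_j$. But by \autoref{cl_Cj_pl} the graph $C_j$ is $(\Pcal^{k-j-1}\triangleright\Hcal)^{(a-1)}$-planar, so --- keeping the same first-level modulator, whose torso is planar and whose leaf-components have at most $a-1$ vertices --- $C_j$ has a planar $\Gcal^{(a-1)}$-modulator, where $\Gcal$ is the class of all graphs. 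Since $\lceil\alpha\rceil\ge\alpha=\sqrt{(a-1)+4}+2$, \autoref{lem:obstructions} then says $C_j$ cannot contain $\Gamma_{\lceil\alpha\rceil}^+$ as a minor, a contradiction. Hence no such $u$ exists, and $A^+\subseteq N_G(V(C_j))$.

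The main obstacle is the bookkeeping in the second case: after discarding the (at most $O(a+k)$) compasses and wall-vertices that $C_j$ is allowed to miss, one must still retain at least $\alpha^4$ seed vertices adjacent to $u$ inside $C_j$, which is precisely what \autoref{prop_apex} requires to build an $\alpha$-apex-grid --- this is exactly why $d$ (and through it $s,z,r_1,r_2,r_3$) must be chosen comfortably above the slack $a+4k$, and why one needs the flat-wall machinery (\autoref{prop_tilt} and the disjointness of the influence regions of the subwalls of $W_1$ carrying the $\mathfrak{R}^i_3$-compasses) in order to route the grid connections of the apex-grid model without leaving $C_j$. One should also check that the at most $|A|$ vertices of $A$ that could a priori lie in $C_j$ cause no harm, which is the case because the neighbourhood of each compass vertex is confined to that compass together with $A$.
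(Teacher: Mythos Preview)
Your proposal is correct and follows essentially the paper's route. The paper builds the apex-grid minor directly in $G$ (using the whole of $W_1$ and the $\ge d=\alpha^4$ attachments of $u$), invokes \autoref{lem:obstructions} against the planar $\Gcal^{(a-1)}$-modulator of $C_j$ to conclude $u\notin V(C_j)$, and then uses the size bound $|V(G-v)\setminus V(C_j)|\le a+4k-5$ to force a neighbour of $u$ into $V(C_j)$; you reorganize this as a case split and, when $u\in V(C_j)$, rebuild the apex-grid \emph{inside} $C_j$ after discarding the few compasses that $C_j$ may miss. This makes your application of \autoref{lem:obstructions} more transparent (the paper leaves the passage from ``apex-grid in $G$ with apex $\{u\}$'' to ``$u\notin V(C_j)$'' implicit), at the price of the discarding bookkeeping you correctly flag as the main obstacle.
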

\begin{cproof}
Let $u\in A^+$.
By definition, $u$ is adjacent to the compass of at least $d\ge \alpha^4$ walls of $\Wcal$. These compasses are connected and pairwise disjoint, and are contained in the central $r_2$-subwall of $W_1$, where $r_2=r_1-2\alpha$.
Then, observe that $G$ contains as a minor an $(r_1\times r_1)$-grid (obtained by contracting the intersection of horizontal and vertical paths of $W_1$) along with a vertex (corresponding to $u$) that is adjacent to $d$ vertices of its central $(r_2\times r_2)$-subgrid (corresponding to $W_2$).
Thus, by \autoref{prop_apex}, $G$ contains a model of an apex grid $\Gamma_\alpha^+$ of height $\alpha$, where the branch set of the universal vertex is the singleton $\{u\}$.
Therefore, by \autoref{lem:obstructions}, given that $\alpha\ge\sqrt{a+3}+2$, and that $C_j$ admits a planar $(\Pcal^{k-j-1}\triangleright\Hcal)^{(a-1)}$-modulator, we conclude that $u\notin V(C_j)$.
Given that $|V(G-v)\setminus V(C_j)|\le a+4k-5$ and that $s\ge a+4k-4$, it implies that $u$ neighbors at least one vertex of $V(C_j)$, so we conclude that $u\in N_G(V(C_j))$.
Therefore, $|A^+|\le|N_G(V(C_j))|\le|N_{G-v}(V(C_j))|+1\le4k-1$.
\end{cproof}

We set $S:=N_{G-v}(V(C_j))$, which has size at most $4(k-1)$.
Remember that $(W_3^p,\frak{R}^p_3)$ is a flatness pair of $G-A^+$ of height $r_3\ge f_{\ref{claim_irr}}(4(k-1),z,3)$.
Note also that $v$ is contained in the compass of every $W^{(3)}$-tilt of $(W_3^p,\frak{R}^p_3)$.
By \autoref{claim_irr}, 
there is a flatness pair $(W^*,\frak{R}^*)$ of $G-A^+$ that is a $\tilde{W}$-tilt of $(W_3^p,\frak{R}^p_3)$ for some $z$-subwall $\tilde{W}$ of $W_3^p$ 
such that $v$ and $S\cap V(\compass_{\frak{R}^*}(W^*))$ are contained in the compass of every $W^{*(5)}$-tilt of $(W^*,\frak{R}^*)$.
Let $Y$ be the vertex set of the compass of some $W^{*(5)}$-tilt $(W_Y,\frak{R}_Y)$ of $(W^*,\frak{R}^*)$.

We set $G':=\compass_{\frak{R}^*}(W^*)$.
Let $G^*$ be the graph induced by $V(C_j)$ and $Y$, and let $\frak{R}''$ be the 5-tuple obtained from $\frak{R}^*$ after removing $V(G)\setminus V(C_j)$ from its first coordinate.

\begin{claim}\label{cl_fl}
$(W^*,\frak{R}'')$ is a flatness pair of $G^*$ and $G'=\compass_{\frak{R}''}(W^*)$.
\end{claim}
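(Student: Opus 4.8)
The plan is to verify the five defining conditions of a flatness pair for $(W^*,\frak R'')$ directly from those already enjoyed by $\frak R^*$, everything being driven by a single \emph{localization} statement. Write $\frak R^*=(X^*,Y^*,P^*,C^*,\rho^*)$, so that $\frak R''=(X^*\cap V(C_j),\,Y^*,\,P^*,\,C^*,\,\rho^*)$ and $G'=\compass_{\frak R^*}(W^*)=(G-A^+)[Y^*]$; note that $Y^*\subseteq V(G-A^+)$ and, by \autoref{cl_A+}, $V(C_j)\cap A^+=\emptyset$, so $G^*:=G[V(C_j)\cup Y]$ is in fact $(G-A^+)[V(C_j)\cup Y]$. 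The localization statement I would prove is
\[
\text{(i)}\quad V(D(W^*))\subseteq V(C_j),\qquad\text{(ii)}\quad Y^*\subseteq V(C_j)\cup Y .
\]
Granting this, the rest is set bookkeeping. From (i), $X''\cap Y^*=(X^*\cap V(C_j))\cap Y^*=X^*\cap Y^*$ since $X^*\cap Y^*\subseteq V(D(W^*))\subseteq V(C_j)$; hence $P^*\subseteq X''\cap Y^*\subseteq V(D(W^*))$ and $V(W^*)\subseteq Y^*$ are inherited, the cyclic order $\Omega^*$ of $X^*\cap Y^*$ on $D(W^*)$ is unchanged, $G^*[Y^*]=(G-A^+)[Y^*]=G'$ by (ii), so $\rho^*$ is still a rendition of $(G^*[Y^*],\Omega^*)$ and $\compass_{\frak R''}(W^*)=G^*[Y^*]=G'$. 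Finally $(X'',Y^*)$ is a separation of $G^*$: $X''\cup Y^*=V(C_j)\cup Y$ using (ii) and $V(C_j)\subseteq X^*\cup Y^*$, and every edge of $G^*\subseteq G-A^+$ lies either inside $Y^*$, or inside $X^*$ with both ends in $X^*\cap(V(C_j)\cup Y)=X^*\cap V(C_j)=X''$ (again by (i)), so no edge joins $X''\setminus Y^*$ to $Y^*\setminus X''$.

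For the localization statement I would first record that $N_{G-A^+}(V(C_j))\subseteq S\cup\{v\}$: by definition $S=N_{G-v}(V(C_j))$ collects every neighbour of $C_j$ in $G$ other than possibly $v$, and deleting $A^+\subseteq N_G(V(C_j))$ (by \autoref{cl_A+}) creates no new neighbours. By the conclusion of \autoref{claim_irr}, $S\cap V(\compass_{\frak R^*}(W^*))\subseteq Y$, and $v$ lies in the compass of the $W^{(3)}$-tilt, hence $v\in Y$; together this gives $N_{G-A^+}(V(C_j))\cap Y^*\subseteq Y$, so in $G'-Y$ the set $V(C_j)\cap Y^*$ is a union of connected components. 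Now I would use the choice $z=\max\{\sqrt{(a+4k-2)/2}+6,11\}$: the part of $W^*$ lying outside the central region absorbed by the $W^{*(5)}$-tilt's compass $Y$ is connected in $G'-Y$ and has strictly more than $a+4k-4\ge|V(G)\setminus V(C_j)|\ge|Y^*\setminus V(C_j)|$ vertices, so it cannot avoid $C_j$ and therefore lies entirely in $V(C_j)$; since $D(W^*)$ is disjoint from $Y$ (being external to $W^{*(5)}$), this gives (i). For (ii), decompose $Y^*$ along the cells of $\rho^*$ into $W^{*(5)}$-external cells and the rest. The non-external cells contribute $\cupall\influence_{\frak R^*}(W^{*(5)})$, whose vertex set lies in $Y\cup V(C_j)$ (the $W^{*(5)}$-tilt absorbs this influence into its compass $Y$, up to perimeter wall vertices, which lie in $V(C_j)$ by the previous step). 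For a $W^{*(5)}$-external cell $c$, its flap $\sigma(c)$ is attached in $G'$ only at ground vertices of $\rho^*$ lying in $V(W^*)\setminus Y\subseteq V(C_j)$; since $\sigma(c)$ is connected, a vertex $x\in V(\sigma(c))\setminus V(C_j)$ would, along a path of $\sigma(c)$ reaching $V(C_j)$, produce a vertex $y\in N_G(V(C_j))\cap Y^*\subseteq (S\cup\{v\})\cap Y^*\subseteq Y$ lying in the interior of $c$ or on its boundary; but a boundary ground vertex of an external cell is not in $Y$ (it belongs to $V(W^*)\setminus Y$), and the interior of $c$ is disjoint from every other cell, so $y$ cannot lie in any non-external cell of $\rho^*$, contradicting $y\in Y\subseteq\cupall\influence_{\frak R^*}(W^{*(5)})$. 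Hence $\sigma(c)\subseteq V(C_j)$ for every external cell, and (ii) follows.

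\textbf{The main obstacle is the localization statement}, and within it the counting step (where the specific value of $z$ is used, and where the connectivity of the ``annular'' part of a wall after deleting a central region must be argued carefully) together with the ``external flaps live inside $C_j$'' argument, which rests on topological bookkeeping of which compass vertices are ground vertices of $\rho^*$ and in which cells they are drawn. Once (i) and (ii) are in place, verifying that $(W^*,\frak R'')$ is a flatness pair of $G^*$ and that $\compass_{\frak R''}(W^*)=G'$ is routine, as sketched above.
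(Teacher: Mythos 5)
Your overall strategy is viable, and the heart of your argument --- the containment $Y^*\subseteq V(C_j)\cup Y$, derived from $N_G(V(C_j))\subseteq S\cup\{v\}$, from $(S\cup\{v\})\cap V(G')\subseteq Y$ (via \autoref{claim_irr}), and from the counting forced by the choice of $z$ --- is exactly the crux of the paper's proof. The paper, however, avoids your statement (i) and all of the separation bookkeeping by a shortcut you miss: since $(W^*,\frak{R}^*)$ is a flatness pair of $G-A^+$ with compass $G'$, deleting $A^*\setminus V(C_j)$ from its \emph{first} coordinate immediately yields a flatness pair of $G-A^+-(A^*\setminus V(C_j))$ with the same compass, so the entire claim reduces to the single set identity $G-A^+-(A^*\setminus V(C_j))=G^*$, i.e., to your (ii). Your direct re-verification of the axioms is not wrong, but it creates extra obligations that the paper never has to discharge.

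The genuine gap is in your proof of (ii). For a $W^{*(5)}$-external cell $c$ you assert that $\sigma(c)$ is attached only at ground vertices of $\rho^*$ lying in $V(W^*)\setminus Y$. This is false in general: the nodes in $\tilde{c}$ need not be wall vertices, and a node shared between an external cell and a perimetric cell of $W^{*(5)}$ may well lie in $Y$; once $y\in\pi_{\rho^*}(\tilde{c})\cap Y$ is possible, your contradiction evaporates, because such a $y$ legitimately belongs both to $\sigma(c)$ and to a flap of $\influence_{\frak{R}^*}(W^{*(5)})$. Likewise, your parenthetical claim that the non-external flaps lie in $Y\cup V(C_j)$ is unsupported: the tilt's compass is only a \emph{subgraph} of $\cupall\influence_{\frak{R}^*}(W^{*(5)})$, and perimetric flaps can contain non-wall vertices outside $Y$ whose membership in $V(C_j)$ is precisely what is at stake. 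The paper closes this step differently, by combining the connectivity of the compass $G'$ with the separation $(N_{G-v}[V(C_j)],V(G-v)\setminus V(C_j))$ of $G-v$, whose separator meets $G'$ only inside $Y$. A smaller slip: in (i) you claim the annular part of $W^*$ has \emph{strictly} more than $a+4k-4\ge|V(G)\setminus V(C_j)|$ vertices, but the choice of $z$ only gives $2(z-6)^2-2\ge a+4k-4$, so this comparison is not strict; you must use $v\in Y$ and compare against $|V(G-v)\setminus V(C_j)|\le a+4k-5$ from \autoref{cl_Cj_pl}, as the paper does.
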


\begin{cproof}
Remember that $Y$ is the vertex set of the $\frak{R}_Y$-compass of the flatness pair $(W_Y,\frak{R}_Y=(A_Y,Y,P_Y,C_Y,\rho_Y))$ of $G-A^+$ and that 
$G'$ is the $\frak{R}^*$-compass of the flatness pair $(W^*,\frak{R}^*=(A^*,V(G'),P^*,C^*,\rho^*))$ of $G-A^+$.
Hence, $(W^*,\frak{R}''=(A^*\cap V(C_j),V(G'),P^*,C^*,\rho^*))$ is a flatness pair of $G-A^+-(A^*\setminus V(C_j))$.
Let us prove that $G-A^+-(A^*\setminus V(C_j))=G^*$.
Given that $V(G^*)=V(C_j)\cup Y$ and that $Y\subseteq V(G')\setminus A^*$, we trivially have that $G^*$ is an induced subgraph of $G-A^+-(A^*\setminus V(C_j))$.

For the other direction, it is enough to prove that $V(G')\subseteq V(C_j)\cup Y$, given that $A^*\cap V(C_j)\subseteq V(C_j)$.
Given that $W^*$ has height $z$ and that $W_Y$ has height five, $W^*-V(W_Y)$, and thus $G'-Y$, contains a wall of height $z-6$ has a subgraph.
Thus, $|V(G')\setminus Y|\ge2(z-6)^2-2\ge a+4k-4$.
Hence, by \autoref{cl_Cj_pl}, we conclude that $|(V(G')\setminus Y)\cap V(C_j)|\ge 1$.
Remember that $(N_{G-v}[V(C_j)],V(G-v)\setminus V(C_j))$ is a separation of $G-v$ with separator $N_{G-v}(V(C_j))=S$.
Given that $S\cap V(G')\subseteq Y$, it thus implies that $V(G')\cap (V(G-v)\setminus V(C_j))\subseteq Y$ by connectivity of the compass.
Since we also have $v\in Y$, we conclude that $V(G')\subseteq V(C_j)\cup Y$, and thus that $(W^*,\frak{R}'')$ is a flatness pair of $G^*$.
\end{cproof}

Remember that $v\in Y$.
Let us show that $G^*$ has \hptd{$\Hcal^{(a-1)}$} at most $k-j-1$.
We will later combine this result with the fact that $G-v$ has \hptd{$\Hcal^{(a-1)}$} at most $k$ to prove that $G$ has \hptd{$\Hcal^{(a-1)}$} at most $k$.

\begin{figure}[h]
\center
\includegraphics{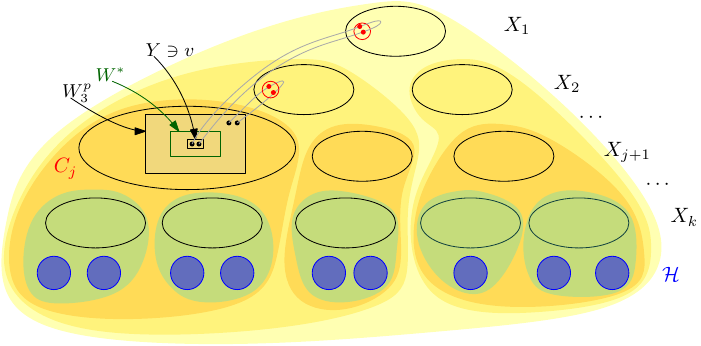}
\caption{Illustration for the correctness of \autoref{lem_edk}.}
\label{fig_bigPlED}
\end{figure}

\begin{claim}
$G^*$ has \hptd{$\Hcal^{(a-1)}$} at most $k-j$.
\end{claim}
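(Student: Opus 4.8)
The plan is to prove the (slightly stronger) statement that $G^*$ is $(\Pcal^{k-j-1}\triangleright\Hcal)^{(a-1)}$-planar. This suffices: a planar modulator $M$ witnessing it has every leaf component of $G^*-M$ on at most $a-1$ vertices, and a graph on at most $a-1$ vertices lies in $\Pcal^{k-j-1}\triangleright\Hcal$ if and only if it lies in $\Pcal^{k-j-1}\triangleright\Hcal^{(a-1)}$ (in an elimination certificate for the former, all intermediate graphs and all leaves already have at most $a-1$ vertices); hence $G^*\in\Pcal\triangleright(\Pcal^{k-j-1}\triangleright\Hcal^{(a-1)})=\Pcal^{k-j}\triangleright\Hcal^{(a-1)}$ by associativity of $\triangleright$, that is, $G^*$ has \hptd{$\Hcal^{(a-1)}$} at most $k-j$. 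Set $\Kcal_0:=\Pcal^{k-j-1}\triangleright\Hcal$ and $\Kcal:=\Kcal_0^{(a-1)}$; since $\Hcal$ is hereditary so is $\Kcal_0$ (the operation $\Gcal\mapsto\Pcal\triangleright\Gcal$ preserves heredity, by the torso restriction argument of \autoref{lem_compatible}), hence so are $\Kcal$ and the class of $\Kcal$-planar graphs.

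The key step is to invoke \autoref{lem_combinebis}, with $\Hcal$ instantiated by the hereditary class $\Kcal_0$ and $k$ by $a-1$, applied to the flatness pair $(W^*,\frak{R}'')$ of $G^*$: by \autoref{cl_fl} this flatness pair has $\frak{R}''$-compass $G'$; the set $Y$ is the vertex set of the compass of a $W^{*(5)}$-tilt of $(W^*,\frak{R}^*)$, hence also of $(W^*,\frak{R}'')$ since these two flatness pairs carry the same rendition; and the (odd) height $z$ of $W^*$ is chosen large enough to meet the height hypothesis of \autoref{lem_combinebis} for $q=5$. It then remains to verify that $G'$ and $G^*-Y$ are both $\Kcal$-planar.

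For $G'$: since $(W^*,\frak{R}^*)$ is a $\tilde W$-tilt of $(W_3^p,\frak{R}_3^p)$ for some $z$-subwall $\tilde W$ of $W_3^p$, the vertex set of $G'=\compass_{\frak{R}^*}(W^*)$ is contained in that of $\compass_{\frak{R}_3^p}(W_3^p)=F_p$, so $G'$ is an induced subgraph of $F_p$; by \autoref{cl_dp_small} $F_p$ is $(\Pcal^{d_p-1}\triangleright\Hcal)^{(a-1)}$-planar with $d_p\le k-j$, hence $\Kcal$-planar (using $\Pcal^{d_p-1}\triangleright\Hcal\subseteq\Pcal^{k-j-1}\triangleright\Hcal$), and heredity of the class of $\Kcal$-planar graphs passes this down to $G'$. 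For $G^*-Y$: since $V(G^*)=V(C_j)\cup Y$ by definition of $G^*$, we have $G^*-Y=G[V(C_j)\setminus Y]$, an induced subgraph of $C_j=G[V(C_j)]$; and $C_j$ is $(\Pcal^{k-j-1}\triangleright\Hcal)^{(a-1)}$-planar (i.e. $\Kcal$-planar) by \autoref{cl_Cj_pl}, so $G^*-Y$ is $\Kcal$-planar as well. Now \autoref{lem_combinebis} gives that $G^*$ is $\Kcal$-planar, which is what was needed.

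The two points I expect to need the most care are: confirming rigorously that $G'$ is an induced subgraph of $F_p$ (chasing the compass of a tilt of a subwall back inside the ambient compass through the defining conditions of a tilt), and the bookkeeping that keeps the three notions $\Pcal^{m}\triangleright\Hcal$, its restriction to graphs on at most $a-1$ vertices, and \hptd{$\Hcal^{(a-1)}$} consistent throughout. The genuinely geometric content — gluing the two sphere decompositions along a track of the wall — is entirely delegated to \autoref{lem_combinebis} (and, underneath it, to \autoref{lem_newirr}), so no new topological argument is needed here.
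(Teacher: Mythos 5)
Your proposal is correct and follows essentially the same route as the paper: both verify that $G'$ (as an induced subgraph of $F_p$, via \autoref{cl_dp_small}) and $G^*-Y$ (as an induced subgraph of $C_j$, via \autoref{cl_Cj_pl}) are $(\Pcal^{k-j-1}\triangleright\Hcal)^{(a-1)}$-planar and then glue them with \autoref{lem_combinebis} applied to the flatness pair $(W^*,\frak{R}'')$ from \autoref{cl_fl}. Your extra bookkeeping translating $(\Pcal^{k-j-1}\triangleright\Hcal)^{(a-1)}$-planarity into \hptd{$\Hcal^{(a-1)}$} at most $k-j$ is a step the paper passes over with "in particular", but it is the same argument.
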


\begin{cproof}
As discussed previously, $C_j$ is $(\Pcal^{(k-j-1)}\triangleright\Hcal)^{(a-1)}$-planar.
Given that $G^*-Y$ is an induced subgraph of $C_j$, it is also $(\Pcal^{(k-j-1)}\triangleright\Hcal)^{(a-1)}$-planar by heredity.
By \autoref{cl_fl}, $(W^*,\frak{R}'')$ is a flatness pair of $G^*$ and
$G'=\compass_{\frak{R}''}(W^*)$.
Given that $(W^*,\frak{R}^*)$ is a $\tilde{W}$-tilt of $(W_3^p,\frak{R}^p_3)$, $G'$ is an induced subgraph of $F_p$, and thus, by \autoref{cl_dp_small}, $G'$ is $(\Pcal^{(k-j-1)}\triangleright\Hcal)^{(a-1)}$-planar.
Therefore, by \autoref{lem_combinebis} applied for the graph $G^*$, $G'$, and $Y$, given that $z\ge\max\{\sqrt{(a+6)/2}+2,11\}$,
we conclude that $G^*$ is $(\Pcal^{(k-j-1)}\triangleright\Hcal)^{(a-1)}$-planar.
In particular, it means that $G^*$ has \hptd{$\Hcal^{(a-1)}$} at most $k-j$.
\end{cproof}

\begin{claim}
$G$ has \hptd{$\Hcal^{(a-1)}$} at most $k$.
\end{claim}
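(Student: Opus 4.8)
The plan is to merge the certifying elimination sequence $X_{1},\dots ,X_{k}$ of $G-v$ (with its big‑leaf index $j$ and big component $C_{j}$ fixed as above) with the data provided by the previous claim into a certifying elimination sequence $X_{1}',\dots ,X_{k}'$ of $G$ with exactly $k$ layers. The first ingredient I would record is that, because $\Hcal$ is hereditary, \hptd{$\Hcal^{(a-1)}$} does not increase under taking induced subgraphs: restricting an elimination sequence to the vertex set of an induced subgraph only shrinks each torso (the torso of a vertex set in an induced subgraph is a subgraph of the torso of the same set in the host graph) and turns each leftover component into an induced subgraph of a leftover component of the host, which stays in $\Hcal^{(a-1)}$ by heredity. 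Peeling the first $j$ layers then shows that $G_{j}:=(G-v)-(X_{1}\cup \dots \cup X_{j})$ has \hptd{$\Hcal^{(a-1)}$} at most $k-j$, hence so does every connected component of $G_{j}$.

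Next I would locate $v$ inside $G$. As $v$ is a central vertex of $W_{3}^{p}$, the choice of $W^{*}$ via \autoref{claim_irr} places $v$ in the compass of every $W^{*(5)}$-tilt, so $v$ is an interior vertex of the flat wall $W^{*}$ of $G-A^{+}$; consequently every neighbour of $v$ in $G-A^{+}$ lies in the compass of $W^{*}$, whose vertex set is contained in $V(C_{j})\cup Y$ by the proof of \autoref{cl_fl}. Thus $N_{G}(v)\subseteq V(C_{j})\cup Y\cup A^{+}$. Moreover $A^{+}\subseteq N_{G}(V(C_{j}))$ by \autoref{cl_A+}, and since $C_{j}$ is a connected component of $G_{j}$ no vertex outside $X_{1}\cup \dots \cup X_{j}$ other than $v$ can be adjacent to $C_{j}$; as $v\notin A^{+}$, this gives $A^{+}\subseteq X_{1}\cup \dots \cup X_{j}$. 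Hence $G-(X_{1}\cup \dots \cup X_{j})$ is precisely $G_{j}$ with $v$ reinstated, and the connected component $\hat C$ of this graph that contains $v$ satisfies $V(\hat C)\subseteq V(C_{j})\cup Y=V(G^{*})$: apart from $v$, the extra vertices of $\hat C$ lie in components of $G_{j}$ meeting $N_{G}(v)\setminus (X_{1}\cup \dots \cup X_{j})\subseteq V(C_{j})\cup Y$, and by the flat‑wall structure (as exploited in the proofs of \autoref{claim_irr} and \autoref{cl_fl}) every such component is either $C_{j}$ itself or stays inside the compass region $Y$. Therefore $\hat C$ is an induced subgraph of $G^{*}$, which has \hptd{$\Hcal^{(a-1)}$} at most $k-j$ by the previous claim, so $\hat C$ has \hptd{$\Hcal^{(a-1)}$} at most $k-j$; combining this with the fact that every other component of $G_{j}+v$ is a component of $G_{j}$, we conclude that $G-(X_{1}\cup \dots \cup X_{j})$ has \hptd{$\Hcal^{(a-1)}$} at most $k-j$.

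Finally, since $G-(X_{1}\cup \dots \cup X_{j})$ has \hptd{$\Hcal^{(a-1)}$} at most $k-j$, it admits a certifying elimination sequence $W_{1},\dots ,W_{k-j}$; setting $X_{i}':=X_{i}$ for $i\in [j]$ and $X_{j+\ell}':=W_{\ell}$ for $\ell \in [k-j]$ produces a candidate elimination sequence of length $k$ for $G$ whose last $k-j$ layers are legal by construction and whose leftover components lie in $\Hcal^{(a-1)}$. The point that requires the most care — and that I expect to be the main obstacle — is verifying that the first $j$ layers are still legal when applied to $G$ instead of $G-v$, i.e.\ that $\torso\big(G-(X_{1}\cup \dots \cup X_{i-1}),X_{i}\big)$ is planar for $i\in [j]$: passing from $G_{i-1}$ to $G_{i-1}+v$ can add torso edges only when $v$ merges several components of $G_{i-1}-X_{i}$, and the new edges are incident to $N_{G}(v)\cap X_{i}$. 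The plan for this step is to show that at each level $i\le j$ the vertex $v$ is adjacent, inside $G_{i-1}-X_{i}$, only to the single component containing $C_{j}$, and that the bounded ``exceptional'' part of $N_{G}(v)$ lying outside $C_{j}$ (contained in $A^{+}$ together with a part of the $O(a+k)$‑vertex set $V(G-v)\setminus V(C_{j})$) can be shifted into an early layer without breaking planarity there — using, in a planar embedding of the relevant torso, that $A^{+}\cup N_{G}(V(C_{j}))$ is confined to a single face. Once this is in place, the first $j$ torsos for $G$ coincide with the planar torsos from the sequence of $G-v$, so $X_{1}',\dots ,X_{k}'$ is a certifying elimination sequence for $G$ of length $k$, and therefore $G$ has \hptd{$\Hcal^{(a-1)}$} at most $k$.
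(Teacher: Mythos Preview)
Your argument has a real gap at exactly the point you flag as ``the main obstacle'': you never prove that $\torso\big(G-(X_1\cup\dots\cup X_{i-1}),\,X_i\big)$ is planar for $i\le j$. The plan you sketch --- showing that $v$ is adjacent in $G_{i-1}-X_i$ only to the component containing $C_j$, and then ``shifting'' exceptional neighbours of $v$ into an early layer --- is not carried out and is not obviously feasible. Neighbours of $v$ can lie in $Y\cap X_i$, and small components of $G_i$ sitting inside $Y\setminus V(C_j)$ (which need not be connected to $C_j$ in $G-v$) can get merged with the big component once $v$ is reinstated; the resulting clique on their combined boundary in $X_i$ may well introduce torso edges that destroy planarity. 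Your sketch offers no mechanism to control this, and the vague appeal to ``shifting into an early layer'' risks breaking planarity of that earlier torso instead.

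The paper avoids this difficulty by \emph{modifying} the first $j$ layers rather than reusing them verbatim: it sets $X_i^*:=X_i\setminus Y$ for $i\in[j]$ and $X_{j+i}^*:=(X_{j+i}\setminus V(C_j))\cup Y_i$ for $i\in[k-j]$, where $Y_1,\dots,Y_{k-j}$ is a certifying sequence for $G^*$ (not for all of $G-(X_1\cup\dots\cup X_j)$). The crucial observation is $N_G(Y)\subseteq N_{G-v}[V(C_j)]$: since $N_{G-v}(V(C_j))\cap X_i$ is already a clique in $\torso(G_{i-1},X_i)$ (it lies in the boundary of the single component containing $C_j$), removing $Y$ from $X_i$ simply absorbs all of $Y$ into that one component, so $\torso(G_{i-1}^*,X_i^*)$ is a \emph{subgraph} of the planar graph $\torso(G_{i-1},X_i)$. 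For the later layers, the components of $G_j^*$ are exactly $G^*$ together with components of $G_j-Y$, and each connected piece of $X_{j+i}^*$ sits either inside $G^*$ (handled by $Y_i$) or outside $C_j$ (handled by $X_{j+i}\setminus V(C_j)$), so planarity follows directly. This construction sidesteps entirely the delicate analysis your approach would require.
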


\begin{cproof}
Given that $G^*$ has \hptd{$\Hcal^{(a-1)}$} at most $k-j$, there exists a certifying elimination sequence $Y_1,\dots,Y_{k-j}$ .
We define $(X_1^*,\dots,X_k^*)$ and $(G_0^*,G_1^*,\dots,G_k^*)$ as follows.
For $i\in[j]$, $X_i^*:=X_i\setminus Y$.
For $i\in[k-j]$, we set $X_{j+i}^*:=X_{j+i}\setminus V(C_j)\cup Y_i$.
Finally, we set $G_0^*:=G$ and $G_i^*:=G_{i-1}^*-X_i^*$ for $i\in[k]$.
We want to prove that  prove that $X_1^*,\dots,X_k^*$ is a certifying elimination sequence of $G$, i.e. that
$\torso(G_{i-1}^*,X_i^*)$ is planar for $i\in[k]$ and that $C\in\Hcal^{(a-1)}$ for $C\in\cc(G_k^*)$.

For each $D\in\cc(G_{j}^*)$, either $D=G^*$, or $D\in\cc(G_j-Y)$.
Therefore, for each $C\in\cc(G_{k}^*)$, either (a)~$V(C)\subseteq V(G^*)$, in which case $C\in\cc(G^*-\bigcup_{i\in[k-j]}Y_i)$ and thus $C\in \Hcal^{(a-1)}$, given that $Y_1,\dots, Y_{k-j}$ is a certifying elimination sequence of $G^*$, or 
(b) there is $D\in\cc(G_j-Y)$ such that $V(C)\subseteq V(D)$, in which case $C\in\cc(G_k-Y)$ and thus again $C\in \Hcal^{(a-1)}$ by heredity of $\Hcal^{(a-1)}$.

Given that $(W^*,\frak{R}^*)$ is a flatness pair of $G-A^+$ and that $Y$ is the vertex set of some $W^{*(5)}$-tilt of $(W^*,\frak{R}^*)$, it implies that $N_G(Y)\subseteq V(G')\cup A^+$.
Additionally, by \autoref{cl_fl}, $V(G')\subseteq V(C_j)\cup Y$, and, by \autoref{cl_A+}, $A^+\subseteq N_{G}(V(C_j))$, so $N_G(Y)\subseteq N_G[V(C_j)]$.
So finally, given that $v\in Y$, we conclude that $N_G(Y)\subseteq N_{G-v}[V(C_j)]$.

For $i\in[j]$, $X_i^*=X_i\setminus Y$. $N_G(Y)\cap X_i$ already induces a clique in $\torso(G_{i-1},X_i)$ because $N_{G-v}[V(C_j)]\cap X_i=N_{G-v}(V(C_j))\cap X_i$ induces a clique in $\torso(G_{i-1},X_i)$.
Therefore, $\torso(G_{i-1}^*,X_i^*)$ is a subgraph of $\torso(G_{i-1},X_i)$, that is thus planar.
For $i\in[k-j]$, the connected components of $X_{j+i}^*$ are either connected components of $Y_i$ or connected components of $X_{j+i}$, so their torso is planar in either case.
Hence the result.
\end{cproof}
\end{proof}

\section{$\Hcal$-planar treewidth}\label{sec_tw}
In \autoref{subsec_twrend}, we prove that a graph has $\Hcal$-planar treewidth at most $k$ if and only if it has ground-maximal \sdecomp\ whose cells have property $\Pi_{\Hcal,k}$.
From this, we deduce in \autoref{subsec_small_tw} a proof of \autoref{lem_twk}.

\subsection{Expression as a sphere decomposition}\label{subsec_twrend}
\paragraph{Quasi-4-connectivity.}
Given $k\in\Nbbb_{\ge1}$, a graph $G$ is $k$-connected if for all separation $(L,R)$ of order at most $k-1$ of $G$, either $L\subseteq R$ or $R\subseteq L$.
A graph $G$ is \emph{quasi-4-connected} if it is 3-connected and that for all separation $(L,R)$ of order three of $G$, either $|L\setminus R|\le1$ or $|R\setminus L|\le1$.

\paragraph{Adhesions.}
Let $(T,\beta)$ be a tree decomposition.
The \emph{adhesion} of two nodes $t,t'\in V(T)$, denoted by $\adh(t,t')$, is the set $\beta(t)\cap\beta(t')$.
The \emph{adhesion} of $\Tcal$ is the maximum size of an adhesion over all pairs of adjacent nodes of~$T$.

\begin{proposition}[\cite{Grohe16}]\label{prop_Grohe}
Every $G$ has a tree decomposition $(T,\beta)$ of adhesion at most three such that, for each $t\in V(T)$, $\torso(G,\beta(t))$ is a minor of $G$ that is quasi-4-connected. 
\end{proposition}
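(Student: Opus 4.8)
The plan is to build $(T,\beta)$ by \emph{iterative refinement}, starting from the trivial one-node tree decomposition whose single bag is $V(G)$ and repeatedly splitting any bag whose torso fails to be quasi-4-connected. Throughout, one maintains the invariant that every adhesion $\beta(t)\cap\beta(t')$ (over edges $tt'\in E(T)$) has size at most $3$ and that, for each edge $tt'$, the ``branch'' $B_{t,t'}$ consisting of the vertices lying only on the $t'$-side of $t$ induces a \emph{connected} subgraph of $G$ whose $G$-neighbourhood is exactly $\beta(t)\cap\beta(t')$. This second property is what forces each torso to be a minor of $G$: for a node $t$ the components of $G-\beta(t)$ are precisely the branches $B_{t,t'}$ over the neighbours $t'$ of $t$, so $\torso(G,\beta(t))$ is obtained from $G[\beta(t)]$ by turning each $N_G(B_{t,t'})=\beta(t)\cap\beta(t')$ into a clique, and since these sets have size at most $3$ a short case analysis---contracting suitable connected pieces of the branches, exactly as when building minor models of small cliques---shows the result is a minor of $G$. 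This is where the adhesion bound of $3$ is essential.

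For the refinement step, suppose $H:=\torso(G,\beta(t))$ is not quasi-4-connected. Then either $H$ is not $3$-connected, in which case it has a separation $(L,R)$ of order at most $2$ with $L\not\subseteq R$ and $R\not\subseteq L$, or $H$ has a separation $(L,R)$ of order exactly $3$ with $|L\setminus R|\ge 2$ and $|R\setminus L|\ge 2$; in both cases $|L\cap R|\le 3$. Replace $t$ by two adjacent nodes $t_L,t_R$ with bags $L$ and $R$, and re-attach each former neighbour $t'$ of $t$ to $t_L$ or to $t_R$ according to whether $\beta(t)\cap\beta(t')$ is contained in $L$ or in $R$; this dichotomy is exhaustive because $\beta(t)\cap\beta(t')$, being the neighbourhood in $G$ of the connected branch $B_{t,t'}$, induces a clique in $H$, and a clique cannot meet both $L\setminus R$ and $R\setminus L$. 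One then checks that the resulting pair is again a tree decomposition of $G$ and that both parts of the invariant persist: the new adhesion $L\cap R$ has size at most $3$, and the branches of the new decomposition remain connected and tight because $(L,R)$ was itself a separation of the torso.

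It remains to argue termination. The key point is that every split replaces a bag of size $b$ by two bags of size strictly less than $b$, and in fact of size at most $b-2$ when the split is along an order-$3$ separation. Taking the potential $\Phi(T,\beta)=\sum_{t\in V(T)} 3^{|\beta(t)|}$, a direct computation over the finitely many possible pairs of resulting bag sizes shows that $\Phi$ strictly decreases at each step; as $\Phi$ is a positive integer, only finitely many splits occur, and once none is possible every torso is quasi-4-connected, which is what we want. I expect the true obstacle to lie in the first paragraph rather than in termination: preserving the ``torso is a minor'' property \emph{while also} driving every torso to quasi-4-connectivity is delicate, since realising a size-$3$ adhesion clique inside a single connected branch is not automatic, and side-stepping this requires choosing the splitting separations from a carefully chosen (ideally canonical) family. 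This careful implementation, along with the canonicity of the decomposition it produces, is exactly the content of Grohe's construction, which we invoke here; see \cite{Grohe16}.
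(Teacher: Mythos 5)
The paper does not prove this statement at all: it is imported verbatim from Grohe~\cite{Grohe16} as a black-box proposition. Since you also end by invoking Grohe's construction for the delicate part, your proposal is in effect the same move as the paper's, preceded by a sketch of what Grohe's argument roughly looks like.

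Read as a self-contained proof, however, the sketch has a genuine gap exactly where you suspect it. The claim in your first paragraph that adhesion at most $3$ plus connected branches forces $\torso(G,\beta(t))$ to be a minor of $G$ is false: take $G=K_{1,3}$ with centre $v$ and leaves $a,b,c$, and $\beta(t)=\{a,b,c\}$ with the single branch $\{v\}$; the torso is a triangle, but a tree has no $K_3$ minor. Contracting a connected branch onto one of its three neighbours yields a path/``Y'' configuration, not a triangle, and a $Y$--$\Delta$ exchange is not a minor operation. So the minor property cannot be an invariant that comes for free from the splitting procedure; it has to be engineered by choosing the separations from a restricted family (this is the heart of Grohe's construction, together with canonicity). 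A second, smaller gap: after splitting $t$ into $t_L,t_R$, the new branch seen from $t_L$ (namely $R\setminus L$ together with the old branches re-attached to $t_R$) need not induce a connected subgraph of $G$, so your connectivity invariant does not automatically persist either; one must additionally refine the separation so that one side is connected. Your termination potential and the observation that an adhesion clique cannot straddle a separation of the torso are both fine. Net assessment: the outline is a reasonable caricature of the strategy, but the two invariants you rely on are precisely the ones that fail for naive splitting, so the proposal genuinely depends on \cite{Grohe16} — which is also all the paper does.
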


\begin{lemma}\label{lem_adh3}
Let $\Hcal$ be a graph class and $k\in\Nbbb$.
Let $G$ be a graph of $\Hcal$-planar treewidth at most $k$ and $X$ be a $\Pcal\Tcal_k\triangleright\Hcal$-modulator of $G$.
Then $\torso(G,X)$ has a tree decomposition $(T,\beta)$ of adhesion at most three such that 
for each $t\in V(T)$, $\torso(G,\beta(t))$ is a minor of $\torso(G,X)$ that is quasi-4-connected, and either it is planar or it has treewidth at most $k$.
\end{lemma}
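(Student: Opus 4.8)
The statement combines two ingredients: the structure theorem of Grohe (\autoref{prop_Grohe}), which gives, for \emph{any} graph, a tree decomposition of adhesion at most three whose torsos are quasi-$4$-connected minors; and the definition of $\Hcal$-planar treewidth, which says that a $\Pcal\Tcal_k\triangleright\Hcal$-modulator $X$ has the property that $\torso(G,X)$ admits a tree decomposition in which every bag is either of size at most $k+1$ (equivalently, treewidth at most $k$) or has a planar torso. The plan is to apply \autoref{prop_Grohe} to the graph $H:=\torso(G,X)$, obtaining a tree decomposition $(T,\beta)$ of $H$ of adhesion at most three with each $\torso(H,\beta(t))$ a quasi-$4$-connected minor of $H$. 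What remains is to upgrade the ``either planar or treewidth $\le k$'' dichotomy from the witnessing decomposition of $X$ to this canonical decomposition $(T,\beta)$. I would first observe that $\torso(G,\beta(t))$ and $\torso(H,\beta(t))$ coincide: since $X$ is a $\Pcal\Tcal_k\triangleright\Hcal$-modulator, $\beta(t)\subseteq X$, and the torso operation is transitive in the sense that forming the torso of $\beta(t)\subseteq X\subseteq V(G)$ inside $G$ gives the same graph as forming the torso of $\beta(t)$ inside $H=\torso(G,X)$ — this is because every connected component $C$ of $G-\beta(t)$ is covered by its intersections with components of $G-X$ plus vertices of $X\setminus\beta(t)$, and the edges added to $\beta(t)$ by either route are exactly the pairs in $\binom{\beta(t)}{2}$ joined by a path in $G-(\beta(t)\setminus\{u,v\})$. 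So ``$\torso(G,\beta(t))$ is a minor of $\torso(G,X)$ that is quasi-$4$-connected'' is immediate from \autoref{prop_Grohe} applied to $H$.

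**The main point.** The crux is the planar/bounded-treewidth dichotomy for the bags $\beta(t)$. Let $(T',\beta')$ be the tree decomposition of $H=\torso(G,X)$ witnessing $\Hcal$-planar treewidth $\le k$, so every bag $\beta'(s)$ has treewidth $\le k$ or planar torso. I would argue as follows: each torso $\torso(H,\beta(t))$ is a minor of $H$ (by \autoref{prop_Grohe}); moreover, being quasi-$4$-connected, it is in fact a \emph{topological} piece of $H$ in a strong sense — it arises from a single bag of Grohe's decomposition, and its clique-sums structure over $(T',\beta')$ is trivial because any separator of order $\le 3$ of the quasi-$4$-connected torso either is an adhesion of $(T',\beta')$ living inside a single bag, or splits off $\le 1$ vertex. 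Concretely, the quasi-$4$-connected graph $\torso(H,\beta(t))$ must be a minor of \emph{some single bag torso} $\torso(H,\beta'(s))$: otherwise it would be built as a nontrivial clique-sum of pieces over an adhesion of size $\le 3$, contradicting quasi-$4$-connectivity (a nontrivial $\le 3$-sum of a quasi-$4$-connected graph would have to have one side of size $\le 1$, so iterating, the whole torso fits in one bag). Since $\torso(H,\beta'(s))$ either has treewidth $\le k$ or is planar, and both properties are minor-closed, $\torso(H,\beta(t))=\torso(G,\beta(t))$ inherits the same property. Finally, adhesion $\le 3$ is given directly by \autoref{prop_Grohe}, completing the proof.

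**Main obstacle.** The hard step is making precise the claim that a quasi-$4$-connected torso $\torso(H,\beta(t))$ must embed as a minor into a single bag $\torso(H,\beta'(s))$ of the witnessing decomposition, rather than being spread across several bags. The subtlety is that $\torso(H,\beta(t))$ is obtained from $H[\beta(t)]$ by \emph{adding} clique edges, and one must check these added edges do not destroy the ``one bag'' conclusion — but they help rather than hurt, since the added edges are along small separators and quasi-$4$-connectivity forbids exactly those small separators from being genuine cut-sets with both sides large. I would handle this by intersecting $(T',\beta')$ with $\beta(t)$ to get a tree decomposition of $\torso(H,\beta(t))$ of width inheriting the planar-or-$k$ property bag by bag, then invoking the standard fact that a quasi-$4$-connected graph with a tree decomposition of adhesion $\le 3$ and each torso planar-or-treewidth-$\le k$ is itself planar or of treewidth $\le k$ (the clique-sum of graphs each planar or of bounded treewidth over $\le 3$-adhesions, under quasi-$4$-connectivity of the whole, forces all but one summand to be trivial). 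A clean way to phrase the whole argument is: $(T',\beta')$ restricted to $\beta(t)$ is a tree decomposition of the quasi-$4$-connected graph $\torso(G,\beta(t))$; contracting all trivial (size $\le 4$, hence redundant) bags collapses this to a single bag, which is therefore a minor of some $\torso(H,\beta'(s))$, hence planar or of treewidth $\le k$.
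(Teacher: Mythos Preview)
Your overall plan is sound and matches the paper's: apply \autoref{prop_Grohe} to $H:=\torso(G,X)$ to get $(T,\beta)$, verify $\torso(G,\beta(t))=\torso(H,\beta(t))$ by transitivity of the torso operation, and then argue that each quasi-$4$-connected torso inherits the planar-or-treewidth-$\le k$ dichotomy. The gap is in the last step.

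Both concrete arguments you offer for the dichotomy fail. First, restricting $(T',\beta')$ to $\beta(t)$ gives a tree decomposition of the induced subgraph $H[\beta(t)]$, \emph{not} of $\torso(H,\beta(t))$: the extra clique edges of the torso come from paths through $H-\beta(t)$, which need not be confined to any single bag $\beta'(s)$ (take $H$ a path $u\text{--}a\text{--}b\text{--}v$ with $\beta(t)=\{u,v\}$ and the obvious three-bag path decomposition). Second, your ``minor of a single bag torso via adhesion $\le 3$'' argument assumes the adhesions of the witnessing decomposition $(T',\beta')$ are at most~$3$, but between two non-planar bags (each of size $\le k+1$) the adhesion can be as large as $k+1$; a quasi-$4$-connected graph has no objection to separators of order $4$ or more, so nothing forces it into one bag.

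The paper sidesteps both issues by using minor-monotonicity of planar treewidth directly, rather than trying to track the torso through the original witnessing decomposition. Since $\torso(H,\beta(t))$ is a minor of $H$ and $H$ has planar treewidth $\le k$, so does $\torso(H,\beta(t))$. Now take a \emph{fresh} witnessing decomposition $(T^t,\beta^t)$ of $\torso(H,\beta(t))$ with the minimum number of nodes, and show that no node whose torso is planar can be adjacent to any other node: the adhesion to a planar bag is a clique of size $\le 4$ in a planar graph, forcing a $3$-separator in the quasi-$4$-connected graph $\torso(H,\beta(t))$, so one side has a single vertex, which can be absorbed into the other bag, contradicting minimality. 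Hence either $(T^t,\beta^t)$ has one node, or every bag has $\le k+1$ vertices (so the whole torso has treewidth $\le k$). This is exactly your intended conclusion, reached by working with a decomposition of the torso itself rather than the restriction of $(T',\beta')$.
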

\begin{proof}
By \autoref{prop_Grohe}, there is a tree decomposition $(T,\beta)$ of $\torso(G,X)$ of adhesion at most three such that, for each $t\in V(T)$, $\torso(G,\beta(t))$ is a minor of $\torso(G,X)$ that is quasi-4-connected. 
Let $t\in V(T)$.
Given that $\torso(G,X)$ has planar treewidth at most $k$ and that $\torso(G,\beta(t))$ is a minor of $\torso(G,X)$, it implies that $\torso(G,\beta(t))$ has planar treewidth at most $k$.
Thus, there is a tree decomposition $(T^t,\beta^t)$ of $\torso(G,\beta(t))$ such that, for each $u\in V(T^t)$, $\torso(G,\beta^t(u))$ either is planar or has treewidth at most $k$.
We choose $(T^t,\beta^t)$ to have the minimum number of nodes.
This implies in particular that, for any $u,u'\in V(T^t)$, $\beta^t(u)\setminus\beta^t(u')\ne\emptyset$.

\smallskip
Suppose toward a contradiction that there are two adjacent nodes $u,u'$ in $V(T^t)$ and that one of them, say $u$, is such that $\torso(G,\beta^t(u))$ is planar. Then $|\adh(u,u')|\le 4$ and, by planarity of $\torso(G,\beta^t(u))$, there is $Y\subseteq \adh(u,u')$ that is a 3-separator in $\torso(G,\beta(t))$.
Then, by quasi-4-connectivity of $\torso(G,\beta(t))$, $|Y|=|\adh(u,u')|=3$ and either $|\beta^t(u)\setminus Y|=1$ or $|\beta^t(u')\setminus Y|=1$.

If $\beta^t(u')\setminus Y=\{v\}$, then, again by quasi-4-connectivity of $\torso(G,\beta(t))$, $Y$ induces a face of $\torso(G,\beta^t(u))$, so $\torso(G,\beta^t(u)\cup\{v\})$ is also planar, given that $N_{\beta(t)}(v)= Y$.
Therefore, the tree decomposition obtained by removing $u'$ and adding $v$ to $\beta^t(u)$ is a tree decomposition of $\torso(G,\beta(t))$ such that, for each $u\in V(T^t)$, $\torso(G,\beta^t(u))$ either is planar or has treewidth at most $k$, and with less nodes as $(T^t,\beta^t)$, a contradiction.

Assume now that $\beta^t(u)\setminus Y=\{v\}$ and $|\beta^t(u')\setminus Y|>1$. By symmetry, if $\torso(G,\beta^t(u'))$ is planar, we get a contradiction, so we assume that $\torso(G,\beta^t(u'))$ has treewidth at most $k$. But then, $\torso(G,\beta^t(u')\cup\{v\})$ also has treewidth at most $k$, so we can again construct a tree decomposition of $\torso(G,\beta(t))$ such that, for each $u\in V(T^t)$, $\torso(G,\beta^t(u))$ either is planar or has treewidth at most $k$, and with less nodes as $(T^t,\beta^t)$, a contradiction.

Therefore, for each $t\in V(T_1)$, either $\torso(G,\beta(t))$ is planar, or $\torso(G,\beta(t))$ has treewidth at most $k$.
Hence the result.
\end{proof}

\paragraph{Property $\Pi_{\Hcal,k}$.}
Let $\Hcal$ be a graph class and $k\in\Nbbb$.
Given a \sdecomp\ $\delta$ of a graph $G$, we say  that a cell $c\in C(\delta)$ has the \emph{property $\Pi_{\Hcal,k}$} if $G_c$ has a $\Pcal\Tcal_k\triangleright\Hcal$-modulator $X_c$
such that $\pi_\delta(\tilde{c})\subseteq X_c$, where $G_c$ is there graph obtained from $\sigma(c)$ by making a clique out of $\pi_\delta(\tilde{c})$.
We say that $\delta$ has the \emph{property $\Pi_{\Hcal,k}$} if each cell of $\delta$ has the property $\Pi_{\Hcal,k}$.

\begin{lemma}\label{lem_twrend1}
Let $\Hcal$ be a graph class and $k\in\Nbbb$.
Let $G$ be a graph.
Suppose that $G$ has a \sdecomp\ $\delta$ with the property $\Pi_{\Hcal,k}$.
Then $G$ has $\Hcal$-planar treewidth at most $k$.
\end{lemma}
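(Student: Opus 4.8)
The plan is to take the sphere decomposition $\delta = (\Gamma, \Dcal)$ with property $\Pi_{\Hcal,k}$ and assemble from it a tree decomposition of $G$ witnessing $\Hcal$-planar treewidth at most $k$. For each cell $c \in C(\delta)$, property $\Pi_{\Hcal,k}$ gives a graph $G_c$ (namely $\sigma(c)$ with $\pi_\delta(\tilde{c})$ turned into a clique) and a $\Pcal\Tcal_k\triangleright\Hcal$-modulator $X_c$ of $G_c$ with $\pi_\delta(\tilde c)\subseteq X_c$. By definition of $\Pcal\Tcal_k$, the graph $\torso(G_c, X_c)$ has planar treewidth at most $k$, so it admits a tree decomposition $(T_c,\beta_c)$ in which every bag either has size at most $k+1$ or has a planar torso; moreover $G_c - X_c \in \Hcal$. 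I would first record that since $\pi_\delta(\tilde c)$ is a clique in $G_c$ and lies inside $X_c$, it is contained in some single bag of $(T_c,\beta_c)$; pick such a bag-node $r_c$ as a root of $T_c$.

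**Building the global tree decomposition.** The next step is to glue the $T_c$'s along the ``skeleton'' of $\delta$. Form a tree $T$ as follows: take the disjoint union of all the $T_c$, plus for each node $n \in N(\delta)$ a new node $x_n$ with bag $\{\pi_\delta(n)\}$; then, for each cell $c$ and each node $n \in \tilde c$, attach $x_n$ to the root $r_c$ of $T_c$ (adjoining, if needed, a connecting node with bag $\pi_\delta(\tilde c)$ so that the standard intersection property along edges is respected — here one uses that $\pi_\delta(\tilde c) \subseteq \beta_c(r_c)$). One must check $T$ is connected and acyclic: connectivity follows because $\Gamma$ is a drawing of the connected-enough pieces of $G$ and the cells/nodes form a connected incidence structure on the sphere (if $G$ is disconnected one argues componentwise, or simply notes that the union over all disks covers all of $G$); acyclicity follows because the $T_c$ are disjoint trees glued only through the single shared node $x_n$ for each $n$. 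The three tree-decomposition axioms for $(T,\beta)$ then reduce to: (i) every vertex and edge of $G$ lies in $\sigma(c)$ for some cell $c$ (axiom 4 of a sphere decomposition), hence is covered by $(T_c,\beta_c)$; (ii) each vertex $v$ of $G$ that is a ground vertex $\pi_\delta(n)$ appears, across all cells $c$ with $n\in\tilde c$, exactly in the bags of those $T_c$ containing it plus the node $x_n$, and these form a subtree because each $T_c$-part containing $v$ is a subtree touching $r_c$ hence touching $x_n$ — this is the only delicate connectivity point; a non-ground vertex lies in a unique cell and inherits the subtree property from $(T_c,\beta_c)$.

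**Checking the width condition and finishing.** Finally I would verify that every bag of $(T,\beta)$ satisfies the $\Hcal$-planar-treewidth requirement. The singleton bags $\{\pi_\delta(n)\}$ and the small connecting bags $\pi_\delta(\tilde c)$ have size at most $3 \le k+1$ (using $|\tilde c| \le 3$), so they are fine. A bag $\beta_c(u)$ of some $T_c$ either has size at most $k+1$ — done — or has planar torso in $\torso(G_c,X_c)$; here I must argue that the torso of $\beta_c(u)$ taken in $G$ is still planar, or rather that it is a planar graph, which holds because the components of $G-X$ (where $X = \bigcup_c X_c$) that attach to $\beta_c(u)$ come either from within $G_c - X_c$ (and their neighborhoods sit inside a single bag of the local decomposition, so they only add the clique already present in $\torso(G_c,X_c)$) or are ``outside'' $\sigma(c)$ and attach only through $\pi_\delta(\tilde c)\subseteq \beta_c(r_c)$, which is not $u$ unless $u=r_c$, and in that case they attach through the triangle $\pi_\delta(\tilde c)$ which is already a face-bounding triangle we can accommodate. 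This last verification — reconciling the local torsos with the global torso, i.e. checking that the planarity of bag-torsos is preserved when we pass from $G_c$ to $G$ and that the sets $G - X$ really land in $\Hcal$ componentwise (union-closedness of $\Hcal$ is what lets components spread across several cells combine harmlessly) — is the main obstacle, and it is exactly the mirror image of the ``suppose now that $G$ has an $\Hcal^{(k)}$-compatible sphere decomposition'' direction of \autoref{obs_sol_compatible}; I expect to reuse that gluing-of-planar-embeddings argument almost verbatim, only tracking bags instead of whole torsos. Once this is in place, $(T,\beta)$ is a tree decomposition of $\torso(G,X)$ all of whose bags are small-or-planar-torso, and $G - X = \bigcup_c (G_c - X_c) \in \Hcal$ since $\Hcal$ is union-closed, so $X$ is a $\Pcal\Tcal_k\triangleright\Hcal$-modulator of $G$, giving $\Hcal$-planar treewidth at most $k$.
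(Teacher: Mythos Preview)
Your overall strategy --- set $X=\bigcup_{c}X_c$, take a planar-width-$\le k$ tree decomposition $(T_c,\beta_c)$ of each $\torso(G_c,X_c)$, and glue these into a global tree decomposition of $\torso(G,X)$ --- is the same as the paper's, and the observation that $\pi_\delta(\tilde c)$ is a clique in $G_c$ and hence sits in a single bag $\beta_c(r_c)$ is exactly right. However, your gluing construction does not produce a tree. You create a node $x_n$ for each $n\in N(\delta)$ and attach $x_n$ to $r_c$ for every cell $c$ with $n\in\tilde c$; but two cells $c_1,c_2$ can easily share two boundary nodes $n_1,n_2$ (indeed this is the generic situation for adjacent cells on the sphere), and then $r_{c_1}$--$x_{n_1}$--$r_{c_2}$--$x_{n_2}$--$r_{c_1}$ is a cycle. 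Your acyclicity justification (``glued only through the single shared node $x_n$ for each $n$'') overlooks this. The incidence structure of nodes and cells of a sphere decomposition is essentially a planar graph, not a tree, so gluing along it cannot work directly.

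The paper's fix is much simpler than patching your construction: instead of one node per ground vertex, introduce a \emph{single} new node $t$ with bag $\beta(t)=\pi_\delta(N(\delta))$ (all ground vertices at once), and attach each $r_c$ to $t$. Then $T$ is trivially a tree (a star of $T_c$'s). The bag $\beta(t)$ is large, but that is harmless for planar treewidth: its torso in $\torso(G,X)$ is obtained by turning each $\pi_\delta(\tilde c)$ into a clique of size $\le 3$, and this is planar directly from the drawing $\Gamma$ on the sphere. This also removes the need for your delicate last paragraph reconciling local and global torsos: for bags inside $T_c$ the torso in $\torso(G,X)$ coincides with the torso in $\torso(G_c,X_c)$ because the only ``outside'' attachment is through $\pi_\delta(\tilde c)\subseteq\beta_c(r_c)$, and for the central bag $t$ planarity is immediate. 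Finally, you do not need $\Hcal$ to be union-closed: since all ground vertices lie in $X$, every connected component of $G-X$ is contained in a single $\sigma(c)-X_c$ and is therefore already a component of $G_c-X_c\in\Hcal$.
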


\begin{proof}
We claim that $X:=\bigcup_{c\in C(\delta)}X_c$ is a $\Pcal\Tcal_k\triangleright\Hcal$-modulator of $G$.
Indeed, for each $c\in C(\delta)$, let $(T_c,\beta_c)$ be a tree decomposition of $\torso(G_c,X_c)$ of planar width at most $k$.
Since $\pi_\delta(\tilde{c})\subseteq X_c$ induces a clique in $G_c$, there is $t_c\in V(T_c)$ such that $\pi_\delta(\tilde{c})\subseteq \beta_c(t_c)$.
We define a tree decomposition of $\torso(G,X)$ as follows.
We set $T$ to be the union of the trees $T_c$ for $c\in C(\delta)$ and a new vertex $t$ with an edge $tt_c$ for each $c\in C(\delta)$.
Obviously, $T$ is a tree.
We set $\beta$ to be the function such that $\beta(t)=\pi_\delta(N(\delta))$ and, for $c\in C(\delta)$, $\beta|_{V(T_c)}=\beta_c$.
$(T,\beta)$ is a tree decomposition of $\torso(G,X)$ such that $\torso(G,\beta(t))$ is planar, by definition of a \sdecomp.
Hence the result.
\end{proof}

\begin{lemma}\label{lem_twrend2}
Let $\Hcal$ be a hereditary graph class and let $a,k,r\in\Nbbb$ with $r\ge\max\{a+3,k+1,7\}$.
Let $G$ be an $(a,3)$-unbreakable graph and 
$W$ be an $r$-wall of $G$.
Suppose that $G$ has $\Hcal^{(a-1)}$-planar treewidth at most $k$.
Then $G$ has a 
\sdecomp\ $\delta$ with the property $\Pi_{\Hcal,k}$ such that the $(r-2)$-central wall $W'$ of $W$ is grounded in $\delta$ and such that each cell of $\delta$ has size at most $a-1$.
\end{lemma}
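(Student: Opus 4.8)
The strategy is to combine the $\Hcal$-analogue of \autoref{obs_sol_compatible} for the property $\Pi_{\Hcal,k}$ with the structural decomposition of \autoref{lem_adh3}, and then use the $(a,3)$-unbreakability to force the big wall $W$ to ``live inside'' a single planar or bounded-treewidth piece. First I would start from a $\Pcal\Tcal_k\triangleright\Hcal^{(a-1)}$-modulator $X$ of $G$, and apply \autoref{lem_adh3} to get a tree decomposition $(T,\beta)$ of $\torso(G,X)$ of adhesion at most three, each torso $\torso(G,\beta(t))$ being quasi-4-connected and either planar or of treewidth at most $k$. The planar pieces give rise to disks in a sphere embedding exactly as in the proof of \autoref{obs_sol_compatible}; the bounded-treewidth pieces must be ``swallowed'' into cells, together with the connected components of $G-X$ hanging off them. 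Concretely, every component $D\in\cc(G-X)$ has $|N_G(V(D))|\le 3$ (since $\Hcal^{(a-1)}$-graphs are small and, crucially, the adhesion/torso structure keeps the attachment sets of size $\le 3$), so each such $D$ sits inside a $\delta$-aligned disk whose boundary has at most three nodes; likewise each bounded-treewidth torso-piece, being attached along adhesions of size $\le 3$, fits inside such a disk. Taking inclusion-wise maximal disks among these, plus the planar-piece disks, defines the sphere decomposition $\delta$, and each cell $c$ then carries a local modulator $X_c=X\cap V(\sigma(c))$ (extended by $\pi_\delta(\tilde c)$) witnessing $\Pi_{\Hcal,k}$: inside $c$, the torso restricted to $X_c$ has a tree decomposition of planar width at most $k$ with the $\le 3$ boundary vertices in one bag, which is the required property.

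The second half is to ground the $(r-2)$-central wall $W'$ of $W$. Here the argument mirrors the wall-grounding part of the proof of \autoref{obs_sol_compatible}: I would show (the analogue of \autoref{obs_smallsepwall}, \autoref{cl_compat}) that for any separation $(A,B)$ of $G$ of order at most three, one side induces a graph with no cycle of $W'$ and containing at most one branch vertex of $W'$. This is exactly where $(a,3)$-unbreakability and $r\ge\max\{a+3,k+1,7\}$ enter: a component of $G-X$ or a bounded-treewidth torso-piece attached via an adhesion of size $\le 3$ that ``caught'' many cycles of $W'$ would be a large set with a small separator to the rest of $G$, contradicting unbreakability (the $a+3$ bound guarantees $W'$ has enough pairwise-disjoint cycles and branch vertices that a size-$(\le 3)$ separator cannot split it, so $W'$ stays on the side of size $<a$, which is impossible since $W'$ is large). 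Hence each cell $\sigma(c)$ of $\delta$ misses some cycle of $W'$, so no cycle of $W'$ is contained in a single cell, i.e. $W'$ is grounded in $\delta$. Finally, the bound ``each cell of $\delta$ has size at most $a-1$'' follows because each cell corresponds either to a component of $G-X$ (which lies in $\Hcal^{(a-1)}$, hence has $<a$ vertices) or to a torso-piece that, by unbreakability and the fact that $W$ cannot be inside it, is small — more precisely, after splitting cells so that each $\sigma(c)$ is connected, any cell catching a $<a$-sized chunk of $G$ is kept, and the quasi-4-connectivity of the torso-pieces together with unbreakability forces all but the ``main'' piece to be small, while the main piece is precisely the one we keep uncellularized (it becomes nodes of $\delta$).

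I expect the main obstacle to be the bookkeeping that simultaneously guarantees (i) every cell is small ($\le a-1$ vertices), (ii) every cell has $\Pi_{\Hcal,k}$, and (iii) $W'$ is grounded — these three constraints pull against each other, because making cells small risks breaking the planar-embedding structure needed for $\Pi_{\Hcal,k}$, while keeping cells large enough to carry planar torsos risks swallowing part of $W'$. The resolution is the interplay between \autoref{lem_adh3} (which isolates the planar pieces cleanly and bounds adhesions by $3$) and the $(a,3)$-unbreakability (which forces everything ``heavy'' to be concentrated in one quasi-4-connected piece that is never cellularized); getting the exact numerology $r\ge\max\{a+3,k+1,7\}$ to make every case of the separation analysis go through — in particular handling adhesions of size exactly $3$ and the boundary cases where a torso-piece is both planar and of bounded treewidth — is the delicate part. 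Once the separation lemma is set up correctly, the rest follows the template of \autoref{obs_sol_compatible} and \autoref{lem_compatible}.
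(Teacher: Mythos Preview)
Your proposal has the right ingredients (\autoref{lem_adh3}, the separation analysis of \autoref{obs_smallsepwall}, unbreakability) but it misses the structural pivot of the paper's proof and contains a technical error.

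\textbf{The missing step.} You speak of ``planar pieces'' in the plural and propose to patch them together, with the bounded-treewidth pieces swallowed into cells. But a sphere decomposition needs a \emph{single} planar graph to embed. The paper's proof does not glue planar pieces. Instead, after applying \autoref{lem_adh3} to get the adhesion-$\le 3$ decomposition $(T,\beta)$ of $\torso(G,X)$, it uses \autoref{obs_smallsepwall} and unbreakability to locate a \emph{single} node $t\in V(T)$ on the ``wall side'' of every adhesion, and then proves, by a treewidth contradiction, that $\torso(G,\beta(t))$ is necessarily \emph{planar} (not merely of treewidth $\le k$): if it had treewidth $\le k$, one could combine its tree decomposition with the small (size $<a$) pieces hanging off $t$ to get $\tw(G)\le\max\{a+2,k\}$, contradicting $\tw(G)\ge r\ge\max\{a+3,k+1\}$. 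This is exactly where the hypothesis $r\ge k+1$ is consumed, and your proposal never invokes it. Once this is established, one sets $S:=\beta(t)$, takes a sphere embedding of $\torso(G,S)$, and proceeds as in \autoref{cl_compat}; the modulator for each cell is $X_c:=X\cap V(\sigma(c))$, and the restriction of $(T,\beta)$ to $X_c$ certifies $\Pi_{\Hcal,k}$.

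\textbf{The technical error.} You write that every $D\in\cc(G-X)$ has $|N_G(V(D))|\le 3$. This is false: for a $\Pcal\Tcal_k\triangleright\Hcal$-modulator $X$, \autoref{obs_clique} only gives $|N_G(V(D))|\le\max\{k+1,4\}$. What \emph{is} true is that every component of $G-S$ with $S=\beta(t)$ has neighbourhood of size at most three, because $(T,\beta)$ has adhesion at most three. This is another reason the proof must pass through a single bag $\beta(t)$ rather than work directly with $X$.
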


\begin{proof}
Let $X$ be a $\Pcal\Tcal_k\triangleright\Hcal^{(a-1)}$-modulator of $G$.
Let $V'\subseteq V(W')$ be the branch vertices of $W$ that are vertices of $W'$.

By \autoref{lem_adh3},
$\torso(G,X)$ has a tree decomposition $(T,\beta)$ of adhesion at most three such that 
for each $t\in V(T)$, either $\torso(G,\beta(t))$ is planar and quasi-4-connected, or $\tw(\torso(G,\beta(t)))\le k$.
By \autoref{obs_smallsepwall}, for each $tt'\in E(T)$, there is one connected component $D\in\cc(G-\adh(t,t'))$
that contains all but at most one vertex of $V'\setminus \adh(t,t')$, and $G- V(D)$ contains no cycle of $W'$.
Given that $G$ is $(a,3)$-unbreakable, that $|\adh(t,t')|\le 3$ and that $|V(D)|\ge|V'\setminus \adh(t,t')|\ge a$, we conclude that $|V(G)\setminus V(D)|<a$.
Therefore, there is $t\in V(T)$ such that, for each $t'\in V(T)$ adjacent to $t$, 
the connected component $D_{t'}$ containing $\beta(t)\setminus\adh(t,t')$ is such that $G- V(D_{t'})$ contains no cycle of $W'$ and $|V(G)\setminus V(D_{t'})|<a$.

\begin{claim}
$\torso(G,\beta(t))$ is planar.
\end{claim}

\begin{cproof}
Suppose toward a contradiction that $\tw(\torso(G,\beta(t)))\le k$ and let $(T_0,\beta_0)$ be a tree decomposition of $\torso(G,\beta(t))$ of width at most $k$.
For each $tt'\in E(T)$, there is $u_{t'}\in V(T_0)$ such that $\adh(t,t')\subseteq \beta_0(u_{t'})$.
Let $\Fcal$ be the set of $H\in \cc(G-X)$ such that there is no $tt'\in E(T)$ such that $V(H)\subseteq V(D_{t'})$.
Again, for each $H\in\Fcal$, there is $u_{H}\in V(T_0)$ such that $N_G(H)\subseteq \beta(u_H)$.
Hence, we can define the tree decomposition $(T',\beta')$ of $G$ where $T'$ is obtained from $T_0$ by adding a vertex $v_{t'}$ adjacent to $u_{t'}$  for each $tt'\in E(T)$ and a vertex $v_H$ adjacent to $u_H$ for each $H\in\Fcal$,
and $\beta'$ defined by $\beta'|_{V(T_0)}=\beta_0$, $\beta'(v_{t'})=V(D_{t'})\cup\adh(t,t')$ for $tt'\in E(T)$, and $\beta'(v_H)=N_G[H]$ for $H\in\Fcal$.
Given that $|\beta'(v_{t'})|\le a+2$ for $tt'\in E(T)$, that $|\beta'(v_H)|\le a+3$ for $H\in\Fcal$, and that $|\beta'(u)|\le k+1$ for $u\in V(T_0)$,
it implies that $G$ has treewidth at most $\max\{a+2,k\}$.
This contradicts the fact that $\tw(G)\ge\tw(W)=r\ge 1+\max\{a+2,k\}$.
\end{cproof}

We set $S:=\beta(t)$.
Let $\delta=(\Gamma,\Dcal)$ be a sphere embedding of $\torso(G,S)$.
Remember that $(T,\beta)$ has adhesion at most three, so, for $D\in\cc(G-S)$, $|N_G(V(D))|\le 3$.
As in \autoref{cl_compat}, for each $D\in\cc(G-S)$, there is a $\delta$-aligned disk $\Delta_D$ such that:
\begin{itemize}
\item $N_G(V(D))=\pi_\delta(N(\delta)\cap\Delta_D)$ and
\item the graph induced by $B_D:=V(\inG_\delta(\Delta_D)\cap V(Z_D))$ contains no cycle of $W'$.
\end{itemize}
Note that to prove this, we use the fact that $r-2\ge 5$ and $|V'|\ge2(r-2)^2-2\ge a+4$.
Also, if $N_G(V(D))=N_G(V(D'))$, then we can assume that $\Delta_D=\Delta_{D'}$.

Let $\Dcal^*$ be the inclusion-wise maximal elements of $\Dcal\cup\{\Delta_D\mid D\in\cc(G-S)\}$.
By maximality of $\Dcal^*$ and planarity of $\torso(G,S)$, any two distinct $\Delta_D,\Delta_{D'}\in\Dcal^*$ may only intersect on their boundary.
For each $C\in\cc(G-S)$, we draw $C$ in a $\Delta_D\in\Dcal$ such that $V(C)\subseteq B_D$, and add the appropriate edges with $\pi_\delta(N(\delta)\cap \bd(\Delta_D))$.
We similarly draw the edges of $G[S]$ to obtain a drawing $\Gamma^*$ of $G$.

For each cell $c$ of $\delta^*=(\Gamma^*,\Dcal^*)$, there is $D\in\cc(G-S)$ such that $\sigma_{\delta^*}(c)$ contains no cycle of $W'$, since $\sigma_{\delta^*}(c)$ is a subgraph of $G[B_D]$, so $W'$ is grounded in $\delta^*$.

It remains to prove that $\delta^*$ has property $\Pi_{\Hcal,k}$.
Let $c\in C(\delta^*)$ and $X_c:=X\cap V(\sigma_{\delta^*}(c))$.
Let $(T_c,\beta_c)$ be the restriction of $(T,\beta)$ to $X_c$.
Either $\sigma_{\delta^*}(c)$ is an edge of $G$, or there is $D\in\cc(G-S)$ such that $\pi_{\delta^*}(\tilde{c})=N_G(V(D))$. Given that $(T,\beta)$ is a tree decomposition of $\torso(G,X)$ and that $N_G(V(D))$ is a clique in $\torso(G,X)$ for $D\in\cc(G-S)$, we conclude that $(T_c,\beta_c)$ is a tree decomposition of $G_c$ where $G_c$ is the graph obtained from $\sigma_{\delta^*}(c)$ by making a clique out of $\pi_{\delta^*}(\tilde{c})$.
Additionally, given that $S\subseteq X$, we have $\pi_{\delta^*}(\tilde{c})\subseteq X_c$.
Finally, by heredity, $(T_c,\beta_c)$ is a tree decomposition of $\Hcal$-planar treewidth at most $k$.
Hence the result.
\end{proof}

\begin{lemma}\label{lem_twrend3}
Let $\Hcal$ be a hereditary graph class and $k\in\Nbbb$.
Let $G$ be a graph.
If $G$ has a
\sdecomp\ $\delta$ with the property $\Pi_{\Hcal,k}$ that is not ground-maximal, then
$G$ has
\sdecomp\ $\delta'$ with the property $\Pi_{\Hcal,k}$ that is strictly more grounded than $\delta$.
\end{lemma}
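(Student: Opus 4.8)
The plan is to prove the $\Pi_{\Hcal,k}$-analogue of \autoref{lem_compatible}: that grounding a \sdecomp\ one ``atomic'' step, if done carefully, preserves property $\Pi_{\Hcal,k}$. Since $\delta$ is not ground-maximal, by the equivalent description of ``more grounded'' (obtainable by splitting cells to ground more vertices), some cell $c\in C(\delta)$ admits a split: there is a vertex $v\in V(\sigma(c))\setminus\pi_\delta(\tilde c)$ that is a cut-vertex of $c$ in $\delta$ (at least two components of $\sigma(c)-v$ meet $\pi_\delta(\tilde c)$; the purely-boundary-grounding case is handled the same way). I would let $\delta'$ be obtained from $\delta$ by splitting $c$ at $v$ in the sense of the operation defined just before \autoref{lem_cut_vtx}, for a suitable choice of the side-$1$ component set $C_1$; then $\delta'$ is strictly more grounded than $\delta$, because $v$ becomes a ground vertex while every cell of $\delta$ other than $c$ is kept verbatim. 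Hence everything reduces to showing that the two new cells $c_1,c_2$ of $\delta'$ have property $\Pi_{\Hcal,k}$.

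Fix $i\in\{1,2\}$ and keep the notation of the splitting operation, so $\sigma_{\delta'}(c_i)=G[A_i]$, $\pi_{\delta'}(\tilde{c_i})=\{v\}\cup B_i$ with $B_i\subseteq\pi_\delta(\tilde c)$ and $|B_i|\le 2$, and $G_{c_i}$ is $G[A_i]$ with $\{v\}\cup B_i$ turned into a clique. Let $X_c\supseteq\pi_\delta(\tilde c)$ be a $\Pcal\Tcal_k\triangleright\Hcal$-modulator of $G_c$ witnessing $\Pi_{\Hcal,k}$ for $c$, and put $X_{c_i}:=(X_c\cap A_i)\cup\{v\}$, which contains $\pi_{\delta'}(\tilde{c_i})$ as required. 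The component condition is the easy half, exactly as in \autoref{lem_compatible}: $G_{c_i}-X_{c_i}=G[A_i\setminus(X_c\cup\{v\})]$ is an induced subgraph of $G_c-X_c$, so by heredity of $\Hcal$ every connected component of $G_{c_i}-X_{c_i}$ lies in $\Hcal$. The content is the torso condition $\torso(G_{c_i},X_{c_i})\in\Pcal\Tcal_k$. Here I would compare $\torso(G_{c_i},X_{c_i})$ with $\torso(G_c,X_c)$: since $\sigma(c)$ is connected and $v$ separates $\pi_\delta(\tilde c)$ inside $\sigma(c)$, the graph $G_{c_i}$ differs from the induced piece $G_c[A_i]$ only by the at most two edges $vb$ ($b\in B_i$), whose endpoints already lie in a common connected piece of $G_c[A_i]$; consequently $\torso(G_{c_i},X_{c_i})$ is obtained from a minor of $\torso(G_c,X_c)$ by adding at most the single vertex $v$ with a bounded neighbourhood. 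Using that $\ptw$ is minor-monotone and vanishes on planar graphs, one then checks that this small surgery keeps the planar treewidth at most $k$, provided $v$ is chosen so that its neighbourhood inside $\torso(G_c,X_c)$ lies in a single bag of a tree decomposition certifying $\torso(G_c,X_c)\in\Pcal\Tcal_k$ (and, for a planar bag, on a single face of it).

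The crux — and the reason the statement only asserts the existence of \emph{some} strictly more grounded $\delta'$ rather than, as in \autoref{lem_compatible}, that \emph{every} more grounded \sdecomp\ inherits the property — is exactly this last point: promoting the freshly grounded vertex $v$ into the modulator can a priori raise the planar treewidth of the torso (a new vertex joined to a large clique-bag). So the delicate part is to choose $v$ (and $C_1$) so that $v$ is absorbed harmlessly. If $\delta$ has a cut-vertex that already belongs to $X_c$, split there, so that $X_{c_i}\subseteq X_c$ and $\torso(G_{c_i},X_{c_i})$ is a minor of $\torso(G_c,X_c)$; otherwise, use the observation that if $v\notin X_c$ is a cut-vertex of $c$ separating $\pi_\delta(\tilde c)$ then $v$ is in fact a cut-vertex of the component $C$ of $G_c-X_c$ containing it, and $C$ attaches to $X_c$ through the set $N_{G_c}(V(C))$, which is a clique of $\torso(G_c,X_c)$ and hence sits in one bag — so adding $v$ either enlarges a small bag by an additive constant that can be absorbed into $k$, or is attached to that clique from a single face and keeps a planar bag planar. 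Everything else — the reduction to a single split, the heredity bookkeeping for the $\Hcal$-part, and the verification that $\delta'$ is a legitimate strictly-more-grounded \sdecomp\ of $G$ — is routine and follows the pattern already established in \autoref{lem_cut_vtx} and \autoref{lem_compatible}.
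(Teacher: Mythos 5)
There is a genuine gap, and it sits exactly where you located the ``crux.'' The lemma must produce a strictly more grounded $\delta'$ with property $\Pi_{\Hcal,k}$ for the \emph{same} $k$, so your fallback that adding the freshly grounded vertex $v\notin X_c$ to the modulator ``enlarges a small bag by an additive constant that can be absorbed into $k$'' is not a valid move: if a bag of the certifying tree decomposition already has size $k+1$, adding $v$ breaks the bound, and nothing in your argument rules this out. The other branch of your dichotomy is equally unprotected: when $v$ lies in a component $D\in\cc(G_c-X_c)$, the clique $N_{G_c}(V(D))$ need not lie in a \emph{planar} bag, and even when it does, $v$ may have further neighbours in $V(\sigma(c))\setminus V(D)$ so that $\torso(G_{c_i},X_{c_i})$ is not obtained from $\torso(G_c,X_c)$ by the innocuous single-vertex attachment you describe. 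Since you have no alternative grounding to fall back on when this surgery fails, the existence claim is not established. A second, smaller issue is that ``not ground-maximal'' does not reduce to ``some cell has a cut-vertex or can absorb one more boundary node'': a more grounded decomposition may split a cell along an interface of up to three vertices, which the single-vertex splitting operation of \autoref{lem_cut_vtx} does not capture.

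The paper's proof avoids all of this by never grounding a vertex outside the modulator. It invokes \autoref{lem_adh3} (built on Grohe's quasi-$4$-connected decomposition, \autoref{prop_Grohe}) to get a tree decomposition of $\torso(G_c,X_c)$ with adhesion at most three whose bag-torsos are quasi-$4$-connected and either planar or of treewidth at most $k$; the root bag $\beta(r)$ contains $\pi_\delta(\tilde c)$. The new cell boundaries are then chosen to be adhesion sets of this decomposition (or vertices of $\beta(r)$), i.e., subsets of $X_c$ of size at most three, so the restricted modulators $X_c\cap V(\sigma_{\delta'}(c_i))$ witness $\Pi_{\Hcal,k}$ directly, with no new modulator vertex and no change to the planar treewidth bound. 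The case analysis (adhesion strictly inside $\pi_\delta(\tilde c)$; $\beta(r)=\pi_\delta(\tilde c)$, which reduces to \autoref{lem_compatible}; $|\tilde c|\le 2$; planar root torso; root torso of treewidth $\le k$, where quasi-$4$-connectivity forces every new boundary into $\beta(r)$) is the real content of the proof, and your proposal has no substitute for it.
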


\begin{proof}
Let $\delta=(\Gamma,\Dcal)$ be a \sdecomp\ of $G$ with the property $\Pi_{\Hcal,k}$ that is not ground-maximal.
Hence, there is a cell $c\in C(\delta)$ that is not ground-maximal.
Since $c$ has the property $\Pi_{\Hcal,k}$, there is a $\Pcal\Tcal_k\triangleright\Hcal$-modulator $X_c$ of $\torso(G_c,\sigma(c))$ such that $\pi_\delta(\tilde{c})\subseteq X_c$.
By \autoref{lem_adh3}, there is a tree decomposition $(T,\beta)$ of $\torso(G_c,X_c)$ of adhesion at most three such that, for each $t\in V(T)$, $\torso(G_c,\beta(t))$ is a minor of $\sigma(c)$ that is quasi-4-connected, and either it is planar or it has treewidth at most $k$.
Without loss of generality, we can assume that, for each $t,t'\in V(T)$, $\beta(t)\setminus\beta(t')\ne\emptyset$.
Given that $\pi_\delta(\tilde{c})$ is a clique in $\torso(G_c,\sigma(c))$, there is $r\in V(T)$ such that $\pi_\delta(\tilde{c})\subseteq \beta(r)$.
We root $T$ at $r$.

Suppose that there is a child $t$ of $r$ such that $\adh(r,t)\subsetneq\pi_\delta(\tilde{c})$.
Let $\Tcal$ be the set of all children of $r$ such that $\adh(r,t')=\adh(r,t)$. Let $V'$ be the set of all vertices of $G- \beta(r)$ that belong to a bag of a subtree of $T$ rooted at a node in $\Tcal$.
Let $\delta'$ be the \sdecomp\ of $G$ obtained by removing $c$ from $\delta$ and instead adding two cells $c_1$ and $c_2$ such that $\pi_{\delta'}(\tilde{c}_1)=\adh(t,r)$, $\sigma_{\delta'}(c_1)=G[V'\cup\adh(r,t')]$, $\pi_{\delta'}(\tilde{c}_2)=\pi_{\delta}(\tilde{c})$, and $\sigma_{\delta'}(c_1)=\sigma_{\delta}(c)-V'$.
Note that, for $i\in[2]$, $X_{c_i}:=X_c\cap\sigma_{\delta'}(c_i)$ is a $\Pcal\Tcal_k\triangleright\Hcal$-modulator of $\torso(G_c,\sigma_{\delta'}(c_i))$.
Thus, $\delta'$ is more grounded than $\delta$ and has the property $\Pi_{\Hcal,k}$.

Suppose that $\beta(r)=\pi_\delta(\tilde{c})$.
If there is a child $t$ of $r$, then we have $\adh(r,t)\subseteq\beta(r)=\pi_\delta(\tilde{c})$, and so $\adh(r,t)=\beta(r)$ given that we already handled the case when $\adh(r,t)\subsetneq\pi_\delta(\tilde{c})$.
Therefore, $\beta(r)\subseteq \beta(t)$,
which contradicts the fact that $\beta(r)\setminus\beta(t)\ne\emptyset$.
Thus, $r$ is the only node of $T$.
Then $c$ is $\Hcal$-compatible, so by \autoref{lem_compatible}, for any \sdecomp\ $\delta'$ that is more grounded than $\delta$, the cells of $\delta'$ contained in $c$ are $\Hcal$-compatible and thus have property $\Pi_{\Hcal,k}$.
So we can assume that $\beta(r)\setminus\pi_\delta(\tilde{c})\ne\emptyset.$

Suppose that $|\tilde{c}|\le2$.
Let $v\in\beta(r)\setminus\pi_\delta(\tilde{c})$ be a vertex adjacent to a vertex of $\pi_\delta(\tilde{c})$ in $\torso(G_c,\beta(r))$.
It exists given that $\torso(G_c,\beta(r))$ is quasi-4-connected, and thus connected.
Let $\delta'$ be the the \sdecomp\ of $G$ obtained by adding a new point $u$ in the boundary $\tilde{c}$ of $c$, and setting $\pi_{\delta'}(u)=v$. 
We still have $\pi_{\delta'}(\tilde{c})\subseteq X_c$, and $\delta'$ is a \sdecomp\ of $G$ with property $\Pi_{\Hcal,k}$ that is more grounded than $\delta$.

We can thus suppose that $|\tilde{c}|=3$.
Note that, if there is a child $t$ of $r$ such that $\pi_{\delta}(\tilde{c})=\adh(r,t)$, then, given that $\beta(r)\setminus\adh(r,t)\ne\emptyset$ and $\beta(t)\setminus\adh(r,t)\ne\emptyset$, it implies that $c$ is already ground-maximal, so we can assume that $\pi_{\delta}(\tilde{c})\ne\adh(r,t)$ for all $t\in V(T)$.
If $\torso(G_c,\beta(r))$ is planar, given that is is also quasi-4-connected, $\pi_\delta(\tilde{c})$ either induces a face of $\torso(G_c,\beta(r))$, or there is a vertex $v$ such that $(\pi_\delta(\tilde{c})\cup\{v\},V(\torso(G_c,\beta(r)))\setminus\{v\})$ is a separation of $\torso(G_c,\beta(r))$. In the second case, then again, $c$ is already ground-maximal.
In the first case, then let $\delta_c$ be a sphere embedding of $\torso(G_c,\beta(r))$ whose outer face has vertex set $\pi_\delta(\tilde{c})$.
Then the \sdecomp\ obtained by replacing $c$ with $\delta_c$ still has property $\Pi_{\Hcal,k}$, and it is more grounded than $\delta$.

We now suppose that $\torso(G_c,\beta(r))$ has treewidth at most $k$.
Let $\delta'$ be a \sdecomp\ of $G$ more grounded than $\delta$ such that each cell $c'\in C(\delta)\setminus\{c\}$ is equivalent to a cell of $C(\delta')$.
We choose $\delta'$ such that the number of ground vertices $|N(\delta')|$ is minimum among all \sdecomps\ of $G$ that are more grounded than $\delta$ and distinct from $\delta$.
Suppose toward a contradiction that there is a cell $c'\in C(\delta')$ contained in $c$ and $v\in\pi_{\delta'}(\tilde{c}')$ such that $v\notin\beta(r)$.
Let $t\in V(T)$ be the child of $r$ whose subtree contains a node $t'$ with $v\in\beta(t')$.
Thus, $\adh(r,t)$ is a separator of size at most three between $v$ and $\pi_\delta(\tilde{c})$.

If $\adh(r,t)\subseteq \pi_{\delta'}(N(\delta'))$, then there is $\delta'$-aligned disk $\Delta$ whose boundary is $\adh(r,t)$.
Let $\delta''$ be the \sdecomp\ of $G$ obtained by removing the cells of $\delta'$ in $\Delta$ and adding instead a unique cell with boundary $\adh(r,t)$. $\delta''$ is more grounded than $\delta$ and, given that $\adh(r,t)\ne\pi_{\delta}(\tilde{c})$, $\delta''$ is distinct from $\delta$, a contradiction to the minimality of $\delta'$.

Otherwise, there is $x\in\adh(r,t)\setminus\pi_{\delta'}(N(\delta'))$.
Given that $\torso(G,\beta(r))$ is quasi-4-connected, there are three internally vertex-disjoint paths in $\torso(G_c,\beta(r))$ from $x$ to the three vertices in $\pi_{\delta}(\tilde{c})$.
Additionally, there is a path from $x$ to $v$ disjoint from $\beta(r)$ (aside from its endpoint $x$).
Thus, given that $\torso(G,\beta(r))$ is a minor of $\sigma(c)$, that are four internally vertex-disjoint paths in $\sigma(c)$ from $x$ to $v$ and the three vertices in $\pi_{\delta}(\tilde{c})$, respectively.
This implies that $x\in\sigma_{\delta'}(c'')$ for some cell $c''$ with $|\tilde{c}''|\ge4$, a contradiction to the fact that $|\tilde{c}''|\le3$.

Therefore, for any $c'\in C(\delta')$, $\pi_{\delta'}(\tilde{c}')\subseteq \beta(r)$.
Given that $\torso(G_c,\beta(r))$ has treewidth at most $k$, by heredity of the treewidth, we easily get that $X_{c'}:=X_c\cap \sigma_{\delta'}(c')$ is a $\Pcal\Tcal_k\triangleright\Hcal$-modulator of $\torso(G_c,\sigma_{\delta'}(c'))$.
Hence the result.
\end{proof}

From the previous results of this section and \autoref{lem_newirr}, we deduce an irrelevant vertex technique tailored for $\Hcal$-planar treewidth.

\begin{lemma}\label{lem_combinetw}
Let $\Hcal$ be a hereditary graph class.
Let $a,k,r\in\Nbbb$ with odd $r\ge\max\{a+3,k+1,7\}$.
Let $G$ be an $(a,3)$-unbreakable graph, $(W,\frak{R})$ be a flatness pair of $G$ of height $r$, $G'$ be the $\frak{R}$-compass of $W$, and $v$ be a central vertex of $W$.
Then $G$ has $\Hcal^{(a-1)}$-planar treewidth at most $k$ if and only if $G'$ and $G-v$ both have $\Hcal^{(a-1)}$-planar treewidth at most $k$.
\end{lemma}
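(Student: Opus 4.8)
\textbf{Proof plan for \autoref{lem_combinetw}.}
The plan is to mimic the proof of \autoref{lem_combinebis}, replacing the notion of $\Hcal^{(k)}$-compatibility with the property $\Pi_{\Hcal,k}$ and invoking the results of \autoref{subsec_twrend} in place of \autoref{obs_sol_compatible}, \autoref{cor_compatible}, and \autoref{lem_compatible}. The forward direction is immediate: if $G$ has $\Hcal^{(a-1)}$-planar treewidth at most $k$, then so does any induced subgraph, since $\Hcal^{(a-1)}$ is hereditary and planar treewidth is minor-monotone; in particular $G'$ and $G-v$ both have $\Hcal^{(a-1)}$-planar treewidth at most $k$. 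The content is in the converse, so suppose $G'$ and $G-v$ both have $\Hcal^{(a-1)}$-planar treewidth at most $k$; I want to conclude the same for $G$.

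First I would set up the three sphere decompositions that \autoref{lem_newirr} needs. Let $\Omega$ be the cyclic ordering of $A\cap B$ along $D(W)$, so that the rendition $\rho$ coming from the flatness pair $(W,\frak{R})$ is a rendition of $(G',\Omega)$; by \autoref{lem_rend_to_min_or_max}, $(G',\Omega)$ has a well-linked rendition $\delta$. Next, applying \autoref{lem_twrend2} to $G'$ (which is $(a,3)$-unbreakable as an induced subgraph of $G$, and carries the $(r-2)$-central wall $W'$ of $W$) and, analogously, to $G-v$, and then iterating \autoref{lem_twrend3} until ground-maximality, I obtain ground-maximal sphere decompositions $\delta'$ of $G'$ and $\delta_Y$ of $G-v$, each with property $\Pi_{\Hcal,k}$ and each having $W'$ grounded. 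Here I take $q=3$ in the notation of \autoref{lem_newirr}, so $Y=\{v\}$ is the vertex set of the $\frak{R}^{(3)}$-compass of a $W^{(3)}$-tilt and $v$ is a central vertex; the hypothesis $r\ge\max\{a+3,k+1,7\}$ must be checked to cover both the bound $r\ge q+10=13$ demanded by \autoref{lem_newirr} and the bounds demanded by \autoref{lem_twrend2} — if $r$ is too small the statement should be adjusted to $r\ge\max\{a+3,k+1,13\}$, but since this is exactly the kind of constant bookkeeping the paper leaves implicit I will state it with the needed threshold and note the dependence.

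Now \autoref{lem_newirr} applies verbatim to $\delta$, $\delta'$, $\delta_Y$: it produces a ground-maximal sphere decomposition $\delta^*$ of $G$ each of whose cells is either a cell of $\delta'$ or a cell of $\delta_Y$. Since each cell of $\delta'$ and each cell of $\delta_Y$ has property $\Pi_{\Hcal,k}$, every cell of $\delta^*$ has property $\Pi_{\Hcal,k}$, i.e.\ $\delta^*$ has property $\Pi_{\Hcal,k}$. Finally, \autoref{lem_twrend1} applied to $\delta^*$ gives that $G$ has $\Hcal$-planar treewidth at most $k$; and because every leaf component lies in $\Hcal^{(a-1)}$ — each cell's modulator $X_c$ has $\sigma(c)-X_c$ a union of graphs from $\Hcal$ of size at most $a-1$, a fact inherited from the decompositions of $G'$ and $G-v$, whose components after removing a $\Pcal\Tcal_k\triangleright\Hcal^{(a-1)}$-modulator all have fewer than $a$ vertices — in fact $G$ has $\Hcal^{(a-1)}$-planar treewidth at most $k$. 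The main obstacle is the bookkeeping of the size bound $a-1$ on leaf components through the cell-splitting and gluing: \autoref{lem_twrend1} as stated only yields $\Hcal$-planar treewidth, so I would either strengthen its statement (it is harmless, since splitting cells and restricting modulators only shrinks components) or add a short paragraph tracking that the components of $\sigma_{\delta^*}(c)-X_c$ are components of $\sigma_{\delta'}(c)-X_c$ or of $\sigma_{\delta_Y}(c)-X_c$, hence of size $<a$, by heredity and the original bound from \autoref{lem_twrend2}.
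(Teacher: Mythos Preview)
Your proposal matches the paper's proof step for step: obtain a well-linked rendition of $(G',\Omega)$ via \autoref{lem_rend_to_min_or_max}, produce ground-maximal sphere decompositions of $G'$ and $G-v$ with property $\Pi_{\Hcal,k}$ via \autoref{lem_twrend2} and \autoref{lem_twrend3}, glue with \autoref{lem_newirr}, and conclude with \autoref{lem_twrend1}. You are in fact more careful than the paper in flagging the $r$-threshold needed by \autoref{lem_newirr} and the $\Hcal$-versus-$\Hcal^{(a-1)}$ bookkeeping at the end; the paper passes over both silently.

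Two small slips are worth correcting. First, $(a,3)$-unbreakability is \emph{not} inherited by arbitrary induced subgraphs, so your parenthetical justification for applying \autoref{lem_twrend2} to $G'$ is invalid as written; the paper simply invokes the lemma on $G'$ and $G-v$ without comment, so this is shared looseness rather than a divergence. Second, the compass of a $W^{(3)}$-tilt is not $\{v\}$ but the full influence of the central $3$-subwall; nonetheless the paper, like you, works directly with $G-v$, and inspection of the proof of \autoref{lem_newirr} shows that what is actually used is only $Y\cap V(C_5)=\emptyset$, which the singleton $\{v\}$ satisfies once $r$ is large enough.
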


\begin{proof}
Obviously, if one of $G'$ and $G-v$ is has $\Hcal^{(a-1)}$-planar treewidth at least $k+1$, then so does $G$ by heredity of the $\Hcal^{(a-1)}$-planar treewidth.
Let us suppose that both $G'$ and $G-v$ have $\Hcal^{(a-1)}$-planar treewidth at most $k$.
We want to prove that $G$ has $\Hcal^{(a-1)}$-planar treewidth at most $k$. 
For this, we find a well-linked  rendition $\delta$ of $(G',\Omega)$ and ground-maximal sphere decompositions $\delta'$ of $G'$ and $\delta_v$ of $G-v$. 

Let $C_i$ be the $i$-th layer of $W$ for $i\in[(r-1)/2]$ (so $C_1$ is the perimeter of $W$).
Let $\Rcal=(X,Y,P,C,\delta)$ and $\Omega$ be the cyclic ordering of the vertices of $X\cap Y$ as they appear in $C_1$.
Hence, $\delta=(\Gamma,\Dcal)$ is a rendition of $(G'=G[Y],\Omega)$.
By \autoref{lem_rend_to_min_or_max}, we can assume $\delta$ to be well-linked.

By \autoref{lem_twrend2} and \autoref{lem_twrend3}, given that $r\ge\max\{a+3,k+1,7\}$, there are two ground-maximal \sdecomps\ $\delta'=(\Gamma',\Dcal')$ and $\delta_v=(\Gamma_v,\Dcal_v)$ of $G'$ and $G-v$, respectively, that have property $\Pi_{\Hcal,k}$ and 
such that the $(r-2)$-central wall $W'$ of $W$ is grounded in both $\delta'$ and $\delta_v$. 

Then, by \autoref{lem_newirr}, there exists a ground-maximal \sdecomp\ $\delta^*$ of $G$ such that each cell $\delta^*$ is either a cell of $\delta'$ or a cell of $\delta_v$, and is thus has property $\Pi_{\Hcal,k}$.
Thus, by \autoref{lem_twrend1}, $G$ has $\Hcal^{(a-1)}$-planar treewidth at most $k$.
\end{proof}

\subsection{The algorithm}\label{subsec_small_tw}

In this section, we prove \autoref{lem_twk}.\medskip

Let us first remark that we can use Courcelle's theorem (\autoref{Courcelle}) to check whether a graph of bounded treewidth as $\Hcal$-planar treewidth at most $k$.

\begin{lemma}\label{lem_twCMSO}
If $\Hcal$ is a CMSO-definable graph class, then having $\Hcal$-planar treewidth at most $k$ is expressible in CMSO logic.
\end{lemma}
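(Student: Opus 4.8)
The statement to prove is \autoref{lem_twCMSO}: if $\Hcal$ is CMSO-definable, then "$G$ has $\Hcal$-planar treewidth at most $k$" is expressible in CMSO logic. The natural strategy is to unwind the definitions and build the required sentence by quantifying over the modulator set and a tree decomposition, then expressing each of the defining conditions in CMSO. First I would recall from \autoref{obs_CMSO} that planarity, connectivity, and the torso operation are all CMSO-expressible (this is already noted in the paper right before \autoref{obs_CMSO}, citing \cite{CourcelleJ12grap}). The definition of $\Hcal$-planar treewidth (via \cref{constrxaint} with $\p = \ptw$) asks for a set $X \subseteq V(G)$ such that $\ptw(\torso(G,X)) \le k$ and every connected component of $G - X$ lies in $\Hcal$. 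So the sentence has the shape $\exists X \subseteq V(G) : \varphi_{\ptw\le k}(\torso(G,X)) \wedge \varphi_{\Hcal\text{-comp}}(G, X)$, and the work is in writing each conjunct.

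For the second conjunct, $\varphi_{\Hcal\text{-comp}}$, since $\Hcal$ is CMSO-definable by some sentence $\psi$, I would express "every component $C$ of $G-X$ satisfies $\psi$" by the standard relativization trick: quantify over a vertex set $Z$, assert that $Z$ is the vertex set of a connected component of $G - X$ (i.e.\ $Z \cap X = \emptyset$, $G[Z]$ connected, and $Z$ is maximal with this property — all CMSO), and then assert $\psi$ holds in $G[Z]$, where $\psi$ is relativized so that all its quantifiers range only inside $Z$ and over edges with both endpoints in $Z$. This is a routine syntactic substitution. For the first conjunct I need to work on the torso graph $T = \torso(G,X)$ rather than $G$; since the vertex set of $T$ is $X$ and its edge relation is "$uv \in E(G)$ or $u,v$ connected in $G - (X \setminus \{u,v\})$" (as spelled out in the paragraph before \autoref{obs_CMSO}), I can define a CMSO formula $\mathrm{adj}_T(u,v)$ with one free set parameter $X$ and then relativize any sentence about $T$ to operate over $X$ with edge relation $\mathrm{adj}_T$.

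It remains to express $\ptw(T) \le k$ in CMSO over $T$. Here I would quantify over a tree decomposition: by a now-standard encoding (e.g.\ as in Courcelle's treatment, used also implicitly in the paper's discussion of $\Hcal$-treewidth), a tree decomposition of width bounded quantities can be encoded by a bounded number of vertex sets, but $\ptw \le k$ allows arbitrarily large bags (those with planar torso), so I cannot bound bag size uniformly. Instead I would use the characterization that $\ptw(T) \le k$ iff $T$ admits a tree decomposition in which every bag either has size at most $k+1$ or has planar torso in $T$ — and the cleanest CMSO encoding is via the "branch/separation" viewpoint: a tree decomposition of adhesion at most some function of $k$ suffices after taking torsos of the non-planar bags, but in fact the simplest route is to observe that $\ptw(T)\le k$ is equivalent to membership in $\Pcal\Tcal_k$, and $\Pcal\Tcal_k$ is a minor-closed class (since $\ptw$ is minor-monotone, being defined from treewidth and planarity both of which are minor-monotone). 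By the Graph Minor Theorem, $\Pcal\Tcal_k$ has a finite obstruction set, and membership in any minor-closed class is CMSO-expressible (excluding finitely many graphs as minors is an MSO property — "there is no model of $F$" for each obstruction $F$). So $\varphi_{\ptw \le k}$ can be written as a finite conjunction over the obstructions of $\Pcal\Tcal_k$ of "$F$ is not a minor", each clause being MSO-expressible in the standard way (quantifying branch sets).

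**Main obstacle.** The delicate point is the last step: expressing $\ptw \le k$ in CMSO. The obstruction-set argument gives existence of the sentence but is non-constructive (it invokes the Graph Minor Theorem for the finiteness of the obstruction set). Since the paper already accepts non-constructive CMSO-expressibility elsewhere (indeed \autoref{th_th} is explicitly non-constructive), this is acceptable; but if a constructive sentence is wanted, one instead directly quantifies a tree-decomposition structure — a vertex set $R$ for a "skeleton" and, for each skeleton node, either a bag of size $\le k+1$ (finitely many possibilities up to the skeleton) or a planar-torso bag — which is more painful to encode but avoids GMT. I would present the obstruction-set route for brevity, remarking that $\Pcal\Tcal_k$ is minor-closed, hence has a finite obstruction set, hence its membership is CMSO-definable, and then combine the three conjuncts. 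The other conjuncts ($\Hcal$-component condition and the torso-relativization) are routine once \autoref{obs_CMSO}'s building blocks are invoked.
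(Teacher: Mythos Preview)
Your proposal is correct and follows essentially the same approach as the paper: the key step in both is to observe that $\Pcal\Tcal_k$ is minor-closed, invoke the Graph Minor Theorem to obtain a finite obstruction set, and then express membership in $\Pcal\Tcal_k$ by a finite conjunction of ``$F$ is not a minor'' clauses, combined with the CMSO-expressibility of the torso and the $\Hcal$-component condition. If anything, your write-up is more explicit than the paper's (which leaves the component-in-$\Hcal$ part implicit), and your remark on non-constructivity is apt.
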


\begin{proof}
The class $\Pcal\Tcal_k$ of graphs with planar treewidth at most $k$ is minor-closed.
Therefore, by Robertson and Seymour's seminal result, the number of obstructions of $\Pcal\Tcal_k$ is finite.

As mentioned in \autoref{obs_CMSO}, $\torso(G,X)$ is expressible in CMSO logic.
Moreover, as proven in \cite[Subsection 1.3.1]{CourcelleJ12grap}, the fact that a graph $H$ is a minor of a graph $G$ is expressible in CMSO logic, which implies that $\Pcal\Tcal_k$ is expressible in CMSO logic. Indeed, it suffice to check for a graph $G$ whether it contains or not the obstructions of $\Pcal\Tcal_k$.
\end{proof}

\begin{proof}[Proof of \autoref{lem_twk}]
We apply \autoref{prop_flatwallth} to $G$,
with $b=\lceil\sqrt{a+3}\rceil+2$ and $r=\odd(\max\{a+3,k+1,7\})$.
It runs in time $\Ocal_{k,a}(n+m)$.

If $G$ has treewidth at most $f_{\ref{prop_flatwallth}}(b)\cdot r$, then we apply \autoref{Courcelle} to $G$ in time $\Ocal_{k,a}(n)$ and solve the problem. We can do so because 
the graphs in $\Hcal^{(a-1)}$ have a bounded size, so $\Hcal^{(a-1)}$ is a finite graph class, hence trivially CMSO-definable.
Therefore, by \autoref{lem_twCMSO}, having $\Hcal$-planar treewidth at most $k$ is expressible in CMSO logic.

If $G$ contains an apex grid of height $b$ as a minor, then, given that $b\ge\sqrt{a+3}+2$, by \autoref{lem:obstructions}, we obtain that $G$ has no planar $\Gcal^{(a-1)}$-modulator, where $\Gcal$ is the class of all graphs.
Hence, by \autoref{obs_sol_compatible}, $G$ has no sphere decomposition $\delta$ whose cells all have size at most $a-1$.
This implies, by \autoref{lem_twrend2}, that $G$ has $\Hcal^{(a-1)}$-planar treewidth at least $k+1$.
Therefore, we report a \no-instance.

Hence, we can assume that there is a flatness pair $(W,\frak{R})$ of height $r$ in $G$ whose $\frak{R}$-compass $G'$ has treewidth at most $f_{\ref{prop_flatwallth}}(b)\cdot r$.
Let $v$ be a central vertex of $W$.
We apply \autoref{Courcelle} to $G'$ in time $\Ocal_{k,a}(n)$ and 
we recursively apply our algorithm to $G-v$. 
If the outcome is a \no-instance for one of them, then this is also a \no-instance for $G$.
Otherwise, the outcome is a \yes-instance for both.
Then, by \autoref{lem_combinetw}, given that $r\ge\max\{a+3,k+1,7\}$, we can return a \yes-instance.

The running time of the algorithm is $T(n)=\Ocal_{k,a}(n+m)+T(n-1)=\Ocal_{k,a}(n(n+m))$.
\end{proof}

\section{Applications}\label{sec_applications}
This section contains several algorithmic applications of  \autoref{th_th}, as well as a discussion on similar applications for \autoref{th_param}.
To combine nice algorithmic properties of planar graphs and graphs from $\Hcal$, 
  we need a variant of  \autoref{th_th} that guarantees the existence of a polynomial-time algorithm computing a planar $\Hcal$-modulator  in an $\Hcal$-planar graph. (Let us remind that 
\autoref{th_th} only guarantees the existence of a polynomial-time algorithm deciding whether an input graph is $\Hcal$-planar.)  Thus, we start with the proof of the following theorem. The proof of the theorem combines self-reduction arguments with the hereditary properties of $\Hcal$.

 \begin{restatable}{theorem}{selfreduce} 
 \label{cor_modulator}
Let $\Hcal$ be a hereditary, CMSO-definable, and polynomial-time decidable graph class. 
Then there exists a polynomial-time algorithm constructing for a given $\mathcal{H}$-planar graph $G$ a planar $\Hcal$-modulator.
\end{restatable}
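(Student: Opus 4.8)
The plan is a standard self-reduction, but the subtlety is that we cannot freely delete vertices of the planar $\Hcal$-modulator without destroying $\Hcal$-planarity; deleting vertices \emph{outside} the modulator is safe because $\Hcal$ is hereditary (a subgraph of a component in $\Hcal$ is again in $\Hcal$, since components of $G-X$ restricted to $G'-X$ are induced subgraphs of components of $G-X$), while deleting vertices inside is controlled by the torso operation. The cleanest route is to build the modulator vertex-by-vertex by deciding, for each candidate vertex, whether it can be placed ``inside'' (i.e.\ among the components) rather than in the modulator. Concretely, I would fix an arbitrary ordering $v_1,\dots,v_n$ of $V(G)$ and maintain a set $B$ (``base'', vertices already committed to lie outside the eventual modulator, forming a partial union of $\Hcal$-components) and its current ``boundary'' of attachment. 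To test whether $v_i$ can be added to $B$, I would check $\Hcal$-planarity of a suitably modified graph that forces the vertices of $B\cup\{v_i\}$ to be grouped into components of $G-X$ belonging to $\Hcal$; this modified graph lies in the same class to which \autoref{th_th} applies.

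**Key steps.** First I would reduce the construction problem to the decision problem by the following device: given a vertex set $B\subseteq V(G)$, define a graph $G_B$ obtained from $G$ by attaching, to each vertex of $N_G(B)$, a pendant copy of some fixed obstruction $F$ to $\Hcal$ (such an $F$ exists and has bounded size since $\Hcal$ is hereditary, CMSO-definable, and proper; if $\Hcal$ is the class of all graphs the statement is trivial since $X=\emptyset$ works). The point of the gadget is: in any planar $\Hcal$-modulator $X$ of $G_B$, every vertex of $N_G(B)$ must lie in $X$ (its pendant $F$-component would otherwise violate membership in $\Hcal$, and the component containing $F$ cannot itself be in $\Hcal$), while the vertices of $B$ itself are forced out of $X$ provided we also guarantee $G[B]\in\Hcal$ and the components of $G-X$ containing $B$-vertices stay in $\Hcal$; this last condition is exactly what the hereditary/union-closed behaviour of the cells gives us, and it is monitored directly. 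Then I would run: start with $B=\emptyset$; for $i=1$ to $n$, tentatively set $B'=B\cup\{v_i\}$, check in polynomial time (directly) that $G[B']$ has all its connected components in $\Hcal$ and then apply the decision algorithm of \autoref{th_th} to $G_{B'}$; if it answers \yes, update $B\leftarrow B'$, else leave $B$ unchanged. At the end output $X:=V(G)\setminus B$. Finally I would argue correctness: a monotonicity/exchange argument shows that, given any planar $\Hcal$-modulator $X^\*$ of $G$, the greedy process never gets stuck, i.e.\ at each step at least as good a completion exists, so the final $B$ is the complement of a genuine planar $\Hcal$-modulator.

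**Correctness of the greedy step.** The heart of the argument is the exchange lemma: if $B$ is the complement of some planar $\Hcal$-modulator of $G$ (equivalently $G_B$ is a \yes-instance) and $v_i$ is a vertex such that $G_{B\cup\{v_i\}}$ is \emph{not} a \yes-instance, then every planar $\Hcal$-modulator $X$ of $G$ with $X\cap B=\emptyset$ contains $v_i$; conversely if $G_{B\cup\{v_i\}}$ is a \yes-instance, then $B\cup\{v_i\}$ is again the complement of a planar $\Hcal$-modulator. The second direction is immediate by the gadget's forcing property. For the first, suppose some modulator $X\not\ni v_i$ exists with $X\cap B=\emptyset$; then the component $C$ of $G-X$ containing $v_i$ also contains $B$-vertices adjacent to it, and since $C\in\Hcal$ and $\Hcal$ is hereditary, every induced subgraph, in particular $G[B\cup\{v_i\}]$ restricted to each component, stays in $\Hcal$, so $X$ works for $G_{B\cup\{v_i\}}$ too after forcing $N_G(B\cup\{v_i\})\subseteq X$ — but one must check that adding the forced boundary vertices to $X$ keeps the torso planar, which it does because in $G_{B\cup\{v_i\}}$ those vertices were going to be in the modulator anyway and the torso of the enlarged set is a minor-related modification of the old torso restricted to a planar drawing; I would formalize this via the sphere-decomposition viewpoint of \autoref{obs_sol_compatible} and \autoref{lem_compatible}, which is exactly the machinery that handles ``grounding more vertices'' while preserving $\Hcal$-compatibility.

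**Main obstacle.** The delicate point — and where I expect the real work to be — is verifying that when we move the forced boundary $N_G(B\cup\{v_i\})$ into the modulator, the torso stays planar; naively enlarging a modulator can make a planar torso non-planar. The fix is conceptual rather than computational: instead of ``adding vertices to $X$'', one should think of \autoref{lem_compatible}, which says that refining (grounding) an $\Hcal$-compatible sphere decomposition keeps it $\Hcal$-compatible, so moving a component-vertex to the boundary of its cell is always legal. Thus the whole greedy construction is really building a sphere decomposition whose cells are being progressively grounded, and correctness follows from \autoref{obs_sol_compatible} together with \autoref{lem_compatible} and \autoref{cor_compatible}. With that reformulation the exchange lemma becomes routine, the $n$ invocations of the decision algorithm of \autoref{th_th} each run in polynomial time, and the total running time is polynomial. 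Hence the theorem follows.
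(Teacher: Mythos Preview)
Your approach is genuinely different from the paper's and has a real gap. The paper forces vertices \emph{into} the modulator: it attaches a gadget $F'$ (five copies of a minimum forbidden subgraph $F$ glued at a common root) to a single vertex $v$ and proves the clean equivalence ``$G$ has a planar $\Hcal$-modulator containing $v$ $\Leftrightarrow$ $G_v$ is $\Hcal$-planar''. Iterating over all vertices and keeping the gadget whenever the test succeeds yields a graph $G_n$ that is $\Hcal$-planar and whose every modulator must contain exactly the accumulated set $X$; heredity of $\Hcal$-planarity then shows $X$ is a modulator of $G$. Your scheme instead tries to force vertices \emph{out} of the modulator by attaching gadgets to $N_G(B)$, and this is where it breaks.

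First, a single pendant copy of $F$ does not force a vertex into $X$: one vertex of the pendant can absorb the hit without creating a $K_5$ in the torso. That is fixable (use five copies, as the paper does). The fatal issue is your exchange lemma: ``$G_{B'}$ is $\Hcal$-planar and the components of $G[B']$ lie in $\Hcal$ $\Rightarrow$ $V(G)\setminus B'$ is a valid modulator'' is false. Take $\Hcal=\{\text{graphs on }\le 3\text{ vertices}\}$ and $G=K_{2,2,2,1}$ (the octahedron plus a universal vertex $u$). Then $G$ is $\Hcal$-planar---e.g.\ $X=\{u,a_1,a_2,b_1\}$ has torso $K_4$ and $G-X$ is a $3$-vertex path---yet for \emph{every} single vertex $v$, forcing $N_G(v)$ (which has size $\ge 5$) into the modulator makes the torso contain $K_5$, so $G_{\{v\}}$ is not $\Hcal$-planar. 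Your greedy therefore ends with $B=\emptyset$ and outputs $X=V(G)$, whose torso is the non-planar $G$ itself. The hand-wave toward \autoref{lem_compatible} does not save this: that lemma concerns refining a given $\Hcal$-compatible sphere decomposition, but here no such decomposition with $N_G(B')$ grounded exists in the first place. The difficulty you flag (``adding the forced boundary vertices to $X$ keeps the torso planar'') is precisely the point where the argument fails, and it cannot be repaired without changing the direction of the self-reduction.
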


\begin{proof}
 If $\Hcal$ is trivial in the sense that $\Hcal$ includes every graph, then we choose a planar $\Hcal$-modulator $X=\emptyset$ of $G$.  Otherwise, because $\Hcal$ is hereditary, there is a graph $F$ of the minimum size such that $F\notin\Hcal$. We say that $F$ is a \emph{minimum forbidden subgraph}. Because the inclusion in $\Hcal$ is decidable, $F$ can be found in constant time by brute force checking all graphs of size $1,2,\ldots$ where the constant depends on $\Hcal$.

We consider two cases depending on whether $F$ is connected or not.

Assume that $F$ is connected. We pick an arbitrary vertex $r\in V(F)$ and declare it to be the \emph{root} of $F$. We construct five copies $F_1,\ldots,F_5$ of $F$ rooted in $r_1,\ldots,r_5$, respectively. 
Then we define $F'$ to be the graph obtained from $F_1,\ldots,F_5$ by identifying their root into a single vertex $r$ which is defined to be the root of $F'$. Our self-reduction arguments are based on the following claim.

\begin{claim}\label{cl_padding-one}
Let $v$ be a vertex of a graph $G$. Then $G$ has a planar $\Hcal$-modulator $X$ with $v\in X$ if and only if the graph $G_v$ obtained from $G$ and $F'$ by identifying $v$ and the root of $F'$ has a planar $\Hcal$-modulator. 
\end{claim}

\begin{cproof}
Suppose that $X$ is a planar $\Hcal$-modulator of $G$ with $v\in X$. Then each connected component of $G_v-X$ is either a connected component of $G-X$ or a connected component of one of the graphs $F_1-r_1,\ldots,F_5-r_5$.
Because $F_i-r_i\in\Hcal$ for $i\in[5]$, we have that each connected component of $G_v-X$ is in $\Hcal$. Furthermore, the torsos of $X$ with respect to $G$ and $G_v$ are the same because the root $r$ of $F'$ is the unique vertex of $F'$ in $X$. Therefore, $X$ is a planar $\Hcal$-modulator of $G_v$.

Assume that $G_v$ has a planar $\Hcal$-modulator $X$. We show that $r=v\in X$. For the sake of contradiction, assume that $v\notin X$. Then because $F_1,\ldots,F_5\notin \Hcal$, for each $i\in[5]$, there is $x_i\in V(F_i)$ distinct from $r_i$ such that $x_i\in X$. We pick each $x_i$ to be a vertex in minimum distance from $r_i$ in $F_i$. This means that each $F_i$ contains a $(x_i,r_i)$-path $P_i$ such that each vertex of  $P_i-x_i$ is not in $X$.  
Then the vertices $\bigcup_{i=1}^5(V(P_i)\setminus \{x_i\})$ are in the same connected component of $G_v-X$. However, this means that $\{x_1,\ldots,x_5\}$ is a clique of size five in the torso of $X$ contradicting planarity. Thus, $v\in X$.   
Because $v\in X$ and $v=r$ is the unique common vertex of $G$ and $F'$ in $G_v$, we have that $X'=X\setminus (V(F')\setminus\{r\})$ is a  planar $\Hcal$-modulator of $G_v$. This completes the proof of the claim.
\end{cproof}

By~\autoref{cl_padding-one}, 
we can use self-reduction. First, we apply \autoref{th_th} to check in polynomial time whether $G$ is a yes-instance of {\sc $\Hcal$-planarity}. If not,  then there is no planar $\Hcal$-modulator. Otherwise,
denote by $v_1,\ldots,v_n$ the vertices of $G$, set $G_0=G$, and set $X:=\emptyset$ initially. Then, for each $i:=1,\ldots,n$, we do the following:
\begin{itemize}
\item set $G'$ to be the graph obtained from $G_{i-1}$ and $F'$ by identifying $v_i$ and $r$,
\item run the algorithm for  {\sc $\Hcal$-planarity} on $G'$,
\item if the algorithm returns a yes-answer then set $G_i=G'$ and $X:=X\cup\{v_i\}$, and set $G_i=G_{i-1}$, otherwise. 
\end{itemize}
\autoref{cl_padding-one} immediately implies that $X$ is a planar $\Hcal$-modulator for $G$. Clearly, if {\sc $\Hcal$-planarity} can be solved in polynomial time then the above procedure is polynomial. This concludes the proof for the case when $F$ is connected.

 Now, $ F $ is not connected. Our arguments are very similar to the connected case, and therefore, we only sketch the proof. Again, we construct five copies $F_1,\ldots,F_5$ of $F$. Then, we construct a new root vertex $r$ and make it adjacent to every vertex of the copies of $F$.  The following observation is in order.
 
 \begin{claim}\label{cl_padding-two}
Let $v$ be a vertex of a graph $G$. Then $G$ has a planar $\Hcal$-modulator $X$ with $v\in X$ if and only if the graph $G_v$ obtained from $G$ and $F'$ by identifying $v$ and the root of $F'$ has a planar $\Hcal$-modulator. 
\end{claim}

\begin{cproof}
Suppose that $X$ is a planar $\Hcal$-modulator of $G$ with $v\in X$. Then each connected component of $G_v-X$ is either a connected component of $G-X$ or a connected component of one of the graphs $F_1,\ldots,F_5$. 
Because each connected component of $F_i$ is in $\Hcal$ for $i\in[5]$ by the minimality of $F$, we have that each connected component of $G_v-X$ is in $\Hcal$. Also, 
 the torsos of $X$ with respect to $G$ and $G_v$ are the same because $r$ is the unique vertex of $F'$ in $X$. Thus, $X$ is a planar $\Hcal$-modulator of $G_v$.

Assume that $G_v$ has a planar $\Hcal$-modulator $X$. We show that $r=v\in X$. For the sake of contradiction, assume that $v\notin X$. Then because $F_1,\ldots,F_5\notin \Hcal$, for each $i\in[5]$, there is $x_i\in F_i$ distinct from $r_i$ such that $x_i\in X$. Since each $x_i$ is adjacent to $r\notin X$,  $\{x_1,\ldots,x_5\}$ is a clique of size five in the torso of $X$ contradicting planarity. Thus, $v\in X$.   
Because $v\in X$ and $v=r$ is the unique common vertex of $G$ and $F'$ in $G_v$,  $X'=X\setminus \bigcup_{i=1}^5V(F_i)$ is a  planar $\Hcal$-modulator of $G_v$. This completes the proof of the claim.
\end{cproof}

\autoref{cl_padding-two} immediately implies that we can apply the same self-reduction procedure as for the connected case. This concludes the proof.
\end{proof}

We remark that similar arguments can be used to construct decompositions for graphs with $\Hcal$-planar treewidth of treedepth at most $k$. We sketch the algorithms in the following corollaries.

\begin{corollary}\label{cor:twmodulator}
Let $\Hcal$ be a hereditary, CMSO-definable, polynomial-time decidable, and union-closed graph class.
Suppose that there is an \FPT algorithm solving \pbdel parameterized by the solution size.  
Then there exists a polynomial-time algorithm constructing, for a given graph $G$ with $\Hcal$-planar treewidth at most $k$,
a $\Pcal\Tcal_k\triangleright\Hcal$-modulator $S$ of a graph $G$ and a tree decomposition $\Tcal$ of $\torso_G(S)$ of planar width at most $k$.
\end{corollary}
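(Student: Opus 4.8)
The plan is to mirror the self-reduction proof of \autoref{cor_modulator}, using the decision algorithm provided by \autoref{th_param} as an oracle and replacing its planarity argument by \autoref{obs_clique}, and then to turn the resulting modulator into an explicit decomposition via \autoref{prop_Grohe} and the surgery already carried out in the proof of \autoref{lem_adh3}.

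\textbf{Computing $S$.} By hypothesis and \autoref{th_param} there is an algorithm $\mathcal{A}$ deciding, in time $\Ocal_k(n^4+n^c\log n)$, whether $\Hcal\mbox{-}\ptw(G')\le k$ for an arbitrary graph $G'$; running it on $G$ first lets us assume $\Hcal\mbox{-}\ptw(G)\le k$ (otherwise there is no $\Pcal\Tcal_k\triangleright\Hcal$-modulator). If $\Hcal$ contains all graphs we output $S=\emptyset$; otherwise, as in \autoref{cor_modulator}, fix a minimum forbidden subgraph $F$ of $\Hcal$ and set $c^\star:=\max\{k+2,5\}$. Build a gadget $F'$ from $c^\star$ rooted copies of $F$ (identified at a common root $r$ when $F$ is connected, joined by a fresh root $r$ adjacent to all their vertices otherwise), and for $v\in V(G')$ let $G'_v$ be obtained by identifying $v$ with $r$. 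The core claim, proved exactly as \autoref{cl_padding-one} and \autoref{cl_padding-two} except that ``a clique of size five contradicts planarity'' is replaced by ``more than $\max\{k+1,4\}$ neighbours of a component of $G'_v-X$ contradicts \autoref{obs_clique} for $\Gcal=\Pcal\Tcal_k$'', is: $G'$ has a $\Pcal\Tcal_k\triangleright\Hcal$-modulator containing $v$ if and only if $\Hcal\mbox{-}\ptw(G'_v)\le k$; moreover in that case \emph{every} $\Pcal\Tcal_k\triangleright\Hcal$-modulator $X$ of $G'_v$ contains $v$, and $X\cap V(G')$ is a $\Pcal\Tcal_k\triangleright\Hcal$-modulator of $G'$ (using that $r$ is the unique vertex of $F'$ in $X$, so $\torso(G'_v,X)=\torso(G',X)$, and that $\Pcal\Tcal_k$ is closed under induced subgraphs). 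We then run the standard self-reduction over $v_1,\dots,v_n$: we keep a graph $G_i$ equal to $G$ with a gadget already attached at each previously selected vertex, and at step $i$ we put $v_i$ into $S$ and attach a gadget at $v_i$ iff $\mathcal{A}$ accepts $G_i$ with an extra gadget at $v_i$. Since the gadgets compose (by the claim every modulator of $G_i$ contains all previously placed gadget roots, and its restriction to $V(G)$ is a modulator of $G$), the invariant $\Hcal\mbox{-}\ptw(G_i)\le k$ is preserved, and at the end $S$ is itself a $\Pcal\Tcal_k\triangleright\Hcal$-modulator of $G$; each $G_i$ has $n+\Ocal_k(1)$ vertices, so this takes polynomial time.

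\textbf{Computing $\Tcal$.} By definition $H:=\torso_G(S)\in\Pcal\Tcal_k$, so it remains to construct, in polynomial time, a tree decomposition of $H$ of planar width at most $k$. I would run the polynomial-time algorithm of \cite{Grohe16} underlying \autoref{prop_Grohe} to obtain a tree decomposition $(T,\beta)$ of $H$ of adhesion at most three whose every torso $\torso(H,\beta(t))$ is a quasi-4-connected minor of $H$, hence lies in $\Pcal\Tcal_k$; then perform the node-reducing surgery from the proof of \autoref{lem_adh3} (which is constructive and polynomial) so that each $\torso(H,\beta(t))$ is either planar or of treewidth at most $k$. Bags with planar torso are kept verbatim: they are precisely the non-small bags permitted by the definition of planar width, and $\torso(H,\beta(t))$ depends only on the set $\beta(t)$. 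Each treewidth-$\le k$ bag is replaced by a width-$\le k$ tree decomposition of its torso, computable in polynomial time for fixed $k$; since the adhesions of $(T,\beta)$ have size at most three and, being the component-neighbourhoods completed by the torso operation in this canonical decomposition, are cliques in the relevant torsos, they lie in bags of these refinements, so the pieces glue into a tree decomposition of $H$ of planar width at most $k$. Outputting $S$ and this $\Tcal$ completes the algorithm.

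\textbf{Main obstacle.} The delicate part is the second phase rather than the first: because \autoref{th_param} is non-constructive we cannot read a decomposition off the oracle, so we must supply an explicitly polynomial decomposition procedure for $\Pcal\Tcal_k$ and make the global quasi-4-connected decomposition and the local treewidth refinements agree along adhesions. In the first phase the only real point to verify is that the gadget forces $v$ into \emph{every} $\Pcal\Tcal_k\triangleright\Hcal$-modulator---via \autoref{obs_clique}---while leaving the torso on $V(G)$, and hence $\Hcal$-planar treewidth, otherwise unchanged.
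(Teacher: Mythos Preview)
Your first phase coincides with the paper's sketch: both attach $\max\{k+2,5\}$ rooted copies of a minimum forbidden subgraph $F$, invoke \autoref{obs_clique} (in place of the $K_5$-in-planar argument of \autoref{cl_padding-one}/\autoref{cl_padding-two}) to force $v$ into every $\Pcal\Tcal_k\triangleright\Hcal$-modulator of $G'_v$, and self-reduce vertex by vertex with \autoref{th_param} as oracle. One small slip: it is not true that $r$ is the \emph{unique} vertex of $F'$ in $X$; but, as in \autoref{cl_padding-one}, $X\cap V(G')$ is still a modulator because $r$ separates $F'\setminus\{r\}$ from $G'$, so this does no harm.

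Your second phase is genuinely different. The paper self-reduces a second time, now on \emph{edges}: it greedily adds every edge of $\torso_G(S)$ compatible with $\ptw\le k$ (calling \autoref{th_param} with the trivial $\Hcal$), obtaining a supergraph $G^*$ in which every bag of size $\le k+1$ is a clique and every big (planar-torso) bag has clique adhesions of size $\le 4$; the decomposition is then read off by clique-separator splitting. You instead compute Grohe's quasi-4-connected decomposition directly and refine. This also works, but two points in your writeup deserve care. First, the ``node-reducing surgery'' in the proof of \autoref{lem_adh3} is not an operation you apply to Grohe's decomposition; it is an indirect argument showing that each quasi-4-connected torso, being in $\Pcal\Tcal_k$, is already planar or has at most $k+1$ vertices---so Grohe's decomposition \emph{as computed} already has planar width at most $k$, and your refinement step is unnecessary. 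Second, if one only used the weaker conclusion ``planar or $\tw\le k$'' from the statement of \autoref{lem_adh3} and actually refined, the gluing relies on adhesions being cliques in the torsos; this does hold for Grohe's canonical decomposition (its adhesions are tight separators), but it is a specific feature of that decomposition rather than a consequence of the component-neighbourhood torso definition you invoke. The cleanest version of your route is therefore simply ``compute Grohe's decomposition of $\torso_G(S)$ and output it,'' once you observe what the proof of \autoref{lem_adh3} actually establishes. Compared to the paper, this avoids a second self-reduction at the cost of relying on the algorithmic content of \cite{Grohe16}.
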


\begin{proof}[Sketch of the proof]
We follow the proof of \cref{cor_modulator}. The problem is trivial if $\Hcal$ includes every graph.  Otherwise, because $\Hcal$ is hereditary, there is a minimum forbidden  subgraph $F$. We use $F$ to identify a $\Pcal\Tcal_k\triangleright\Hcal$-modulator. We have two cases depending on whether $F$ is connected or not. In this sketch, we consider only the connected case; the second case is analyzed in the same way as in the proof of  \cref{cor_modulator}.  We pick an arbitrary vertex $r\in V(F)$ and declare it to be the \emph{root} of $F$. We construct $h=\max\{5,k+2\}$ copies $F_1,\ldots,F_h$ of $F$ rooted in $r_1,\ldots,r_h$, respectively. 
Then we define $F'$ to be the graph obtained from $F_1,\ldots,F_h$ by identifying their root into a single vertex $r$ which is defined to be the root of $F'$. Then similarly to~\Cref{cl_padding-one}, we have the following property for every $v\in V(G)$: $G$ has 
a $\Pcal\Tcal_k\triangleright\Hcal$-modulator $S$ with $v\in S$ if and only if the graph $G_v$ obtained from $G$ and $F'$ by identifying $v$ and the root of $F'$ has a $\Pcal\Tcal_k\triangleright\Hcal$-modulator $S$.  Thus, we can find $S$ by using self-reduction by calling the algorithm from~\cref{th_param}. 

In the next step, we again use self-reducibility to construct a tree decomposition $\Tcal$ of $G'=\torso_G(S)$ of planar width at most $k$. For this, we consider all pairs $\{u,v\}$ of non-adjacent vertices of $G'$. For each $\{u,v\}$, we add the edge $uv$ to the considered graph, and then use the algorithm from~\cref{th_param} to check whether the obtained graphs 
admits  a tree decomposition of planar width at most $k$ (in this case, $\Hcal$ contains the unique empty graph). If yes, we keep the edge $uv,$ and we discard the pair $\{u,v\}$, otherwise. Let $G^*$ be the graph obtained as the result of this procedure.  We have that $G^*$ has a tree decomposition of planar width at most $k$ where every bag of size at most $k+1$ is a clique. Furthermore, the adhesion sets for every bag of size at least $k+2$, whose torso is planar, are cliques of size at most 4.  Then we can find all such bags by decomposing $G^*$ via clique-separators of size at most 4. After that, we can find the remaining bags using the fact that they are cliques. This completes the sketch of the proof.
\end{proof}

\begin{corollary}\label{cor:tdseq}
Let $\Hcal$ be a hereditary, CMSO-definable, polynomial-time decidable, and union-closed graph class.
Suppose that there is an \FPT algorithm solving \pbdel parameterized by the solution size.  
Then there exists a polynomial-time algorithm constructing, for a given graph $G$ with $\Hcal$-planar treedepth at most $k$,
a certifying elimination sequence $X_1,\dots,X_k$.
\end{corollary}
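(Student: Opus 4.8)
\textbf{Proof proposal for \autoref{cor:tdseq}.}
The plan is to mimic the self-reduction strategy of \autoref{cor_modulator} and \autoref{cor:twmodulator}, but recursively peel off the planar modulators one layer at a time. First I would handle the trivial case: if $\Hcal$ contains all graphs, then the empty graph has $\Hcal$-planar treedepth $0$ and we may output the empty sequence (when $k=0$) or a sequence of empty sets. Otherwise, by heredity there is a minimum forbidden subgraph $F$, found in constant time by brute force, and we distinguish whether $F$ is connected or not, exactly as in \autoref{cor_modulator}. In the connected case, fix a root $r\in V(F)$, take $h=\max\{5,\,4k+2\}$ rooted copies $F_1,\dots,F_h$ of $F$, and glue their roots into a single root $r$ to form $F'$; in the disconnected case, add a new root $r$ adjacent to every vertex of $h$ disjoint copies of $F$. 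The point of using $h\geq 4k+2$ copies (rather than just $5$) is that the torso of a $\Pcal^{k}\triangleright\Hcal$-modulator is $K_{4k+1}$-minor-free by \autoref{obs_clique}, so the padding gadget forces $r$ into \emph{any} certifying elimination sequence: if $r$ were outside the first modulator $X_1$, then (as in \autoref{cl_padding-one}) we would produce a clique of size $h\geq 4k+2$ in $\torso(G,X_1)$, and iterating the argument across the $k$ layers still cannot accommodate the gadget, contradicting $K_{4k+1}$-minor-freeness of every torso in the elimination sequence.

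With this gadget in hand, the key lemma to prove is: for every $v\in V(G)$, the graph $G$ has a certifying elimination sequence $X_1,\dots,X_k$ with $v\in X_1$ if and only if the graph $G_v$ obtained by identifying $v$ with the root of $F'$ has $\Hcal$-planar treedepth at most $k$. The forward direction is immediate since the new components of $G_v-X_1$ are components of the $F_i$'s, all in $\Hcal$, and the torso of $X_1$ is unchanged because $r$ is the unique gadget vertex in $X_1$. For the backward direction, one shows $r=v$ must lie in the first modulator of any witnessing sequence by the clique-propagation argument above, and then the gadget can be deleted. Using this, I would first call the algorithm of \autoref{th_param} (with $\p=\ptd$) to verify $\Hcal$-$\ptd(G)\leq k$, then process vertices $v_1,\dots,v_n$ in turn, greedily testing $G_{i-1}+(\text{gadget on }v_i)$ and adding $v_i$ to $X_1$ whenever the answer stays ``yes''; this produces the first planar modulator $X_1$ together with $G[V(G)\setminus X_1]\in\Pcal^{k-1}\triangleright\Hcal$ (equivalently, each component of $G-X_1$ has $\Hcal$-planar treedepth at most $k-1$). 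I then recurse on each connected component of $G-X_1$ with parameter $k-1$ to obtain $X_2,\dots,X_k$ (taking the union of the modulators found componentwise at each level), and after $k$ levels the remaining graph lies in $\Hcal$, certifying the sequence.

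The running time is polynomial: each of the $k$ levels makes $O(n)$ calls to the \FPT\ algorithm of \autoref{th_param}, whose running time for fixed $k$ is $\Ocal_k(n^{4}+n^{c}\log n)$, and the recursion tree has depth $k$ with the instances at each level summing to at most $n$ vertices, so the total is $\Ocal_k(\mathrm{poly}(n))$. The main obstacle I anticipate is the backward direction of the key lemma in the \emph{recursive} setting: one must be careful that forcing $v$ into $X_1$ does not interact badly with the torso operation when later layers $X_2,\dots,X_k$ are applied to the components of $G-X_1$—in particular, that the gadget components, being entirely in $\Hcal$, never need to be split across layers, and that removing the gadget from a witnessing sequence of $G_v$ yields a valid witnessing sequence of $G$ (the torsos only shrink). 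A secondary subtlety is that \autoref{th_param} decides $\Hcal$-$\ptd$ but its statement as written is a decision procedure; however, the self-reduction converts decision into construction in the standard way, precisely as done in \autoref{cor_modulator}, so no new machinery is needed there.
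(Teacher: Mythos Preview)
There is a genuine gap: your gadget $F'$ (namely, $h \geq 4k+2$ copies of $F$ sharing a root $r$) does not force $r$ into the \emph{first} layer $X_1$; it only forces $r$ into the full modulator $X = X_1 \cup \cdots \cup X_k$. Your justification---that if $r \notin X_1$ then each $F_i$ contributes a vertex $x_i \in X_1$, giving a large clique in $\torso(G_v, X_1)$---misapplies \autoref{cl_padding-one}: there the components of $G-X$ must lie in $\Hcal$, so each $F_i$ must meet $X$; but components of $G_v - X_1$ need only have $\Hcal$-planar treedepth at most $k-1$, and since $\Hcal\text{-}\ptd(F_i)\leq 1$ the copies can sit entirely inside the component of $G_v-X_1$ containing $r$. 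The layer-by-layer count you allude to does show $r \in X$, but any layer suffices. Concretely, take $\Hcal=\{\text{edgeless graphs}\}$, $k=2$, $G=K_6$, so $F=K_2$ and your gadget is a set of $h$ pendants. Then $K_6$ with $h$ pendants at \emph{every} vertex still has $\Hcal\text{-}\ptd \leq 2$ (put any four $K_6$-vertices in $X_1$ and the remaining two in $X_2$; both torsos are planar and the leftover pendants are edgeless). Hence your greedy self-reduction commits all six vertices to $X_1$---but $\torso(K_6, V(K_6))=K_6$ is not planar, so the output is not a valid first layer.

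The paper repairs exactly this by building $F'$ from a clique $K_{4k}$ with five rooted $F$-copies attached at each clique vertex, declaring the root $r$ to be an arbitrary clique vertex. The crucial feature is that $\Hcal\text{-}\ptd(F')=k$: since each layer's torso is planar it can absorb at most four clique vertices, so the $K_{4k}$ alone consumes all $k$ layers. Attached at $v$, this forces $v \in X_1'$ in any certifying sequence of $G_v$: if $v \notin X_1'$ then the component $C \ni v$ contains at least $4k-4$ clique vertices together with their $F$-copies, and (using the connection through $v$ to $G$ or to the other accumulated gadgets) one gets $\Hcal\text{-}\ptd(C) \geq k > k-1$. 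In the $K_6$ example, five such gadgets already make the graph have $\Hcal\text{-}\ptd > 2$, so the fifth vertex is correctly rejected and the procedure stops at a planar $K_4$. Your overall plan---self-reduce to find $X_1$, then recurse on components with parameter $k-1$---matches the paper exactly; only the gadget must be upgraded from a depth-$1$ object to a depth-$k$ object.
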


\begin{proof}[Sketch of the proof]
We again follow the proof of \cref{cor_modulator} and use self-reduction. Assume that $\Hcal$ does not include every graph and let $F$ be a rooted minimum forbidden  subgraph. Assume that $F$ is connected; the disconnected case is analyzed similarly to the proof of  \cref{cor_modulator}. We construct the graph $F'$ as follows:
\begin{itemize}
\item construct a copy of the complete graph $K_{4k}$.
\item for every vertex $v$ of the complete graph, construct five copies $F_1,\ldots F_5$ of $F$ rooted in $r_1,\ldots,r_5$, respectively, and then identify $r_1,\ldots,r_5$ with $v$.
\end{itemize} 
Then we declare an arbitrary  vertex $r$ of the complete graph in $F'$ to be its root. Then similarly to 
to~\Cref{cl_padding-one}, we have the following property for every $v\in V(G)$: $G$ admits 
a certifying elimination sequence $X_1,\dots,X_k$ with $v\in X_1$ if and only if the graph $G'$ obtained from $G$ and $F'$ by identifying $v$ and the root of $F'$ has $\Hcal$-planar treedepth at most $k$.
We use this observation to identify $X_1$. Then we find $X_2,\ldots,X_k$ by using the same arguments inductively. This concludes the sketch of the proof.
\end{proof}

Now, we can proceed to algorithmic applications of~\autoref{cor_modulator}, \cref{cor:twmodulator}, and \cref{cor:tdseq}.

\subsection{Colourings}
Our first algorithmic application is the existence of a polynomial-time additive-approximation algorithm for graph coloring on $\Hcal$-planar graphs. 

\coloring*

\begin{proof} By \Cref{cor_modulator}, there is a polynomial-time algorithm computing a planar $\Hcal$-modulator
 $S$ of  an  $\Hcal$-planar graph $G$.
Given that $G[S]$ is planar, there is a proper coloring of $G[S]$ with colors $[1,4]$ by the Four Color theorem~\cite{AppelH89,RobertsonSST96effic}. Moreover, the proof of the Four Color theorem is constructive and yields a polynomial-time algorithm producing a 4-coloring of a planar graph. 
For each component $C\in\cc(G-S)$, by the assumptions of the theorem, we can compute $\chi(G[C])\leq \chi(G)$ in polynomial time. Since all components are disjoint, we use at most $\chi(G)$ to color all the components of $ \cc(G-S)$ and then additional four colors to color $G[S]$. 
This gives a proper $(\chi(G)+4)$-coloring of $G$.
\end{proof}

For a graph with bounded \hptd{$\Hcal$}, we can use \cref{cor:tdseq} to find a certifying elimination sequence.
Then by repetitive applications of \autoref{lemma_coloring}, we immediately obtain 
a bound on the chromatic number. 
\begin{corollary}
Let $\Hcal$ be a hereditary, CMSO-definable, and polynomial-time decidable graph class.
Moreover, assume that there is a polynomial-time algorithm computing the chromatic number $\chi(H)$ for $H\in\Hcal$.
Then, there exists a polynomial-time algorithm that, given a graph $G$ with $\Hcal\mbox{-}\ptd(G)\leq k$,  
produces a proper coloring of $G$ using at most $\chi(G)+4k$ colors.
\end{corollary}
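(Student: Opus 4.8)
The plan is to argue by induction on $k$, with \autoref{lemma_coloring} serving as the base case $k=1$, peeling off one planar layer at a time. I would write $\Hcal_j:=\Pcal^{j}\triangleright\Hcal$ for the class of graphs of $\Hcal\mbox{-}\ptd$ at most $j$ (so $\Hcal_1$ is the class of $\Hcal$-planar graphs). By associativity of $\triangleright$, one has $\Hcal_k=\Pcal\triangleright(\Pcal^{k-1}\triangleright\Hcal)=\Pcal\triangleright\Hcal_{k-1}$, so a graph $G$ with $\Hcal\mbox{-}\ptd(G)\le k$ is exactly an $\Hcal_{k-1}$-planar graph; consequently \autoref{cor_modulator}, applied with target class $\Hcal_{k-1}$, produces in polynomial time a planar $\Hcal_{k-1}$-modulator of $G$, whose removal leaves components that all lie in $\Hcal_{k-1}$, i.e.\ have $\Hcal\mbox{-}\ptd$ at most $k-1$. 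That is where the recursion feeds in. (One could instead invoke \autoref{cor:tdseq} to extract a full certifying elimination sequence and then apply \autoref{lemma_coloring} along its layers, but that route silently uses the extra hypotheses of \autoref{cor:tdseq}; the recursive argument gets by with only the hypotheses assumed here.)

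Concretely, for $k\ge 2$: given $G$ with $\Hcal\mbox{-}\ptd(G)\le k$, I would run the algorithm of \autoref{cor_modulator} with class $\Hcal_{k-1}$ to obtain a planar $\Hcal_{k-1}$-modulator $X$. Since $G[X]$ is a spanning subgraph of the planar graph $\torso(G,X)$, it is planar, and the constructive Four Colour Theorem~\cite{AppelH89,RobertsonSST96effic} properly colours $G[X]$ in polynomial time with colours $\{1,2,3,4\}$. Each $C\in\cc(G-X)$ lies in $\Hcal_{k-1}$, so by the induction hypothesis it can be coloured with at most $\chi(C)+4(k-1)\le\chi(G)+4(k-1)$ colours; shifting these colours clear of $\{1,2,3,4\}$ and reusing a single palette of $\chi(G)+4(k-1)$ colours for all components (legitimate, as distinct components are non-adjacent) yields a proper colouring of $G$ — proper within $X$ and within each $C$ by construction, and across the cut because the two colour ranges are disjoint — using at most $4+\chi(G)+4(k-1)=\chi(G)+4k$ colours. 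The running time obeys a polynomial recurrence: one call to \autoref{cor_modulator}, then one recursive call per component, on instances of strictly smaller $\Hcal\mbox{-}\ptd$.

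The step demanding genuine (if routine) care is to confirm that $\Hcal_j$ meets the hypotheses required by \autoref{cor_modulator} and, inside its proof, by \autoref{th_th}; I would do this by a side induction on $j\ge 1$ with base class $\Hcal_0:=\Hcal$. For heredity: given an induced subgraph $H=G[S]$ and a set $X$ witnessing $G\in\Pcal\triangleright\Hcal_{j-1}$, the set $X\cap S$ witnesses $H\in\Pcal\triangleright\Hcal_{j-1}$, since $\torso(H,X\cap S)$ is a subgraph of $\torso(G,X)[X\cap S]$ (any path in $G[S]$ certifying an edge of the smaller torso certifies it in $G$ as well), hence planar because $\Pcal$ is subgraph-closed, while every component of $H-(X\cap S)$ is an induced subgraph of a component of $G-X\in\Hcal_{j-1}$ and so belongs to $\Hcal_{j-1}$ by the inductive heredity. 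CMSO-definability of $\Hcal_j=\Pcal\triangleright\Hcal_{j-1}$ follows from the CMSO-expressibility of planarity, of ``being a connected component of $G-X$'', and of the torso operation (as recorded around \autoref{obs_CMSO}), by relativizing the sentence for $\Hcal_{j-1}$ to each component; and polynomial-time decidability of $\Hcal_j$ is exactly the conclusion of \autoref{th_th} applied to $\Hcal_{j-1}$, since membership in $\Pcal\triangleright\Hcal_{j-1}$ is what \Hpl\ decides for that class. I expect this bookkeeping, together with the one combinatorial subtlety in the main argument — that the additive $+4$ must be charged once per recursion level and not once per component, which is exactly what the colour-reuse across disjoint components buys — to be the only delicate points; everything else is a transparent unrolling of \autoref{lemma_coloring}.
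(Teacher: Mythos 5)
Your proof is correct, but it follows a genuinely different route from the paper's. The paper proves this corollary in one line: it invokes \autoref{cor:tdseq} to compute a certifying elimination sequence $X_1,\dots,X_k$ and then applies the argument of \autoref{lemma_coloring} once per layer ($4$ fresh colours for each planar torso $\torso(G_i,X_i)$, plus $\chi(G)$ colours for the final components in $\Hcal$). You instead recurse through the composed classes $\Hcal_{j}=\Pcal^{j}\triangleright\Hcal$, using associativity of $\triangleright$ to view a graph of \hptd{$\Hcal$} at most $k$ as an $\Hcal_{k-1}$-planar graph and applying \autoref{cor_modulator} with target class $\Hcal_{k-1}$; the side induction verifying that each $\Hcal_j$ is hereditary, CMSO-definable (via \autoref{obs_CMSO}), and polynomial-time decidable (via \autoref{th_th}) is the real content of your argument, and your verification is sound. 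Your observation that the paper's route imports the extra hypotheses of \autoref{cor:tdseq} (union-closedness of $\Hcal$ and an \FPT{} algorithm for \pbdel), which are not among the hypotheses of the corollary as stated, is a fair and substantive point in favour of your approach; the price you pay is an extra layer of non-uniformity, since the polynomial-time decidability of each $\Hcal_j$ is inherited from the non-constructive \autoref{th_th}, and a polynomial running time whose degree may grow with $k$ --- both acceptable given how the paper itself frames its results. The colour accounting ($4$ colours charged once per recursion level, with a single reused palette across the pairwise non-adjacent components) is handled correctly.
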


We can also prove a similar result for graphs of bounded $\Hcal$-planar treewidth.

\begin{theorem}
Let $\Hcal$ be a hereditary, CMSO-definable, and polynomial-time decidable graph class.
Moreover, assume that there is a polynomial-time algorithm computing the chromatic number $\chi(H)$ for $H\in\Hcal$.
Then, there exists a polynomial-time algorithm that, given a graph $G$ with $\Hcal\mbox{-}\ptw(G)\leq k$, 
produces a proper coloring of $G$ using at most $\chi(G)+\max\{4,k+1\}$ colors.
\end{theorem}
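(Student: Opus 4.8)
The plan is to combine \autoref{cor:twmodulator} with a layering argument on the tree decomposition, in the spirit of how \autoref{lemma_coloring} is used but recursively through the bags. First I would invoke \autoref{cor:twmodulator} (the class of all graphs with no edges is union-closed, hereditary, CMSO-definable, polynomial-time decidable, and \textsc{$\Hcal$-Deletion} is trivially \FPT\ for it, so the hypotheses of \autoref{cor:twmodulator} are met) to compute in polynomial time a $\Pcal\Tcal_k\triangleright\Hcal$-modulator $S$ of $G$ together with a tree decomposition $\Tcal=(T,\beta)$ of $\torso(G,S)$ of planar width at most $k$: each bag $\beta(t)$ either has a planar torso or has size at most $k+1$, and the leaf bags correspond to components of $G-S$ lying in $\Hcal$.

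Next I would colour the pieces and then reconcile the colours across the tree. For each component $C\in\cc(G-S)$ we compute $\chi(G[C])\le\chi(G)$ in polynomial time by hypothesis, using the first $\chi(G)$ colours from a palette; these components are pairwise non-adjacent, so this is consistent. For the torso part, the key step is to root $T$ and process bags from the root downward, maintaining a proper colouring of $\torso(G,S)$ that uses, on each bag, at most $\max\{4,k+1\}$ \emph{extra} colours on top of the $\chi(G)$ colours reserved for the $\Hcal$-components. A bag of size at most $k+1$ can be coloured with $k+1$ fresh colours trivially; a bag with planar torso can be coloured with $4$ colours by the Four Colour Theorem \cite{AppelH89,RobertsonSST96effic}, whose proof is constructive. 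The reconciliation is handled through the adhesion sets: since $\Tcal$ has planar width $k$ and (by the structure produced in \autoref{cor:twmodulator}) the adhesions between a planar-torso bag and its neighbours are cliques of size at most $4$, and since the adhesion set induces a clique that is already coloured by the parent, we only need that the local palette on a bag can be permuted so that it agrees with the already-fixed colours on the adhesion; permuting the $\le\max\{4,k+1\}$ extra colours of a bag to match the (at most $\max\{4,k+1\}$) colours on its adhesion clique does not increase the number of colours used. Composing these local colourings along the tree yields a proper colouring of $\torso(G,S)$, hence of $G[S]$, with $\chi(G)+\max\{4,k+1\}$ colours, and since edges of $G$ within $S$ are edges of $\torso(G,S)$ this is a proper colouring of $G[S]$; together with the component colourings it is a proper colouring of $G$ because every edge of $G$ either lies inside a component (handled), inside $S$ (handled), or joins a component $C$ to $N_G(V(C))\subseteq S$, and the component colours come from the reserved first $\chi(G)$ colours while the $S$-colours on $N_G(V(C))$ are among them only if $\chi(G[C])$ forces it --- so one must be slightly careful here.

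The main obstacle I anticipate is exactly this last point: ensuring that the colours used on a component $C$ do not clash with the colours of its neighbourhood $N_G(V(C))$ in $S$. The clean fix is to use \emph{disjoint} palettes: reserve colours $\{1,\dots,\chi(G)\}$ for the components of $G-S$ and colours $\{\chi(G)+1,\dots,\chi(G)+\max\{4,k+1\}\}$ for $\torso(G,S)$; then no edge between $S$ and a component can be monochromatic, and the total is $\chi(G)+\max\{4,k+1\}$ as claimed. Under this scheme the component colouring and the torso colouring never interact, and the only real work is the tree-decomposition bookkeeping of the previous paragraph --- propagating a consistent colouring of $\torso(G,S)$ with $\max\{4,k+1\}$ colours down $T$, alternating between Four-Colour-Theorem colourings of planar-torso bags and brute-force $(k+1)$-colourings of small bags, reconciled across adhesion cliques by permuting colour names. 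Each step is polynomial (the constructive Four Colour Theorem, the linear-size tree decomposition, and trivial colourings of bounded-size bags), so the overall algorithm runs in polynomial time.
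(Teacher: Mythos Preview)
Your approach is essentially the paper's: invoke \autoref{cor:twmodulator} to obtain the modulator $S$ together with a planar-width-$k$ tree decomposition of $\torso(G,S)$, colour each component of $G-S$ optimally with a first palette of $\chi(G)$ colours, and colour $\torso(G,S)$ with a disjoint palette of $\max\{4,k+1\}$ colours by handling each bag separately (four colours for planar-torso bags via the Four Colour Theorem, $k+1$ for small bags). Your top-down reconciliation across adhesion cliques just spells out what the paper compresses into the single sentence ``it is enough to find a proper coloring of the torso of each bag of $\Tcal$ with at most $\max\{4,k+1\}$ colors.''

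One confusion worth flagging: your parenthetical that ``the class of all graphs with no edges'' satisfies the hypotheses of \autoref{cor:twmodulator} is beside the point. You must apply \autoref{cor:twmodulator} to the given $\Hcal$, since you need an $\Hcal$-modulator; for that you need $\Hcal$ itself to be union-closed and $\Hcal$-\textsc{Deletion} to be \FPT, and substituting the edgeless class gives you the wrong decomposition (and may not even exist for $G$). The paper's own proof also invokes \autoref{cor:twmodulator} without verifying these extra hypotheses, so strictly speaking the theorem as stated is missing the assumptions ``$\Hcal$ is union-closed'' and ``$\Hcal$-\textsc{Deletion} is \FPT''; you were right to sense that something was needed, but the edgeless-class remark does not fill the gap.
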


\begin{proof}
Given a graph $G$, we use the algorithm from \cref{cor:twmodulator} to find a $\Pcal\Tcal_k\triangleright\Hcal$-modulator $S$ of $G$ with a tree decomposition $\Tcal$ of $\torso_G(S)$ of planar width at most $k$.
As in the proof of \autoref{lemma_coloring}, for each component $C\in\cc(G-S)$, by the assumptions of the theorem, we can compute $\chi(G[C])\leq \chi(G)$ in polynomial time.
Let us now find a proper coloring of $\torso_G(S)$ with $\max\{4,k+1\}$ colors.
For this, it is enough to find a proper coloring of the torso of each bag of $\Tcal$ with at most $\max\{4,k+1\}$ colors.
If a bag as at most $k+1$ vertices, then $k+1$ colors suffices, and if a bag has a planar torso, then four colors are enough~\cite{AppelH89,RobertsonSST96effic}, hence the result.
\end{proof}

\subsection{Counting perfect matchings}
\label{coudif_jkiolmd}

Our second example is an extension of  the celebrated FKT algorithm \cite{fisher1961statistical,kasteleyn1961statistics,temperley1961dimer} for counting perfect matchings from planar to $\Hcal$-planar graphs. Recall that, given a graph $G$ with an edge weight function $w: E(G)\to\Nbbb$, the weighted number of perfect matchings is 
$$\pmm(G)=\sum_{M}\prod_{uv\in M} w(uv)$$ where the sum is taken over all perfect matchings $M$. 

\perfect*

To prove this, we use Valiant ``matchgates'' \cite{Valiant08holo}, or more precisely the gadgets from \cite{StraubTW16coun}.
\begin{proposition}[\!\!\cite{StraubTW16coun}]\label{prop_pmm}
Let $G$ be a graph and $(A,B)$ be a separation of $G$ of order two (resp. three) with $A\cap B=\{a,b\}$ (resp. $A\cap B=\{a,b,c\}$).
Denote by $G_B$ the graph obtained from $G[B]$ by removing the edges with both endpoints in $A\cap B$.
Let also $p_\gamma:=\pmm(G_B-\gamma)$ for each $\gamma\subseteq A\cap B$.
Then, $\pmm(G)=\pmm(G')$, where $G'$ is the graph obtained from $G[A]$ by adding the corresponding planar gadget of \autoref{fig_valiant2} (that depends on the size of $A\cap B$ and the parity of $B$).
\end{proposition}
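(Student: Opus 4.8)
The plan is to reduce the proposition to a single, gadget-local identity by decomposing weighted perfect-matching counts along the separation $(A,B)$, and then to certify that identity from the picture.

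Write $S:=A\cap B$, so $|S|\in\{2,3\}$. Since $(A,B)$ is a separation, $E(G)=E(G[A])\cup E(G[B])$ with $E(G[A])\cap E(G[B])=E(G[S])$, and hence $E(G)\setminus E(G_B)=E(G[A])$, where (as in the statement) $G_B$ is $G[B]$ with the edges inside $S$ deleted; throughout, every edge keeps its weight $w(e)$ inherited from $G$, and every subgraph counts perfect matchings with the inherited weights. First I would prove the following decomposition. Given a perfect matching $M$ of $G$, set $M_B:=M\cap E(G_B)$ and $M_A:=M\setminus M_B\subseteq E(G[A])$. Every vertex of $B\setminus S$ is incident only to edges of $G_B$, so $M_B$ saturates all of $B\setminus S$ together with some subset $\gamma\subseteq S$; thus $M_B$ is a perfect matching of $G_B-(S\setminus\gamma)$ and $M_A$ is a perfect matching of $G[A]-\gamma$. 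The map $M\mapsto(\gamma,M_A,M_B)$ is a bijection onto the triples with $\gamma\subseteq S$, $M_A$ a perfect matching of $G[A]-\gamma$, and $M_B$ a perfect matching of $G_B-(S\setminus\gamma)$, and it is weight-multiplicative because $M_A$ and $M_B$ use disjoint edge sets. Hence
\[
\pmm(G)=\sum_{\gamma\subseteq S}\pmm\bigl(G[A]-\gamma\bigr)\cdot\pmm\bigl(G_B-(S\setminus\gamma)\bigr)=\sum_{\gamma\subseteq S}\pmm\bigl(G[A]-\gamma\bigr)\cdot p_{S\setminus\gamma}.
\]
Note $\pmm(G_B-(S\setminus\gamma))=0$ unless $|S\setminus\gamma|\equiv|B|\pmod 2$, so only subsets $\gamma$ of one fixed parity of $|\gamma|$ contribute; this is exactly the ``parity of $B$'' dichotomy that selects one of the two templates of \autoref{fig_valiant2}.

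Next I would run the same argument on $G'=G[A]\cup\Gamma$, where $\Gamma$ is the gadget of \autoref{fig_valiant2}, glued to $G[A]$ along the terminals $S=\{a,b\}$ (resp.\ $\{a,b,c\}$), having a set $I$ of internal vertices and no edge inside $S$. For a perfect matching $M'$ of $G'$, the part $M'\cap E(\Gamma)$ saturates $I$ together with some $\epsilon\subseteq S$, and the remainder is a perfect matching of $G[A]-\epsilon$; since $\Gamma$ and $G[A]$ have no common edge this again factorises and gives
\[
\pmm(G')=\sum_{\epsilon\subseteq S}\pmm\bigl(G[A]-\epsilon\bigr)\cdot Q_\epsilon,
\qquad Q_\epsilon:=\sum_{N}\ \prod_{e\in N}w(e),
\]
the inner sum ranging over all matchings $N$ of $\Gamma$ that saturate exactly $I\cup\epsilon$. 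Comparing this with the first display, the proposition follows as soon as
\[
Q_\epsilon=p_{S\setminus\epsilon}\qquad\text{for every }\epsilon\subseteq S,
\]
which is a statement purely about the gadget $\Gamma$ and the numbers $p_\gamma$, independent of $G[A]$ (it is also necessary, by specialising $G[A]$ so as to isolate each term, but we do not need this).

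Finally I would verify $Q_\epsilon=p_{S\setminus\epsilon}$ by direct inspection of \autoref{fig_valiant2}. For $|S|=2$ this is immediate: the even-parity template is the path $a\!-\!x\!-\!y\!-\!b$, whose only matchings saturating $\{x,y\}$ are $\{xy\}$ (giving $Q_\emptyset=w(xy)$) and $\{ax,yb\}$ (giving $Q_{\{a,b\}}=w(ax)w(yb)$), with $Q_{\{a\}}=Q_{\{b\}}=0$; the odd-parity template is the path $a\!-\!x\!-\!b$, with $Q_{\{a\}}=w(ax)$, $Q_{\{b\}}=w(xb)$ and $Q_\emptyset=Q_{\{a,b\}}=0$; in each case the weights printed on the figure's edges are precisely the $p_\gamma$'s needed (e.g.\ $w(xy)=p_{\{a,b\}}$, $w(ax)=p_\emptyset$, $w(yb)=1$, recalling $p_{\{a\}}=p_{\{b\}}=0$ when $|B|$ is even). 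The main obstacle is the routine but fiddly bookkeeping of the $|S|=3$ case: there one must enumerate, for each of the at most eight subsets $\epsilon\subseteq S$ (at most four of which are non-zero by parity), the matchings of the small planar graph $\Gamma$ saturating exactly $I\cup\epsilon$, and check that the total weight under the edge-labels shown in \autoref{fig_valiant2} equals $p_{S\setminus\epsilon}$; along the way one also records that $\Gamma$ is planar with $a,b,c$ on its outer boundary and meets $G[A]$ only in $S$, so that $G'$ is obtained from $G[A]$ exactly as described. Beyond this finite verification, nothing further is required.
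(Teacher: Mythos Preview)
The paper does not supply its own proof of this proposition; it is imported verbatim from \cite{StraubTW16coun} and used as a black box. Your argument---decompose $\pmm(G)$ bilinearly along the separation $(A,B)$, do the same for $G'$ along the cut between $G[A]$ and the gadget $\Gamma$, and reduce to the gadget-local signature identity $Q_\epsilon=p_{S\setminus\epsilon}$---is exactly the standard matchgate reasoning behind such replacement lemmas, and it is correct. The only step you defer is the explicit enumeration of matchings of $\Gamma$ for $|S|=3$, which is routine once the figure is in hand; there is no missing idea.
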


\begin{figure}[h]
\center
\scalebox{.97}{\includegraphics{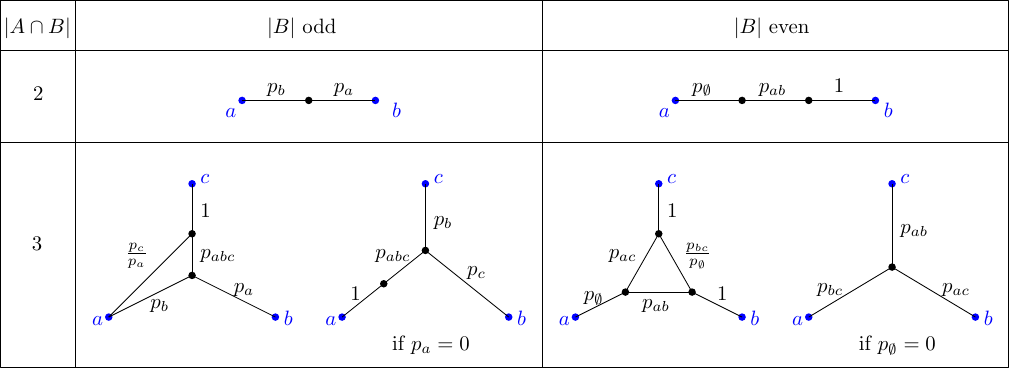}}
\caption{The gadget (from \cite{StraubTW16coun}) we replace $G_B$ with, depending on the size of $A\cap B$ and the parity of $|B|$; to simplify notation, we write the indices of the weight $p_\gamma$ 
as the lists of the elements of $\gamma$ (e.g., we write $p_{abc}$ instead of $p_{\{a,b,c\}}$). 
Note that, if $p_a=0$ or $p_\emptyset=0$, then there may be a variant gadget.}
\label{fig_valiant2}
\end{figure}

We now prove \autoref{thm_perfectmatching}.

\begin{proof}[Proof of \autoref{thm_perfectmatching}]
Let $G$ be an $\Hcal$-planar graph and $w:E(G)\to\Nbbb$ be a weight function (with $w=1$ in the unweighted case).

If $G$ is not connected, then the weighted number of perfect matchings in $G$ is the product of the number of perfect matchings in the connected components of $G$.
Hence, we assume without loss of generality that $G$ is connected.

Obviously, if $G$ has an odd number of vertices, then $\pmm(G)=0$.
So we assume that $|V(G)|$ is even.
Additionally, if $G$ has a cut vertex $v$, let $C_1\in\cc(G-v)$ and $C_2=\bigcup_{C\in\cc(G-v)\setminus\{C_1\}}C$.
Given that $|V(G-v)|$ is odd, exactly one of $C_1$ and $C_2$, say $C_1$, has an odd number of vertices.
Then the (weighted) number of perfect matchings is the product of $\pmm(C_1\cup v)$ and $\pmm(C_2)$.
Therefore, we can assume without loss of generality that $G$ is 2-connected.

By \Cref{cor_modulator}, there is a polynomial-time algorithm computing a planar $\Hcal$-modulator
 $S$ of  an  $\Hcal$-planar graph $G$.
Let $\delta'$ be a sphere embedding of $\torso(G,S)$.
Let $\delta$ be the sphere embedding of $G[S]$ obtained from $\delta'$ by removing the edges that do not belong to $G[S]$.
By \autoref{cl_compat}, for every $D\in\cc(G-S)$, there is a $\delta$-aligned disk $\Delta_D$ such that:
\begin{itemize}[itemsep=-2pt] \setlength\itemsep{0em}
\item the vertices of $N_G(V(D))$ are in the disk $\Delta_D$, i.e. $N_G(V(D))\subseteq \pi_\delta(N(\delta)\cap \Delta_D)$, 
\item with all but at most one (in the case  $|N_G(V(D))|\le 4$) being exactly the vertices of the boundary of $\Delta_D$, i.e. there is a set $X_D\subseteq N_G(V(D))$ of size $\min\{|N_G(V(D))|,3\}$ such that $X_D=\pi_\delta(\bd(\Delta_D)\cap N(\delta))$.
\end{itemize}
Actually, \autoref{cl_compat} is proved for $\delta'$, but it is still true after removing edges to obtain $\delta$.
Note that, given that $G$ is 2-connected, we have $2\le |X_D|\le 3$.

Let $\{\Delta_1,\dots,\Delta_p\}$ be the set of all such disks $\Delta_D$, ordered so that there is no $i<j\in[p]$ with $\Delta_i\supseteq \Delta_j$.
Note that for any $D,D'$ such that $X_D=X_{D'}$, we assume that $\Delta_D=\Delta_D'$, and thus, that there is a unique $i$ such that $\Delta_i=\Delta_D=\Delta_D'$.
Given that there are at most $\binom{n}{3}$ separators in $G$, we have $p\le n^3$.

For $i\in[p]$, let $Z_i:=\{C\in\cc(G-S)\mid N_{G}(C)\subseteq \pi_\delta(N(\delta)\cap \Delta_i)\}$ be the set of connected components whose neighborhood is in $\Delta_i$.
For $i\in[p]$ in an increasing order, we will construct by induction a tuple $(G_i,S_i,\delta_i,w_i)$ such that 
\begin{enumerate}[itemsep=-2pt] \setlength\itemsep{0em}
\item $S_i$ is a planar $\Hcal$-modulator of the graph $G_i$ with $\cc(G_i-S_i)=\cc(G-S)\setminus\bigcup_{j\le i}Z_j$,
\item $\delta_i$ is a sphere embedding of $G_i[S_i]$
with $\delta_i\setminus\bigcup_{j\le i}{\sf int}(\Delta_j)=\delta\setminus\bigcup_{j\le i}{\sf int}(\Delta_j)$, and
\item $w_i:E(G_i)\to\Nbbb$ is a weight function such that $\pmm(G_i)=\pmm(G)$.
\end{enumerate}
Therefore, $\cc(G_p-S_p)=\emptyset$, so $G_p=G[S_p]$ is a planar graph with $\pmm(G_p)=\pmm(G)$.
Then, by \cite{Kasteleyn67grap}, $\pmm(G_p)$, and thus $\pmm(G)$, can be computed in polynomial time.

Obviously, the conditions are originally respected for $(G_0,S_0,\delta_0,w_0)=(G,S,\delta,w)$.
Let $i\in[p]$. 
Suppose that we constructed $(G_{i-1},S_{i-1},\delta_{i-1},w_{i-1})$.
Let $B_i:=V(\inG_\delta(\Delta_i))\cup V(Z_i)$ be the set of vertices that are either in $\Delta_i$ or in a connected component whose neighborhood is in $\Delta_i$.
Let also $X_i:=\pi_\delta(N(\delta_{i-1})\cap \Delta_i)$ be the vertices on the boundary of $\Delta_i$.
Given that there are no $j<i$ such that $\Delta_i\subseteq\Delta_j$, and by the induction hypothesis (2), there is $D\in\cc(G-S)$ such that $X_i=X_D$, so we have $2\le|X_i|\le3$.
If $|X_i|=2$, we label its elements $a$ and $b$, and if $|X_i|=3$, we label its elements $a,b,c$.
Let $H_i$ be the graph induced by $B_i$ but where we remove the edges whose endpoints are both in $X_i$.

For $\gamma\subseteq X_i$, let $$p_\gamma:=\pmm(H_i-\gamma).$$
Then, by \autoref{prop_pmm}, if we manage to compute $p_\gamma$ for each $\gamma\subseteq X_i$, then we can replace $H_i$ with the corresponding planar gadget $F_i$ of \autoref{fig_valiant2} (that depends on the size of $X_i$ and the parity of $|V(H_i)|=|B_i|$) to obtain~$G_i$ and the weight function $w_i$, so that $\pmm(G_i)=\pmm(G_{i-1})=\pmm(G)$.
So Item (3) of the induction holds.
Then $S_i:=S_{i-1}\setminus Z_i\cup V(F_i)$ is a planar $\Hcal$-modulator of $G_i$ with $\cc(G_i-S_i)=\cc(G_{i-1}-S_{i-1})\setminus Z_i=\cc(G-S)\setminus\bigcup_{j\le i}Z_j$.
So Item (1) of the induction holds.
Additionally, $\delta_i$, that is obtained from $\delta_{i-1}$ by adding the gadget to the planar embedding, is such that $\delta_i\setminus{\sf int}(\Delta_i)=\delta_{i-1}\setminus{\sf int}(\Delta_i)$, and thus respects Item (2) by induction hypothesis.
Then, the obtained tuple $(G_i,S_i,\delta_i,w_i)$ would respects the induction hypothesis.

What is left is to explain how to compute $p_\gamma$ for $\gamma\subseteq X_i$.
For each $v\in X_i\setminus\gamma$, we guess with which vertex of $B_i\setminus X_i$ $v$ is matched.
There are $\Ocal(n)$ choices for each such $v$, and therefore $\Ocal(n^3)$ guesses to match all vertices in $X_i\setminus\gamma$.
Let $H_i'$ be the graph obtained from $H_i$ after removing $X_i$ and the vertices matched with vertices of $X_i\setminus\gamma$.
We have $p_\gamma=\pmm(H_i')+|X_i\setminus\gamma|$.
Note that for any $C\in Z_i$, $X_i\subseteq N_{G_{i-1}}(C)=N_G(V(C))$.
Indeed, otherwise, there would be $j>i$ such that $\Delta_C=\Delta_j$, with $\Delta_j\subseteq \Delta_i$, a contradiction.
Therefore, for any $C\in Z_i$, if $|N_G(V(C))|\le3$, then $C$ is a connected component of $H_i'$, and if $|N_G(V(C))|=4$, then $C$ is a block of $H_i'$.
The rest of $H_i'$ is $\inG_{\delta_{i-1}}(\Delta_i)$, that is a planar graph.
Therefore, $H_i'$ is a graph whose blocks are either planar or in $\Hcal$.
As said above, for each cut vertex, we know in which block it should be matched depending on the parity of the blocks.
Hence, we can compute the weighted number of perfect matchings in each block separately.
Note that the weights added by the gadgets of \autoref{fig_valiant2} are only present in planar blocks, where we know how to compute the weighted number of perfect matchings in polynomial time.
In blocks belonging to $\Hcal$, the weight of edges did not change, so in the unweighted (resp. weighted) case, we know how to compute the unweighted  (resp. weighted) number of perfect matchings in polynomial time.
Thus, we can compute $\pmm(H_i')$ and, therefore, $p_\gamma$ in polynomial time.
\end{proof}

Again, by repetitive applications of \autoref{thm_perfectmatching}, we immediately obtain the following for graphs with bounded \hptd{$\Hcal$}.
\begin{corollary}
Let $\Hcal$ be a hereditary, CMSO-definable, and polynomial-time decidable graph class. Moreover, assume that the weighted (resp. unweighted) number of perfect matchings  $\pmm(H)$ is computable in polynomial time for graphs in $\Hcal$. Then, there exists an algorithm that, given a weighted (resp. unweighted) graph $G$ with 
$\Hcal\mbox{-}\ptd(G)\leq k$, 
computes its weighted (resp. unweighted) number of perfect matchings $\pmm(G)$ in time $n^{O(k)}$.
\end{corollary}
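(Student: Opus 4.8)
The plan is to bootstrap \autoref{thm_perfectmatching} along the layered structure of $\Hcal$-planar treedepth. Recall that a graph has $\Hcal$-planar treedepth at most $k$ exactly when it lies in $\Pcal^k\triangleright\Hcal$, and that $\triangleright$ is associative, so $\Pcal^k\triangleright\Hcal=\Pcal\triangleright(\Pcal^{k-1}\triangleright\Hcal)$. Setting $\Hcal_0:=\Hcal$ and $\Hcal_j:=\Pcal^j\triangleright\Hcal$ for $j\in[k]$, we therefore have $\Hcal_j=\Pcal\triangleright\Hcal_{j-1}$; that is, $\Hcal_j$ is precisely the class of $\Hcal_{j-1}$-planar graphs. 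The strategy is to prove, by induction on $j$, that each $\Hcal_j$ inherits the four properties needed to invoke \autoref{thm_perfectmatching}: it is hereditary, CMSO-definable, polynomial-time decidable, and admits a polynomial-time algorithm for the weighted (resp.\ unweighted) number of perfect matchings. Applying this with $j=k$ and then running the resulting $\pmm$-algorithm on the input graph $G\in\Hcal_k$ gives the corollary. Note that, in contrast to the coloring corollaries, this route needs neither union-closedness of $\Hcal$ nor an \FPT-algorithm for \pbdel, since $\pmm$ is a single number computed compositionally and no elimination sequence has to be produced.

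For the inductive step, assume $\Hcal_{j-1}$ has the four properties. CMSO-definability of $\Hcal_j$ is \autoref{obs_CMSO} applied to $\Hcal_{j-1}$ (the class of $\Hcal_{j-1}$-planar graphs is CMSO-definable whenever $\Hcal_{j-1}$ is). Polynomial-time decidability of $\Hcal_j$ is exactly \autoref{th_th} applied to $\Hcal_{j-1}$, whose hypotheses hold by the induction hypothesis. Heredity of $\Hcal_j$ is elementary: if $X$ is a planar $\Hcal_{j-1}$-modulator of $G$ and $G'=G[S]$, then $X':=X\cap S$ is a planar $\Hcal_{j-1}$-modulator of $G'$, because $\torso(G',X')$ is a subgraph of the induced subgraph $\torso(G,X)[X']$ of the planar graph $\torso(G,X)$ (hence planar), while each component of $G'-X'$ is an induced subgraph of a component of $G-X\in\Hcal_{j-1}$, hence in $\Hcal_{j-1}$ by the induction hypothesis. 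Finally, the polynomial-time $\pmm$-algorithm for $\Hcal_j$ is \autoref{thm_perfectmatching} applied to $\Hcal_{j-1}$: its four hypotheses are precisely the induction hypothesis, and its conclusion is a polynomial-time algorithm computing $\pmm$ on $\Hcal_{j-1}$-planar graphs, i.e.\ on $\Hcal_j$. One should check that the weighted/unweighted dichotomy is preserved: the unweighted version of \autoref{thm_perfectmatching} only calls an unweighted $\pmm$-oracle for the target class on graphs that are induced subgraphs of components of $G-S$ — hence in $\Hcal_{j-1}$ by heredity and still carrying no weights — together with the FKT algorithm on planar blocks.

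Unwinding the recursion, the $\pmm$-algorithm for $\Hcal_k$ has recursion depth $k$: a call at level $j$ (an instance of \autoref{thm_perfectmatching} with target $\Hcal_{j-1}$) first uses \autoref{cor_modulator} to extract a planar $\Hcal_{j-1}$-modulator $S$, then spawns polynomially many calls at level $j-1$ on the blocks lying in $\Hcal_{j-1}$ (which are $\Hcal_{j-1}$-planar by heredity, so \autoref{cor_modulator} still applies to them) and polynomially many FKT computations on planar blocks, finally combining the answers through the matchgate gadgets of \autoref{prop_pmm}. Since $k$ is fixed, the total number of recursive calls is $n^{\Ocal(k)}$ and the overall running time is polynomial.

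I expect no serious obstacle here; the two points requiring care are (i) recognizing that ``bounded $\Hcal$-planar treedepth'' is literally the $k$-fold composition $\Pcal\triangleright\cdots\triangleright\Pcal\triangleright\Hcal$, so that everything reduces to showing the four properties are stable under the operation $\Hcal'\mapsto\{\,\Hcal'\text{-planar graphs}\,\}$, and (ii) checking heredity of $\Hcal'$-planar from heredity of $\Hcal'$, which is short but does use that planarity is closed under subgraphs. The remaining three stability claims are direct black-box applications of \autoref{obs_CMSO}, \autoref{th_th}, and \autoref{thm_perfectmatching}, respectively.
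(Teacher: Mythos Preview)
Your approach is correct and is essentially the same as the paper's: the paper states the corollary with the one-line justification ``by repetitive applications of \autoref{thm_perfectmatching}'', and your inductive argument is precisely the verification that such repetition is legitimate---namely, that heredity, CMSO-definability, polynomial-time decidability, and a polynomial-time $\pmm$-algorithm all survive the passage $\Hcal_{j-1}\mapsto\Pcal\triangleright\Hcal_{j-1}$. Your observation that this route avoids \autoref{cor:tdseq} (and hence does not need union-closedness or an \FPT\ algorithm for \pbdel) is a nice sharpening of the paper's terse statement.
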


If $G$ has $\Hcal$-planar treewidth at most $k$, even given a $\Pcal^{k}\triangleright\Hcal$-modulator of $G$, computing the number of perfect matchings is more difficult.
Still, combining \cref{thm_perfectmatching} with the dynamic programming approach of \cite{ThilikosW22}, it is easy to derive  an  algorithm that, given a weighted graph, computes the weighted  number of its perfect matchings in time $n^{O(k)}$.

\begin{theorem}
Let $\Hcal$ be a hereditary, CMSO-definable, and polynomial-time decidable graph class. Moreover, assume that the weighted (resp. unweighted) number of perfect matchings  $\pmm(H)$ is computable in polynomial time for graphs in $\Hcal$. Then, there exists an algorithm that, given 
$\Hcal\mbox{-}\ptw(G)\leq k$,
computes its weighted (resp. unweighted) number of perfect matchings $\pmm(G)$ in time $n^{O(k)}$.
\end{theorem}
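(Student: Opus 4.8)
The plan is to run a bottom-up dynamic program over a tree decomposition certifying $\Hcal\mbox{-}\ptw(G)\le k$, combining the matchgate replacement used in \autoref{thm_perfectmatching} with the treewidth dynamic programming of \cite{ThilikosW22}. From the input (or, when its hypotheses hold, via \cref{cor:twmodulator}) I would first fix a $\Pcal\Tcal_k\triangleright\Hcal$-modulator $S$ of $G$ together with a tree decomposition $(T,\beta)$ of $\torso(G,S)$ of planar width at most $k$; by \autoref{lem_adh3} I may assume that $(T,\beta)$ has adhesion at most three and that, for every $t\in V(T)$, the graph $\torso(\torso(G,S),\beta(t))$ is quasi-$4$-connected and is either planar or of treewidth at most $k$. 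Exactly as in the proof of \autoref{thm_perfectmatching}, a short preprocessing (taking products over connected components, splitting at cut vertices, returning $0$ when $|V(G)|$ is odd) and restricting $S$ and $(T,\beta)$ accordingly lets me assume that $G$ is $2$-connected, since $\Hcal\mbox{-}\ptw$ does not increase under these operations.

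Next I would root $T$ and assign each component $C\in\cc(G-S)$ to the node $t$ closest to the root with $N_G(V(C))\subseteq\beta(t)$; such a $t$ exists because $N_G(V(C))$ is a clique of $\torso(G,S)$, which by \autoref{obs_clique} has size at most $\max\{k+1,4\}$. For a node $t$ with parent-adhesion $A_t$ of size at most three, let $G_t$ be the subgraph of $G$ induced by $\bigcup_{t'\text{ in the subtree of }t}\beta(t')$ together with all components assigned to that subtree, let $G_t^{-}$ be $G_t$ with the edges inside $A_t$ deleted, and define $p_t\colon 2^{A_t}\to\Nbbb$ by $p_t(\gamma)=\pmm(G_t^{-}-\gamma)$; the output is $p_r(\emptyset)$ for the root $r$. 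Once all children of $t$ are processed I would compute $p_t$ in three steps. First, replace each child subtree by the planar matchgate of \autoref{prop_pmm} built from that child's table, which is legal because child-adhesions have size at most three and preserves all $\pmm$-values. Second, for each component $C$ assigned to $t$ compute, for every $\gamma\subseteq N_G(V(C))$, the quantity $\pmm\big(G[V(C)\cup N_G(V(C))]-E_{\mathrm{cl}}-\gamma\big)$, where $E_{\mathrm{cl}}$ is the set of edges with both endpoints in $N_G(V(C))$, by guessing in $n^{O(k)}$ ways how the at most $k+1$ vertices of $N_G(V(C))\setminus\gamma$ are matched into $C$ and applying the polynomial-time $\pmm$-algorithm for $\Hcal$ to the remaining graph, which lies in $\Hcal$ by heredity. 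Third, fold these gadgets and contribution vectors into the core graph of the bag and evaluate the at most eight numbers $p_t(\gamma)$: if $\torso(\torso(G,S),\beta(t))$ is planar I would recurse with \autoref{cl_compat} as in \autoref{thm_perfectmatching} to absorb the components and matchgates (all attaching to cliques of size at most three after the disk-nesting of \autoref{cl_compat}), reducing the core to a planar graph on which I run the FKT/Kasteleyn algorithm after deleting $\gamma$; if $\torso(\torso(G,S),\beta(t))$ has treewidth at most $k$ I would instead run the standard treewidth dynamic program for counting (weighted) perfect matchings, whose states record the matched/unmatched pattern on the at most $k+1$ vertices of each of its bags, incorporating each matchgate at the bag holding its $(\le 3)$-clique and each component vector at the bag holding the $(\le k{+}1)$-clique $N_G(V(C))$.

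Correctness follows by induction on $T$ from the defining identity of $p_t$ together with the correctness of \autoref{prop_pmm}, of the FKT algorithm on planar graphs, and of the treewidth dynamic program. For the running time, there are $O(n)$ nodes, the matchgate substitutions and the preprocessing are polynomial, and each node is handled in $n^{O(k)}$ time, the bottleneck being the component evaluations of the second step inside a treewidth-bag and the $2^{O(k)}\cdot n^{O(1)}$ cost of the treewidth dynamic program; the total is therefore $n^{O(k)}$.

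The step I expect to be the main obstacle is the third step at nodes whose bag-torso has treewidth at most $k$. Since \autoref{prop_pmm} only supplies planar gadgets for boundary size two or three, a component of $G-S$ attaching to a clique of size between $4$ and $k+1$ cannot be replaced by a constant-size gadget, so one must instead carry its full $2^{O(k)}$-entry contribution vector through the bag's treewidth dynamic program in a consistent way, being careful about the edges of $G[S]$ inside that clique and about the interaction with the parent-adhesion $A_t$. This coupling of many bounded-size cliques with the treewidth state is precisely what forces the $n^{O(k)}$ factor and what needs the most care to make rigorous.
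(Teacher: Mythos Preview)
Your proposal is essentially the approach the paper intends: the paper does not give a proof of this statement but only records that it follows by ``combining \cref{thm_perfectmatching} with the dynamic programming approach of \cite{ThilikosW22}'', and your plan is precisely such a combination. Your use of \autoref{lem_adh3} to force adhesions of size at most three (so that \autoref{prop_pmm} applies at every parent--child interface), together with the case split at each bag into the planar case handled as in \autoref{thm_perfectmatching} and the bounded-treewidth case handled by a standard matching DP, is the natural execution of that hint, and your identification of the $n^{O(k)}$ bottleneck (enumerating matches for the $\le k{+}1$ neighbours of a component assigned to a treewidth bag) is accurate.

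Two small points to tighten. First, \cref{cor:twmodulator} needs $\Hcal$ to be union-closed and \pbdel to be \FPT, hypotheses the theorem does not state; you are right to hedge this, and indeed the paper's surrounding discussion (``Suppose now that we have a tree decomposition certifying \ldots'') makes clear that the decomposition is assumed to be part of the input here. Second, for components $C$ assigned to a \emph{planar} bag you should note explicitly that $N_G(V(C))$ is a clique in the planar torso and hence has size at most four, so that the disk-nesting argument via \autoref{cl_compat} from the proof of \autoref{thm_perfectmatching} applies verbatim; otherwise the reader might worry that a component with a large neighbourhood could land in a planar bag.
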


\subsection{EPTAS for Independent Set}
\label{baker_more_r}
Baker's technique \cite{Baker94appr} provides PTAS and EPTAS  for many optimization problems on planar graphs. This method is based on the fact that planar graphs have bounded \emph{local treewidth}. In particular,  for every planar graph $G$ and every vertex $v\in V(G)$, $\tw(N_G^r[v])=\Ocal(r)$, where $N_G^r[v]$ is the \emph{closed $r$-neighborhood} of $v$, that is, the set of vertices at distance at most $r$ from $v$
(see \cite{RobertsonS84III}).

Given that the torso of a planar $\Hcal$-modulator $X$ is planar, we extend Baker's technique to $\Hcal$-planar graphs, by replacing the role of treewidth with \emph{$\Hcal$-treewidth}.

\begin{definition}[$\Hcal$-treewidth.]
Let $\Hcal$ be a graph class and $G$ be a graph.
A \emph{tree $\Hcal$-decomposition} of $G$ is a triple $\Tcal=(T,\beta,L)$ where $(T,\beta)$ is a tree decomposition of $\torso(G,X)$ for some set $X\subseteq V(G)$ such that each connected component in $L:=\cc(G-X)$ belongs to $\Hcal$.
The width of $\Tcal$ is the width of $(T,\beta)$.
The $\Hcal$-treewidth of $G$ is the minimum width of a tree $\Hcal$-decomposition of $G$,
 or,
in other words, the
minimum treewidth of the torso of a set $X\subseteq V(G)$ such that each component of $G-X$ belongs to $\Hcal$. 
\end{definition}

As mentioned in the introduction,  $\Hcal$-treewidth has been defined by Eiben, Ganian,  Hamm,  and  Kwon \cite{EibenGHK21}. Its algorithmic properties have been studied 
in \cite{EibenGHK21,JansenK021verte,AgrawalKLPRSZ22}.

\medskip
Let us explain how the EPTAS works on {\sc Independent Set}.
The problem of {\sc Independent Set} asks for an \emph{independent set} of maximum size, that is a set $S$ such that no edge of $G$ has both endpoints in $S$.
The following is proved in \cite[Theorem 3.16]{JansenK021verte} (it is proven for the dual problem of {\sc Vertex Cover}).

\begin{proposition}[\!\!  \cite{JansenK021verte}]\label{prop_IS}
Let $\Hcal$ be a hereditary graph class on which {\sc Independent Set} is polynomial-time solvable.
Then {\sc Independent Set} can be solved in time $2^k\cdot n^{\Ocal(1)}$ when given a tree $\Hcal$-decomposition of width at most $k-1$ consisting of $n^{\Ocal(1)}$ nodes.
\end{proposition}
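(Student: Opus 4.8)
The plan is to run the classical bottom-up dynamic programming for {\sc Independent Set} along the given tree decomposition of $\torso(G,X)$, and to incorporate the components of $L=\cc(G-X)$ by a single black-box call, per component and per ``boundary pattern'', to the assumed polynomial-time algorithm for {\sc Independent Set} on $\Hcal$; heredity of $\Hcal$ is exactly what makes these calls legitimate. First I would set things up. Let $\Tcal=(T,\beta,L)$ be the input tree $\Hcal$-decomposition, so $(T,\beta)$ is a tree decomposition of $\torso(G,X)$ of width at most $k-1$ (bags of size at most $k$) with $n^{\Ocal(1)}$ nodes, and each $C\in L$ lies in $\Hcal$. In polynomial time I would refine $(T,\beta)$ into a nice tree decomposition of $\torso(G,X)$, rooted at a node $r$ with $\beta(r)=\emptyset$, still of width at most $k-1$ and with $n^{\Ocal(1)}$ nodes; this leaves $X$, and hence $L$, untouched. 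For each $C\in L$, the set $N_G(V(C))$ is a clique of $\torso(G,X)$ by definition of the torso, so it is contained in some bag $\beta(t)$; fix one such node $t_C$. In particular $|N_G(V(C))|\le k$.

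Next I would precompute, for every $C\in L$ and every $J\subseteq N_G(V(C))$, the number $a_C(J):=\alpha\bigl(G[V(C)]-(N_G(J)\cap V(C))\bigr)$, i.e.\ the largest size of an independent set of $G$ contained in $V(C)$ and avoiding all neighbours of $J$. Since $G[V(C)]-(N_G(J)\cap V(C))$ is an induced subgraph of $C\in\Hcal$ and $\Hcal$ is hereditary, it lies in $\Hcal$, so $a_C(J)$ is computable in polynomial time. As there are at most $2^{|N_G(V(C))|}\le 2^k$ sets $J$ per component, and the $V(C)$ are disjoint, the whole precomputation costs $\sum_{C\in L}2^k\cdot|V(C)|^{\Ocal(1)}\le 2^k\cdot n^{\Ocal(1)}$.

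Then I would run the usual DP with states $(t,I)$, where $t\in V(T)$ and $I\subseteq\beta(t)$ is an independent set of $G[\beta(t)]$ (of $G$, \emph{not} of the torso, since torso edges need not be present in $G$), giving at most $2^k$ states per node. Let $f(t,I)$ be the maximum size of an independent set $S$ of $G$ with $S\cap\beta(t)=I$ that is contained in $G\bigl[\bigcup_{t'\in T_t}\beta(t')\bigr]$ together with all $C\in L$ whose attachment node $t_C$ lies in $T_t$, the subtree of $T$ rooted at $t$. The leaf, introduce-vertex, forget-vertex and join recurrences are the classical ones for {\sc Independent Set} on a nice tree decomposition (with a $-|I|$ correction at joins to avoid double-counting bag vertices), except that at each node $t$ one additionally adds $\sum_{C:\ t_C=t}a_C\bigl(I\cap N_G(V(C))\bigr)$; since each $C$ has a unique $t_C$, its contribution is counted exactly once over the whole tree. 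The answer is $\alpha(G)=\max_I f(r,I)$. The crux of correctness is the decoupling of component contributions: $N_G(V(C))$ separates $V(C)$ from the rest of $G$ and is contained in $\beta(t_C)$, so once the trace $I\cap N_G(V(C))$ of the global solution on that set is fixed, the best possible contribution of $V(C)$ is, by definition, exactly $a_C(I\cap N_G(V(C)))$, independently of all other choices; the remainder is the routine induction on $T$.

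For the running time: after the $2^k\cdot n^{\Ocal(1)}$ precomputation (including tabulating the sums $\sum_{C:\ t_C=t}a_C(\cdot)$ at each node), the DP table has $n^{\Ocal(1)}$ nodes with at most $2^k$ states each, and in a nice tree decomposition every recurrence is $\Ocal(1)$ per state, so the DP runs in $2^k\cdot n^{\Ocal(1)}$, as claimed. (The identical argument handles the vertex-weighted version, replacing cardinalities by weights everywhere.) I do not expect a genuinely hard step here; the part requiring the most care is precisely the decoupling above---verifying that $N_G(V(C))$ sits in a single bag and cuts $C$ off from the rest, so that conditioning only on the pattern over $N_G(V(C))$ suffices, and then invoking heredity of $\Hcal$ to reduce each residual component graph to one polynomial-time oracle call---together with the standard bookkeeping of attaching each $C$ to a unique node and not double-counting vertices of $\beta(t)\cap\beta(t')$ at join and forget nodes.
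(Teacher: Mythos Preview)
The paper does not give its own proof of this proposition; it is quoted from \cite[Theorem~3.16]{JansenK021verte} (stated there for the dual problem {\sc Vertex Cover}). Your argument is a correct reconstruction of that standard approach---nice tree decomposition of $\torso(G,X)$, attach each $C\in L$ to a bag containing the clique $N_G(V(C))$, precompute the at most $2^k$ values $a_C(\cdot)$ per component via the $\Hcal$-oracle (heredity ensures the residual graphs lie in $\Hcal$), and fold them in once at the attachment node---so there is nothing to compare against beyond noting that your write-up matches the intended dynamic programming.
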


With  \Cref{cor_modulator} on hands, the proof of the following theorem is almost identical to Baker's style algorithms for planar graphs. The main observation is that the supergraph of an input $\Hcal$-planar graph $G$ obtained by adding to $G$ all edges of the torso of its planar modulator, is of bounded local  $\Hcal$-treewidth.

\independent*

\begin{proof}
Let $G$ be an $\Hcal$-planar graph. We assume without loss of generality that $G$ is a connected graph and $\varepsilon<1$.
By \Cref{cor_modulator}, there is a polynomial-time algorithm computing a planar $\Hcal$-modulator
 $X$ of $G$. 
 Let $v$ be an arbitrary vertex in $X$.
For $i\in\Nbbb$, let $L_i$ denote the set of vertices of $X$ at distance $i$ of $v$ in $\torso(G,X)$.
Observe that the endpoints of an edge in $\torso(G,X)$ belong to at most two (consecutive) $L_i$s.
Given that, for $C\in\cc(G-X)$, $N_G(V(C))$ induces a clique in $\torso(G,X)$, it therefore implies that the vertices of $N_G(V(C))$ belong to at most two (consecutive) $L_i$s.

Let $k$ be the smallest integer such that $2/k\le\varepsilon$.
For $i\in[0,k-1]$, let $S_i$ be the union of the $L_j$ for which $j$ is equal to $i$ (mod $k$).
Let $X_i:=X\setminus S_i$ and let $\Ccal_i$ be the set of components $C\in\cc(G-X)$ such that $N_G(V(C))\subseteq X_i$.
Let $G_i$ be the induced subgraph of  $G$ induced by $X_i$ and $\Ccal_i$. 
Given that the connected components of $G_i-X_i$ are connected components of $G-X$, it implies that $\torso(G_i,X_i)$ is a subgraph of $\torso(G,X)-S_i$, and is thus planar.
Therefore, as proven in \cite[Corollary 7.34]{cygan2015parameterized} (see also \cite[(2.5)]{RobertsonS84III}), 
$\torso(G_i,X_i)$ has treewidth $\Ocal(k)$ and there is a polynomial-time algorithm constructing a tree decomposition  $(T_i,\beta_i)$ of $\torso(G_i,X_i)$ of width $\Ocal(k)$.  
Therefore, $(T_i,\beta_i,\Ccal_i)$ is a tree $\Hcal$-decomposition of $\torso(G_i,X_i) \cup G[\Ccal_i]$, and hence of  $G_i$,  of width   $\Ocal(k)$.
We apply \autoref{prop_IS} to find in time $2^{\Ocal(k)}\cdot n^{\Ocal(1)}=2^{\Ocal(1/\varepsilon)}\cdot n^{\Ocal(1)}$ a maximum independent set $I_i$ of $G_i$.
Note that $I_i$ is also an independent set of $G$.
We return as a solution the maximum size set $I_i$  for $i\in[0, k-1]$.

Let $I^*$ be a maximum independent set of $G$, of size $|I^*|=\alpha(G)={\rm OPT}$.
For $i\in[0,k-1]$, let $\Dcal_i:=\cc(G-X)\setminus\Ccal_i$ and let $S_i':=S_i\cup V(\Dcal_i)$.
Let $C\in\cc(G-X)$.
Given that the vertices of $N_G(V(C))$ belong to at most two (consecutive) $L_i$s, it implies that $C$ belongs to at most two (consecutive) $\Dcal_i$s.
Additionally, the $S_i$s partition $X$.
Therefore, 
$$\sum_{i=0}^{k-1}|S_i'|=|X|+\sum_{i=0}^{k-1}|V(\Dcal_i)|\le|X|+2|V(G)\setminus X|=2|V(G)|-|X|\le2|V(G)|.$$
Hence, there is $j\in[0,k-1]$ such that $|I^*\cap S_j'|\le2|I^*|/k$.
Since $I^*\setminus S_j'$ is an independent set in $G_j$, we have the following:
$$|I_j|\ge|I^*\setminus S_j'|=|I^*|-|I^*\cap S_j'|\ge(1-2/k)|I^*|\ge(1-\varepsilon){\rm OPT}.$$
Thus, the algorithm returns a solution of size at least $(1-\varepsilon){\rm OPT}$ in time $2^{\Ocal(1/\varepsilon)}\cdot n^{\Ocal(1)}$.
\end{proof}

As with the previous applications, it is possible to generalize the technique for the graphs of bounded $\Hcal$-planar treewidth or treedepth. We sketch our algorithms in the following two corollaries.

\begin{corollary}\label{cor:indtw}
Let $\Hcal$ be a hereditary, CMSO-definable, and polynomial-time decidable graph class.
Moreover, assume that the maximum independent set  is computable in polynomial time for graphs in $\Hcal$. 
Then, there exists a polynomial-time algorithm that, given $\varepsilon>0$
and a graph $G$ with $\Hcal\mbox{-}\ptw(G)\leq k$, 
computes in time $2^{\Ocal(\max\{k,1/\varepsilon\})}|V(G)|^{\Ocal(1)}$ an independent set of $G$ of size at least $(1-\varepsilon)\cdot\alpha(G)$.
\end{corollary}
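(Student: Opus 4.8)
The plan is to mimic the proof of \autoref{lem_IS} for $\Hcal$-planar graphs, but now starting from a decomposition certifying $\Hcal\mbox{-}\ptw(G)\le k$ instead of a planar $\Hcal$-modulator. First I would invoke \cref{cor:twmodulator}: since $\Hcal$ is hereditary, CMSO-definable, polynomial-time decidable, and (because {\sc Independent Set} is polynomial on $\Hcal$, hence $\Hcal$ is union-closed up to the usual disjoint-union argument, but more cleanly one assumes union-closedness as part of the hypothesis, as in \cref{cor:twmodulator}) union-closed, and since $\Hcal$-\pbdel\ is \FPT\ (this follows from \autoref{prop1} together with an \FPT\ algorithm for $\Hcal$-\textsc{Deletion}, which we may assume), we obtain in polynomial time a $\Pcal\Tcal_k\triangleright\Hcal$-modulator $S$ of $G$ together with a tree decomposition $\Tcal=(T,\beta)$ of $\torso(G,S)$ of planar width at most $k$. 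So each bag $\beta(t)$ is either of size at most $k+1$ or has a planar torso, and each component of $G-S$ lies in $\Hcal$.

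Next I would run the Baker-style layering, but on each planar bag of $\Tcal$ separately. Pick an arbitrary root bag; for a planar bag $\beta(t)$ fix a planar embedding of $\torso(G,\beta(t))$ and an arbitrary vertex $v_t$, and let $L^t_i$ be the vertices of $\torso(G,\beta(t))$ at distance $i$ from $v_t$ inside that torso. Fix $\ell$ with $2/\ell\le\varepsilon$. For each residue $j\in[0,\ell-1]$, delete from every planar bag the union $S^t_j$ of the layers $L^t_i$ with $i\equiv j\pmod{\ell}$, leaving the small bags untouched; call the resulting modulator $X_j$. As in the planar-graph case, $\torso(G,\beta(t))-S^t_j$ has treewidth $\Ocal(\ell)$ for each planar bag, and the small bags have size $\le k+1$, so the global torso $\torso(G,X_j)$ inherits a tree decomposition of width $\Ocal(k+\ell)$ (glue the per-bag decompositions along the tree $T$, adding the adhesion sets, which are cliques of size $\le\max\{k+1,4\}$ in $\torso(G,X_j)$, to the relevant bags). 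This gives a tree $\Hcal$-decomposition of the induced subgraph $G_j$ obtained from $X_j$ together with all components of $G-S$ whose neighborhood survives in $X_j$, of width $\Ocal(k+\ell)=\Ocal(\max\{k,1/\varepsilon\})$ and with $n^{\Ocal(1)}$ nodes. Apply \autoref{prop_IS} to compute a maximum independent set $I_j$ of $G_j$ in time $2^{\Ocal(k+\ell)}n^{\Ocal(1)}=2^{\Ocal(\max\{k,1/\varepsilon\})}n^{\Ocal(1)}$; each $I_j$ is independent in $G$. Return the largest $I_j$.

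For correctness, let $I^*$ be a maximum independent set of $G$ with $|I^*|=\alpha(G)$. Each vertex of $X$ lies in exactly one residue class $S^{t}_j$ of its bag, and each component $C$ of $G-S$ touches the layers of only finitely many bags, but—crucially—within a single bag its neighborhood clique lies in at most two consecutive layers, so $C$ is ``killed'' for at most $\Ocal(1)$ residues $j$ per bag it meets, and a summation over $j$ and over bags (using that the $S^t_j$ partition each $\beta(t)$ and that each component is counted $\Ocal(1)$ times) shows $\sum_{j=0}^{\ell-1}|S'_j|\le\Ocal(|V(G)|)$ where $S'_j$ collects the deleted modulator vertices together with the vertices of components removed at stage $j$. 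Hence some $j$ has $|I^*\cap S'_j|\le\Ocal(1)\cdot|I^*|/\ell$, and since $I^*\setminus S'_j$ is independent in $G_j$ we get $|I_j|\ge(1-\Ocal(1)/\ell)|I^*|\ge(1-\varepsilon)\alpha(G)$ after absorbing the constant into the choice $2/\ell\le\varepsilon$ (rescaling $\ell$ by the hidden constant). The main obstacle I anticipate is bookkeeping the interaction between the tree-decomposition structure of $\torso(G,S)$ and the per-bag layering: one must check that removing layers independently in each planar bag still yields a single global tree $\Hcal$-decomposition of bounded width (the adhesion cliques behave well because they have size $\le\max\{k+1,4\}$), and that a component of $G-S$ whose neighborhood spans several bags is only destroyed in $\Ocal(1)$ residue classes overall, not $\Ocal(|T|)$ of them—this needs the fact that such a neighborhood is a clique in $\torso(G,S)$ and hence lives in a single bag of $\Tcal$, reducing the count to the two-consecutive-layers argument inside that one bag.
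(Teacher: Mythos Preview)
Your proposal is correct and follows essentially the same approach as the paper's own sketch: invoke \cref{cor:twmodulator} to get the modulator $S$ and a tree decomposition of $\torso(G,S)$ of planar width at most $k$, perform Baker-style BFS layering independently inside each planar bag, delete one residue class of layers, and observe that the resulting torso has treewidth $\Ocal(\max\{k,1/\varepsilon\})$ so that \autoref{prop_IS} applies. The paper phrases the ``crucial observation'' exactly as you do at the end---each adhesion set of a planar bag is a clique (hence of size $\le 4$ by planarity) and therefore sits in at most two consecutive BFS layers of that bag---which is what makes the per-bag decompositions glue together and bounds how often a component of $G-S$ is killed; your anticipated bookkeeping concerns are precisely the ones the paper's sketch waves at.
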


\begin{proof}[Sketch of the proof]
We assume without loss of generality that $G$ is a connected graph and $\varepsilon<1$.
We use \cref{cor:twmodulator} to find a $\Pcal\Tcal_k\triangleright\Hcal$-modulator $S$ of $G$ with a tree decomposition $\Tcal$ of $\torso_G(S)$ of planar width at most $k$.
Let $X_1,\ldots,X_\ell$ be the bags of the decomposition such that $G_i=\torso_G(X_i)$ is a planar graph for $i\in[\ell]$. Because $G$ is connected, each $G_i$ is a connected graph. 
We select arbitrarily $\ell$ vertices $v_i\in X_i$ for $i\in[\ell]$. For each $i\in[\ell]$ and every integer $j\geq 0$, we denote by  $L_j^i$ the set of vertices of $X_i$ at distance $j$ from $v_i$ in $G_i$.  Let $h\geq 2$ be the smallest integer such that $2/h\leq \varepsilon$.   
For $j\in[0,h-1]$, let $S_j$ be $\bigcup_p\bigcup_{i=1}^\ell L_p^i$ where the first union is taken over all $p\geq 0$ for which  $p$ is equal to $i$ (mod $h$). 
For each $j\in[0,h-1]$, let $Y_j=X\setminus S_j$. The crucial observation is that, because each $G_i$ is planar, the adhesion set of $X_i$ with other bags in  $\Tcal$ is in at most two consecutive sets $L_q^i$ and $L_{q+1}^i$ for some $j\geq 0$. This implies that $\tw(G[Y_i])=\Ocal(\max\{h,k\})$. This implies that we can apply the same Baker's style arguments as in the proof of \cref{lem_IS}. This concludes the sketch of the proof.
\end{proof}

\begin{corollary}\label{cor:indtd}
Let $\Hcal$ be a hereditary, CMSO-definable, and polynomial-time decidable graph class. Moreover, assume that the maximum independent set  is computable in polynomial time for graphs in $\Hcal$. Then, there exists a polynomial-time algorithm that, given $\varepsilon>0$, a graph $G$ with $\Hcal\mbox{-}\ptd(G)\leq k$,
computes in time $2^{\Ocal(k+1/\varepsilon)}|V(G)|^{\Ocal(1)}$ an independent set of $G$ of size at least $(1-\varepsilon)\cdot\alpha(G)$.
\end{corollary}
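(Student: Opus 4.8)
\textbf{Proof proposal for \Cref{cor:indtd}.}
The plan is to mimic the proof of \Cref{lem_IS}, but replacing the single planar modulator by the whole certifying elimination sequence. First, I would invoke \Cref{cor:tdseq} to compute in polynomial time a certifying elimination sequence $X_1,\dots,X_k$ for $G$, together with the graphs $G_1:=G$ and $G_{i+1}:=G_i-X_i$ for $i\in[k]$, so that $\torso(G_i,X_i)$ is planar for every $i\in[k]$. The point is that every vertex of $G$ is classified according to the first index $i$ at which it leaves, and $X_i\subseteq V(\torso(G_i,X_i))$ is a planar modulator of $G_i$ in the sense of \Cref{lem_IS}.

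Next I would apply a Baker-style layering \emph{simultaneously} in all $k$ planar torsos. Assume without loss of generality that $G$ is connected and $\varepsilon<1$, and let $h\ge 2$ be the smallest integer with $2/h\le\varepsilon$. Pick a root vertex $v_i$ in each connected component of $\torso(G_i,X_i)$ (there are at most $n$ components over all $i$), and let $L^i_j$ be the set of vertices of $X_i$ at distance $j$ from $v_i$ in $\torso(G_i,X_i)$. As observed in the proof of \Cref{lem_IS}, because $N_{G_i}(V(C))$ is a clique in $\torso(G_i,X_i)$ for each component $C\in\cc(G_i-X_i)$, the neighbourhood of such a component lies in at most two consecutive layers $L^i_j,L^i_{j+1}$. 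For a residue $p\in[0,h-1]$, I would set $S_p:=\bigcup_{i\in[k]}\bigcup_{j\equiv i+p\,(\mathrm{mod}\,h)}L^i_j$, remove $S_p$ from the vertex set to get $G^p:=G-S_p$, and argue that $G^p$ has \hptd{$\Hcal$} $\Ocal(k+h)$: the elimination sequence $X_1\setminus S_p,\dots,X_k\setminus S_p$ still has planar torsos, and deleting a residue class of layers from a planar graph rooted at bounded-diameter balls leaves a planar graph of treewidth $\Ocal(h)$ by the local-treewidth bound \cite[Corollary 7.34]{cygan2015parameterized}, so each torso in fact becomes treewidth $\Ocal(h)$, giving a tree $\Hcal$-decomposition of $G^p$ of width $\Ocal(k+h)$ (the $k$ planar ``layers'' each contribute $\Ocal(h)$ to the width, stacked into a tree decomposition of the combined torso as in \Cref{lem_twrend1}). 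Then I would run \Cref{prop_IS} on this tree $\Hcal$-decomposition to compute a maximum independent set $I_p$ of $G^p$ in time $2^{\Ocal(k+h)}\cdot n^{\Ocal(1)}=2^{\Ocal(k+1/\varepsilon)}\cdot n^{\Ocal(1)}$, and output the largest $I_p$ over $p\in[0,h-1]$.

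For the approximation guarantee I would repeat the averaging argument of \Cref{lem_IS}: a fixed optimal independent set $I^*$ is hit by $\bigcup_p S_p$ with total multiplicity at most $2\cdot|V(G)|$ (each vertex of $X_i$ is counted once in some $S_p$, and the ``padding'' vertices of $G-X_i$ that are swept along with a component belong to at most two of the classes), so there is a residue $j$ with $|I^*\cap S_j|\le 2|I^*|/h\le\varepsilon\,\alpha(G)$, and then $I^*\setminus S_j$ is an independent set of $G^j$, whence $|I_j|\ge|I^*|-\varepsilon\,\alpha(G)=(1-\varepsilon)\alpha(G)$.

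\textbf{Main obstacle.} The subtle point, and the step I expect to require the most care, is controlling the treewidth of the combined torso of $G^p$. Deleting the residue class of layers simultaneously in all $k$ planar torsos $\torso(G_i,X_i)$ is fine layer-by-layer, but one must verify that the local-treewidth truncation is compatible across the different levels — i.e.\ that removing $S_p$ (a single global vertex set) yields, for \emph{every} $i$, a graph $\torso(G_i,X_i)-S_p$ of treewidth $\Ocal(h)$, and that these $k$ bounded-treewidth torsos assemble into one tree $\Hcal$-decomposition of $G^p$ of width $\Ocal(k+h)$ rather than $\Ocal(kh)$, by nesting the decompositions along the elimination order (each deeper level's pieces hang below the bags of the shallower level that contain their neighbourhoods, exactly as bags are glued in \Cref{lem_twrend1}). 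Once this bookkeeping is set up, the rest is the standard Baker averaging already carried out in \Cref{lem_IS}.
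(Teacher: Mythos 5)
Your proposal is correct and follows essentially the same route as the paper's own (sketched) proof: compute the certifying elimination sequence via \Cref{cor:tdseq}, perform Baker layering simultaneously in the connected components of all $k$ planar torsos, observe that deleting one residue class of layers leaves a tree $\Hcal$-decomposition of width $\Ocal(k+1/\varepsilon)$ (the $+\Ocal(k)$ coming from nesting the level-$i$ decompositions below bags containing the size-$\le 4$ clique neighbourhoods), and conclude with \Cref{prop_IS} and the standard averaging. Your explicit treatment of the width-$\Ocal(k+h)$ assembly is in fact slightly more detailed than the paper's sketch, which merely asserts $\tw(G[Z_t])=\Ocal(s+k)$.
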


\begin{proof}[Sketch of the proof]
The main idea is the same as in \cref{cor:indtw}. 
We assume without loss of generality that $G$ is a connected graph and $\varepsilon<1$. 
We use~\cref{cor:tdseq} to find a certifying elimination sequence $X_1,\dots,X_k$.
For $i\in[k]$ and $j\in[\ell_i]$, denote by $Y_{ij}\subseteq X_i$  the inclusion maximal subsets of $X_i$ such that  $G_{ij}=\torso_{G-\cup_{h=1}^{i-1}X_h}(Y_{ij})$ is connected. We arbitrarily select vertices $v_{ij}\in Y_{ij}$ for $i\in[k]$ and $j\in[\ell_i]$.  For $i\in[k]$, $j\in[\ell_i]$, and each integer $h\geq 0$, denote by $L_h^{ij}$ the set of vertices of $Y_{ij}$ at distance $h$ from $v_{ij}$ in $G_{ij}$. 
Let $s\geq 2$ be the smallest integer such that $2/s\leq \varepsilon$.  For $t\in[0,s-1]$, let $S_t$ be $\bigcup_p\bigcup_{i=1}^k\bigcup_{j=1}^{\ell_i} L_p^{ij}$
where the first union is taken over all $p\geq 0$ for which  $p$ is equal to $i$ (mod $s$). 
For each $t\in[0,s-1]$, let $Z_t=X\setminus S_t$. Now we have that $\tw(G[Z_t])=\Ocal(s+k)$. Using the same arguments as in \cref{lem_IS} and \cref{cor:indtw}, we show the desired approximation.  This concludes the sketch of the proof.
\end{proof}

Baker's style arguments and other approaches apply to many optimization problems on planar graphs. 
For instance, dynamic programming algorithms on graphs of bounded $\Hcal$-treewidth,  as  the one of \Cref{prop_IS}, exist for other problems (see \cite{JansenK021verte,EibenGHK21}) that are amenable to Baker's technique.
Investigating which of these problems admit good approximations on
 $\Hcal$-planar graphs and whether there are meta-algorithmic theorems in the style of \cite{DemaineFHT04bidimen, FominLS18exclude,Grohe03loc}, is an interesting research direction that goes beyond the scope of this paper.

\section{Necessity of conditions}\label{sec_lower}
In our algorithmic results, we require that the considered graph classes $\Hcal$ should satisfy certain properties. In this section, we discuss tightness of these properties.  
First, we show that  
the heredity condition for $\Hcal$ is crucial for the existence of a polynomial-time algorithm for  \Hpl. For this, we prove that \Hpl is \NP-hard  even if $\Hcal$ consists of a single  graph.  

\begin{theorem}\label{thm_lower}
Let $\Hcal$ be the class consisting of the complete graph on four vertices. Then \Hpl{} is \NP-complete.
\end{theorem}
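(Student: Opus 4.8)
Membership in \NP\ is immediate: a certificate is the planar $\Hcal$-modulator $X$ together with a planar embedding of $\torso(G,X)$; one checks in polynomial time that the torso is planar and that every component of $G-X$ is isomorphic to $K_4$. For \NP-hardness, the plan is to reduce from a suitable \NP-hard problem on planar-like structures — most naturally from a variant of \textsc{Planar 3-SAT} or from \textsc{3-Coloring} / \textsc{Planar Vertex Cover}, but the cleanest route is probably a reduction from the problem of deciding whether a graph admits a partition of its vertex set into parts inducing prescribed hereditary-free structures. Since $\Hcal=\{K_4\}$ is \emph{not} hereditary (it contains $K_4$ but not $K_3$), the key leverage is that placing a vertex \emph{outside} $X$ forces an entire $K_4$ to sit outside $X$, and this ``all-or-nothing'' behavior can be used to encode Boolean choices; this is exactly the propagation effect alluded to in the sketch of \autoref{lem_small_leaves} and in the discussion preceding \autoref{thm_lower}.

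\textbf{Key steps.} First I would design a \emph{choice gadget}: a small graph, attached to the rest of the construction through a bounded-size interface, such that in any valid solution the gadget must be split as $X$-vertices versus $K_4$-components in one of exactly two ways, corresponding to \textsf{true}/\textsf{false}. The natural building block is a copy of $K_4$ (or several copies sharing structure) wired so that if one of its vertices is forced into a component outside $X$, then all four must be, and planarity of the torso near the interface then forbids certain global configurations. Second, I would build \emph{consistency/clause gadgets} that connect these choice gadgets so that a planar $\Hcal$-modulator exists if and only if the encoded formula (or coloring instance) is satisfiable: here one exploits that the torso operation turns the neighborhood of each outside-$K_4$ into a clique, so too many coincident outside-components around a single region create a $K_5$ (or $K_{3,3}$) subdivision in the torso, killing planarity. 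Third, I would verify that the whole reduction graph is constructible in polynomial time and that ``yes'' instances of the source problem map exactly to ``yes'' instances of \Hpl.

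\textbf{Main obstacle.} The delicate point is controlling \emph{planarity of the torso} globally. Unlike a pure partition problem, here the forbidden configurations are not local: the torso is obtained by cliquifying neighborhoods of \emph{all} components outside $X$ simultaneously, so two gadgets that are individually fine may together force a nonplanar torso (or, conversely, an adversary may route a component through an unexpected region to satisfy a clause I intended to be unsatisfied). Thus the hard part will be proving the two directions of correctness: (i) every satisfying assignment yields a modulator whose torso is genuinely planar — which requires exhibiting an explicit planar embedding respecting all the cliquified neighborhoods, likely by laying the construction out along a planar ``frame'' so that the added torso-edges stay inside faces; and (ii) every planar $\Hcal$-modulator of the reduction graph induces a consistent assignment — which requires ruling out ``cheating'' solutions, using the fact that any deviation creates a $K_5$- or $K_{3,3}$-subdivision in the torso. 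I expect that a careful choice of interface size (exactly matching the order-$\le 4$ separations that can surround a $K_4$-component, as in \autoref{obs_clique}) is what makes both directions go through; once the gadget interfaces are pinned down, the clause-to-planarity translation should follow standard planar-reduction bookkeeping.
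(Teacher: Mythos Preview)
Your high-level plan---reduce from \textsc{Planar SAT}, encode Boolean choices by exploiting that a vertex placed outside $X$ drags an entire $K_4$ with it, and use planarity of the torso to enforce clause constraints---is the same as the paper's. Where your plan diverges from the paper is in the \emph{mechanism} you anticipate for forcing choices.

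You expect the constraints to come from torso-induced nonplanarity (``too many coincident outside-components around a single region create a $K_5$ \ldots\ in the torso''). The paper's construction is simpler and more robust: it plants nonplanar cliques \emph{directly as subgraphs} of the instance, so that they must be broken by a $K_4$-component regardless of torso effects. Concretely, for each variable $x_i$ the paper builds a $K_5$ on $\{x_i,\overline{x}_i,y_i^1,y_i^2,y_i^3\}$; since the only $K_4$'s touching this clique lie inside it, any valid modulator must remove a $K_4$ containing exactly one of $x_i,\overline{x}_i$, giving the two truth values. For each $3$-clause the paper builds a $K_6$ on $\{u_j,v_j,w_j,z_j^1,z_j^2,z_j^3\}$ (and a $K_5$ for a $2$-clause), again with the property that every $K_4$ through a vertex of this clique stays inside it; the removed $K_4$ must contain some $u_j,v_j$ or $w_j$, and disjointness of components then forces the adjacent literal vertex to lie in $X$, i.e., to be \textsf{true}. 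Thus the backward direction (modulator $\Rightarrow$ assignment) is essentially local and does not need a delicate torso-planarity argument.

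The place where torso planarity really enters is the \emph{forward} direction, and here the paper sidesteps the global obstacle you flag: the underlying skeleton $G'$ (variables plus clause-triangles) is already planar because $G(\varphi)$ is, and the torso of the intended $X$ is obtained from $G'$ by edge contractions and $Y$--$\Delta$ replacements, both planarity-preserving. So you do not need to exhibit a fresh embedding accommodating all cliquified neighbourhoods; it comes for free from the planar layout of $G'$. If you pursue your torso-based mechanism instead, you will likely find the backward direction much harder to control than in the paper's approach, because torso-nonplanarity is a global phenomenon whereas ``this $K_5$ subgraph must be broken'' is local.
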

 
 \begin{proof}
We show \NP-hardness by reducing from a variant of the \textsc{Planar SAT} problem. Consider a Boolean formula $\varphi$ in the conjunctive normal form on $n$ variables $x_1,\ldots,x_n$ with clauses $C_1,\ldots,C_m$. We define the graph $G(\varphi)$ with $2n+m$ vertices constructed as follows:
\begin{itemize}
\item for each $i\in[n]$, construct  two vertices $x_i$ and $\overline{x}_i$ and make them adjacent,
\item for each $j\in[m]$, construct a vertex $C_j$,
\item for each $i\in[n]$ and $j\in [m]$, make $x_i$ and $C_j$ adjacent if the clause $C_j$ contains the literal $x_i$, and make $\overline{x}_i$ and $C_j$ adjacent if $C_j$ contains the negation of $x_i$.
\end{itemize}
It is known that the \textsc{SAT} problem is \NP-complete when restricted to instances where (i)~$G(\varphi)$ is planar, (ii)~each variable occurs at most 4 times in the clauses---at most two times in positive and at most two times with negations, and (iii)~each clause contains either two or three literals~\cite{Kratochvil94}.

 \begin{figure}[ht]
\centering
\includegraphics{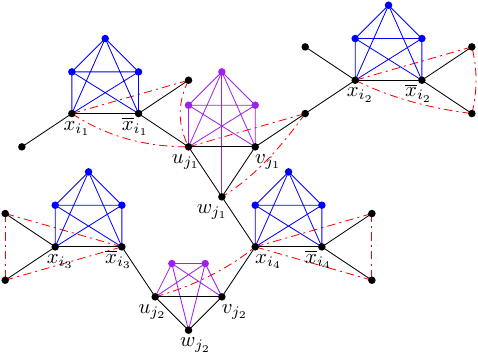}
\caption{Construction of  $G'$ and $G''$. The vertices and edges of $G'$ are shown in black and the additional vertices and edges of $G''$ are blue (for the variable) and purple (for the clauses). The construction is shown for four variables $x_{i_1},x_{i_2},x_{i_3},x_{i_4}$ and two clauses $\overline{x}_1\vee x_{i_2}\vee x_{i_4}$ and $\overline{x}_{i_3}\vee x_{i_4}$. Here, $x_{i_1},x_{i_3}$ occur in three clauses and $x_{i_2},x_{i_4}$ occur in four clauses. We also show in dashed red the edges of the torso of $X$ constructed for the assignment $x_{i_1}=x_{i_2}=x_{i_4}=true$ and $x_{i_3}=false$. 
}
\label{fig_lb}
\end{figure} 

Consider such an instance $\varphi$ of   \textsc{Planar SAT} with the variables $x_1,\ldots,x_n$ and the clauses $C_1,\ldots,C_m$.  We construct the following graph $G'$:
\begin{itemize}
\item for each $i\in[n]$, construct  two \emph{variable} vertices $x_i$ and $\overline{x}_i$ and make them adjacent,
\item for each $j\in[m]$, construct three \emph{clause} vertices  $u_j,v_j,w_j$ and make them pairwise adjacent, 
\item for each $i\in[n]$ and $j\in[m]$, make $x_i$  adjacent to one of the vertices $u_j,v_j,w_j$ if the clause $C_j$ contains the literal $x_i$, and make $\overline{x}_i$ adjacent to one of the vertices $u_j,v_j,w_j$ if $C_j$ contains the negation of $x_i$; we select the neighbor of $x_i$ or $\overline{x}_i$ among $u_j,v_j,w_j$ in such a way that no clause vertex is adjacent to two variable vertices.    
\end{itemize}
Notice that because $G(\varphi)$ is planar, $G'$ is a planar graph. In the next stage of our reduction, we construct $G''$ from $G'$:
\begin{itemize}
\item for each $i\in[n]$, construct three vertices $y_i^1,y_i^2,y_i^3$ and make them pairwise adjacent and adjacent to both $x_i$ and $\overline{x}_i$,
\item for each $j\in[m]$ such that $|C_j|=3$, construct a set of three vertices $Z_j=\{z_j^1,z_j^2,z_j^3\}$ and make the vertices of $Z_j$ pairwise adjacent and adjacent to  $u_j,v_j,w_j$,
\item for each $j\in[m]$ such that $|C_j|=2$, construct a set of two vertices $Z_j=\{z_j^1,z_j^2\}$ and make the vertices of $Z_j$ adjacent and adjacent to  $u_j,v_j,w_j$.
\end{itemize}
The construction of $G'$ and $G''$ is shown in \autoref{fig_lb}.
In our construction, we assume that each clause vertex $w_j$ constructed for $C_j$ of size two is not adjacent to any variable vertex.

We claim that $\varphi$ has a satisfying assignment if and only if $G''$ has a planar $\Hcal$-modulator.

Suppose that $\varphi$ has a satisfying assignment and the variables $x_1,\ldots,x_n$ have values satisfying $\varphi$. 
We define the induced subgraphs $H_1,\ldots,H_n$ and $H_1',\ldots,H_m'$ of $G''$ by setting 
\begin{itemize}
\item for every $i\in[n]$, $V(H_i)=\{x_i,y_i^1,y_i^2,y_i^3\}$ if $x_i=false$ and $V(H_i)=\{\overline{x}_i,y_i^1,y_i^2,y_i^3\}$, otherwise,
\item for every $j\in[m]$, consider a literal $\ell=true$ in $C_j$ and let $r\in\{u_j,v_j,w_j\}$ be the clause vertex adjacent to the variable vertex corresponding to $\ell$, and then set $V(H_j')=(\{r\}\cup Z_j)$ if $|C_j|=3$ and $V(H_j')=(\{r,w_j\}\cup Z_j)$ if $|C_j|=2$.
\end{itemize}
Notice that $H_1,\ldots,H_n$ and $H_1',\ldots,H_m'$ are copies of $K_4$, that is, they are in $\Hcal$.  Furthermore, all these subgraphs of $G''$ are disjoint, and distinct subgraphs have no adjacent edges.
We set \[X=V(G'')\setminus\Big(\big(\bigcup_{i=1}^n V(H_i)\big)\cup\big(\bigcup_{j=1}^m V(H_j')\big)\Big).\]
Then $G''-X$ is the disjoint union of $n+m$ copies of $K_4$ and we obtain that each connected component of $G''-X$ is in $\Hcal$. 
Because $G'$ is planar, we have that the torso of $X$ is planar as well. To see this, notice that the torso of $X$ can be obtained from $G'$ by edge contractions and replacements of $Y$-subgraphs, that is, subgraph isomorphic to the star $K_{1,3}$ whose central vertex has degree three in $G'$, by triangles formed by the leaves of the star (see~\autoref{fig_lb}).   Since these operations preserve planarity, the torso of $X$ is planar. 
Thus,  $G''$ has a planar $\Hcal$-modulator.

For the opposite direction, assume that $G''$ has a planar $\Hcal$-modulator. Let $X\subseteq V(G'')$ be such that the torso of $X$ is planar and $G''-X$ is the disjoint union of copies of $K_4$. Consider $i\in[n]$. The clique $R_i=\{x_i,\overline{x}_i,y_i^1,y_i^2,y_i^3\}$ of size 5, so 
$X\cap R_i\neq\emptyset$. Observe that, by the construction of $G''$, any clique of size four containing a vertex of $H_i$ is contained in $R_i$. Thus, $G''-X$ has a connected component $H_i$ that is copy of $K_4$ with $V(H_i)\subseteq R_i$. 
Therefore, at least one of the vertices $x_i$ and $\overline{x}_i$ is in $H_i$. We set the value of the variable $x_i=true$ if $x_i\notin V(H_i)$ and $x_i=false$, otherwise. 
We perform this assignment of the values to all variables $x_1,\ldots,x_n$. We claim that this is a satisfying assignment for $\varphi$. 

Consider $j\in[m]$. Suppose first that $|C_j|=3$. Let $R_j'=\{u_j,v_j,w_j,z_j^1,z_j^2,z_j^3\}$. Since $R_j'$ is a clique of size 6, $X\cap R_j'\neq \emptyset$. The construction of $G''$ implies that each clique of 
size four containing a vertex of $R_j'$ is contained in $R_j'$. This means that $G''-X$ has a connected component $H_j'$ isomorphic to $K_4$ such that $V(H_j')\subseteq R_j'$. Then at least one of the vertices $u_j,v_j,w_j$ is in $V(H_j')$. By symmetry, we can assume without loss of generality that $u_j\in V(H_j')$. Also by symmetry, we can assume that $u_j$ is adjacent to $x_i$ for some $i\in[n]$. Because $H_i$ and $H_j'$ are disjoint and have no adjacent vertices, $x_i\notin V(H_i)$. This means that $x_i=true$ and the clause $C_j$ is satisfied. The case $|C_j|=2$ is analyzed in a similar way. Now we let $R_j'=\{u_j,v_j,w_j,z_j^1,z_j^2\}$ and have that $R_j'$ is a clique of size 5. Then $X\cap R_j'\neq \emptyset$. We again have that  each clique of 
size four containing a vertex of $R_j'$ is contained in $R_j'$. Therefore, $G''-X$ has a connected component $H_j'$ isomorphic to $K_4$ such that $V(H_j')\subseteq R_j'$. Recall that we assume that $w_j$ is not adjacent to any variable vertex. Then either $u_i$ or $v_i$ is in $V(H_j')$. We assume without loss of generality that $u_j\in V(H_j)$ and  $u_i$ are adjacent to $x_i$ for some $i\in[n]$. Because $H_i$ and $H_j'$ are disjoint and have no adjacent vertices, $x_i\notin V(H_i)$. Thus, $x_i=true$ and the clause $C_j$ is satisfied. This proves that each clause is satisfied by our truth assignment for the variables. We conclude that $\varphi$ has a satisfying assignment.

It is straightforward that $G''$ can be constructed in polynomial time. Therefore, \Hpl\ is \NP-hard for $\Hcal=\{K_4\}$.  As the inclusion of  \Hpl\ in \NP{} is trivial, this concludes the proof.
\end{proof}

In \Cref{th_param}, the existence of the respective algorithms for $\p\in\{\ptd,\ptw\}$ is shown under condition that  the {\sc $\Hcal$-Deletion} problem parameterized by the solution size is \FPT. 
While we leave open the question whether the existence of an \FPT algorithm for checking that the \hptd{$\Hcal$} is at most $k$ or 
 the  $\Hcal$-planar treewidth at most $k$ would imply that there is an \FPT algorithm solving {\sc $\Hcal$-Deletion}, we note that for natural hereditary graph classes $\Hcal$ for which {\sc $\Hcal$-Deletion} is known to be  {\sf W[1]}-hard or {\sf W[2]}-hard, deciding whether the elimination distance to $\Hcal$ and whether the   $\Hcal$-planar treewidth at most $k$ is also can be shown to be hard. This follows from the observation that for graphs of high vertex connectivity, the three problems are essentially equivalent. In particular, the following can be noticed.
 
 \begin{observation}\label{obs:equivalent-high-conn}
 Let $G$ be a $4k+2$-connected graph for an integer $k\geq 0$. Then, for any class $\Hcal$, a graph $G^*\in\Hcal$ can be obtained by at most $4k$ vertex deletions from $G$ if and only if the \hptd{$\Hcal$} of $G$ is at most $k$.
 \end{observation}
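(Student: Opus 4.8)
The plan is to prove \autoref{obs:equivalent-high-conn} by showing that, in a highly connected graph, any planar torso in an elimination sequence (or in a planar $\Hcal$-modulator) has very few ``peeled-off'' vertices, so that the planar layers collapse to plain vertex deletions. First I would unpack the two directions. The easy direction is that if $G^*\in\Hcal$ is obtained from $G$ by at most $4k$ vertex deletions via a set $S$ with $|S|\le 4k$, then the single ``modulator'' $X=S$ together with the trivial sequence $X_1=S$, $X_2=\cdots=X_k=\emptyset$ certifies $\Hcal\mbox{-}\ptd(G)\le k$: indeed $\torso(G,S)=G[S]$ is a graph on at most $4k$ vertices, which is certainly planar whenever $4k<5$, and more to the point we only need $\torso(G,S)$ to have planar treedepth at most $k$, and any graph on at most $4k$ vertices has treedepth at most $4k$, which is not quite $k$. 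So here I need to be a little more careful: the right statement is that if $|S|\le 4k$ then $\td(G[S])\le 4k$ and hence $\ptd(G[S])\le 4k$, but what we actually want is $\ptd(G[S])\le k$; this forces us to instead observe that a graph on $\le 4k$ vertices can itself be eliminated as a planar modulator in one step provided it is planar, and if it is not planar we split it into at most $k$ planar layers each of at most $4$ vertices. Since $K_5$ has $5$ vertices and $|S|\le 4k\le$ is split into $k$ groups of size $\le 4$, each group induces a planar graph (on at most $4$ vertices), giving a certifying elimination sequence of length $k$. This handles ``only if'' direction of the observation; I would write this out carefully to get the constant $4k$ exactly right.

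For the main (``if'') direction, suppose $\Hcal\mbox{-}\ptd(G)\le k$, witnessed by a certifying elimination sequence $X_1,\dots,X_k$. The key claim is: since $G$ is $(4k+2)$-connected, in fact $\bigcup_i X_i$ has size at most $4k$ and $G-\bigcup_i X_i\in\Hcal$, which gives exactly the desired vertex-deletion set $S=\bigcup_i X_i$. To see this I would argue by induction on $k$, or directly, using \autoref{obs_clique}: the graphs of planar treedepth at most $k$ are $K_{4k+1}$-minor-free (since $\Pcal^k$ excludes $K_{4k+1}$ as a minor --- a planar graph is $K_5$-minor-free, so $\Pcal^k$ is $K_{4k+1}$-minor-free by the $\triangleright$-composition bound), hence for $X=X_1$ and any connected component $C$ of $G-X$ we have $|N_G(V(C))|\le 4k$. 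But $G$ is $(4k+2)$-connected, so $(N_G[V(C)],V(G)\setminus V(C))$ is a separation of order at most $4k<4k+2$, forcing one side to be contained in the other; since $C$ is non-empty and $V(G)\setminus V(C)\supseteq X$ which is non-empty unless $X=\emptyset$, we conclude $V(G)=N_G[V(C)]$, i.e. $G-X$ is connected (or $X=V(G)$, impossible for $k\ge 1$ and $G$ large). So $G-X$ is a single component $C$ which belongs to $\Pcal^{k-1}\triangleright\Hcal$ and is still $(4k+2)$-connected minus at most $4k$ vertices --- here I would need a lemma that deleting few vertices from a highly connected graph keeps it reasonably connected, namely $G-X$ is $(4k+2-|X|)$-connected. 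Iterating: at step $i$, $G-(X_1\cup\dots\cup X_{i-1})$ is still $(4k+2-|X_1\cup\dots\cup X_{i-1}|)$-connected, its torso-at-$X_i$ is planar so $|X_i\cap N(\text{component})|\le 4$ per component, and connectivity forces $G-(X_1\cup\dots\cup X_i)$ to remain connected as long as $|X_1\cup\dots\cup X_i|\le 4k$. One then checks $\sum_i |X_i|\le 4k$ because at each of the $k$ steps we remove at most $4$ vertices --- wait, that is not automatic, $X_i$ could be large; the point is rather that $|X_i|\le$ (number of components of $G-(X_1\cup\dots\cup X_i)$ times $4$) $=4$ when there is one component. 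So each step removes at most $4$ vertices, over $k$ steps at most $4k$ total.

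The main obstacle I foresee is making the connectivity bookkeeping airtight: I need that $\torso(G_{i-1},X_i)$ being planar implies each component of $G_{i-1}-X_i$ has at most $4$ neighbours in $X_i$ (this is \autoref{obs_clique} applied with $\Gcal=\Pcal$, giving bound $4$), then that $(4k+2)$-minus-a-few connectivity forces exactly one component, then that this component is $(4k+2-(\text{deleted so far}))$-connected so I can recurse, and finally that the recursion terminates with total deletions $\le 4k$ exactly. I would phrase the inductive hypothesis as: ``if $H$ is $(c+2)$-connected with $c\ge 4k$ and $\Hcal\mbox{-}\ptd(H)\le k$, then $H$ has an $\Hcal$-modulator (of the plain deletion kind) of size at most $4k$,'' and peel one planar layer at a time, noting the residual connectivity $c+2-4\ge 4(k-1)+2$ lets the induction go through for $k-1$. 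The one subtlety is the base case $k=0$: $\Pcal^0\triangleright\Hcal=\Hcal$ wait, $\Hcal\mbox{-}\ptd(H)\le 0$ should mean $H$ is empty, so actually $k=0$ forces $H$ empty and $S=\emptyset$; alternatively one organizes the induction to bottom out at $k=1$ where $\torso(H,X_1)$ planar and $H-X_1\in\Hcal$ directly, with $|X_1|\le 4$. Once this is in place the statement of \autoref{obs:equivalent-high-conn} follows by combining both directions, and I would remark (as the paper does) that the analogous statement for $\Hcal$-planar treewidth holds by the same argument using the $K_{\max\{k+1,4\}+1}$-minor-freeness of $\Pcal\Tcal_k$, which is why the three parameters coincide on highly connected graphs and hardness transfers.
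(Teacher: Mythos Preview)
Your proposal is correct and follows essentially the same approach as the paper: for the forward direction, partition the deletion set of size at most $4k$ into $k$ blocks of size at most $4$ (each inducing a planar torso), and for the backward direction, argue by induction on $k$ that each planar layer $X_i$ has $|X_i|\le 4$ because high connectivity forces $G-X_i$ to be a single component with $|N_G(V(C))|\le 4$, leaving a $(4(k-1)+2)$-connected graph for the inductive step. Your stated worries about the base case and the initial $4k$-versus-$4$ bound are resolved exactly as you eventually describe; the paper's proof is just a cleaner write-up of the same argument.
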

 
\begin{proof}
Suppose that there is a set $X\subseteq V(G)$ of size at most $4k$ such that $G-X\in \Hcal$. Because $|X|\leq 4k$, $G-X$ is connected. 
Also, since $|X|\leq 4k$, there is a partition $\{X_1,\ldots,X_s\}$ of $X$ such that  $s\leq k$ and $|X_i|\leq 4$ for each $i\in[s]$.  Notice that for each $i\in[s]$, $\torso(G_i,X_i)$, where $G_i=G-\bigcup_{j=1}^{i-1}X_j$ is planar. Then we decompose $G$ by consecutively selecting $X_1,\ldots,X_s$. 

For the opposite direction, we use induction on $k$. We prove that if the \hptd{$\Hcal$} of a graph $G$ is at most $k$ then 
a graph $G'\in\Hcal$ can be obtained by deleting at most $4k$ vertices. The claim trivially holds if $G\in\Hcal$ as  the deletion distance to $\Hcal$ is $0\leq k$. In particular, this proves the claim for $k=0$. Assume that $k\geq 1$ and $G\notin\Hcal$. Because $\ptd_{\Hcal}(G)\leq k$, there is a nonempty set $X\subseteq V(G)$ such that 
(i)~for each connected component $C$ of $G'=G-X$,  $\ptd_{G'}(C)\leq k-1$ and (ii)~$\torso(G,X_1)$ is planar. Because $G$ is $6$-connected, $G$ is not planar and, therefore, $X\neq V(G)$. Consider an arbitrary connected component $C$ of $G-S$. Because   $\torso(G,X)$ is planar, $N_G(V(C))\leq 4$. Since $N_G(V(C))$ cannot be a separator in a 6-connected graph, we obtain that $X=N_G(V(C))$ is of size at most four and $C=G'$ is a unique connected component of $G-X$. Because $|X|\leq 4$, $G'$ is $4(k-1)+2$-connected and we can apply induction. Then there is a set $Y\subseteq V(G')$ such that $|Y|\leq 4(k-1)$ and $G^*=G'-Y\in \Hcal$. Consider $Z=X\cup Y$. Then $G^*=G'-Y=G-Z$ and $|Z|\leq 4k$. This proves that the deletion distance of $G$ to $\Hcal$ is at most $4k$.  
\end{proof} 

Given a computational lower bound for {\sc $\Hcal$-Deletion}, typically, it is easy to show that the hardness holds for highly connected graphs and for $k$ divisible by four. Here, we give just one example. If was proved by Heggernes et al.~\cite{HeggernesHJKV13} that \textsc{Perfect Deletion}, that is, {\sc $\Hcal$-Deletion} when $\Hcal$ is the class of perfect graphs, is {\sf W[2]}-hard when parameterized by the solution size. Then we can show the following theorem using \Cref{obs:equivalent-high-conn}.

\begin{theorem}\label{thm_prefect}
Let $\Hcal$ be the class of perfect graphs.
The problem of deciding whether the \hptd{$\Hcal$} is at most $k$ is {\sf W[2]}-hard when parameterized by $k$.  
\end{theorem}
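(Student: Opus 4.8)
\textbf{Proof proposal for \autoref{thm_prefect}.}
The plan is to reduce from \textsc{Perfect Deletion} parameterized by the solution size, which is \textsf{W[2]}-hard by the result of Heggernes et al.~\cite{HeggernesHJKV13}, using \autoref{obs:equivalent-high-conn} as the bridge between the deletion distance and the $\Hcal$-planar treedepth. The key point is that \autoref{obs:equivalent-high-conn} gives an \emph{exact} equivalence on sufficiently highly connected graphs: on a $(4k+2)$-connected graph $G$, one can obtain a perfect graph by at most $4k$ vertex deletions if and only if the \hptd{$\Hcal$} of $G$ is at most $k$. Since perfect graphs form a hereditary class (induced subgraphs of perfect graphs are perfect), \autoref{obs:equivalent-high-conn} applies, and in fact it applies with $\Hcal$ being any class.

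First I would take an arbitrary instance $(H,\ell)$ of \textsc{Perfect Deletion}, and I may assume by the standard padding/connectivity tricks in \textsc{W}-hardness reductions (and by the discussion preceding \autoref{thm_prefect}) that $\ell$ is divisible by four, say $\ell = 4k$, and that the graph $H$ is $(4k+2)$-connected; if the original reduction of \cite{HeggernesHJKV13} does not already output highly connected instances, one can boost connectivity by joining $H$ to a large clique or a suitable perfect ``padding'' gadget, being careful that the gadget itself is perfect and that no small optimal solution can use its vertices (so that optimal deletion sets are unchanged). This is the one point where some care is genuinely needed: one must verify that the boosting construction preserves both the answer to \textsc{Perfect Deletion} (perfection is preserved by the standard clique-join/substitution operations, and an adversary gains nothing by deleting clique vertices) and the divisibility-by-four normalization. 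Given such a normalized instance, \autoref{obs:equivalent-high-conn} immediately yields: $H$ has a perfect-deletion set of size at most $4k$ $\iff$ $\ptd_{\Hcal}(H) \le k$.

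Thus the reduction is simply $(H,4k) \mapsto (H,k)$ after the connectivity-boosting preprocessing, which runs in polynomial time and maps the parameter $\ell=4k$ to the parameter $k$ in an \textsf{FPT} manner (indeed linearly). Composing this with the \textsf{W[2]}-hardness of \textsc{Perfect Deletion} parameterized by solution size gives that deciding $\ptd_{\Hcal}(G)\le k$ is \textsf{W[2]}-hard parameterized by $k$, which is exactly the statement. The main obstacle I expect is not the use of \autoref{obs:equivalent-high-conn}, which is essentially plug-and-play, but rather ensuring that the known \textsf{W[2]}-hardness instances of \textsc{Perfect Deletion} can be assumed to have the required $(4k+2)$-connectivity and the $4\mid \ell$ divisibility simultaneously without changing the optimal solution size; if the construction in \cite{HeggernesHJKV13} already provides this (which such reductions often do), the proof collapses to a one-line invocation of \autoref{obs:equivalent-high-conn}.
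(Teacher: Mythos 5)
Your proposal is correct and follows essentially the same route as the paper: the paper normalizes the budget to a multiple of four by padding with disjoint $C_5$'s, boosts connectivity by adding $6k'$ universal vertices (using that perfect graphs are closed under adding universal vertices, via the strong perfect graph theorem), and then invokes \autoref{obs:equivalent-high-conn}. Your "join to a large clique" padding is exactly this construction, so the proof goes through as you describe.
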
   
 
 \begin{proof}
 We reduce from \textsc{Perfect Deletion}. Let $(G,k)$ be an instance of the problem. First, we can assume without loss of generality that $k=4k'$ for an integer $k'\geq 0$. Otherwise, we add $4-(k\mod 4)$ disjoint copies of the cycle on five vertices to $G$ using the fact that at least one vertex should be deleted form each odd hole to obtain a perfect graph. Then we observe that the class of perfect graphs is closed under adding universal vertices. This can be seen, for example, from  the strong perfect graph theorem~\cite{ChudnovskyRST2006}. Let $G'$ be the graph obtained from $G$ by adding $6k'$ vertices, making them adjacent to each other, and adjacent to each every vertex of $G$. Then the deletion distance of $G$ to perfect graphs is the same as the       deletion distance of $G'$. Thus, the instance $(G,k)$ of \textsc{Perfect Deletion} is equivalent to the instance $(G',k)$. By \Cref{obs:equivalent-high-conn}, we obtain that  $(G',4k)$ is a yes-instance of  \textsc{Perfect Deletion}  if and only if  the \hptd{$\Hcal$} of $G'$ is at most $k'$. This completes the proof.  
 \end{proof} 
  
The same arguments also could be used for the case when $\Hcal$ is the class of weakly chordal graph---it was proved by  Heggernes et al.~\cite{HeggernesHJKV13} that \textsc{Weakly Chordal  Deletion}  is {\sf W[2]}-hard and the class is also closed under adding universal vertices. Also, it is known that the \textsc{Wheel-Free Deletion}, that is, the problem asking whether $k$ vertices may be deleted to obtain a wheel-free graph (a graph is wheel-free if it does not contain a \emph{wheel}, i.e., a graph obtained from a cycle by adding a universal vertex, as an induced subgraph) is {\sf W[2]}-hard when parameterized by $k$ by the result of Lokshtanov~\cite{Lokshtanov08}. We remark that it is possible to show that the lower bound holds for highly connected graphs and obtain the {\sf W[2]}-hardness for the \hptd{$\Hcal$} when $\Hcal$ is the class of wheel-free graphs.  
  
Finally in this section, we note that the variant of \Cref{obs:equivalent-high-conn} holds the planar  $\Hcal$-planar treewidth.

 \begin{observation}\label{obs:equivalent-high-conn-tw}
 Let $G$ be a $(\max\{4,k\}+2)$-connected graph for an integer $k\geq 0$. Then, for any class $\Hcal$, a graph $G^*\in\Hcal$ can be obtained by at most $k$ vertex deletions from $G$ if and only if the $\Hcal$-planar treewidth of $G$ is at most $k$.
 \end{observation}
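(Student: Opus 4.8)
The plan is to prove \autoref{obs:equivalent-high-conn-tw} by mimicking the argument for \autoref{obs:equivalent-high-conn}, exploiting the fact that in a sufficiently highly connected graph there simply is no room for a nontrivial decomposition: any small separator is ``trivial'' (one side is contained in the other), so a tree decomposition witnessing bounded $\Hcal$-planar treewidth collapses to a single interesting bag.

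\medskip
\noindent\textbf{Forward direction.} Suppose there is a set $X\subseteq V(G)$ with $|X|\le k$ such that $G^*:=G-X\in\Hcal$. I would take the trivial tree decomposition $(T,\beta)$ of $\torso(G,X)$ where $T$ is a single edge $t_1t_2$, with $\beta(t_1)=X$ and $\beta(t_2)=N_G(V(C))$ for the unique\footnote{Since $G$ is $(\max\{4,k\}+2)$-connected and $|X|\le k$, the graph $G-X$ is connected, so it has a single component.} component $C$ of $G-X$; more simply, since $G-X$ is connected and hence belongs to $\Hcal$, one can directly use a tree $\Hcal$-decomposition consisting of a single bag $X$ as torso-part and $C$ as the leaf in $\Hcal$. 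Either way the torso-part $\torso(G,X)$ has a tree decomposition with one bag of size $|X|\le k$, so it has treewidth at most $k-1\le k$, hence $\Hcal\mbox{-}\ptw(G)\le k$. (One should be a little careful here: if $|X|\le k$ then the bag has size at most $k$, i.e.\ width at most $k-1$; this is consistent with the definition $\ptw(G)\le k$ via a tree decomposition all of whose bags either have size at most $k+1$ or a planar torso. The single bag $X$ has size $\le k\le k+1$, so it qualifies. No planarity is needed in this direction.)

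\medskip
\noindent\textbf{Reverse direction.} Now suppose $\Hcal\mbox{-}\ptw(G)\le k$. Then there is a set $X\subseteq V(G)$ with $G-X\in\Hcal$ (i.e.\ every component of $G-X$ is in $\Hcal$) and a tree decomposition $(T,\beta)$ of $\torso(G,X)$ in which every bag has size at most $k+1$ or a planar torso. If $X=\emptyset$ we are done with zero deletions, so assume $X\ne\emptyset$; since $G\notin\Hcal$ (else again zero deletions suffice, and note $G$ being $(\max\{4,k\}+2)$-connected it is in particular not planar when $k+2>$ the relevant bound, but we do not even need this), in fact $X\ne V(G)$. Take any component $C$ of $G-X$. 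Then $\big(N_G(V(C)),\,V(G)\setminus V(C)\big)$ is a separation of $G$ of order $|N_G(V(C))|$. I claim $|N_G(V(C))|\le\max\{4,k\}$: indeed $N_G(V(C))$ is a clique in $\torso(G,X)$, so it lies inside some bag $\beta(t)$; if $|\beta(t)|\le k+1$ then $|N_G(V(C))|\le k+1$, but actually I want the sharper bound, so here is where \emph{planarity} enters — if $\beta(t)$ has a planar torso then a clique in it has size at most $4$, and if $|\beta(t)|\le k+1$ then $|N_G(V(C))|\le k$ unless the clique uses all $k+1$ vertices, in which case the separator has order $k+1$. To avoid this edge case cleanly I would instead argue: $N_G(V(C))$ cannot be a separator of order $\le \max\{4,k\}+1$ in a $(\max\{4,k\}+2)$-connected graph with both sides nonempty. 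Since $V(C)\ne\emptyset$, we must have $V(G)\setminus(V(C)\cup N_G(V(C)))=\emptyset$, i.e.\ $X=N_G(V(C))$ and $C$ is the \emph{only} component of $G-X$. Hence $|X|\le\max\{4,k\}+1$; but then $G-X=C\in\Hcal$ is obtained by $|X|\le\max\{4,k\}+1$ deletions — which is not quite ``$\le k$''. So I need to recover the bound $|X|\le k$: the point is that $X$ is contained in a single bag $\beta(t)$ and that bag, if of size $\le k+1$, can only be a separator-generating clique of size $\le k$ by the same connectivity argument applied one level up, and if planar then of size $\le 4\le\max\{4,k\}$; combining, $|X|=|N_G(V(C))|\le\max\{4,k\}$, and finally one more application of high connectivity (the graph is $(\max\{4,k\}+2)$-connected, so a separator of order $\le\max\{4,k\}$ with both sides nonempty cannot exist) forces consistency and, when $k\ge 4$, gives $|X|\le k$ as desired; when $k<4$, the hypothesis degenerates and one checks the statement directly for $k\in\{0,1,2,3\}$ using that a clique in a planar graph has $\le 4$ vertices and re-reading the connectivity threshold.

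\medskip
\noindent\textbf{Main obstacle.} The delicate point — and the step I expect to require the most care — is pinning down the exact separator-order bookkeeping: the definition of planar treewidth allows a bag to be ``large but planar'', so a single component's neighborhood could a priori be a $4$-clique sitting in a huge planar bag, or a $(k{+}1)$-clique in a small bag, and I must show that in a $(\max\{4,k\}{+}2)$-connected graph neither possibility leaves any vertices outside $C\cup N_G(V(C))$, and that the resulting $|X|$ is genuinely at most $k$ (not $k+1$ or $\max\{4,k\}+1$). I would resolve this by choosing $k':=\max\{4,k\}$, observing that any separator realized inside a single bag has order at most $k'$ (planar bag $\Rightarrow$ clique $\le 4\le k'$; small bag $\Rightarrow$ size $\le k+1$, but a clique of size $k+1$ would itself be a separator only if some vertex lies outside, impossible by $(k'{+}2)$-connectivity, so effectively $\le k'$), and then using $(k'{+}2)$-connectivity to collapse the decomposition to the single bag $X$ with $|X|\le k'$; for $k\ge 4$ this is $|X|\le k$, and for $k\le 3$ the statement is checked by hand. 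This makes the equivalence precise and completes the proof.
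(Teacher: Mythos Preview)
The paper does not prove this observation; it is stated as the planar-treewidth analogue of \autoref{obs:equivalent-high-conn} and left implicit. Your overall strategy --- use the high connectivity to force $X=N_G(V(C))$ for the unique component $C$ of $G-X$ --- is exactly the intended one and mirrors the proof of \autoref{obs:equivalent-high-conn}.

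However, your handling of the ``main obstacle'' does not go through, and the difficulty you flagged is genuine. You correctly derive that $N_G(V(C))$ is a clique in $\torso(G,X)$, hence sits in a single bag, hence has size at most $\max\{4,k{+}1\}$; then $(\max\{4,k\}{+}2)$-connectivity yields $X=N_G(V(C))$ and $G-X=C\in\Hcal$. But this gives only $|X|\le \max\{4,k{+}1\}$, not $|X|\le k$. Your proposed repair (``a clique of size $k{+}1$ would itself be a separator only if some vertex lies outside, impossible by $(k'{+}2)$-connectivity'') is circular: the conclusion that nothing lies outside is precisely $X=N_G(V(C))$, which you already have --- it does not shrink $|X|$. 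In fact the reverse implication as literally stated is off: for $k\ge 4$ take $G=K_{k+3}$ (which is $(k{+}2)$-connected) and let $\Hcal$ be the hereditary class of graphs on at most two vertices; the only modulator has size $k{+}1$, $\torso(G,X)=K_{k+1}$ has $\ptw$ exactly $k$, so $\Hcal\mbox{-}\ptw(G)=k$, yet the deletion distance is $k{+}1$. For small $k$ the gap is larger: $G=K_7$ is $6$-connected and, with $\Hcal$ the graphs on at most three vertices, has $\Hcal\mbox{-}\ptw(G)=0$ but deletion distance $4$. What your argument \emph{does} correctly establish --- and what suffices for the hardness transfers the paper actually uses the observation for --- is that on $(\max\{4,k\}{+}2)$-connected graphs the deletion distance to $\Hcal$ is at most $\max\{4,k{+}1\}$ whenever $\Hcal\mbox{-}\ptw\le k$; so the two parameters are within an additive constant, which is all that is needed downstream. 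Your ``check by hand for $k<4$'' does not rescue the exact statement.
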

 
Then the lower bound for \textsc{Deletion to $\Hcal$} from~\cite{HeggernesHJKV13,Lokshtanov08}  can be used to show the {\sf W[2]}-hardness for deciding whether the   $\Hcal$-planar treewidth is at most $k$.

\section{Conclusion}\label{sec_concl}
All our  techniques heavily rely on the planarity of the torso of $X$.
In particular, this concerns our method for finding 
an irrelevant vertex inside a flat wall no matter how the graph outside the modulator interacts with this flat wall. For this, we present our techniques for the \Hpl problem and 
later, in \cref{cosi46y} and \cref{sec_tw} we explain how these techniques can be extended 
on graphs where $\Hcal$-\ptd\ or $\Hcal$-\ptw\ is bounded.
However, \ptd\ and   \ptw\ are not the only minor-monotone parameters to plug in \cref{constrxaint}, further than 
$\td$ and $\tw$. The most general question one may ask in this direction is the following.

\begin{quote}
Given some target class $\Hcal$ and some  minor-monotone parameter $\p$, which conditions on $\Hcal$ may imply an {\sf FPT}-algorithm deciding 
whether $\Hcal$-$\p(G)≤k$?
\end{quote}

We conjecture that the conditions of \cref{th_param} can be copied for \textsl{every} minor-monotone parameter $\p$, in other words we conjecture the following: \textsl{If  $\Hcal$ is hereditary, CMSO-definable, and union-closed then  an {\sf FPT}-algorithm for checking $\Hcal$-$\size(G)≤k$ 
implies an {\sf FPT}-algorithm for checking $\Hcal$-$\p(G)≤k$, no matter the choice of $\p$}.

The main obstacle for applying our technique to this general problem is that now the interaction of the modulator with the rest of the graph is not bounded by a fixed constant (that in our case is $≤4$ for {\sc $\Hcal$-Planarity}) but is bounded by some function of the parameter $k$. 
We believe that our techniques are a promising departure point for resolving this problem. However, this appears to be a quite challenging task.

Another open problem stems from the condition in \cref{th_th} that $\Hcal$ is CMSO-definable. Clearly, we cannot drop the polynomial-time decidability condition for $\Hcal$ if we aim to obtain a polynomial-time algorithm for \Hpl. Also we proved in \cref{thm_lower} that the heredity condition is also crucial. However, the CMSO-definability condition is needed because we are using the meta-theorem of  \cite{LokshtanovR0Z18redu}. This meta-theorem is a powerful tool allowing us to reduce solving \Hpl{} to solving \Hkpl{} for bounded $k$, making it possible to apply our version of the irrelevant vertex technique.  However, we do not exclude that other tools may be used in order to avoid the CMSO-definability requirement. In particular, it can be noted that we aim to decompose in a special way the input graph via separators of size at most three, and graph decompositions of this type were introduced by Grohe~\cite{Grohe16}.  It is very interesting whether avoiding using the meta-theorem and dropping the CMSO condition is possible. This also could allow us to get rid of the non-uniformity of  \cref{th_th}.

\newpage

\bibliographystyle{plainurl}

\begin{thebibliography}{10}

\bibitem{AgrawalKLPRSZ22}
Akanksha Agrawal, Lawqueen Kanesh, Daniel Lokshtanov, Fahad Panolan, M.~S.
  Ramanujan, Saket Saurabh, and Meirav Zehavi.
\newblock Deleting, eliminating and decomposing to hereditary classes are all
  {FPT}-equivalent.
\newblock In {\em Proceedings of the {ACM-SIAM} Symposium on Discrete
  Algorithms (SODA)}, pages 1976--2004. {SIAM}, 2022.
\newblock \href {https://doi.org/10.1137/1.9781611977073.79}
  {\path{doi:10.1137/1.9781611977073.79}}.

\bibitem{AppelH89}
Kenneth Appel and Wolfgang Haken.
\newblock {\em Every planar map is four colorable}.
\newblock American Mathematical Society, 1989.

\bibitem{ArnborgLS91easy}
Stefan Arnborg, Jens Lagergren, and Detlef Seese.
\newblock Easy problems for tree-decomposable graphs.
\newblock {\em Journal of Algorithms}, 12:308--340, 1991.
\newblock \href {https://doi.org/10.1016/0196-6774(91)90006-K}
  {\path{doi:10.1016/0196-6774(91)90006-K}}.

\bibitem{Baker94appr}
Brenda~S. Baker.
\newblock Approximation algorithms for {NP}-complete problems on planar graphs.
\newblock {\em Journal of the ACM (JACM)}, 41(1):153--180, 1994.

\bibitem{BulianD16graph}
Jannis Bulian and Anuj Dawar.
\newblock Graph isomorphism parameterized by elimination distance to bounded
  degree.
\newblock {\em Algorithmica}, 75(2):363--382, 2016.
\newblock \href {https://doi.org/10.1007/s00453-015-0045-3}
  {\path{doi:10.1007/s00453-015-0045-3}}.

\bibitem{BulianD17fixe}
Jannis Bulian and Anuj Dawar.
\newblock Fixed-parameter tractable distances to sparse graph classes.
\newblock {\em Algorithmica}, 79(1):139--158, 2017.
\newblock \href {https://doi.org/10.1007/s00453-016-0235-7}
  {\path{doi:10.1007/s00453-016-0235-7}}.

\bibitem{ChitnisCHPP16desi}
Rajesh Chitnis, Marek Cygan, MohammadTaghi Hajiaghayi, Marcin Pilipczuk, and
  Michal Pilipczuk.
\newblock Designing {FPT} algorithms for cut problems using randomized
  contractions.
\newblock {\em {SIAM} Journal on Computing}, 45(4):1171--1229, 2016.
\newblock \href {https://doi.org/10.1137/15M1032077}
  {\path{doi:10.1137/15M1032077}}.

\bibitem{chudnovsky2005recognizing}
Maria Chudnovsky, G{\'e}rard Cornu{\'e}jols, Xinming Liu, Paul Seymour, and
  Kristina Vu{\v{s}}kovi{\'c}.
\newblock Recognizing {Berge} graphs.
\newblock {\em Combinatorica}, 25(2):143--186, 2005.

\bibitem{ChudnovskyRST2006}
Maria Chudnovsky, Neil Robertson, Paul Seymour, and Robin Thomas.
\newblock The strong perfect graph theorem.
\newblock {\em Ann. of Math. (2)}, 164(1):51--229, 2006.
\newblock \href {https://doi.org/10.4007/annals.2006.164.51}
  {\path{doi:10.4007/annals.2006.164.51}}.

\bibitem{Courcelle90them}
Bruno Courcelle.
\newblock The monadic second-order logic of graphs. {I}. {R}ecognizable sets of
  finite graphs.
\newblock {\em Information and Computation}, 85(1):12--75, 1990.
\newblock \href {https://doi.org/10.1016/0890-5401(90)90043-H}
  {\path{doi:10.1016/0890-5401(90)90043-H}}.

\bibitem{CourcelleJ12grap}
Bruno Courcelle and Joost Engelfriet.
\newblock {\em Graph Structure and Monadic Second-Order Logic - {A}
  Language-Theoretic Approach}, volume 138 of {\em Encyclopedia of mathematics
  and its applications}.
\newblock Cambridge University Press, 2012.
\newblock URL:
  \url{http://www.cambridge.org/fr/knowledge/isbn/item5758776/?site\_locale=fr\_FR}.

\bibitem{Curticapean14count}
Radu Curticapean.
\newblock Counting perfect matchings in graphs that exclude a single-crossing
  minor.
\newblock {\em CoRR}, abs/1406.4056, 2014.
\newblock \href {https://arxiv.org/abs/1406.4056} {\path{arXiv:1406.4056}}.

\bibitem{CurticapeanM16}
Radu Curticapean and D{\'{a}}niel Marx.
\newblock Tight conditional lower bounds for counting perfect matchings on
  graphs of bounded treewidth, cliquewidth, and genus.
\newblock In {\em Proceedings of the Twenty-Seventh Annual {ACM-SIAM} Symposium
  on Discrete Algorithms (SODA)}, pages 1650--1669. {SIAM}, 2016.
\newblock \href {https://doi.org/10.1137/1.9781611974331.CH113}
  {\path{doi:10.1137/1.9781611974331.CH113}}.

\bibitem{CurticapeanX15param}
Radu Curticapean and Mingji Xia.
\newblock Parameterizing the permanent: Genus, apices, minors, evaluation mod
  $2k$.
\newblock In {\em 56th Annual {IEEE} Symposium on Foundations of Computer
  Science (FOCS)}, pages 994--1009. {IEEE} Computer Society, 2015.
\newblock \href {https://doi.org/10.1109/FOCS.2015.65}
  {\path{doi:10.1109/FOCS.2015.65}}.

\bibitem{cygan2015parameterized}
Marek Cygan, Fedor~V. Fomin, Lukasz Kowalik, Daniel Lokshtanov, D{\'a}niel
  Marx, Marcin Pilipczuk, Micha{\l} Pilipczuk, and Saket Saurabh.
\newblock {\em Parameterized Algorithms}.
\newblock Springer, 2015.
\newblock \href {https://doi.org/10.1007/978-3-319-21275-3}
  {\path{doi:10.1007/978-3-319-21275-3}}.

\bibitem{DemaineFHT04bidimen}
Erik~D. Demaine, Fedor~V. Fomin, Mohammad~Taghi Hajiaghayi, and Dimitrios~M.
  Thilikos.
\newblock Bidimensional parameters and local treewidth.
\newblock {\em {SIAM} J. Discret. Math.}, 18(3):501--511, 2004.
\newblock \href {https://doi.org/10.1137/S0895480103433410}
  {\path{doi:10.1137/S0895480103433410}}.

\bibitem{Diestel10grap}
Reinhard Diestel.
\newblock {\em {Graph Theory}}, volume 173.
\newblock Springer-Verlag, 5th edition, 2017.
\newblock \href {https://doi.org/10.1007/978-3-662-53622-3}
  {\path{doi:10.1007/978-3-662-53622-3}}.

\bibitem{EibenGHK21}
Eduard Eiben, Robert Ganian, Thekla Hamm, and O{-}joung Kwon.
\newblock Measuring what matters: {A} hybrid approach to dynamic programming
  with treewidth.
\newblock {\em Journal of Computer and System Sciences}, 121:57--75, 2021.
\newblock \href {https://doi.org/10.1016/j.jcss.2021.04.005}
  {\path{doi:10.1016/j.jcss.2021.04.005}}.

\bibitem{eiben2018meta}
Eduard Eiben, Robert Ganian, and Stefan Szeider.
\newblock Meta-kernelization using well-structured modulators.
\newblock {\em Discrete Applied Mathematics}, 248:153--167, 2018.

\bibitem{eiben2018solving}
Eduard Eiben, Robert Ganian, and Stefan Szeider.
\newblock Solving problems on graphs of high rank-width.
\newblock {\em Algorithmica}, 80(2):742--771, 2018.

\bibitem{EppsteinV19ncalg}
David Eppstein and Vijay~V. Vazirani.
\newblock {NC} algorithms for computing a perfect matching and a maximum flow
  in one-crossing-minor-free graphs.
\newblock {\em {SIAM} J. Comput.}, 50(3):1014--1033, 2021.
\newblock \href {https://doi.org/10.1137/19M1256221}
  {\path{doi:10.1137/19M1256221}}.

\bibitem{Farrugia04vert}
Alastair Farrugia.
\newblock Vertex-partitioning into fixed additive induced-hereditary properties
  is {NP}-hard.
\newblock {\em Electronic Journal of Combinatorics}, 11(1), 2004.
\newblock \href {https://doi.org/10.37236/1799} {\path{doi:10.37236/1799}}.

\bibitem{fisher1961statistical}
Michael~E Fisher.
\newblock Statistical mechanics of dimers on a plane lattice.
\newblock {\em Physical Review}, 124(6):1664, 1961.

\bibitem{FominGSST21compound}
Fedor~V. Fomin, Petr~A. Golovach, Ignasi Sau, Giannos Stamoulis, and
  Dimitrios~M. Thilikos.
\newblock Compound logics for modification problems.
\newblock {\em {ACM} Trans. Comput. Log.}, 26(1):2:1--2:57, 2025.
\newblock \href {https://doi.org/10.1145/3696451} {\path{doi:10.1145/3696451}}.

\bibitem{FominLS18exclude}
Fedor~V. Fomin, Daniel Lokshtanov, and Saket Saurabh.
\newblock Excluded grid minors and efficient polynomial-time approximation
  schemes.
\newblock {\em J. {ACM}}, 65(2):10:1--10:44, 2018.
\newblock \href {https://doi.org/10.1145/3154833} {\path{doi:10.1145/3154833}}.

\bibitem{GalluccioL99onthe}
Anna Galluccio and Martin Loebl.
\newblock On the theory of pfaffian orientations. {I}. {P}erfect matchings and
  permanents.
\newblock {\em Electron. J. Comb.}, 6, 1999.
\newblock \href {https://doi.org/10.37236/1438} {\path{doi:10.37236/1438}}.

\bibitem{GiannopoulouT13opti}
Archontia~C. Giannopoulou and Dimitrios~M. Thilikos.
\newblock Optimizing the graph minors weak structure theorem.
\newblock {\em {SIAM} Journal on Discrete Mathematics}, 27(3):1209--1227, 2013.
\newblock \href {https://doi.org/10.1137/110857027}
  {\path{doi:10.1137/110857027}}.

\bibitem{GolovachST23model}
Petr~A. Golovach, Giannos Stamoulis, and Dimitrios~M. Thilikos.
\newblock Model-checking for first-order logic with disjoint paths predicates
  in proper minor-closed graph classes.
\newblock In Nikhil Bansal and Viswanath Nagarajan, editors, {\em Proceedings
  of the 2023 {ACM-SIAM} Symposium on Discrete Algorithms, {SODA} 2023,
  Florence, Italy, January 22-25, 2023}, pages 3684--3699. {SIAM}, 2023.
\newblock \href {https://doi.org/10.1137/1.9781611977554.CH141}
  {\path{doi:10.1137/1.9781611977554.CH141}}.

\bibitem{Grohe03loc}
Martin Grohe.
\newblock Local tree-width, excluded minors, and approximation algorithms.
\newblock {\em Combinatorica}, 23(4):613--632, 2003.
\newblock \href {https://doi.org/10.1007/S00493-003-0037-9}
  {\path{doi:10.1007/S00493-003-0037-9}}.

\bibitem{Grohe16}
Martin Grohe.
\newblock Quasi-4-connected components.
\newblock In {\em Proceedings of the 43rd International Colloquium on Automata,
  Languages, and Programming (ICALP)}, volume~55 of {\em LIPIcs}, pages
  8:1--8:13. Schloss Dagstuhl - Leibniz-Zentrum f{\"{u}}r Informatik, 2016.
\newblock \href {https://doi.org/10.4230/LIPICS.ICALP.2016.8}
  {\path{doi:10.4230/LIPICS.ICALP.2016.8}}.

\bibitem{GroheKR13asimp}
Martin Grohe, Ken{-}ichi Kawarabayashi, and Bruce~A. Reed.
\newblock A simple algorithm for the graph minor decomposition - logic meets
  structural graph theory.
\newblock In {\em Proc. of the 24th Annual {ACM-SIAM} Symposium on Discrete
  Algorithms ({SODA})}, pages 414--431. {SIAM}, 2013.
\newblock \href {https://doi.org/10.1137/1.9781611973105.30}
  {\path{doi:10.1137/1.9781611973105.30}}.

\bibitem{grotschel1984polynomial}
Martin Gr{\"o}tschel, L{\'a}szl{\'o} Lov{\'a}sz, and Alexander Schrijver.
\newblock Polynomial algorithms for perfect graphs.
\newblock In {\em North-Holland mathematics studies}, volume~88, pages
  325--356. Elsevier, 1984.

\bibitem{GrzesikKPP19}
Andrzej Grzesik, Tereza Klimosov{\'{a}}, Marcin Pilipczuk, and Michal
  Pilipczuk.
\newblock Polynomial-time algorithm for maximum weight independent set on
  p6-free graphs.
\newblock In {\em Proceedings of the Thirtieth Annual {ACM-SIAM} Symposium on
  Discrete Algorithms (SODA)}, pages 1257--1271. {SIAM}, 2019.
\newblock \href {https://doi.org/10.1137/1.9781611975482.77}
  {\path{doi:10.1137/1.9781611975482.77}}.

\bibitem{HeggernesHJKV13}
Pinar Heggernes, Pim van~'t Hof, Bart M.~P. Jansen, Stefan Kratsch, and Yngve
  Villanger.
\newblock Parameterized complexity of vertex deletion into perfect graph
  classes.
\newblock {\em Theor. Comput. Sci.}, 511:172--180, 2013.
\newblock URL: \url{https://doi.org/10.1016/j.tcs.2012.03.013}, \href
  {https://doi.org/10.1016/J.TCS.2012.03.013}
  {\path{doi:10.1016/J.TCS.2012.03.013}}.

\bibitem{kawarabayashi2020quickly}
Ken ichi Kawarabayashi, Robin Thomas, and Paul Wollan.
\newblock Quickly excluding a non-planar graph, 2021.
\newblock \href {https://arxiv.org/abs/2010.12397} {\path{arXiv:2010.12397}}.

\bibitem{Jansen2013ThePO}
Bart M.~P. Jansen.
\newblock The power of data reduction : Kernels for fundamental graph problems.
\newblock 2013.
\newblock URL: \url{https://api.semanticscholar.org/CorpusID:123803725}.

\bibitem{JansenK021verte}
Bart M.~P. Jansen, Jari J.~H. de~Kroon, and Micha{\l} W{\l}odarczyk.
\newblock Vertex deletion parameterized by elimination distance and even less.
\newblock In {\em Proc. of the 53rd Annual {ACM} Symposium on Theory of
  Computing (STOC)}, pages 1757--1769, 2021.
\newblock \href {https://doi.org/10.1145/3406325.3451068}
  {\path{doi:10.1145/3406325.3451068}}.

\bibitem{kasteleyn1961statistics}
Pieter~W Kasteleyn.
\newblock The statistics of dimers on a lattice: I. {T}he number of dimer
  arrangements on a quadratic lattice.
\newblock {\em Physica}, 27(12):1209--1225, 1961.

\bibitem{Kasteleyn67grap}
Pieter~W. Kasteleyn.
\newblock {Graph theory and crystal physics}.
\newblock {\em Graph theory and theoretical physics}, pages 43--110, 1967.

\bibitem{KawarabayashiTW18anew}
Ken{-}ichi Kawarabayashi, Robin Thomas, and Paul Wollan.
\newblock A new proof of the flat wall theorem.
\newblock {\em Journal of Combinatorial Theory, Series {B}}, 129:204--238,
  2018.
\newblock \href {https://doi.org/10.1016/j.jctb.2017.09.006}
  {\path{doi:10.1016/j.jctb.2017.09.006}}.

\bibitem{KorhonenPS24mino}
Tuukka Korhonen, Michal Pilipczuk, and Giannos Stamoulis.
\newblock Minor containment and disjoint paths in almost-linear time.
\newblock In {\em Proc. of the 65th {IEEE} Annual Symposium on Foundations of
  Computer Science, {(FOCS)}}, pages 53--61, 2024.
\newblock \href {https://doi.org/10.1109/FOCS61266.2024.00014}
  {\path{doi:10.1109/FOCS61266.2024.00014}}.

\bibitem{Kratochvil94}
Jan Kratochv{\'{\i}}l.
\newblock A special planar satisfiability problem and a consequence of its
  {NP}-completeness.
\newblock {\em Discret. Appl. Math.}, 52(3):233--252, 1994.
\newblock \href {https://doi.org/10.1016/0166-218X(94)90143-0}
  {\path{doi:10.1016/0166-218X(94)90143-0}}.

\bibitem{LewisY80theno}
John~M. Lewis and Mihalis Yannakakis.
\newblock The node-deletion problem for hereditary properties is {NP}-complete.
\newblock {\em Journal of Computer and System Sciences}, 20(2):219--230, 1980.
\newblock \href {https://doi.org/10.1016/0022-0000(80)90060-4}
  {\path{doi:10.1016/0022-0000(80)90060-4}}.

\bibitem{Lokshtanov08}
Daniel Lokshtanov.
\newblock Wheel-free deletion is {W[2]}-hard.
\newblock In {\em Third International Workshop on Parameterized and Exact
  Computation (IWPEC)}, volume 5018 of {\em Lecture Notes in Computer Science},
  pages 141--147. Springer, 2008.
\newblock \href {https://doi.org/10.1007/978-3-540-79723-4\_14}
  {\path{doi:10.1007/978-3-540-79723-4\_14}}.

\bibitem{LokshtanovPPS22high}
Daniel Lokshtanov, Marcin Pilipczuk, Michal Pilipczuk, and Saket Saurabh.
\newblock Highly unbreakable graph with a fixed excluded minor are almost
  rigid.
\newblock {\em CoRR}, abs/2210.14629, 2022.
\newblock \href {https://arxiv.org/abs/2210.14629} {\path{arXiv:2210.14629}},
  \href {https://doi.org/10.48550/ARXIV.2210.14629}
  {\path{doi:10.48550/ARXIV.2210.14629}}.

\bibitem{LokshtanovR0Z18redu}
Daniel Lokshtanov, M.~S. Ramanujan, Saket Saurabh, and Meirav Zehavi.
\newblock Reducing {CMSO} model checking to highly connected graphs.
\newblock In {\em Proceedings of the 45th International Colloquium on Automata,
  Languages, and Programming, (ICALP)}, volume 107 of {\em LIPIcs}, pages
  135:1--135:14, 2018.
\newblock \href {https://doi.org/10.4230/LIPICS.ICALP.2018.135}
  {\path{doi:10.4230/LIPICS.ICALP.2018.135}}.

\bibitem{OkamotoUU09}
Yoshio Okamoto, Ryuhei Uehara, and Takeaki Uno.
\newblock Counting the number of matchings in chordal and chordal bipartite
  graph classes.
\newblock In {\em 35th International Workshop on Graph-Theoretic Concepts in
  Computer Science (WG)}, volume 5911 of {\em Lecture Notes in Computer
  Science}, pages 296--307, 2010.
\newblock \href {https://doi.org/10.1007/978-3-642-11409-0\_26}
  {\path{doi:10.1007/978-3-642-11409-0\_26}}.

\bibitem{RobertsonSST96effic}
Neil Robertson, Daniel~P. Sanders, Paul~D. Seymour, and Robin Thomas.
\newblock Efficiently four-coloring planar graphs.
\newblock In {\em Proceedings of the Twenty-Eighth Annual {ACM} Symposium on
  the Theory of Computing (STOC)}, pages 571--575. {ACM}, 1996.
\newblock \href {https://doi.org/10.1145/237814.238005}
  {\path{doi:10.1145/237814.238005}}.

\bibitem{RobertsonS84III}
Neil Robertson and Paul~D. Seymour.
\newblock Graph minors. {III.} {P}lanar tree-width.
\newblock {\em Journal of Combinatorial Theory, Series {B}}, 36(1):49--64,
  1984.
\newblock \href {https://doi.org/10.1016/0095-8956(84)90013-3}
  {\path{doi:10.1016/0095-8956(84)90013-3}}.

\bibitem{RobertsonS91}
Neil Robertson and Paul~D. Seymour.
\newblock Excluding a graph with one crossing.
\newblock In {\em Graph Structure Theory, Proceedings of a {AMS-IMS-SIAM} Joint
  Summer Research Conference on Graph Minors}, volume 147 of {\em Contemporary
  Mathematics}, pages 669--675, 1991.

\bibitem{RobertsonS95XIII}
Neil Robertson and Paul~D. Seymour.
\newblock {Graph Minors. XIII. The Disjoint Paths Problem}.
\newblock {\em Journal of Combinatorial Theory, Series {B}}, 63(1):65--110,
  1995.
\newblock \href {https://doi.org/10.1006/jctb.1995.1006}
  {\path{doi:10.1006/jctb.1995.1006}}.

\bibitem{SauST23kapices}
Ignasi Sau, Giannos Stamoulis, and Dimitrios~M. Thilikos.
\newblock {\emph{k}-apices of minor-closed graph classes. I. Bounding the
  obstructions}.
\newblock {\em Journal of Combinatorial Theory, Series {B}}, 161:180--227,
  2023.
\newblock \href {https://doi.org/10.1016/J.JCTB.2023.02.012}
  {\path{doi:10.1016/J.JCTB.2023.02.012}}.

\bibitem{SauST24amor}
Ignasi Sau, Giannos Stamoulis, and Dimitrios~M Thilikos.
\newblock A more accurate view of the flat wall theorem.
\newblock {\em Journal of Graph Theory}, 107(2):263--297, 2024.

\bibitem{SauST2024parame}
Ignasi Sau, Giannos Stamoulis, and Dimitrios~M. Thilikos.
\newblock Parameterizing the quantification of {CMSO:} model checking on
  minor-closed graph classes.
\newblock In {\em Proceedings of the 2025 Annual {ACM-SIAM} Symposium on
  Discrete Algorithms (SODA)}, pages 3728--3742. {SIAM}, 2025.
\newblock \href {https://doi.org/10.1137/1.9781611978322.124}
  {\path{doi:10.1137/1.9781611978322.124}}.

\bibitem{StraubTW16coun}
Simon Straub, Thomas Thierauf, and Fabian Wagner.
\newblock Counting the number of perfect matchings in \({K}_5\)-free graphs.
\newblock {\em Theory of Computing Systems}, 59(3):416--439, 2016.
\newblock \href {https://doi.org/10.1007/S00224-015-9645-1}
  {\path{doi:10.1007/S00224-015-9645-1}}.

\bibitem{temperley1961dimer}
Harold~NV Temperley and Michael~E Fisher.
\newblock Dimer problem in statistical mechanics-an exact result.
\newblock {\em Philosophical Magazine}, 6(68):1061--1063, 1961.

\bibitem{thilikos2023approx}
Dimitrios~M. Thilikos and Sebastian Wiederrecht.
\newblock Approximating branchwidth on parametric extensions of planarity,
  2023.
\newblock \href {https://arxiv.org/abs/2304.04517} {\path{arXiv:2304.04517}}.

\bibitem{ThilikosW22}
Dimitrios~M. Thilikos and Sebastian Wiederrecht.
\newblock Killing a vortex.
\newblock {\em J. {ACM}}, 71(4):27:1--27:56, 2024.
\newblock \href {https://doi.org/10.1145/3664648} {\path{doi:10.1145/3664648}}.

\bibitem{Valiant08holo}
Leslie~G. Valiant.
\newblock Holographic algorithms.
\newblock {\em {SIAM} Journal on Computing}, 37(5):1565--1594, 2008.
\newblock \href {https://doi.org/10.1137/070682575}
  {\path{doi:10.1137/070682575}}.

\bibitem{Yannakakis81edged}
Mihalis Yannakakis.
\newblock Edge-deletion problems.
\newblock {\em {SIAM} Journal on Computing}, 10(2):297--309, 1981.
\newblock \href {https://doi.org/10.1137/0210021} {\path{doi:10.1137/0210021}}.

\end{thebibliography}

\end{document}